\title{Data Compression with \texorpdfstring{\\}{} Relative Entropy Coding}
\author{Gergely Flamich}
\keywords{{LaTeX} {PhD Thesis} {Engineering} {University of
Cambridge}}
\def\eqref#1{equation~\ref{#1}}
\def\1{\bm{1}}
\def\rh{{\textnormal{h}}}
\def\rvb{{\mathbf{b}}}
\def\rvh{{\mathbf{h}}}
\def\rvm{{\mathbf{m}}}
\def\rvs{{\mathbf{s}}}
\def\rvw{{\mathbf{w}}}
\def\rvx{{\mathbf{x}}}
\def\rvy{{\mathbf{y}}}
\def\rvz{{\mathbf{z}}}
\def\ervw{{\textnormal{w}}}
\def\ervx{{\textnormal{x}}}
\def\rmC{{\mathbf{C}}}
\def\rmN{{\mathbf{N}}}
\def\vg{{\bm{g}}}
\def\vu{{\bm{u}}}
\def\mA{{\bm{A}}}
\def\mH{{\bm{H}}}
\DeclareMathAlphabet{\mathsfit}{\encodingdefault}{\sfdefault}{m}{sl}
\SetMathAlphabet{\mathsfit}{bold}{\encodingdefault}{\sfdefault}{bx}{n}
\newcommand{\KL}{D_{\mathrm{KL}}}
\newcommand{\Var}{\mathrm{Var}}
\newcommand{\Cov}{\mathrm{Cov}}
\DeclareMathOperator*{\argmax}{arg\,max}
\DeclareMathOperator*{\argmin}{arg\,min}
\newtheorem{theorem}{Theorem}[section]
\newtheorem{proposition}{Proposition}[section]
\newtheorem{definition}{Definition}[section]
\newtheorem{lemma}[theorem]{Lemma}
\newtheorem{corollary}{Corollary}[section]
\colorlet{shadecolor}{lightblue!15}
\newenvironment{importantTheorem}
  {\begin{shaded}\begin{theorem}}
  {\end{theorem}\end{shaded}}
\newenvironment{importantCorollary}
  {\begin{shaded}\begin{corollary}}
  {\end{corollary}\end{shaded}}
\DeclarePairedDelimiterX{\infdivx}[2]{[}{]}{%
  #1\;\delimsize\|\;#2%
}
\DeclarePairedDelimiterX{\inftvd}[2]{(}{)}{%
  #1\;\delimsize\|\;#2%
}
\DeclarePairedDelimiterX{\infdivcolon}[2]{[}{]}{%
  #1\;\delimsize;\;#2%
}
\DeclarePairedDelimiterX{\infdivent}[1]{[}{]}{%
  #1
}
\newcommand{\TVD}{\delta_{\mathrm{TV}}\inftvd}
\newcommand{\Ent}{\mathbb{H}\infdivent}
\newcommand{\CumResEnt}{\mathcal{E}\infdivent}
\newcommand{\DiffEnt}{\mathbb{h}\infdivent}
\newcommand{\MI}{I\infdivcolon}
\newcommand{\KLD}{\KL\infdivx}
\newcommand{\CSD}{D_{CS}\infdivx}
\newcommand{\infD}{D_\infty\infdivx}
\DeclarePairedDelimiter{\norm}{\lVert}{\rVert}
\DeclarePairedDelimiter{\round}{\lfloor}{\rceil}
\DeclarePairedDelimiter{\abs}{\lvert}{\rvert}
\DeclarePairedDelimiterX{\innerProd}[2]{\langle}{\rangle}{%
    #1,#2%
}
\def\Normal{\mathcal{N}}
\def\Oh{\mathcal{O}}
\newcommand{\Pois}{\mathrm{Pois}}
\newcommand{\Exponential}{\mathrm{Exp}}
\newcommand{\Bern}{\mathrm{Bern}}
\renewcommand{\Var}{\mathbb{V}}
\newcommand{\Reals}{\mathbb{R}}
\newcommand{\nonnegReals}{\Reals_{\geq 0}}
\newcommand{\Nats}{\mathbb{N}}
\newcommand{\Ints}{\mathbb{Z}}
\newcommand{\posNats}{\mathbb{Z}_{\geq 1}}
\newcommand{\Prob}{\mathbb{P}}
\newcommand{\Ind}{\mathbf{1}}
\newcommand{\Unif}{\mathrm{Unif}}
\newcommand{\Exp}{\mathbb{E}}
\newcommand{\Tree}{\mathcal{T}}
\newcommand{\Laplace}{\mathcal{L}}
\DeclareMathOperator*{\plim}{\mathrm{plim}}
\DeclareMathOperator*{\esssup}{ess\,sup}
\DeclareMathOperator{\pushfwd}{\sharp}
\newcommand{\diag}{\mathrm{diag}}
\newcommand{\enc}{\mathtt{enc}}
\newcommand{\dec}{\mathtt{dec}}
\newcommand{\disteq}{\stackrel{d}{=}}
\newcommand{\XSpace}{\mathcal{X}}
\newcommand{\YSpace}{\mathcal{Y}}
\newcommand{\ZSpace}{\mathcal{Z}}
\newcommand{\Geom}{\mathrm{Geom}}
\newcommand{\DistFamily}{\mathcal{P}}
\newcommand{\Loss}{\mathcal{L}}
\newcommand{\Data}{\mathcal{D}}
\newcommand{\quantiser}{\mathcal{Q}}
\newcommand{\sigmaAlgebra}{\mathcal{F}}
\newcommand{\PMSpace}{\mathfrak{P}}
\newcommand{\Palm}{\mathcal{P}}
\newcommand{\supp}{\mathrm{supp}}
\newcommand{\PoissonProcess}{\Pi}
\newcommand{\graph}{\mathrm{graph}}
\newcommand{\hypograph}{\mathrm{hyp}}
\newcommand{\tikzcircle}[2][red,fill=red]{\tikz[baseline=-0.5ex]\draw[#1,radius=#2] (0,0) circle ;}%
\newcommand{\xmark}{\ding{53}}%
\DeclareSymbolFont{cyrletters}{OT2}{wncyr}{m}{n}
\DeclareMathSymbol{\Sha}{\mathalpha}{cyrletters}{"58}
\DeclareMathSymbol{\sha}{\mathalpha}{cyrletters}{"78}
\newcommand{\splitFun}{\mathtt{split}}
\newcommand{\Frontier}{\mathfrak{F}}
\newcommand{\Dataset}{\mathcal{D}}
\newcommand{\SubTree}{\mathcal{S}}
\newcommand{\Branch}{\mathcal{B}}
\newcommand{\filtration}{\bm{\mathcal{F}}}
\DeclareMathOperator{\lb}{\mathrm{lb}}
\DeclareMathOperator{\exptwo}{\exp_2}
\def\rvmu{{\bm{\mu}}}
\def\rvsigma{{\bm{\sigma}}}
\def\rvnu{{\bm{\nu}}}
\def\rvrho{{\bm{\rho}}}
\def\rvphi{{\bm{\phi}}}
\def\rvalpha{{\bm{\alpha}}}
\newcommand{\Class}{\mathcal{C}}
\newcommand{\rvhbar}{\overline{\rvh}}
\newcommand{\rvhbbar}{\overline{\overline{\rvh}}}
\newcommand{\rvmubar}{\mathbf{\overline{\rvmu}}}
\newcommand{\rvnubar}{\mathbf{\overline{\rvnu}}}
\newcommand{\rvsigmabar}{\mathbf{\overline{\rvsigma}}}
\newcommand{\rvrhobar}{\mathbf{\overline{\rvrho}}}
\newlist{legal}{enumerate}{10}
\setlist[legal]{label*=\arabic*.}
\newcommand{\thinOp}{\mathrm{thin}}
\newcommand{\expb}{{\exp_2}}
\newcommand{\sort}{{\mathtt{sort}}}
\begin{document}

\frontmatter

\begin{center}
\maketitle
\end{center}


\begin{declaration}
This thesis is the result of my own work and includes nothing which is the outcome of work done in collaboration except as declared in the preface and specified in the text. It is not substantially the same as any work that has already been submitted, or, is being concurrently submitted, for any degree, diploma or other qualification at the University of Cambridge or any other University or similar institution except as declared in the preface and specified in the text. It does not exceed the prescribed word limit for the relevant Degree Committee.


\end{declaration}
\begin{abstract}
Data compression forms the foundation of our Digital Age, enabling us to store, process and communicate more information than ever before.
\replaced[]{%
Over the last few years, machine learning-based compression algorithms have revolutionised the field and surpassed the performance of traditional methods.
Moreover, machine learning unlocked previously infeasible features for compression, such as providing guarantees for users’ privacy or tailoring compression to specific data statistics (e.g., satellite images or audio recordings of animals) or users’ audiovisual perception.
}
{%
Over the last few years, machine learning revolutionised compression, surpassing traditional methods and unlocking features previously infeasible, such as providing guarantees for users’ privacy or tailoring compression to specific data statistics (e.g., satellite images or audio recordings of animals) or users’ audiovisual perception.
}%
This, in turn, has led to an explosion of theoretical investigations and insights that aim to develop new fundamental theories, methods and algorithms better suited for machine learning-based compressors.
\par
In this thesis, I contribute to this trend by investigating relative entropy coding, a mathematical framework that generalises classical source coding theory. 
Concretely, relative entropy coding deals with the efficient communication of uncertain or randomised information. 
\replaced[]{%
One of its key advantages is that it extends compression methods to continuous spaces and can thus be integrated more seamlessly into modern machine learning pipelines than classical quantisation-based approaches.
}{%
Significantly, it extends compression methods to continuous spaces, which means it can be integrated into modern machine learning pipelines a lot more seamlessly compared to classical quantisation-based approaches.
}
Furthermore, it is a natural foundation for developing advanced compression methods that are privacy-preserving or account for the perceptual quality of the reconstructed data. 
\par
The thesis considers relative entropy coding at three conceptual levels: 
After introducing the basics of the framework, 
(1) I prove results that provide new, maximally tight fundamental limits to the communication and computational efficiency of relative entropy coding; 
(2) I use the theory of Poisson point processes to develop and analyse new relative entropy coding algorithms, whose performance attains the theoretic optima and 
(3) I showcase the strong practical performance of relative entropy coding by applying it to image, audio, video and protein data compression using small, energy-efficient, probabilistic neural networks called Bayesian implicit neural representations. 
\par
Finally, while most of the work I present in the thesis was motivated by practical data compression, many of the techniques I present are more general and have implications and applications in broader information theory, as well as the development and analysis of general-purpose sampling algorithms.
\end{abstract}


\begin{dedication} 

Szeret\H{o} testv{\'e}reimnek, sz{\"u}leimnek {\'e}s nagysz{\"u}leimnek.

\end{dedication}

\begin{acknowledgements}  
First, I want to express my deepest gratitude to my supervisor, Jos{\'e} Miguel Hern{\'a}ndez Lobato, for his many years of support and advice. 

I would also like to thank Lucas Theis for his mentorship during my internship at Google Brain. His insights into neural data compression have significantly shaped my understanding of the field.

I am also very thankful to Marton Havasi, Stratis Markou, Jiajun He, Szilvia Ujv{\'a}ry, Zongyu Guo, Lennie Wells and Daniel Goc; it was a pleasure to work with them.

Finally, I am grateful to my many loving friends for their support. I especially want to thank Anna Orteu, Margherita Battistara, Rita Dias, Jasmine Vieri, Tim Lameris, Peter Wildemann, Simon Brauberger and Lorenzo Bonito for being the solid rocks in my life during my time at Cambridge.\footnote{I would have included Damon Kutzin in this list as well.
However, he told me he dislikes lengthy, emotional acknowledgements, so I decided to oblige and keep it short.}
\end{acknowledgements}


\tableofcontents

\listoffigures

\listoftables

\listofchanges

\printnomenclature

\mainmatter


\chapter{Introduction}  

\ifpdf
    \graphicspath{{Chapter1/Figs/Raster/}{Chapter1/Figs/PDF/}{Chapter1/Figs/}}
\else
    \graphicspath{{Chapter1/Figs/Vector/}{Chapter1/Figs/}}
\fi

\noindent
We truly live in the age of information: the International Telecommunication Union, the United Nations' specialised agency for digital technology, estimated that just over two-thirds of Earth's population, about 5.4 billion people, are now connected to the Internet, and in 2022 alone produced a staggering 5.3 zettabytes (equivalent to 5.3 trillion gigabytes) of internet traffic \citepalias{itu2023facts}.
But how is it possible to communicate such an unfathomable volume of data? 
\par
Excluding enormous empirical engineering efforts enabling efficient exchange, we find that its foundations are rooted in one of the great mathematical achievements of the last one hundred years: information theory. 
The central insight of information theory is connecting the concept of ``information'' to ``predictability''.
\replaced[]{%
Perhaps the clearest example of this connection
}{%
Unarguably, one of the central results of the field
}
is the \textit{source coding theorem}, which establishes that data can only be represented more compactly as long as it contains some predictable structure; in other words, pure randomness is incompressible.
However, the truly miraculous fact on which our global communication ecosystem rests is that the fundamental limit established by the theorem is efficiently achievable. 
There are \textit{fast} compression algorithms that can compute compact representations of data within a few bits of the theoretical limit.
\par
However, the assumptions needed for theoretical guarantees seldom hold, for which we have to compensate with a significant amount of engineering.
Thus, historically, developing new compression algorithms took decades of work by several researchers and engineers.
This began to change about a decade ago, as the burgeoning field of machine learning (ML) started penetrating data compression research.
This advance was inevitable, as learning and compression have long been understood to be joined at the hip, at least theoretically; there was even an early attempt by \citet{schmidhuber1996sequential} to use neural networks to compress text. 
However, for the first time in 2017, Johannes Ball{\'e} and his co-authors succeeded in compressing high-resolution images with performance exceeding that of JPEG 2000 \citep{balle2017end}.
In the years since then, ML has revolutionised the field: by now, all state-of-the-art algorithms for image, audio and video compression all use ML models.
Furthermore, ML models' flexibility has enabled the inclusion of many new aspects into the design process, such as concerns for privacy, perceptual quality, and taking downstream tasks into account.
\par
Nonetheless, applying ML to compression does not come without its challenges. 
First, ML models are nothing without the data on which they are trained.
As such, it must be sourced legally and ethically for scientific studies, especially for commercial uses.
This matter is especially acute in the case of data compression, as these systems are designed to operate on image, audio and video data. 
Hence, it is paramount to obtain the data with the subjects' consent or the creators' permission and to ensure that the ML system's output excludes unintended biases to the best of the practitioner's ability.
To my knowledge, the data I used in the experiments I report in this thesis conform to the above.
%
\par
Second, the energy consumption of ML data compression models directly impacts their practicality, as running large models on mobile or low-power devices is currently not possible and is one of the prime inhibitors to wide-scale deployment.
Hence, as a step towards solving this problem, in \Cref{chapter:combiner}, I present work that sets the basis for deployable, low-energy ML models.
\par
Third, data compression is at odds with the usual ML mantra of ``end-to-end training'', which calls for learning every parameter of every component in the system jointly, at the same time.
The issue is that standard approaches to data compression quantise their input, which is a non-differentiable operation.
This means that gradient descent, the workhorse of modern optimisation algorithms, cannot be applied directly, and one needs to introduce approximations to side-step the issue of quantisation.
While the approximations have been demonstrated to work well in practice, I take a different approach in this thesis.
I will argue for replacing quantisation altogether with \textit{relative entropy coding}, a different approach that avoids the non-differentiability issue entirely.
As I will demonstrate, studying and using relative entropy coding has utility far beyond eliminating the abovementioned discrepancy. 
In \Cref{chapter:combiner}, I show that it can lead to better performance in practice than quantisation-based approaches, while \Cref{chapter:fundamental_limits,chapter:rec_with_pp} show that it has profound theoretical implications for data compression and sampling theory. 
\section{Relative Entropy Coding}
\par
The best way to understand relative entropy coding is through the lens of lossy source coding, also known as lossy data compression.
The motivation for lossy source coding is that we can represent data far more compactly if we do not require that we be able to recover from it an exact replica of the original data.
In other words, we allow the compressor to discard some of the information in the input; as long as we retain semantic information (information pertaining to the \textit{meaning} of the data), we will have made a gain.
Now, we measure the performance of a lossy scheme using two quantities: its \textit{rate}, i.e., how many bits we need on average to encode the data, and its \textit{distortion}, i.e., how different the data we recovered from the compressed representation is from the original.
\par
But how can we design a system that discards information?
To get an idea, imagine that our task is to compress some real-valued input $x$ between $0$ and some integer $n$. 
To this end, upon observing $x$, we round it to the closest integer and use this integer as the compressed representation.
We cannot recover the $x$'s fractional part, which is a consequence of the mechanism we used to discard information: we forced many possible inputs to share the same representation.
To control the rate of our compressor, we can tweak how many inputs share the same representation.
For example, rounding the input to the closest even number halves the number of possible representations, thus halving the rate at the cost of increasing the distortion.
%
\par
As the example illustrates, quantisation is easy to implement and gives us a natural way to control the rate-distortion tradeoff. 
However, its disadvantage is that it does not lend itself to easy analysis, which stems from its deterministic nature.
This means we can only do worst-case analysis, which is usually difficult (if not outright impossible). 
It also does not reflect that we seldom expect to encounter worst-case inputs in practical high-dimensional settings.
A more subtle problem arises when considering the low-rate regime: what would be the compressor's preferred behaviour as we decrease the rate to zero?
Zero-rate lossy compression discards all input information, which a deterministic scheme can only achieve by forcing every possible input to share the same representation.
Imagine now what this would imply for an image compression algorithm: regardless of the picture we input, we would always get the same reconstruction, which could be, e.g., a completely black image.
While this behaviour makes sense, is it what we want?
\par
The most natural way to deal with these issues is to discard information in the input by ``adding noise'' instead of rounding; this is the essence of relative entropy coding.
Such randomisation allows us to perform average-case instead of worst-case analysis, which can be significantly more straightforward.
Furthermore, it arguably has a much more natural low-rate compression behaviour: in the zero-rate case, we would have a fully random representation rather than a fully deterministic one.
This means that we can get a diverse set of realistic-looking images even at zero rate.
While the most basic relative entropy coding algorithms are almost as simple as rounding, they come at a price: most are far too slow and would require thousands of years to encode even thumbnail-sized images.
Unfortunately, they cannot be improved either, as I show in \Cref{chapter:fundamental_limits}.
However, all is not doomed: my result only demonstrates that we cannot be completely careless about how we randomise the input.
In fact, in \Cref{chapter:branch_and_bound}, I develop several algorithms that achieve optimally fast runtime for certain types of randomisation.
\subsection{A Note About the Name}
\par
As with many simple, good ideas, they have been reinvented and named several times across several fields, and relative entropy coding is no different.
Within the information theory and machine learning literature, the same basic idea is known as channel simulation \citep{li2018strong}, channel synthesis \citep{cuff2013distributed}, reverse channel coding \citep{theis2022algorithms} and relative entropy coding \citep{flamich2020compressing}\added[id={PL}]{; see \Cref{sec:rec_history} for a synopsis of its history}.
In this thesis, I will solely refer to the idea as relative entropy coding for two main reasons.
First, I came up with it, and I like it.
Second, and perhaps more importantly, relative entropy coding ties the idea to data compression more closely than the other names by evoking similarity to entropy coding.
While the other names are perfectly good, they reflect the concept's information-theoretic roots by casting it as a dual problem to channel coding.
This, to an extent, has also caused the information theory community to largely ignore concerns with the practical implementation of their relative entropy coding algorithms, as they mostly used them for theoretic analysis, not as functional components in practical compression pipelines.
Hence, by calling it relative entropy coding, I hope to bring about a psychological shift in the community and highlight the theory's vast practical potential. 
\section[Relative Entropy Coding with Poisson Processes]{Relative Entropy Coding with \texorpdfstring{\\}{} Poisson Processes}
\par
As I noted in the previous section, the essence of relative entropy coding is that we encode a perturbed version of the input.
A particularly popular choice is additive perturbation, meaning that for input $\rvx$ we encode $\rvy = \rvx + \epsilon$, where $\epsilon$ is some random noise independent of $\rvx$.
But how can we encode $\rvy$?
\par
First, let us make the standard assumption (which I will use for the rest of the thesis) that the encoder and decoder share a statistical model over $\rvy$ called the \textit{coding distribution} $P$ that is independent of $\rvx$.
When $\rvy$ is discrete, we could simulate a sample $\rvy$ and use an entropy coding algorithm, such as Huffman coding or arithmetic coding \citep{witten1987arithmetic} to encode the sample.
Assuming the coding distribution $P$ is $\rvy$'s true marginal distribution, the average number of bits this procedure uses to encode $\rvy$ is (within a few bits of) the Shannon entropy of $\rvy$, which I denote with $\Ent{\rvy}$.
However, it ignores a crucial part of the problem: in relative entropy coding, we assume that we know the noise model $P_{\rvy \mid \rvx}$, such as the additive example above; in practice, we usually design it ourselves.
Indeed, we can leverage this extra information to encode $\rvy$ using many fewer bits, given by the mutual information $\MI{\rvx}{\rvy}$.
Note that $\MI{\rvx}{\rvy}$ can be finite even in cases when $\rvy$ is not discrete and thus when $\Ent{\rvy}$ is infinite.
From this perspective, relative entropy coding is always at least as efficient as entropy coding.
However, we need a second standard assumption for it to work: the encoder and decoder share a source of \textbf{common randomness} $\rvz$.
In practice, the reader can think of $\rvz$ as the encoder and the decoder sharing their pseudo-random number generator (PRNG) seed, meaning they can generate the same sequence of random variates.
\par
How do we create a scheme with a short codelength? The idea is to do ``selection sampling'': we first set the shared randomness to be
\begin{align*}
\rvz = (\rvy_1, \rvy_2, \hdots ), \quad \rvy_i \stackrel{iid}{\sim} P_Y.
\end{align*}
Given $\rvx \sim P_\rvx$ the encoder selects the $K$-th of these samples according to some selection rule, for which $\rvy_K \sim P_{\rvy \mid \rvx}$.
Then, the sender encodes and transmits the selected index $K$ to the receiver.
Since the receiver has access to the same shared randomness as the encoder by assumption, they can recover the selected sample $\rvy_K$ just from this information.
Perhaps the simplest possible selection rule is rejection sampling \citep{neumann1951various}.
However, it turns out that the index $K$ that rejection sampling would select cannot always be encoded efficiently.
\par
Not to worry, in \cref{chapter:rec_with_pp}, I describe how we can use the theory of Poisson processes to design general-purpose relative entropy coding algorithms with optimal average description length.
Unfortunately, the generality of these algorithms comes at a price: they are impractically slow. 
Thus, in \cref{chapter:branch_and_bound}, I show how we can leverage the special structure of some relative entropy coding problems to develop compression algorithms that are also optimally fast, besides having optimal average description length.
The work I describe in these chapters contributes to an exciting line of recent work that casts the problem of simulating random variates that follow a prescribed distribution as a search problem over random structures.
\section{Compression with Bayesian Implicit Neural Representations}
\par
A characteristic of most ML solutions to data compression is that they mimic classical approaches. 
Where classical methods, such as JPEG or MP3, use the discrete cosine transform, ML approaches ``simply'' replace it with a neural network. 
Unfortunately, these models consist of millions of parameters and require several orders of magnitude more computational power and energy to be trained and run.
\par
However, ML also offers an alternative framework for compressing data that is essentially agnostic to the nature of the data being compressed and is also generally more energy and compute-efficient: \textit{implicit neural representations} (INR).
The motivation for INRs comes from the observation that we can conceive most common types of data as continuous functions that map coordinates to the values of some signal we sampled at a certain resolution.
For example, we can think of images as functions mapping pixel locations to RGB colour intensity values.
The assumption that the signal is continuous suggests we can find an artificial neural network that approximates it arbitrarily well.
Thus, we can train a small neural network using gradient descent to memorise the data coordinate-by-coordinate.
After training, we can (imperfectly) reconstruct the data by evaluating the network at every coordinate location. 
Therefore, the weights of the fitted network constitute a lossy representation of the original data, and hence, encoding the weights constitutes a lossy data compression algorithm!
\par
Unfortunately, using neural networks with deterministic weights only allows us to control the distortion but not the compression rate.
\Cref{chapter:combiner} presents a solution to this problem: I fit neural networks with stochastic weights to the signal.
Then, I use a relative entropy coding algorithm I developed in \Cref{chapter:rec_with_pp} to encode a random weight setting as the representation of the data.
As I will show, this approach, which I call \underline{com}pression with \underline{B}ayesian \underline{i}mplicit \underline{ne}ural \underline{r}epresentations (COMBINER), allows us to control its compression rate as well as its distortion.
Furthermore, I provide experimental evidence that COMBINER performs well on image, video, audio and protein data compression tasks and outperforms significantly more complex competing methods.
\par
The appeal of INR-based compression is that the network architectures we usually need to achieve good performance are relatively simple: they are fully-connected and have only tens of thousands of parameters.
As such, reconstructing the data they encode requires orders of magnitude less compute and energy than the state-of-the-art approaches, which are based on autoencoder or diffusion models with many millions of parameters.
Therefore, I hope COMBINER can serve as a blueprint for implementing the first sustainable, practically implementable ML-based data compression codec. 
\section{Thesis Outline and Contributions}
\par
The thesis follows the flow of the three sections above.
Most of the content is based on previously published ideas and results but contains many new results as well.
Concretely:
\begin{itemize}
\item \textbf{\Cref{chapter:source_coding}} sets the scene for the thesis by exposing lossless and lossy source coding.
The chapter continues with an informal definition of relative entropy coding and explores its applications.
The chapter ends with the formal definition of relative entropy coding, a discussion of some of its properties, its relation to bits-back coding, and a synopsis of its history.
\item \textbf{\Cref{chapter:fundamental_limits}} investigates the fundamental limits of the communication and computational complexity of relative entropy coding.
Starting with communication complexity, I first define a new statistical distance called the channel simulation divergence, which I use to prove the first main result of the thesis, \Cref{thm:csd_rec_lower_bound}.
The theorem characterises the communication complexity of relative entropy coding maximally tightly and significantly extends a result of \citet{li2018strong}.
Then, I use this result to prove \Cref{thm:second_order_behaviour_of_csd}, which establishes a conjecture of \citet{sriramu2024optimal} and characterises the optimal description length of relative entropy coding algorithms for non-singular channels in the asymptotic setting.
\par
Turning to computational complexity, I formally define selection samplers and prove \Cref{thm:selection_sampler_runtime_lower_bound}.
This result shows that under a reasonable model of computation, any general-purpose selection sampler (hence any relative entropy coding algorithm) must scale super-exponentially in the information content of the problem, which demonstrates that practical relative entropy coding \deleted[id={PL}, comment={fixed ungrammatical sentence. PL/email/1}]{without} is hopeless unless we can leverage some extra structure in the problem.
\par
\underline{Contributions:} The sections on the channel simulation divergence and the computational complexity of relative entropy coding are based mainly on my paper with Daniel Goc \citep{goc2024channel}.
The results I include from the paper are my own unless I explicitly state otherwise.
The results on communication complexity are new and my own.
Finally, the results on the computational complexity of selection samplers are based on my paper with Lennie Wells \citep{flamich2024some}.
However, the specific results that appear in the thesis are my own.
\item \textbf{\Cref{chapter:rec_with_pp}}
begins with an overview of Poisson processes and four operations that preserve them: thinning, mapping, restriction and superposition.
Second, I show that each of the first three operations ``induces'' a selection sampler and, in turn, a general-purpose relative entropy coding algorithm: rejection sampling, A* sampling and greedy Poisson rejection sampling.
After defining them, I also analyse their computational and communication efficiency.
Third, I develop parallelised variants of these sampling algorithms using superposition.
Fourth, I consider how to construct approximate samplers from exact ones and show how we can use ideas from relative entropy coding to improve results for general-purpose sampling algorithms.  
\par
\underline{Contributions:} Until \Cref{sec:global_gprs}, the chapter largely follows the papers of \citet{maddison2016poisson} and \citet{li2018strong}.
\Cref{sec:global_gprs,sec:superposition_parallelisation} are based on my paper \citep{flamich2023gprs} while \Cref{sec:approximate_sampling} is based on my paper with Lennie Wells \citep{flamich2024some}.
However, the contributions I include in the thesis are my own.
\item \textbf{\Cref{chapter:branch_and_bound}} describes some of the more advanced computational theory of Poisson processes, which I then use to derive faster, so-called branch-and-bound variants of A* sampling and greedy Poisson rejection sampling for one-dimensional, quasiconcave distributions.
There are there are two pairs of main results in the chapter. 
First, \Cref{thm:bnb_a_star_codelength,thm:bnb_gprs_codelength} show that the average description length produced by relative entropy coding algorithms induced by the branch-and-bound variants of the samplers remains essentially the same.
Second, \Cref{thm:bnb_a_star_runtime,thm:bnb_gprs_runtime} show that using the branch-and-bound variants results in an exponential and super-exponential improvement in the runtimes compared to their general-purpose variants, respectively.
The chapter concludes with some numerical experiments demonstrating that the algorithms' empirical performance is in excellent alignment with the theory.
\par
\underline{Contributions:}
Branch-and-bound A* sampling was proposed by \citet{maddison2014sampling} and connected to Poisson processes in \citet{maddison2016poisson}.
My contribution is recognising that it could be used for relative entropy coding.
Furthermore, its analysis in the one-dimensional quasiconcave case is due to Stratis Markou and I \citep{flamich2022fast}.
However, the proofs I provide in this thesis are new and my own and are significantly simpler and tighter than the ones presented in \citep{flamich2022fast}.
The branch-and-bound variant of greedy Poisson rejection sampling and its analysis is based on my paper \citet{flamich2023gprs}.
However, once again, the proofs I provide of the main results are simpler and tighter than the ones in the original paper.
\item \textbf{\Cref{chapter:combiner}} briefly introduces nonlinear transform coding and implicit neural representations. 
Then, I introduce the compression with Bayesian implicit neural representations (COMBINER) framework and discuss its practical implementation.
Finally, I present numerical experiments demonstrating that it outperforms competing state-of-the-art methods on various image, video, audio and protein data compression tasks.
\par
\underline{Contributions:} The chapter is entirely based on my two papers \citet{guo2023compression} and \citet{he2024recombiner}.
My contributions in these papers are: 1) recognising that we could use relative entropy coding to improve compression with implicit neural representations and thus proposing the original project; 2) developing the relevant theory for the project (except the EM update equations); and 3) helping with the implementation of the compression algorithms.
\item \textbf{\Cref{chapter:conclusion}} concludes the thesis and discusses open questions and promising future directions for research.
\end{itemize}
\subsection{Required Preliminaries}
\noindent
I attempted to write the thesis to be as self-contained as possible.
Hence, a reader familiar with the basics of probability theory and information theory and a bit of source coding (e.g.\ familiar with the first five chapters of \citet{cover1999elements}) should understand most of the arguments.
I also assume some basic familiarity with elementary measure-theoretic concepts such as the definition of a measure, absolute continuity, and Radon-Nikodym derivatives.
Formally, all integrals are defined in the Lebesgue-Stieltjes sense so that the proofs can be stated in their greatest generality; in most cases, infinitesimal parts such as $dF(x)$ can be substituted without trouble for the more familiar $F'(x)dx$.
\section{Publications}
The thesis draws on eight papers I published during my doctoral degree, though it does not follow them particularly closely except for \citet{guo2023compression} and \citet{he2024recombiner}.
In particular, it is based on:
\begin{itemize}
\item \bibentry{flamich2022fast}
\item \bibentry{flamich2023adaptive}
\item \bibentry{flamich2023gprs}
\item \bibentry{flamich2023grc}
\item \bibentry{guo2023compression}
\item \bibentry{he2024recombiner}
\item \bibentry{goc2024channel}
\item \bibentry{flamich2024some}
\end{itemize}
I also published four additional papers during my degree that I do not describe in this thesis:
\begin{itemize}
\item \bibentry{lin2023minimal}
\item \bibentry{ujvary2023estimating}
\item \bibentry{he2024accelerating}
\item \bibentry{he2024getting}
\end{itemize}
\section{Notation}
\label{sec:notation}
\par
\textbf{Elementary notation.}
In the thesis, $\Reals$ denotes the set of real numbers, while $\nonnegReals$ denotes the set of non-negative reals.
Similarly, $\Nats$ denotes the set of natural numbers (including $0$), $\Ints$ denotes the set of all integers and $\posNats$ denotes the set of positive integers.
For $a, b \in \Reals$, $(a, b)$ and $[a, b]$ denote the open and closed intervals between $a$ and $b$, respectively.
For $a > b$, I adopt the convention that $[a, b] = \emptyset$ is the empty set.
For $a, b \in \Ints$, I define $[a:b] = [a, b] \cap \Ints$. 
The function $\ln(x)$ denotes the natural logarithm, and $\lb(x)$ denotes the binary logarithm of a real number $x$. 
Moreover, $\exp(x) = e^x$ and $\expb(x) = 2^x$ denote the natural and binary exponential functions, respectively.
For a set $A \subseteq \Omega, A^C = \Omega \setminus A$ denotes the complement of $A$ in $\Omega$ when the latter is clear from the context.
For a function $f: \XSpace \to \YSpace$ and a set $A \subseteq \YSpace$, I denote the preimage of $A$ under $f$ as $f^{-1}(A) = \{x \in \XSpace \mid f(x) \in A\}$.
For a measure $\mu$ over $\XSpace$ and measurable $f: \XSpace \to \YSpace$, I denote the pushforward of $\mu$ through $f$ as $f \pushfwd \mu$ defined as $(f \pushfwd \mu)(A) = \mu(f^{-1}(A))$ for measurable $A \subseteq \YSpace$.
I denote the standard Lebesgue measure over $\Reals$ as $\lambda$.
\par
\textbf{Probability notation.}
I denote random variables with bold letters such as $\rvx$, and their probability measures as $P_\rvx$.
Thus, $\rvx \sim P_{\rvx}$ means the random variable $\rvx$ has distribution $P_\rvx$; I might omit the subscript from the probability measure if there is no chance of confusion.
Analogously, $\rvx \sim \rvy$ denotes equality in distribution for two random variables $\rvx, \rvy$, while $\rvx \perp \rvy$ denotes that they are independent.
For probability measures $P_\rvx$ and $P_\rvy$, I denote their product measure as $P_\rvx \otimes P_\rvy$.
I denote the expectation of a random variable $\rvx \sim P$ as $\Exp_{\rvx \sim P}[\rvx], \Exp_P[\rvx]$ or $\Exp[\rvx]$ depending on whether the distribution of the variable is clear from the context or not, and its variance as $\Var[\rvx]$.
Furthermore, I denote its conditional expectation with respect to some information $\sigmaAlgebra$ as $\Exp[\rvx \mid \sigmaAlgebra]$.
For a predicate $\pi$ which applies to an element $x$ of space, I define $\pi$'s indicator function as 
\begin{align*}
\Ind[\pi(x)] = \begin{cases}
    1 &\text{if } \pi(x) = \mathtt{True} \\
    0 &\text{if } \pi(x) = \mathtt{False}.
\end{cases}
\end{align*}
For a predicate $\pi$ which applies to a random variable $\rvx$, I define $\Prob[\pi(\rvx)] = \Exp[\Ind[\pi(\rvx)]]$.
As a mild overload of this notation (but in line with the literature standard), for a $\Reals$-valued random variable $X$ that admits a density with respect to the Lebesgue measure, I sometimes denote its density evaluated at $x$ as $\Prob[X \in dx]$.
\par
\textbf{Information notation.}
I define all relevant information-theoretic quantities using the binary logarithm. 
Concretely, for two discrete, dependent random variables $\rvx, \rvy \sim P_{\rvx, \rvy}$, let $p(x) = P_\rvx(\{x\})$ denote the probability mass of the symbol $x$ and define the conditional probability mass $p(x \mid y)$ analogously. 
Then, I define the entropy of $\Ent{\rvx}$ and the conditional entropy $\Ent{\rvx \mid \rvy}$ as 
\begin{align*}
\Ent{\rvx} &= \Exp_{\rvx \sim P_\rvx}[-\lb p(\rvx)] \\
\Ent{\rvx \mid \rvy} &= \Exp_{\rvx, \rvy \sim P_{\rvx, \rvy}}[-\lb p(\rvx \mid \rvy)],
\end{align*}
respectively.
For two probability measures $Q \ll P$ over the same space with Radon-Nikodym derivative $dQ/dP$, I define the Kullback-Leibler (KL) divergence, also known as the relative entropy, of $Q$ from $P$ as
\begin{align*}
\KLD{Q}{P} = \Exp_{\rvx \sim Q}\left[\lb\left(\frac{dQ}{dP}(\rvx)\right)\right].
\end{align*}
Moreover, I define the mutual information between a pair of dependent random variables $\rvx, \rvy \sim P_{\rvx, \rvy}$ as
\begin{align*}
\MI{\rvx}{\rvy} = \KLD{P_{\rvx, \rvy}}{P_\rvx \otimes P_\rvy}.
\end{align*}
Finally, I denote the infinity norm of the Radon-Nikodym derivative as
\begin{align*}
\norm*{dQ/dP}_\infty = \esssup\left\{dQ/dP\right\},
\end{align*}
where the essential supremum is always taken with respect to the dominating measure $P$.
Then, I define the R{\'e}nyi $\infty$-divergence of $Q$ from $P$ as 
\begin{align*}
\infD{Q}{P} = \lb \left(\norm*{dQ/dP}_\infty\right).
\end{align*}
\added[id={CM}, comment={Fixed ``some operators (like h[]) appear without definition.'' from examiners' report.}]{%
Finally, for a random variable $\rvx \sim P$ that admits a density with respect to the Lebesgue measure $f = \frac{dP}{d\lambda}$, I define its differential entropy as}
\begin{align*}
\DiffEnt{\rvx} = \Exp_{\rvx \sim P}[-\lb f(\rvx)].
\end{align*}

\chapter{Relative Entropy Coding as a Source Coding Problem}
\label{chapter:source_coding}
\ifpdf
    \graphicspath{{2-SourceCoding/img}}
\else
    \graphicspath{{2-SourceCoding/img}}
\fi
\par
This chapter sets the theoretical context for the thesis.
First, I describe source coding, the mathematical framework for data compression, and arithmetic coding, a lossless data compression algorithm that I will use as a component of the methods I describe later.
Then, I will motivate the topic of the thesis, relative entropy coding, as a solution to the lossy source coding problem and give an informal introduction to how it can be implemented on a computer and its potential use cases.
Finally, I formally define relative entropy coding, discuss some of its properties and relation to bits-back coding, and conclude with a synopsis of its history.
\section{Source coding}
\label{sec:source_coding}
\noindent
The first thing we should establish in a thesis about data compression is what it is, exactly. 
At a high level, we can characterise it as 
\begin{displayquote}
A compressor computes a compact representation of the data, from which the data can be recovered with as little loss of information as possible.
\end{displayquote}
In this section, I will gradually build up to a formal meaning of this idea.
\par
\textbf{Basic building blocks.}
First, I will need to be able to talk about the set of all possible entities that we might be interested in encoding; I will denote it as $\XSpace$. 
Second, I need an at most countable set of representations for reasons that will become clear shortly. 
While any countable set will do, I will always choose the set of all finite-length binary strings $\{0, 1\}^*$ because almost all digital computers use binary representations. 
For an element $c \in \{0, 1\}^*$, the length of $c$ refers to the number of bits it consists of, and I will denote it by $\abs{c}$.
Consider a function $\enc: \XSpace \to \{0, 1\}^*$ mapping the data to representations, which I will call the \textit{encoder}.
Similarly, consider a \textit{decoder} $\dec: \{0, 1\}^* \to \XSpace$, which recovers the data from a given representation.
With this notation in place, we can characterise compression more formally:
\begin{displayquote}
A compressor is a pair of functions $\enc$ and $\dec$, such that for $\rvx \in \XSpace$ the codelength $\abs{\enc(\rvx)}$ is short and $\rvx \approx \dec(\enc(\rvx))$.
\end{displayquote}
\subsection{Lossless Source Coding}
\label{sec:lossless_source_coding}
\par
To simplify the discussion for the moment, let us eliminate one of the moving parts and require that $\rvx = \dec(\enc(\rvx))$; that is, we need to be able to recover the original message $\rvx$ exactly.
This special case of the problem is called \textit{lossless source coding}.
\par
This simplification allows us to establish the following key notion: what should a ``short'' codelength mean?
The fundamental idea here is that \textbf{compressibility is equivalent to predictability}.
A natural mathematical framework to use, then, is probability theory.
While avoiding probability and relying on the theory of computation instead is possible \citep{li2013introduction}, I will not consider this alternative in this thesis.
Thus, assume that we have built a probabilistic model of the data captured by a probability distribution $P$ over $\XSpace$.
The key technical advantage here is that such a model allows us to talk about the \textit{average} codelength rather than the \textit{worst-case} codelength.
This setup fits most applications better, too: when developing a compression scheme, unless an adversary is trying to ensure that our compressor performs as badly as possible, we can expect the occurrence of inputs that have unusually long description lengths to become negligible as we compress larger and larger quantities of data.
This fact is called the \textit{asymptotic equipartition property} and is a consequence of the law of large numbers.
\par
Thus, from now I define the lossless source coding problem as finding a code $(\enc, \dec)$ that minimises the average description length, that is $R = \Exp_{\rvx \sim P}[\abs{\enc(\rvx)}]$, where $R$ is called the \textit{rate} or \textit{bitrate} of the compressor.
This suggests that a good encoder $\enc$ should assign short representations to data with higher probability (as it will contribute more to the expectation) and longer representations to less predictable data.
However, this goal now raises an important question: if we define the \textit{optimal rate} as
\begin{align}
\label{eq:lossless_rate}
R^* = \min_{(\enc, \dec)} \Exp_{\rvx \sim P}[\abs{\enc(\rvx)}] \quad \text{subject to } \rvx = \dec(\enc(\rvx)),
\end{align}
i.e.\ $R^*$ is the best achievable codelength for a given distribution $P$ over the source, can we characterise it in terms of a simpler quantity?
The answer turns out to be yes, as was shown for the first time by Claude Shannon in his seminal work \citep{shannon1948mathematical}:
\begin{theorem}[Shannon's noiseless source coding theorem.]
\label{thm:shannons_theorem}
Let $\rvx \sim P$ over some space $\XSpace$, and let the optimal rate $R^*$ be defined as in \cref{eq:lossless_rate}.
Then,
\begin{align*}
\Ent{\rvx} \leq R^*.
\end{align*}
Furthermore, this lower bound is achievable in the sense that there exists a (not necessarily unique) code $(\enc, \dec)$ such that
\begin{align*}
R^* \leq \Ent{\rvx} + 1.
\end{align*}
Indeed, any code whose bitrate is within a constant of $\Ent{\rvx}$ is called an \textbf{entropy code}.
\end{theorem}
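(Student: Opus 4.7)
The plan is to prove the two bounds separately using the standard machinery built around Kraft's inequality. First, however, I would want to clarify the implicit regularity assumption on the code: since the decoder must recover $\rvx$ from $\enc(\rvx)$, the encoder must be injective, and in the standard source-coding setup one restricts further to uniquely decodable (in particular, prefix-free) binary codes so that sequences of codewords can be parsed unambiguously. I would state this convention explicitly before proving anything, since without it one can in principle pack arbitrary symbols into short binary strings and the clean bound $R^* \geq \Ent{\rvx}$ fails.

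For the lower bound, the plan is to invoke Kraft's inequality (extended by McMillan to all uniquely decodable codes): for any admissible $\enc$,
\begin{align*}
\sum_{x \in \XSpace} 2^{-\abs{\enc(x)}} \leq 1.
\end{align*}
Define a sub-probability mass function $q(x) = 2^{-\abs{\enc(x)}}/Z$ where $Z \leq 1$, and use the non-negativity of relative entropy (Gibbs' inequality) applied to $p$ and $q$:
\begin{align*}
\Exp_{\rvx \sim P}[\abs{\enc(\rvx)}] = -\sum_x p(x)\lb q(x) - \lb Z \geq \Ent{\rvx} - \lb Z \geq \Ent{\rvx}.
\end{align*}
Taking the infimum over admissible codes then yields $\Ent{\rvx} \leq R^*$.

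For the upper bound, the plan is to exhibit a code (Shannon--Fano) that nearly attains the entropy. Assign to each $x$ the length $\ell(x) = \lceil -\lb p(x) \rceil$. Then
\begin{align*}
\sum_{x} 2^{-\ell(x)} \leq \sum_x 2^{\lb p(x)} = 1,
\end{align*}
so by the converse direction of Kraft's inequality there exists a prefix-free binary code with these lengths, which I would construct explicitly by arranging the lengths in non-decreasing order and reading off the first unused node at each depth in an infinite binary tree. Taking expectations gives
\begin{align*}
R^* \leq \Exp_{\rvx \sim P}[\ell(\rvx)] \leq \Exp_{\rvx \sim P}[-\lb p(\rvx) + 1] = \Ent{\rvx} + 1.
\end{align*}

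The proof is essentially routine once the model of ``code'' is pinned down, so I do not anticipate a serious technical obstacle. The only subtle point is the Kraft--McMillan extension to uniquely decodable codes: if the reader is willing to restrict to prefix-free codes from the outset, one uses Kraft directly; otherwise one needs the slicker counting argument that bounds $\bigl(\sum_x 2^{-\abs{\enc(x)}}\bigr)^n$ by the number of distinct binary strings of length at most $n\cdot \max_x \abs{\enc(x)}$ and lets $n \to \infty$. This would be the one step I would flag as needing care, since it is what justifies the lower bound in full generality rather than just for prefix-free codes.
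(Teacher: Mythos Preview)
The paper does not prove this theorem; it is stated as a classical result and immediately followed by a pointer to arithmetic coding as a concrete entropy code. Your proposal supplies the standard textbook argument (Kraft--McMillan for the lower bound, Shannon--Fano lengths for the upper bound), which is correct and is exactly what a reader would find in, e.g., \citet{cover1999elements}.

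You are also right to flag the definitional subtlety: the paper's \cref{eq:lossless_rate} only imposes $\rvx = \dec(\enc(\rvx))$, i.e.\ injectivity of $\enc$, and under that literal reading the bound $\Ent{\rvx} \leq R^*$ is false (one-to-one codes can beat the entropy by a logarithmic amount). The paper is tacitly assuming uniquely decodable or prefix-free codes, as is standard; your explicit acknowledgement of this convention is a genuine improvement over the paper's informal presentation.
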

Besides its immense practical importance, \cref{thm:shannons_theorem} is particularly relevant for the work I present in this thesis, as it will serve as both a blueprint and a basis for many of my theoretical results.
In the next section, I describe one example of an entropy code called arithmetic coding, which I will use later.
\subsection{Lossless Source Coding with Arithmetic Coding}
\label{sec:arithmetic_coding}
\begin{figure}[t]
 \centering
 \hfill
 \begin{subfigure}[t]{0.477\textwidth}
     \centering
     \includegraphics[width=0.87\textwidth]{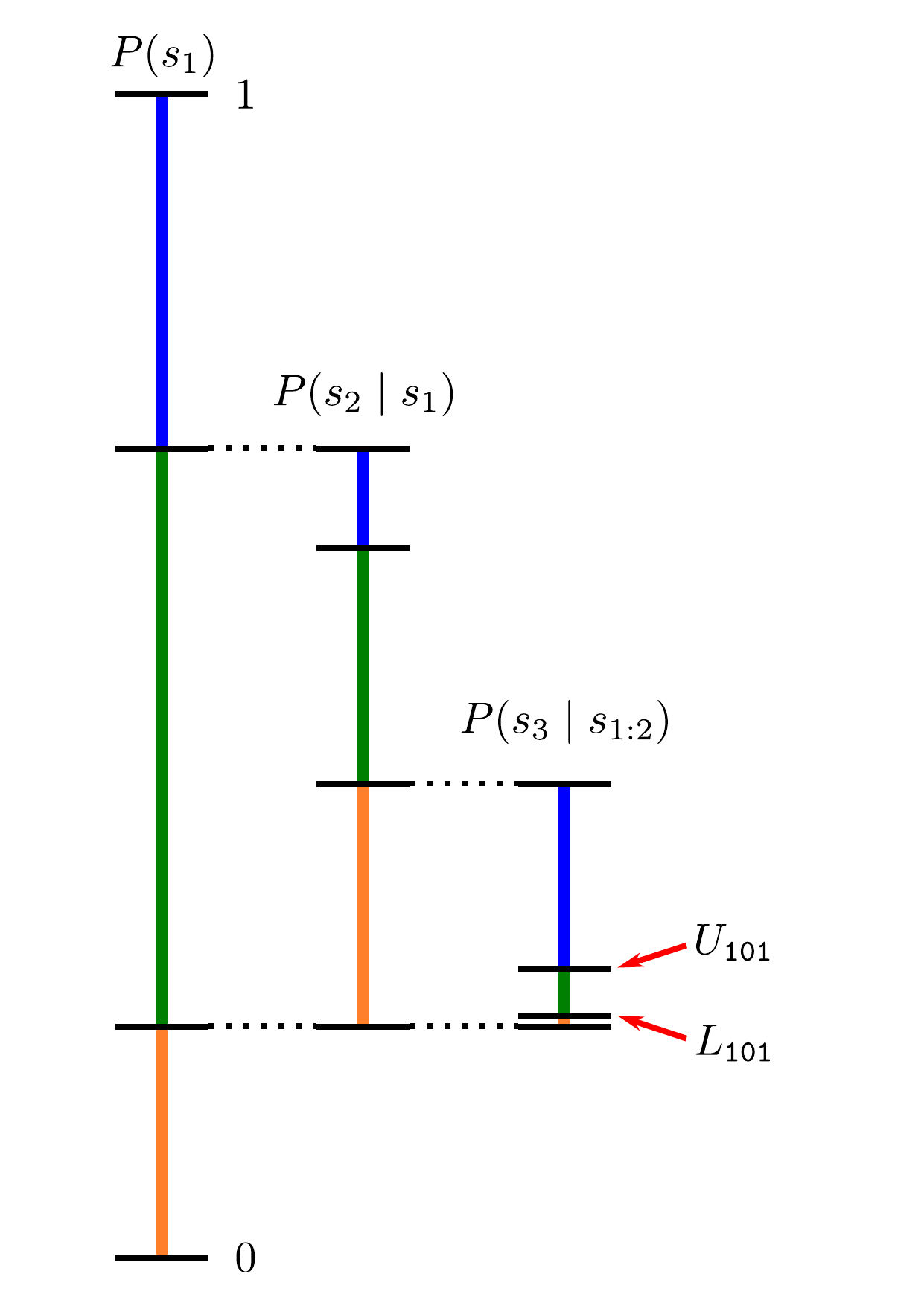}
     \caption{\textbf{Establishing the bounds:} The input string to compress is \texttt{101}.
     AC first finds the required lower and upper bounds, $L_{\texttt{101}}$ and $U_{\texttt{101}}$.}
     \label{fig:ac_first_stage}
 \end{subfigure}
 \hfill
 \begin{subfigure}[t]{0.477\textwidth}
     \centering
     \includegraphics[width=0.87\textwidth]{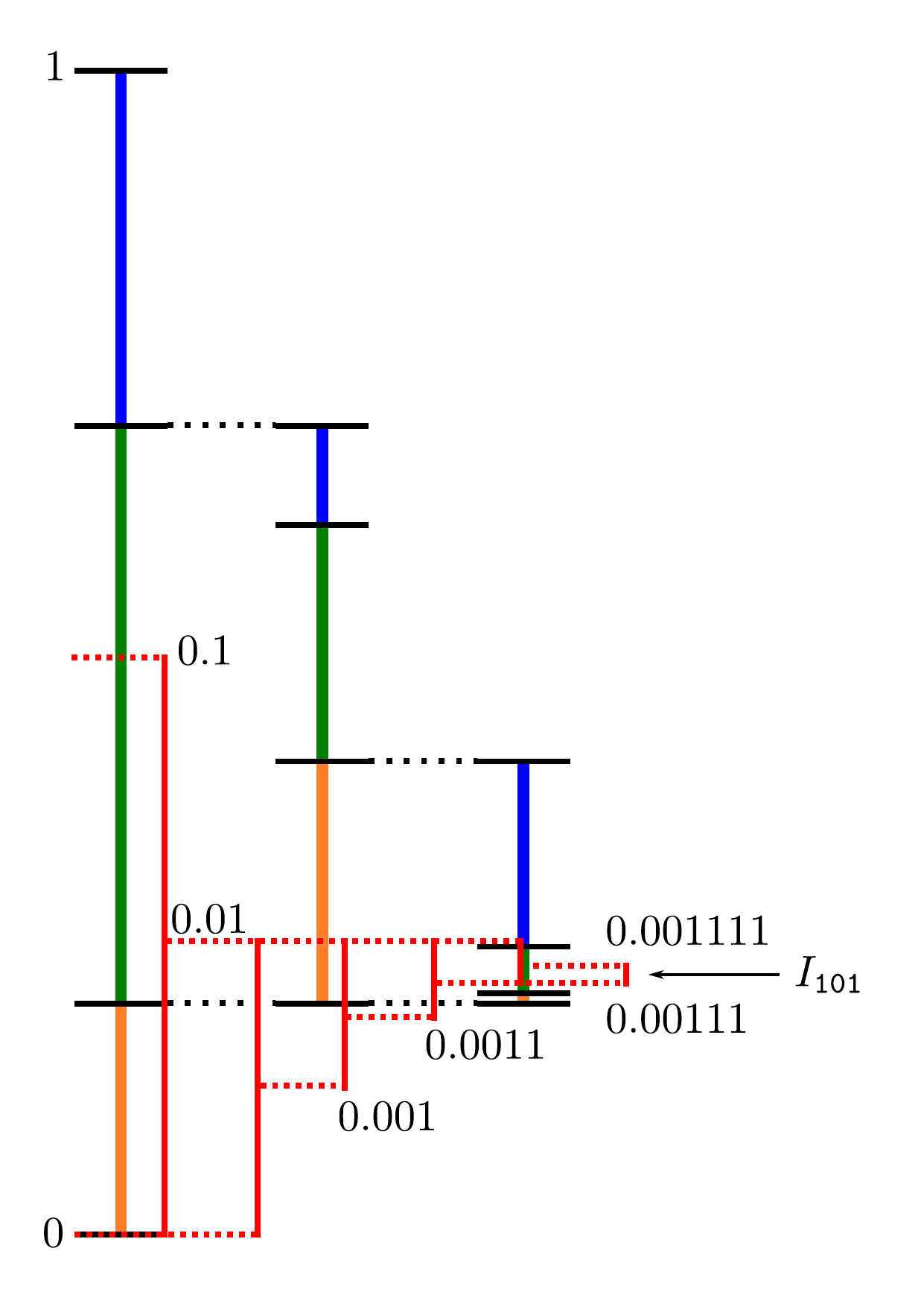}
     \caption{\textbf{Coding the bounds:} AC now finds the largest dyadic
       interval that falls within $[L_{\texttt{101}}, U_{\texttt{101}})$.
       The red intervals show the intermediate dyadic intervals considered
       by AC.}
     \label{fig:ac_second_stage}
 \end{subfigure}
 \caption[The two stages of arithmetic coding]{Example showing the two stages of AC in the encoding of the string $\rvx_{1:3} = \texttt{101}$ from the ternary alphabet $\{0, 1, 2\}$, with joint
 distribution $p(x_{1:3})$ over the symbols. 
 AC terminates after six steps, after finding that $[0.001110, 0.001111)$ falls between the established lower and upper bounds, and hence outputs $C(\texttt{101}) = 001110$.
 In the figures, the symbols $0, 1$ and $2$ are represented by the colours {\color{orange}orange}, {\color{green}green} and {\color{blue}blue}, respectively, and the length of the coloured bars between the black separators is proportional to the probability mass of the symbols. 
 Note that the probability masses can change for each symbol in the sequence.
 }
 \label{fig:ac_example}
\end{figure}
\par
In this section, I provide a brief overview of \textit{infinite precision} arithmetic coding (AC), i.e.\ I assume that the algorithm can perform infinite precision add and multiply operations. 
While this method cannot be implemented efficiently on a computer, it is sufficient for the thesis.
For finite precision AC, see any standard reference, e.g.\ \citet{cover1999elements}. 
From now on, I refer to the infinite precision variant simply as AC.
\par
For some $M, n \in \posNats$, let $\XSpace = \{1, \hdots, M\}$ be the source alphabet, and $\rvx_{1:n} = \rvx_1, \hdots, \rvx_n$ be $\XSpace$-valued random variables, with joint distribution $P_{\rvx_{1:n}}$ and probability mass function $p(x_{1:n})$.
The core idea of AC is to treat the whole source string as a single random variable and represent it as a unique real number in the interval $[0, 1)$.
Concretely, AC compresses a given a sequence $x_{1:n}$ in two steps:
\begin{enumerate}
\item \textbf{Establishing the bounds:} Let $F(x_k \mid x_{1:k - 1})$ be the \textit{cumulative distribution function} (CDF) of the $k$th member of the sequence:
\begin{align*}
F(x_k \mid x_{1:k - 1}) = \sum_{\substack{x \in \XSpace \\ x \leq x_k}} p(x \mid x_{1:k - 1}),
\end{align*}
and compute
\begin{align*}
U &= \sum_{k = 1}^n F(x_k \mid x_{1:k - 1}) \cdot p(x_{1:k - 1}) \\
L &= U - p(x_{1:n}). 
\end{align*}
By the definition of the CDF, $0 < p(x_k \mid x_{1:k - 1}) \leq F(x_k \mid x_{1:k - 1}) \leq 1$, hence $0 \leq L < U \leq 1$. 
Figure \cref{fig:ac_first_stage} shows a visual example of this step.
\item \textbf{Coding the bounds:} Once the algorithm has computed $L$ and $U$, it finds the largest \textit{dyadic interval} $I \subseteq [L, U)$. 
A dyadic interval is an interval whose length is a power of two, e.g. $I = [3/16, 5/16)$, where $\lvert I \rvert = 2^{-3}$. 
The key observation here is that a binary string of length $m$ can represent a dyadic interval of size $2^{-m}$. 
Concretely, observe that we can associate a binary string $b_{1:m} \in \{0, 1\}^m$ with a real number $\ell = 0.b_1b_2\hdots b_m000\hdots \in [0, 1)$.
Then, given $m$, the string $b_{1:m}$ uniquely specifies the interval $[\ell, \ell + 2^{-m})$.
Therefore, the binary code output by AC is precisely the binary string representing the above dyadic interval $I \subseteq [L, U)$. 
Note also, that $\lvert b_{1:m} \rvert = m = -\lb \lvert I \rvert$. 
Figure \cref{fig:ac_second_stage} represents a visual example of this step. 
\end{enumerate}
From \cref{fig:ac_example}, it should be intuitively clear that the coding procedure of AC is uniquely decodable. 
Showing it formally, however, is not insightful for this thesis. 
Thus, the reader is referred to \citet{cover1999elements}, Section 13.3. 
\par
The final question is: Why is AC efficient? 
For a source string $x_{1:n}$ the size of $[L, U)$ is by definition $U - L = p(x_{1:n})$, and the largest dyadic interval $I$ that fits into $[L, U)$ will be approximately at least half the size of $[L, U)$:
\begin{align*}
\lvert I \rvert \geq \frac{\lvert [L, U) \rvert}{2} = 2^{-(\lceil -\lb p(x_{1:n}) \rceil + 1)}.
\end{align*}
The code specifying $I$ will be exactly $-\lb\lvert I \rvert$, hence 
\begin{align*}
m = -\lb\lvert I \rvert \leq \lceil -\lb p(x_{1:n})\rceil + 1 \leq -\lb p(x_{1:n}) + 2.
\end{align*}
which is the information content of $x_{1:n}$ plus two extra bits. 
Taking expectations gives us the desired upper bound on the efficiency:
\begin{align}
\label{eq:arithmetic_coding_efficiency}
\Exp[m] \leq \Exp_{\rvx_{1:n}}[-\lb p(\rvx_{1:n})] + 2 = \Ent{\rvx_{1:n}} + 2.    
\end{align}
\subsection{Lossy Source Coding}
\label{sec:lossy_source_coding}
\par
Now that we have established that a ``short'' codelength should correspond to the rate, or in other words, the average description length produced by a code, let us return to the general setting and consider how we might relax the requirement that $\rvx = \dec(\enc(\rvx))$ to $\rvx \approx \dec(\enc(\rvx))$.
\par
A natural way to formalise the notion of ``approximate'' or ``close'' is to equip $\XSpace$ with a \textit{distortion measure} $d: \XSpace \times \XSpace \to \nonnegReals$.
Letting $\hat{\rvx} = \dec(\enc(\rvx))$, the quantity $d(\rvx, \hat{\rvx})$ measures the closeness of the reconstruction to the original data.
Though it is not always required, in this thesis, I will also assume that $d(\rvx, \hat{\rvx}) = 0$ if and only if $\rvx = \hat{\rvx}$, i.e.\ perfectly reconstructing the data is the only way to incur no distortion.
\par
How can we now incorporate this into our definition of a compressor? 
A popular choice is the \textbf{worst-case} formulation, which requires that no reconstruction exceed some distortion level $D \geq 0$.
In other words, we wish to find an encoder $\enc$ and decoder $\dec$, such that for every $\rvx \in \XSpace$ we have $d(\rvx, \dec(\enc(\rvx))) \leq D$.
However, for most practical purposes, this is both too draconian and inconvenient because 1) we expect to encounter pathological cases rarely and 2) analysing the worst-case behaviour of $\hat{\rvx}$ is usually intractably difficult.
Therefore, it stands to reason that we should allow randomising either $\enc$ or $\dec$ and consider the \textbf{average-case} distortion instead.
Thus, we can now state the average-case lossy source coding problem informally:
\begin{displayquote}
Given a statistical model of the data $\rvx \sim P$ and a distortion level $D \geq 0$, we wish to find a pair of (possibly random) functions $\enc$ and $\dec$, which minimise the average codelength $\Exp[\abs{\enc(\rvx)}]$ while also satisfying the average distortion constraint $\Exp[d(\rvx, \dec(\enc(\rvx))] \leq D$, where the expectations are taken over the distribution of $\rvx$ and the randomness of $\enc$ and $\dec$.
\end{displayquote}
Motivated by \cref{thm:shannons_theorem} and the low computational complexity of arithmetic coding for the case of lossless coding, three analogous questions arise:
\begin{enumerate}
\item \textbf{Lower bound:} Can we exhibit a non-trivial lower bound $L(P)$, such that for any $(\enc, \dec)$ that solve the above problem, we have $L \leq \Exp[\abs{\enc(M)}]$?
\item \textbf{Achievability:} Is a given lower bound $L(P)$ \textit{achievable}? In other words, can we exhibit a pair $(\enc, \dec)$ that solves the above problem and whose performance gets close to the lower bound, i.e.\ $\Exp[\abs{\enc(M)}] \lesssim L(P)$?
\item \textbf{Runtime:} Given an achievable lower bound $L(P)$, what are the computational limitations on the pairs $(\enc, \dec)$, and can we achieve them?
\end{enumerate}
\added[id={CM}, comment={Addressing the question ``Does this setup differ from rate distortion theory?''}]{%
For classical, quantisation-based approaches to lossy source coding, question 1 is answered by rate-distortion theory \citep{cover1999elements}, and questions 2 and 3 are answered by the vector quantisation algorithms and the theory of lattice coding \citep{zamir2014lattice}.
In this thesis, I take a different approach and consider lossy compression using relative entropy coding (which I define later in this chapter) instead of quantisation.
Later in this chapter, I will answer question 1 in the context of relative entropy coding based on an argument from \citet{li2024channel}, though we will see that this answer is not maximally tight.%
}%
To address this, in \cref{chapter:fundamental_limits}, I will delve deeper into the question of fundamental limits and derive tight results for questions 1) and the first half of 3).
Finally, in \cref{chapter:rec_with_pp,chapter:branch_and_bound}, I will show that the limits I derive are achievable, answering question 2 and the second half of 3.
\par
\textbf{Issues with lossless compression and quantisation.}
Before moving on to the formal definition, I make a few observations.
First, given the discussion so far, we can see that lossless source coding (\cref{sec:lossless_source_coding}) is a special case of lossy source coding when we require that there be no distortion, i.e.\ $D = 0$.
However, this is only possible if $\enc$ is invertible and $\enc^{-1} = \dec$, which immediately forces $\XSpace$ to be countable. 
We encounter a similar issue in the lossy source coding case, i.e.\ if we allow $D > 0$ to work with uncountable $\XSpace$, but require $\dec$ to be deterministic (with $\enc$ possibly randomised).
Since $\{0, 1\}^*$ is countable, the set of reconstructions $\dec(\enc(\XSpace))$ must also be countable.
Sadly, the countability of the set of reconstructions is a fundamental issue for machine learning.
The usual deep learning recipe is to parameterise the encoder and the decoder as neural networks $\enc_\theta$ and $\dec_\phi$ with weights $\theta$ and $\phi$, and use gradient descent to find the weights.
However, gradient descent only works over continuous spaces and thus cannot be applied here directly.
\par
\textbf{Towards relative entropy coding.}
To resolve the issue of countability, we might next consider randomising the decoder only: given the lossy reconstruction $\hat{\rvx}$, the decoder could introduce some noise $\hat{\rvx} + \epsilon$ (say) to the reconstruction, for some random noise $\epsilon$.
While this does make the reconstruction set uncountable, \citet{theis2021advantages} have shown it is also suboptimal: the decoder has only access to $\hat{\rvx}$, but not the original data $\rvx$, which restricts the efficiency of this scheme.
\par
\textbf{Relative entropy coding.}
Finally, the solution is to randomise both the encoder and the decoder \textit{in a synchronised way}, or, in more mathematical language, to couple them.
Concretely, I will assume that the encoder and decoder have access to the same realisation of some random variable $\rvz$, which I will call the \textbf{common randomness}.
Then, \textbf{relative entropy coding} is nothing but performing lossy source coding with such a pair of $(\enc_\rvz, \dec_\rvz)$ coupled via the common randomness $\rvz$.
The beauty of relative entropy coding is that it can now be applied to deep learning via the reparameterisation trick \citep{kingma2014auto}.
Namely, we parameterise $\enc_{\theta, \rvz}$ and $\dec_{\phi, \rvz}$ in such a way that 
\begin{align*}
\dec_{\phi, \rvz}(\enc_{\theta, \rvz}(\rvx)) \sim f(\rvx, \phi, \theta, \epsilon)
\end{align*}
for some function $f$ that is differentiable in $\theta$ and $\phi$, $\epsilon$ is some noise independent of $\rvx$, and $\sim$ denotes equality in distribution.
In practice, we would always start by designing $f$ first and then derive $\enc$ and $\dec$ from it; I will give examples of how this is done in \Cref{chapter:rec_with_pp,chapter:branch_and_bound,chapter:combiner}.
\section{Relative Entropy Coding Through Examples}
\label{sec:rec_through_examples}
\noindent
Before I move on to formally defining relative entropy coding and other mathematical gobbledygook, this section gives the reader 1) some intuition of the relative entropy coding API,\footnote{application programming interface, i.e.\ how one might \textit{use} the framework.} 2) an example ``implementation'', and 3) an overview of the most relevant use cases.  
%
\subsubsection{The Relative Entropy Coding API}
In practice, a relative entropy coding algorithm has a similar interface to lossless source coding algorithms, such as Huffman coding or arithmetic coding.
However, given a source symbol $\rvx$, instead of encoding $\rvx$ itself, the coding algorithm encodes \textbf{one sample} ${\rvy \sim P_{\rvy \mid \rvx}}$ from the conditional distribution $P_{\rvy \mid \rvx}$.
More concretely, the ingredients / moving parts are:
\begin{itemize}
\item \textbf{The channels:} A collection of ``selectable'' conditional probability distributions ${\DistFamily = \{P_{\rvy \mid \rvx} \mid \rvx \in \XSpace\}}$.
\added[id={CM}, comment={Addressing ``how do we pick $x \to y$? That's a design parameter, right?''}]{%
In practice, we get to pick this collection at our convenience, as it best fits our application; this is the real practical power of relative entropy coding.%
}
For example, $\DistFamily$ may consist of Bernoulli distributions with $\rvx = p$ representing the success probability, as was done by \citet{kobus2024universal}.
As another example, $\DistFamily$ could be the set of Gaussian distributions where $\rvx = (\mu, \sigma^2)$ such that $P_{\rvy \mid \rvx} = \Normal(\mu, \sigma^2)$, as will be the case in \cref{chapter:combiner}.
I will refer to an element of $\DistFamily$ as a \textit{channel} and assume that both the encoder and the decoder have agreed on the same set $\DistFamily$.
\item \textbf{The coding distribution:} a distribution $P_\rvy$, defined over the same space as the $P_{\rvy \mid \rvx}$s; it corresponds to the coding/source distribution in entropy coding.
I will assume that both the encoder and the decoder know $P_\rvy$, and as such, it cannot depend on anything the decoder does not already know prior to communication.
\par
\textbf{Remark.} In the traditional source coding literature, the coding distribution is also often called the source distribution.
However, in relative entropy coding, these two terms are generally different because we think of $\rvx$ as the source but are interested in encoding a randomly chosen $\rvy$.
\item \textbf{The pseudo random number generator (PRNG) seed:} For a fixed source symbol $\rvx$, the encoded sample $\rvy$ is still random.
We simulate this randomness on a computer using a PRNG with some seed $\rvz$.
However, this might appear contradictory since, given a symbol $\rvx$ and a seed $\rvz$, we will always get back the same sample $\rvy$.
Thus, technically, the randomness of $\rvy$ arises by randomly choosing a different seed $\rvz$ each time we encode a sample.
Therefore, mathematically, I will model the seed $\rvz$ as a random variable.
\par
I will assume that the encoder always prepends $\rvz$ to their message, and thus, they can always synchronise the seed with the decoder.
In practice, $\rvz$ usually is a 32- or 64-bit integer and thus contributes a constant number of bits to the total codelength.
\added[id={CM}, comment={Clarifying ``In the one-shot setting, you need to send the shared randomness, and this will
impose a fixed cost''}]{However, these 32 or 64 bits ar negligible to the several kilo- or megabytes of code that is required to encode $\rvy$ in practice, e.g., when encoding high-resolution images or video.}
As such, I will not account for the cost of communicating $\rvz$ in the results I present in the thesis.
\end{itemize}
Given the above ingredients, \textbf{the encoder's API} consists of a single function $\enc_{\rvz}(\rvx)$, which outputs a short message $m$ in the form of a binary string that corresponds to a sample $\rvy \sim P_{\rvy \mid \rvx}$ for the distribution $P_{\rvy \mid \rvx} \in \DistFamily$.
How exactly does the message $m$ correspond to the sample $\rvy$?
The answer is given by the \textbf{decoder's API}, which consists of a single function $\dec_\rvz(m)$, which has the property that $\dec_\rvz(\enc_\rvz(\rvx)) \sim P_{\rvy \mid \rvx}$.
Note that the encoder and decoder's seeds are synchronised.
Thus, a relative entropy coding algorithm is a sampling algorithm that outputs a code $m$ for the sample from which it can be reconstructed at any time, given the seed $\rvz$.
\subsubsection{Implementing \texorpdfstring{$\enc$}{the encoding} and \texorpdfstring{$\dec$}{the decoding function}}
\begin{figure}
\centering
\includegraphics{2-SourceCoding/img/selection_sampling_sketch.tikz}
\caption[Illustration of relative entropy coding using a selection sampler.]{Illustration of relative entropy coding for the pair of dependent random variables $\rvx, \rvy \sim P_{\rvx, \rvy}$ using a selection sampler.
The sender \textbf{A} and the receiver \textbf{B} share a sequence of i.i.d.\ $P_\rvy$-distributed samples as their common randomness $\rvz$.
Then, upon receiving a source sample $\rvx \sim P_{\rvx}$, \textbf{A} uses a selection rule $K$ that selects one of the samples in the shared sequence such that $\rvy_K \sim P_{\rvy \mid \rvx}$.
Since the selected index $K$ is discrete, \textbf{A} uses an appropriate entropy coding algorithm to efficiently encode $K$ and transmit it to \textbf{B}.
Finally, \textbf{B} can recover a $P_{\rvy \mid \rvx}$-distributed sample by decoding $K$ and selecting the $K$th sample in the shared sequence.
See the main text for details on the selection rule and the encoding process.
}
\label{fig:selection_sampling_sketch}
\end{figure}
\textbf{An Inefficient Implementation of $\enc$ and $\dec$.}
Given a source symbol $\rvx$ and a seed $\rvz$, the most straightforward implementation of $\enc_\rvz(\rvx)$ is to use the seed $\rvz$ to directly simulate a sample $\rvy \sim P_{\rvy \mid \rvx}$ from the channel of interest.
Then, we can use a lossless source coding algorithm such as arithmetic coding to encode $\rvy$ with the coding distribution $P_\rvy$.
Then, the decoder can recover $\rvy$ by using the same coding distribution $P_\rvy$, and they did not even need the shared seed $\rvz$!
But what is the coding efficiency of this solution?
Since we assume to have used an entropy coder, the average codelength will be approximately $\Ent{\rvy}$.
However, we can do much better, as I describe next.
\par
\textbf{An Efficient Implementation of $\enc$ and $\dec$: Selection Sampling.}
Surprisingly, there are schemes that can encode $\rvy$ using $\MI{\rvx}{\rvy} \leq \Ent{\rvy}$ bits on average.
This is especially significant when $\rvy$ is continuous since the mutual information can still be finite in this case, while the entropy is always infinite.
I now sketch a general way of constructing such a scheme, which I will refer to as \textit{selection sampling}.
However, note that it is in no way the only high-level recipe to implement relative entropy coding.
Nonetheless, it will be helpful for the reader to keep the following construction in mind, as it captures the essence of how the schemes I discuss in \cref{chapter:rec_with_pp} will work.
\par
I illustrate the ``algorithm'' in \cref{fig:selection_sampling_sketch}, and it works as follows:
At step $n$, the encoder simulates a new sample $\rvy_n \sim P_\rvy$ using the shared seed $\rvz$ and appends it to the previously collected samples, resulting in the list $\rvy_{1:n} = (\rvy_1, \hdots, \rvy_n)$.
Then, the algorithm uses a (possibly randomised) function $\mathtt{stop}(\rvy_{1:n})$, which outputs either \texttt{True} or \texttt{False} to decide whether to stop the algorithm or to move onto step $n + 1$.
If $\mathtt{stop}(\rvy_{1:n}) = \mathtt{True}$, then the algorithm uses a second function $\mathtt{select}(\rvy_{1:n})$, which outputs an integer $K \in [1:n]$.
Importantly, we construct $\mathtt{stop}$ and $\mathtt{select}$ in such a way that $\rvy_K \sim P_{\rvy \mid \rvx}$.
Finally, the encoder outputs a code for $K$ that they communicate to the decoder.
On the decoder's side, they decode $K$ and use it in conjunction with the shared seed $\rvz$ to simulate the sample $\rvy_K$.
\par
\textbf{An Example of Selection Sampling: Rejection Sampling.}
Perhaps the simplest selection sampling scheme is rejection sampling \citep{neumann1951various}.
Denote the density ratio/Radon-Nikodym derivative as $r_\rvx = \frac{dP_{\rvy \mid \rvx}}{dP_\rvy}$ and assume it is bounded, i.e.\ $\norm{r_\rvx}_\infty < \infty$.
Then, for an independently drawn uniform random variable ${U_n \sim \Unif(0, 1)}$, we set%
\begin{align*}
\mathtt{stop}(\rvy_{1:n}) &= \Ind[U_n \leq r_\rvx(\rvy_n) / \norm{r_\rvx}_\infty] \\
\mathtt{select}(\rvy_{1:n}) &= n
\end{align*}
In this case, we see that for every $n$
\begin{align*}
\Prob[\mathtt{stop}(\rvy_{1:n})] = \Prob[U_n \leq r_\rvx(\rvy_n) / \norm{r_\rvx}_\infty] = \Exp[r_\rvx(\rvy_n) / \norm{r_\rvx}_\infty] = 1 / \norm{r_\rvx}_\infty.
\end{align*}
Then, it follows that
\begin{align*}
\Prob[\rvy_{\mathtt{select}(\rvy_{1:n})} \in A \mid \mathtt{stop}(\rvy_{1:n})] &= \frac{\Prob[\rvy_n \in A, \mathtt{stop}(\rvy_{1:n})]}{\Prob[\mathtt{stop}(\rvy_{1:n})]} \\
&= \norm{r_\rvx}_\infty \cdot \Exp[\Ind[\rvy_n \in A] \cdot r_\rvx(\rvy_n) / \norm{r_\rvx}_\infty] \\
&= P_{\rvy \mid \rvx}(A),
\end{align*}
as desired.
\added[id={CM}, comment={Partially addressing the comment ``I would also recommend that the candidate make key definitions of algorithms clearer typographically (perhaps with boxes) or consider the use of algorithm environments''}]{%
I give the pseudocode for this procedure in \cref{alg:global_rs} in \cref{sec:global_rs}.%
}%
\par
\textbf{Encoding the Selected Index.}
How many bits do we need to encode the returned index $N$?
The best we can do is $\Ent{N \mid \rvz}$, i.e., encoding $N$ given our knowledge of the shared randomness $\rvz$.
While taking $\rvz$ into account when encoding $N$ is possible \citep{braverman2014public}, in most cases considering the ``marginal'' entropy $\Ent{N} \geq \Ent{N \mid \rvz}$ is usually much easier and fortunately already leads to optimal encoding by using a clever scheme devised by \citet{li2018strong}.
The high-level idea is to collect some statistics about the problem to construct a good coding distribution.
Specifically, they showed that if the encoder and the decoder know $\Exp[\lb N]$, then there is a coding distribution whose compression rate $R$ is 
\begin{align*}
R = \Exp[\lb N] + \lb (\Exp[\lb N] + 1) + 2 \geq \Ent{N}.
\end{align*}
For completeness, I repeat their argument in \cref{sec:li_el_gamal_bound_on_pos_int_random_variable}.
Now, if ${\Exp[\lb N] \approx \MI{\rvx}{\rvy}}$, then we have a scheme as efficient as I promised!
\par
So, how does rejection sampling fare?
First, note the distribution of $N$ for an input $\rvx$ is geometric with rate $\Prob[\mathtt{stop}(\rvy_{1:n})] = 1 / \norm{r_\rvx}_\infty$, since the events that the algorithm stops at step $n$ are i.i.d., hence $\Exp[N \mid \rvx] = \norm{r_\rvx}_\infty$.
Then, we have
\begin{align}
\Exp[\lb N \mid \rvx] &\leq e^{1 / \norm{r_\rvx}_\infty} \cdot \lb (\norm{r_\rvx}_\infty + 1), \tag{see \cref{sec:geom_log_expectation_bound}}\\
&\leq \lb \norm{r_\rvx}_\infty + 2 \tag{$1 \leq \norm{r_\rvx}_\infty$}\\
\Rightarrow \quad \quad  \Exp[\lb N] &\leq \Exp_{\rvx}[\lb \norm{r_\rvx}_\infty] + 2. \label{eq:rs_expected_index_log_upper_bound}
\end{align}
We see that the right-hand side of \cref{eq:rs_expected_index_log_upper_bound} is not quite the $\MI{\rvx}{\rvy}$ that we desire. 
Indeed it is always at least as large as the mutual information, so the question is: \textit{when} is it close? 
If $P_{\rvx, \rvy}$ is jointly Gaussian, we do have $\Exp_{\rvx}[\lb \norm{r_\rvx}_\infty] \approx \MI{\rvx}{\rvy}$ (I will show this in \cref{sec:numerical_experiments}, see \cref{eq:gauss_gaus_exp_inf_div}), and thus relative entropy coding with rejection sampling in the Gaussian case is essentially optimal.
\par
Another interesting case is to consider the problem from a computational perspective.
A natural requirement is that the average runtime of the sampling algorithm be finite, that is, $\Exp[N] \leq C$ for some constant $C > 1$.
As we saw above, average runtime of rejection sampling is $\Exp[N] = \Exp_{\rvx}[\norm{r_\rvx}_\infty]$. 
Hence, using Jensen's inequality, we obtain that the upper bound on $\Exp[N]$ implies that $\Exp_{\rvx}[\lb\norm{r_\rvx}_\infty] \leq \lb C$, and hence $\Exp_{\rvx}[\lb\norm{r_\rvx}_\infty] \leq \MI{\rvx}{\rvy} + \lb C'$ for some $C' \leq C$.
From this, using minimal manipulation we find that in this case
\begin{align*}
R \leq \MI{\rvx}{\rvy} + \lb(\MI{\rvx}{\rvy} + 1) + \lb C' + \lb(\lb C' + 3) + 2
\end{align*}
I reiterate, that we achieve the above coding rate by directly encoding the runtime $N$ of rejection sampling.
With some minimal additional ``post-processing'' we can reduce this further to
\begin{align*}
R < \MI{\rvx}{\rvy} + \lb(\MI{\rvx}{\rvy} + 1) + 2\lb (\lb C' + 1) + 8.31
\end{align*}
We can achieve this by including the uniforms $U_n$ we used for the acceptance decisions in the common randomness $\rvz$ and encoding the \textit{sorted position} of $U_N$ among the first $N$ (technically, the first $\exp_2(\lceil\lb N\rceil)$) uniforms; see \Cref{sec:rej_samp_with_sorted_unifs} for the details.
While this technically acheives our desired efficiency, the disadvantage is that the bound on the codelength depends on $C'$.
Depending on the application, this may or may not be acceptable. 
Hence, in the rest of the thesis, I will develop algorithms whose rate is independent of such constants $C'$.
\par
Finally, in the general case, it is challenging to control $\Exp_{\rvx}[\lb \norm{r_\rvx}_\infty]$ and thus rejection sampling is not the best selection sampler to use for relative entropy coding.
After developing the foundations of relative entropy coding in this chapter, I devote the entirety of \cref{chapter:rec_with_pp,chapter:branch_and_bound} to developing selection samplers with more sophisticated $\mathtt{stop}$ and $\mathtt{select}$ functions that are maximally efficient both in terms of codelength and their runtime.
\subsubsection{Applications of Relative Entropy Coding}
As I explained in \cref{sec:lossy_source_coding}, relative entropy is best thought of as a lossy source coding method and an alternative to quantisation.
But what concrete benefits does it offer compared to quantisation?
I collect its most salient advantages below. 
\begin{itemize}
\item \textbf{Learned lossy source coding.}
Recall from \cref{sec:lossy_source_coding} that in lossy source coding, we are given some data $\rvx \sim P_\rvx$ and a notion 
\replaced[id={PL}, comment={fixed ungrammatical sentence, PL/email/2}]{%
of distance $d$.
Then, we wish to design a \textit{lossy compression mechanism} that outputs some $\hat{\rvx}$ that is cheap to encode and makes the \textit{distortion} $d(\rvx, \hat{\rvx})$ small.
}{ 
distance $d$ we wish to design a \textit{lossy compression mechanism} that outputs some $\hat{\rvx}$ that is cheap to encode and such that the \textit{distortion} $d(\rvx, \hat{\rvx})$ is small.}
\par
We can employ relative entropy coding to obtain such a mechanism: we design a conditional distribution $P_{\hat{\rvx} \mid \rvx}$ such that encoding a sample $\hat{\rvx} \sim P_{\hat{\rvx} \mid \rvx}$ has low distortion; for example, we can require that on average the distortion does not exceed some pre-specified level $D$: $\Exp_{\rvx}[\Exp_{\hat{\rvx} \mid \rvx}[d(\rvx, \hat{\rvx})] \leq D$.
What is the best we can do under this setup?
For a fixed source distribution $P_\rvx$ and conditional distribution $P_{\hat{\rvx}\mid \rvx}$, relative entropy coding requires approximately $\MI{\rvx}{\hat{\rvx}}$ bits to encode a sample, hence the best we can do is
\begin{align}
\label{eq:source_coding_rate_distortion_function}
R(D) = \inf_{P_{\hat{\rvx} \mid \rvx} \in \DistFamily} \MI{\rvx}{\hat{\rvx}} \quad \text{subject to}\quad \Exp_{\rvx}[\Exp_{\hat{\rvx} \mid \rvx}[d(\rvx, \hat{\rvx})] \leq D.
\end{align}
When $\DistFamily$ contains all possible conditional distributions, the function $R(D)$ is known as the \textit{rate-distortion function} in the learned compression literature.
\added[id={CM}, comment={First part of answering the question ``What is the difference between the learned lossy source coding you present and the other papers in the literature?''}]{%
Note also that $R(D)$ is closely related to, but not equivalent to, the information rate-distortion function from classical information theory \citet{cover1999elements}.
The two differences are that in learned compression, 1) we do not assume that the joint distribution factorises dimensionwise as $P_{\hat{\rvx}, \rvx} = P_{\hat{\ervx}_1, \ervx_1}^{\otimes \dim(\rvx)}$ in the definition of $R(D)$ and as such 2) we do not (and cannot) analyse the asymptotic behaviour as $\dim(\rvx) \to \infty$.%
}
\par
While the connection to the rate-distortion function is a nice theoretical property, relative entropy coding offers a practical benefit too: we can directly optimise the objective in \cref{eq:source_coding_rate_distortion_function}; this is called ``end-to-end training'' in machine learning parlance.
Concretely, let us restrict $\DistFamily$ to consist of a family of \textit{parametric, reparameterisable distributions}: we assume that every distribution $P_{\hat{\rvx} \mid \rvx} \in \DistFamily$ admits a representation of the form $\hat{\rvx} = \phi(\theta, \rvx, \epsilon)$, for a deterministic function $\phi$ with parameters $\theta$ and independent noise $\epsilon \perp \rvx$.
Furthermore, we assume that $\phi$ is differentiable in its argument $\theta$; an example of this could be the set of Gaussian distributions of the form $P_{\hat{\rvx} \mid \rvx} = \Normal(\theta \cdot \rvx, I)$, which admits the reparameterisable representation $\hat{\rvx} = \theta \cdot \rvx + \epsilon$ where $\epsilon \sim \Normal(0, I)$.
Now, we can write down the Lagrange dual of the objective in \cref{eq:source_coding_rate_distortion_function}:
\begin{align}
\label{eq:source_coding_rate_distortion_objective}
\Loss(\theta, \beta) = \MI{\rvx}{\phi(\theta, \rvx, \epsilon)} + \beta \cdot \Exp_{\rvx, \epsilon}[d(\rvx, \phi(\theta, \rvx, \epsilon))] + \mathrm{constant},
\end{align}
where I suppressed the dependence on $D$ since it is constant with respect to $\theta$.
Now, using an initial guess for $\theta_0$, we can directly use gradient descent \citep{rumelhart1986learning} in conjunction with the reparameterisation trick \citep{kingma2014auto} to optimise \cref{eq:source_coding_rate_distortion_objective}, using Monte Carlo estimates for the expectations where appropriate. 
Moreover, using different settings for $\beta$ will correspond to different distortion levels $D$.
Of course, here I ignored many practical concerns, such as computing the coding distribution $P_{\hat{\rvx}}$ that is required for compression with relative entropy coding and computing the mutual information term in \cref{eq:source_coding_rate_distortion_objective}; I will deal with these practical matters in \cref{chapter:combiner}.
\par
\added{\textit{A note on alternative approaches to learned compression.}}
We cannot directly optimise \cref{eq:source_coding_rate_distortion_objective} if we use quantisation as our lossy mechanism, as it does not admit a differentiable reparameterisation since its output is discrete.
\added[id={CM}, comment={Addressing ``What is the difference between the learned lossy source coding you present and
the other papers in the literature?''}]{%
Hence, contemporary learned compression methods that do not use relative entropy coding have to resort to approximations instead.
The two most popular approaches are 1) to perform straight-through estimation (STE) for gradient descent and 2) to use an additive noise approximation.
STE-based approaches, such as those of \citet{theis2017lossy} or \citet{van2017neural}, make the ``approximation'' during training that the gradient of the quantiser $\mathcal{Q}$ is the identity instead of being zero almost everywhere: they ``set'' $\nabla_x \mathcal{Q}(x) = 1$ during backpropagation.
On the other hand, additive noise-based approaches, such as \citet{balle2017end,balle2018variational}, resort to scalar quantisation.
Then, during training only, they make the approximation that $\round{x} \approx x + U$ for $U \sim \Unif(-1/2, 1/2)$ and switch to hard quantisation for compression after training.
While both of these approaches can work well in practice, they ultimately suffer from the weaknesses of quantisation I discussed in \Cref{sec:lossy_source_coding}; their main advantage over relative entropy coding is that their implementation tends to be simpler.%
}%
\begin{figure}
\centering
\includegraphics{2-SourceCoding/img/realism_experiment.tikz}
\caption{The realism distinguishability experiment.
See the item titled ``Compression with realism constraints'' for an explanation.}
\label{fig:realism_experiment}
\end{figure}
\item \textbf{Compression with realism constraints.}
It is important to remember that our mathematical formulation of lossy source coding is only one among several sensible possibilities, and it is useful to question whether it captures what we wish to achieve operationally.
For example, consider video streaming: given that the lossy reconstruction captures the original video's semantic details, we really want it to \textit{look good}.
The rate-distortion formulation in \cref{eq:source_coding_rate_distortion_function} goes back to Shannon's original work \citep{shannon1948mathematical}, but how well does the distortion (e.g., mean squared error) capture the notion of looking good?
It certainly does to a point: if the distortion is significant, the reconstruction is probably not very good.
But what if the distortion is small?
Unfortunately, a well-known issue with mean squared error is that even if it is quite small, image/video compression algorithms optimised for it tend to produce blurry patches in areas of low local contrast.
These particular details are called \textit{compression artefacts} in the literature, compared to the \textit{artefact-free} original.
\par
A landmark insight came from \citet{blau2019rethinking}, who realised that, under mild assumptions, no distortion measure can be artefact-free.
This is because distortion does not capture the notion of an artefact or, from the complementary perspective, realism.
To see how we can extend the framework, we can start with the question we would like to answer: given, say, an image $\rvx$, how realistic does it look? Finding a mathematical formulation of this question is currently one of the holy grails of lossy compression \citep{theis2024what}.
The question is difficult to formalise because it would require a \textit{no-reference} formulation of realism: we are only given $\rvx$ and cannot compare it to anything.
Therefore, we slightly modify the setup to simplify things, illustrated in \Cref{fig:realism_experiment}: Assume we randomly selected an image $\rvx_0$ from a large dataset.
Then, with probability $1/2$, we compress it with our lossy compressor and obtain the reconstruction $\rvx_1 \sim P_{\hat{\rvx} \mid \rvx = \rvx_0}$ and with probability $1/2$, we leave it as is; denote the outcome by $\rvx_{\rvb}$.
Now, we ask the following \textit{forced binary choice} question:
Given only $\rvx_\rvb$, can we say whether it was compressed?
Now, since the answer is yes or no, the best result we can hope for is that no one can tell with probability better than $1/2$, i.e.\ a random guess: we call this \textit{perfect realism}.
However, this is only possible if the input distribution $P_\rvx$ matches the reconstruction distribution $P_{\hat{\rvx}}$ perfectly: $P_\rvx = P_{\hat{\rvx}}$.
Therefore, we can consider the rate-distortion function under a perfect realism constraint:
\begin{align*}
R_p(D) = \inf_{P_{\hat{\rvx} \mid \rvx} \in \DistFamily} \MI{\rvx}{\hat{\rvx}} \quad \text{subject to}\quad \Exp_{\rvx}[\Exp_{\hat{\rvx} \mid \rvx}[d(\rvx, \hat{\rvx})] \leq D \text{ and } P_{\rvx} = P_{\hat{\rvx}}.
\end{align*}
This formulation now presents the second benefit of relative entropy coding over quantisation:
\citet{theis2021advantages} have shown that there exist cases where $R_p(D)$ is \textit{strictly} lower if we use relative entropy coding than if we quantise.\footnote{Technically, they show that there are cases where allowing the encoded noise to depend on the input (i.e., using relative entropy coding) results in a strictly lower $R_p(D)$ value compared to using decoder-side only randomisation of a quantised representation.}
\item \textbf{Compressing differential privacy mechanisms.}
A particularly relevant aspect of contemporary data science and machine learning is the privacy of users' data.
For example, assume there is a trusted server (e.g., a hospital database) that holds some user data $\Data = \{\rvx_1, \rvx_2, \hdots, \rvx_n\}$ (e.g., patient records).
Now, assume an untrusted third party, e.g., a researcher or data scientist from a biotech company, wishes to analyse the $\Data$.
Differential privacy \citep{dwork2014algorithmic} provides a way to share a privatised version of $\Data$ with the third party, i.e., it allows the third party to extract useful information without learning the exact values of the $\rvx_i$.
We achieve this via a \textit{differential privacy mechanism}: we add noise to $\rvx$ in a controlled way: $\rvy = f(\rvx, \epsilon)$ for some carefully designed function $f$ and noise $\epsilon$.
For example, if $\rvx$ represents a patient's height,
we could add some Gaussian noise to privatise it: $\rvy = \rvx + \epsilon$, where $\epsilon \sim \Normal(0, \sigma^2)$.
We can tune the privacy-accuracy trade-off by adjusting the noise variance $\sigma^2$.
\par
An even stricter setup is when the server holding the users' data is also untrusted, so the users do not want the server to learn too much about their data.
An example of this is in federated learning, where a central server wishes to train a machine learning model using distributed gradient descent.
To this end, the server shares its model and model parameters with its clients, who then compute a local gradient update based on their locally held data.
In non-privatised federated learning, the clients send their gradients to the server, and the server combines them to perform a central model update.
However, in \textit{local differential privacy} \citep{kasiviswanathan2011can}, the users add noise to their data (the gradient update, in this example) in a controlled way so that the server cannot learn too much about their data.
For example, a popular local differential privacy mechanism is \texttt{PrivUnit}$_2$ \citep{bhowmick2018protection}.
Roughly speaking, once a user has computed their local gradient $\vg$, they decompose it into its magnitude and direction: $\vg = \norm{\vg} \cdot \vu$, where $\vu$ is a unit vector.
Then, the \texttt{PrivUnit}$_2$ perturbs the direction $\vu$, which the client then recombines with the magnitude and sends it to the server.
\par
Note that in both cases, one of the parties holds some data $\rvx$ but sends a noisy version $\rvy$ to the recipient.
This then provides a third natural application for relative entropy coding: we can express the privacy mechanism as a conditional distribution $P_{\rvy \mid \rvx}$ and use relative entropy coding to communicate the privatised data directly.
However, we cannot naively apply any relative entropy coding algorithm to the privacy mechanism; we must take additional care that the code (not just the sample it encodes) does not reveal too much about the user's data.
This problem was first studied by \citet{shah2022optimal}, who developed an approximate scheme; recently, \citet{liu2024universal} has developed the first privatised relative entropy coding algorithm that can encode exact samples from arbitrary privacy mechanisms.
\item \textbf{Easier theoretical analysis, better worst-case performance.}
I close this section on a more theoretical/technical point: relative entropy coding is simply \textit{better-behaved} than quantisation.
This is because a quantiser $\quantiser$ is a deterministic function, which makes the error analysis troublesome.
More precisely, given the quantisation error $\Delta(\rvx) = Q(\rvx) - \rvx$, we can perform two types of analyses: 
\begin{enumerate}
\item 
\textit{Worst-case analysis:} We can try to characterise $\sup_{\rvx}\abs{\Delta(\rvx)}$.
Performing this analysis can already be intractable in one dimension and is usually hopelessly difficult in higher dimensions.
\item
\textit{Average-case analysis:} Given the source distribution $P_\rvx$ we might try to characterise the expected error $\Exp_{\rvx}[\abs{\Delta(\rvx)}]$.
Unfortunately, the distribution of $\Delta(\rvx)$ will usually depend on $P_\rvx$ in a highly non-trivial way, which once again makes the analysis impossible except for the simplest of cases.
\end{enumerate}
Relative entropy coding suffers from none of these issues. 
First, we have absolute control over the error distribution: \textit{we pick it to our convenience}.
Second, since each time we encode a randomly perturbed version of $\rvx$, we do not need to worry about the worst-case error occurring, especially in high-dimensional settings where concentration of measure phenomena begin to kick in.
\end{itemize}
\subsubsection{The Current Limitations of Relative Entropy Coding}
If relative entropy coding is the bee's knees with so many advantages over quantisation, why is it not used in practice? First, I should note that this might be a somewhat unfair question since the applicability of relative entropy coding to practical data compression was only recognised a few years ago \citep{havasi2019minimal}.
Nonetheless, relative entropy coding currently possesses two significant disadvantages compared to quantisation: speed and synchronisation.
\par
First, all currently known relative entropy coding algorithms that can be applied to practically relevant problems are either too slow or too limited.
Unfortunately, in \cref{sec:computational_complexity_of_rec}, I show this is not a coincidence: any generally applicable relative entropy coding algorithm is slow.
Therefore, any fast algorithm must use some special structure of the problem class.
However, not all hope is lost: I dedicate all of \cref{chapter:branch_and_bound} to developing relative entropy coding algorithms that \textit{almost} apply to practical problems and are optimally fast.
\par
The second issue of relative entropy coding is that it poses a much greater engineering challenge than quantisation. Since it relies on shared common randomness, which is usually implemented with a shared pseudo-random number generator (PRNG) and a shared seed, we must always ensure that the encoder and decoder's PRNG states are synchronised; desynchronisation can lead to catastrophic decoding errors.
While this is ``only'' a matter of rigorous bookkeeping, it remains a monumental engineering challenge and is currently an under-investigated research area.
\section{Relative Entropy Coding, Formally}
\label{sec:rec_formally}
\noindent
In this section, I formally define the relative entropy coding problem; the rest of the thesis uses this formulation.
Before I begin, however, I need to introduce some further terminology.
I start by establishing the minimum necessary structure of the ambient space required to make sense of relative entropy coding.
\begin{definition}[Polish space]
A set $\XSpace$ is Polish if it is a separable, completely metrisable topological space.
\end{definition}
In this thesis, I define every measure, probability distribution, and random variable in an appropriate Polish space.
As such, every time I say ``space'' or ``random variable'', it should be understood that I mean ``Polish space'' or ``Polish random variable.''
But why use Polish spaces? Essentially, it is a matter of generality and convenience.
Most of the external results I invoke have been developed for Polish spaces.
However, this is no coincidence; Polish spaces possess the minimum necessary mathematical structure required to define probability distributions that we can evaluate and analyse easily and random variables that we can hope to simulate on a computer.
An alternative characterisation of Polish spaces is that it is topologically equivalent (homeomorphic) to a subspace of $[0, 1]^\Nats$ (a $G_\delta$ subset of the Hilbert cube).
Thus, in this sense, Polish random variables are the most general random variables we can simulate with inverse transform sampling, at least in theory, ignoring concerns with computational complexity.
\par
Now, I define the problem setting I will consider for most of the thesis.
\begin{definition}[Channel simulation algorithm]
\label{def:channel_simulation_alg}
Let $\rvx, \rvy \sim P_{\rvx, \rvy}$ be dependent random variables over the space $\XSpace \times \YSpace$ with joint distribution $P_{\rvx, \rvy}$.
Furthermore, let $\rvz \perp \rvx$ be a random variable over $\ZSpace$, such that there exists a pair of computable, measurable functions $\enc: \ZSpace \times \XSpace \to \{0, 1\}^*$ and $\dec: \ZSpace \times \{0, 1\}^* \to \YSpace$ with $\dec_\rvz(\enc_\rvz(\rvx)) \sim P_{\rvy \mid \rvx}$.
Then, I will call the triple $\rvz, \enc_\rvz, \dec_\rvz$ a \textbf{channel simulation algorithm} for $\rvx \to \rvy$.
Furthermore, I will assume that the encoder and decoder synchronise their realisation of $\rvz$; hence, I will call $\rvz$ the \textbf{common randomness} between the encoder and decoder.
\end{definition}
A channel simulation algorithm is thus a sampling algorithm, since $\dec_\rvz(\enc_\rvz(\rvx))$ produces a $P_{\rvy \mid \rvx}$-distributed sample.
However, it has more structure: it also produces a code for the sample via $\enc_\rvz(\rvx)$, from which we can always reconstruct the sample by invoking $\dec_\rvz$.
\par
\textbf{Remark.}
In the above definition, for a given $\rvx$, the only other source of randomness in $\enc$ comes from the common randomness $\rvz$.
But is this not too restrictive? Could we further generalise this definition by allowing $\enc$ to depend on some \textit{private} randomness $\rvz'$ too?
Interestingly, when we have \textit{unlimited common randomness} (essentially, we have ``$\Ent{\rvz} = \infty$''), we do not need any private randomness to design maximally efficient schemes. 
Indeed, in the rest of this thesis, I assume that the sender and receiver can synchronise an unlimited amount of common randomness; hence, the above definition is sufficient.
However, it is essential to remember that this is an additional assumption we need to make.
As I discuss in \cref{sec:some_properties_of_rec}, if we have limited common randomness, additional private randomness allows us to design more efficient schemes.
\par
The name \textit{channel simulation} comes from information theory, where people like to think of conditional distributions such as $P_{\rvy \mid \rvx}$ as noisy communication channels with $\rvx$ being the input to the channel and $\rvy$ representing some corrupted version of $\rvx$.
Usually, \textit{nature} corrupts $\rvx$, but in channel simulation, we take things into our own hands and design an algorithm for corrupting $\rvx$ ourselves.
Concretely, we can think of the channel simulation pipeline as a \textit{coupling} $g$ of $\rvx, \rvy$ and $\rvz$ via the relation $\rvy = g(\rvx, \rvz) = \dec_\rvz(\enc_\rvz(\rvx))$.
Such a $g$ contains more information than the conditional $P_{\rvy \mid \rvx}$, as there could be many different couplings $g$ that give rise to the same conditional.
However, can we also go the other way?
Namely, can we always find at least one coupling $g$ for an arbitrary conditional $P_{\rvy \mid \rvx}$?
It turns out that the answer to this question is not only positive, but as \citet{li2018strong} have shown, for any $\rvx, \rvy \sim P_{\rvx, \rvy}$ we can always construct an algorithm with some common randomness $\rvz$ such that $\abs{\enc_\rvz(\rvx)}$ is relatively short on average; this will be the essence of relative entropy coding.
As an interesting side note, the coupling definition of a channel is closer to the definition Shannon used to develop his theory originally \citep{shannon1948mathematical}; presumably, the information theory community only shifted closer to the conditional distribution definition since, for most practical purposes, the exact coupling mechanism is irrelevant.
\par
There is another interesting interpretation of channel simulation algorithms as \textit{invertible sampling algorithms}.
A sampling algorithm usually relies on a pseudo-random number generator (PRNG), which, given some seed (a binary string, usually represented as an integer), generates a random variate from some specified distribution.
In the case of a channel simulation algorithm, we can think of $\dec_\rvz$ as the PRNG and the code output by $\enc_{\rvz}$ as a second, ``private'' seed in comparison to the ``public'' seed $\rvz$.
Furthermore, this sampler is invertible in the sense that $\rvy$ is a deterministic function of $\rvx$ and $\rvz$ and hence we can recover ``the private seed that generated $\rvy$'' via the relation $\dec^{-1}_\rvz(\rvy) = \enc_\rvz(\rvx)$.
\par
However, channel simulation algorithms only capture the notion of correctness but leave out the notion of efficiency, as measured by their expected codelength.
This difference is analogous to the information-theoretic notion of a \textit{uniquely decodable code}, which only captures the correctness and \textit{entropy codes} (\cref{thm:shannons_theorem}), which are uniquely decodable codes whose efficiency achieves the fundamental lower bound of the entropy.
Therefore, as I hope it comes as no surprise to the reader, I now show that channel simulation algorithms also satisfy an appropriate lower bound given by an appropriate relative entropy.
I will then define a relative entropy code as a channel simulation algorithm that achieves this lower bound.
\begin{theorem}[Lower bound on the description length of channel simulation algorithms]
\label{thm:channel_simulation_mi_lower_bound}
Let $\rvx, \rvy \sim P_{\rvx, \rvy}$ be a pair of dependent random variables.
Let $\rvz, \enc_\rvz, \dec_\rvz$ be a channel simulation algorithm for the channel $\rvx \to \rvy$.
Then,
\begin{align*}
\MI{\rvx}{\rvy} \leq \Exp[\abs{\enc_\rvz(\rvx)}]
\end{align*}
\end{theorem}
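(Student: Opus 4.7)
The plan is to reduce the inequality to a standard source-coding bound via the data processing inequality, exploiting the fact that $\rvy$ is recoverable from the code $M = \enc_\rvz(\rvx)$ and the common randomness $\rvz$.

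First, let $M = \enc_\rvz(\rvx)$ denote the random codeword. By definition of the channel simulation algorithm, $\rvy = \dec_\rvz(M)$ is a measurable function of the pair $(\rvz, M)$. Therefore, by the data processing inequality,
\begin{align*}
\MI{\rvx}{\rvy} \;\leq\; \MI{\rvx}{\rvz, M}.
\end{align*}
Next, I would apply the chain rule for mutual information to decompose
\begin{align*}
\MI{\rvx}{\rvz, M} \;=\; \MI{\rvx}{\rvz} \,+\, \MI{\rvx}{M \mid \rvz}.
\end{align*}
Since $\rvz \perp \rvx$ by assumption, the first term vanishes, leaving $\MI{\rvx}{\rvy} \leq \MI{\rvx}{M \mid \rvz}$. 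Because mutual information is bounded by conditional entropy, and conditioning can only decrease entropy, I can further bound
\begin{align*}
\MI{\rvx}{M \mid \rvz} \;\leq\; \Ent{M \mid \rvz} \;\leq\; \Ent{M}.
\end{align*}

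The final step is to tie $\Ent{M}$ back to $\Exp[\abs{M}]$. This is where the standard lossless-coding inequality enters: for any $\{0,1\}^*$-valued random variable whose support is a prefix-free set (which is the natural operational assumption for $\enc$, so that the decoder can parse a codeword unambiguously from a binary stream), Kraft's inequality gives $\sum_m 2^{-\abs{m}} \leq 1$, and a short Gibbs-inequality argument against the subprobability distribution $q(m) \propto 2^{-\abs{m}}$ yields
\begin{align*}
\Ent{M} \;\leq\; \Exp[\abs{M}].
\end{align*}
Stringing the inequalities together completes the proof.

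The main (and really only) obstacle I anticipate is the last step: the definition of a channel simulation algorithm in the excerpt does not explicitly impose prefix-freeness on $\enc_\rvz$. If the definition is to be read strictly, one must either (i) note that the usual operational requirement for codes in the thesis is prefix-freeness, so the assumption is implicit, or (ii) invoke a slightly weaker bound $\Ent{M} \leq \Exp[\abs{M}] + \Ent{\abs{M}}$ from the chain rule $\Ent{M} = \Ent{\abs{M}} + \Ent{M \mid \abs{M}}$ and then absorb the length-entropy overhead by re-encoding $M$ prefix-free at negligible cost — this is the same route used in the Li–El Gamal construction referenced earlier in the chapter. Either way, the essential content of the theorem is the data-processing-plus-independence reduction in the first two steps; everything else is standard lossless-coding bookkeeping.
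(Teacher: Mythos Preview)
Your proof is correct and follows essentially the same route as the paper's: the paper presents the identical chain of inequalities (source-coding bound, conditioning reduces entropy, chain rule plus independence, data processing), just written in the reverse order, from $\Exp[\abs{\rvm}]$ down to $\MI{\rvx}{\rvy}$. The only minor difference is that the paper uses the \emph{equality} $\Ent{\rvm \mid \rvz, \rvx} = 0$ (since $\rvm$ is a deterministic function of $\rvx$ and $\rvz$) rather than your inequality $\MI{\rvx}{M \mid \rvz} \leq \Ent{M \mid \rvz}$, and it invokes Shannon's source coding theorem for the step $\Ent{\rvm} \leq \Exp[\abs{\rvm}]$ without addressing the prefix-free caveat you flagged.
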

\begin{proof}
Let $\rvm = \enc_\rvz(\rvx)$.
Then,
\begin{align*}
\Exp[\abs{\rvm}] &\stackrel{\text{(a)}}{\geq} \Ent{\rvm} \\
&\stackrel{\text{(b)}}{\geq} \Ent{\rvm \mid \rvz} \\
&\stackrel{\text{(c)}}{=}\Ent{\rvm \mid \rvz} - \Ent{\rvm \mid \rvz, \rvx} \\
&=\MI{\rvx}{\rvm\mid \rvz} \\
&\stackrel{\text{(d)}}{=}\MI{\rvx}{(\rvm, \rvz)} \\
&\stackrel{\text{(e)}}{\geq} \MI{\rvx}{\rvy},
\end{align*}
where equation (a) holds by Shannon's source coding theorem (\cref{thm:shannons_theorem}); (b) holds, since conditioning reduces the entropy; (c) holds since $\rvm$ is a deterministic function of $\rvz$ and $\rvx$; (d) holds by the chain rule of mutual information and since $\MI{\rvx}{\rvz} = 0$ due to independence; and (e) holds by the data processing inequality, since $\rvx \to (\rvm, \rvz) \to \rvy$ is a Markov chain.
\end{proof}
As I will show in \Cref{thm:global_gprs_codelength}, this lower bound is achievable within a logarithmic factor, which now warrants the following definition.
\begin{definition}[Exact relative entropy coding]
\label{def:rec}
Let $\rvz, \enc_\rvz, \dec_\rvz$ be a channel simulation algorithm for $\rvx, \rvy \sim P_{\rvx, \rvy}$.
I call the triple $\rvz, \enc_\rvz, \dec_\rvz$ an (exact) relative entropy coding algorithm if we also have
\begin{align}
\label{eq:rec_achievability_upper_bound}
\Exp[\abs{\enc_\rvz(\rvx)}] \leq \MI{\rvx}{\rvy} + \Oh(\lb (\MI{\rvx}{\rvy} + 1)).
\end{align}
\end{definition}
I will slightly relax this definition in \cref{sec:approximate_sampling} and consider approximate relative entropy coding algorithms whose output does not follow $P_{\rvy \mid \rvx}$ exactly.
However, for most of the theoretical part of the thesis, I will consider only \cref{def:rec}.
\subsection{Some Properties of Relative Entropy Coding}
\label{sec:some_properties_of_rec}
\textbf{Relation to Source Coding.}
For a discrete random variable $\rvx \sim P_\rvx$, let us set $\rvy = \rvx$.
Now, simulating the channel $\rvx \to \rvy$ is equivalent to encoding $\rvx$ so that it can be reconstructed exactly, which is precisely the lossless source coding problem.
Pleasingly, in this case $\MI{\rvx}{\rvy} = \MI{\rvx}{\rvx} = \Ent{\rvx}$, hence combining \cref{thm:channel_simulation_mi_lower_bound,def:rec} \textit{almost} recovers the lower bound and achievable upper bound from Shannon's source coding theorem \citep{shannon1948mathematical}.
Only almost, because the required upper bound \cref{eq:rec_achievability_upper_bound} is looser than the lower bound, as it features an additional $ \Oh( \lb(\MI{\rvx}{\rvy} + 1))$ factor.
In other words, while the upper and lower bounds' scaling matches in the first order, they differ in the second order.
Therefore, it is natural to wonder whether this requirement is due to the looseness of the analysis or if it is fundamental.
Surprisingly, \citep{braverman2014public, li2018strong} constructed pairs of dependent random variables $\rvx, \rvy \sim P_{\rvx, \rvy}$ such that the efficiency of any relative entropy coding algorithm is lower-bounded by at least ${\MI{\rvx}{\rvy} + \lb(\MI{\rvx}{\rvy} + 1) - \Oh(1)}$, showing that the additional logarithmic factor is indeed necessary for general algorithms.
Furthermore, in \cref{sec:communication_complexity_of_rec}, I will also demonstrate a situation where neither the lower nor the upper bound is tight!
\par
There are a few further similarities and differences between source coding and relative entropy coding worth pointing out.
First, while in the $\rvx = \rvy$ case channel simulation seems to collapse onto lossless source coding, it technically does not because in lossless source coding, there is no need for the encoder and the decoder to synchronise between each other some source of common randomness $\rvz$, as required by channel simulation.
Thus, do we need the common randomness $\rvz$ for channel simulation after all?
Second, a reader familiar with source coding will recognise a few characteristics of the above definition similar to analogous concepts in coding theory: the above definition requires a \textit{one-shot}, \textit{variable-length} code.
Thus, we may again ask whether other settings make sense, such as performing asymptotic instead of one-shot analysis or using fixed-length descriptions.
In the rest of this section, I delve into these questions in more detail.
\par
\textbf{The Role of Common Randomness.}
Unlike lossless source coding, relative entropy coding algorithms make heavy use of the common randomness $\rvz$.
But is it strictly necessary, or could it be possible to develop an algorithm that needs no access to common randomness while retaining the same codelength efficiency?
\citet{cuff2013distributed,kumar2014exact} investigated this question and found that common randomness is indeed fundamental.
In fact, \citep{kumar2014exact} have shown that the achievable lower-bound for channel simulation when no common randomness is available is given by a quantity they call the ``common entropy'' $G(\rvx, \rvy) \geq \MI{\rvx}{\rvy}$. 
They also show that if we allow $R$ bits of common randomness instead, the achievable lower bound starts decreasing.
Then, there is a rate $R_0$, at which the achievable lower bound reaches $\MI{\rvx}{\rvy}$ and further common randomness does not help.
However, without precise knowledge of the source distribution $P_\rvx$, it is impossible to determine $R_0$; hence, it is easier to consider the unlimited common randomness case.
Note that unlimited common randomness is not a practical issue, so long as we are happy to accept that the output of the pseudo-random number generator (PRNG) is ``random enough''.
If we make this leap of faith, then the encoder can share their PRNG seed with the decoder, using which they can generate a practically infinite amount of pseudo-random numbers.
Hence, in the rest of the thesis, I will assume that unlimited common randomness is available between the encoder and the decoder.
\par
\textbf{The One-shot vs.\ the Asymptotic Setting.}
Operationally, the reader may think of the one-shot setting as measuring the scheme's efficiency by measuring the codelength $\abs{\enc_\rvz(\rvx)}$ and averaging this number across several experiments, e.g.\ compressing individual images and averaging their codelength.
We can contrast this with the \textit{asymptotic} setting, where we create ``bigger and bigger'' experiments and normalise by the ``experiment size'', e.g.\ we compress a big dataset of images and measure the codelength of the compressed dataset normalised by the number of images in the dataset.
Formally, let $\rvx, \rvy \sim P_{\rvx, \rvy}$. 
In the one-shot setting, we consider a sequence $(\rvz_1, \rvz_2, \hdots)$ of different, i.i.d.\ realisations of the same common randomness and a sequence $(\rvx_1, \rvx_2, \hdots)$ of i.i.d.\ $P_{\rvx}$-distributed variables and we measure the average codelength by $\lim_{n \to \infty}\frac{1}{n}\sum_{k = 1}^n \abs{\enc_{\rvz_n}(\rvx_n)} = \Exp[\abs{\enc_\rvz(\rvx)}$.
On the other hand, in the asymptotic setting, for each $n \in \posNats$ we create the $n$-fold copy of the joint distribution $P_{\rvx, \rvy}^{\otimes n}$ and consider the relative entropy coding problem for the channel $\rvx^n \to \rvy^n$ and measure the scheme's efficiency by $\abs{\enc_\rvz(\rvx^n)} / n$.
Then, we are interested in quantifying $\lim_{n \to \infty} (\abs{\enc_\rvz(\rvx^n)} / n)$.
Most previous work in information theory has studied relative entropy coding from the asymptotic perspective, as tools such as the law of large numbers and the asymptotic equipartition property \citep{cover1999elements} allow a simplified analysis.
However, the one-shot setting is more relevant for implementing practical schemes and, besides a slight detour in \cref{sec:rec_asymptotic_behaviour}, I only consider the one-shot setting in this thesis.
\par
\textbf{Variable- vs. Fixed-length Description.}
Similarly, we may contrast the variable-length description with the \textit{fixed-length} (also called fixed-rate) description, i.e.\ a \textit{rate} $R$ is fixed ahead of time, and we are only allowed to use $R$-bit-long codewords.
However, it turns out that one-shot, fixed-rate schemes require the rate $R$ to be much larger than $\MI{\rvx}{\rvy}$ to work \citep{cubitt2011zero}, and hence I do not consider them further in this thesis.
However, this question becomes relevant again if we relax the requirement that a relative entropy coding algorithm has to output an exact sample, which is what I consider next.  
\par
\textbf{Exact vs. Approximate Schemes.}
In \cref{def:channel_simulation_alg}, I required that for any input $\rvx$ the reconstruction distribution $Q_{\rvy \mid \rvx} \sim \dec_\rvz(\enc_\rvz(\rvx))$ has to match the ``true'' conditional exactly: $Q_{\rvy \mid \rvx} = P_{\rvy \mid \rvx}$.
However, this might be too draconian a requirement in some cases, and an approximate sample might suffice.
But how should we measure the quality of a sample?
As I argue in \cref{sec:approximate_sampling}, the right notion (at least when applying relative entropy coding for lossy source coding) is the \textit{total variation distance} (TV-distance), defined between two probability measures $\mu, \nu$ over the measurable space $\Omega$ with $\sigma$-algebra $\mathcal{F}$ as
\begin{align}
\label{eq:total_variation_distance}
\TVD{\mu}{\nu} = \sup_{A \in \mathcal{F}}\abs{\mu(A) - \nu(A)}.
\end{align}
In \cref{sec:approximate_sampling}, I demonstrate that the TV-distance captures the notion of ``one-shot distinguishability,'' i.e., how difficult it is for an observer to say whether a given sample $\rvy$ was drawn from $\mu$ or $\nu$.
As I will show, the advantage of approximate schemes is that they admit efficient fixed-length descriptions with small expected TV-error $\Exp_\rvx[\TVD{Q_{\rvy \mid \rvx}}{P_{\rvy \mid \rvx}}]$. 
\subsection{Relation to Bits-back Coding}
\label{sec:bits_back_vs_rec}
\par
Relative entropy coding was introduced to machine learning by Marton Havasi and his collaborators \citep{havasi2019minimal}, inspired by the bits-back argument of \citet{hinton1993keeping}.
Coincidentally, around the same time, \citet{townsend2019practical} introduced a lossless source coding technique based on the bits-back argument called \textit{bits-back coding}.
This has created some confusion in the community regarding the relation between relative entropy coding and bits-back coding, which I hope to clear up here.
\par
First, to be clear, what I am about to describe and to which I will refer as ``bits-back coding'' is more general than the schemes described by \citet{hinton1993keeping} and \citet{townsend2019practical}.
This is to demonstrate both the generality of bits-back coding and to show that it is, as a technique, \textit{orthogonal} to relative entropy coding.
Essentially, bits-back coding can be described via the following heuristic ``equation'':
\begin{align*}
\text{error-correction with side information } + \text{ invertible sampling } = \text{ bits-back coding.}     
\end{align*}
In the rest of this section, I give meaning to this ``equation.''
\par
\textbf{Error Correction.}
Error correction (also known as channel coding) is a vast topic, so I will only give a basic description here; the interested reader should consult the excellent book of \citet{mackay2003information}.
The core idea behind error correction is that if we have some data we wish to communicate over a noisy channel, we build some redundancy into the data so that even if it gets slightly corrupted, we can recover the original data.
Perhaps the simplest example of this idea is the repetition code. 
For example, if we wish to communicate a single bit over a channel, we map $0 \mapsto 000$ and $1 \mapsto 111$.
Then, so long as no more than one bit of the message gets corrupted, we can recover the original message by mapping the strings $\{000, 100, 010, 001\}$ to $0$ and the strings $\{111, 011, 101, 011\}$ to $1$.
\par
Usually, when we design error-correcting codes, we assume the channel $P_{\rvy \mid \rvx}$ over which data is transmitted is fixed and out of our control.
Then, we try to design some code, represented by a source distribution $P_\rvx$, so we can correct it with high probability.
Concretely, we aim to design the code so that when an input $\rvx$ is passed through the noisy channel and corrupted to $\rvy$, it is decoded to something other than $\rvx$ with as low a probability as possible.
However, we could turn this problem on its head: we could assume that the source distribution $P_\rvx$ is fixed and we design a noisy channel $P_{\rvy \mid \rvx}$ such that we can \textit{always} recover $\rvx$ from $\rvy$.
To continue with the repetition code example, if we have some distribution over the strings $\{000, 111\}$, we can set $P_{\rvy \mid \rvx}$ to be the distribution that changes at most one bit of $\rvx$.
Then, we can always recover $\rvx$ by decoding $\rvy$ to $000$ or $111$, depending on which one is closer.
But why would we do such a thing? Why would we abandon the usual setup of error correction to this more artificial one?
The answer is to do bits-back coding and to see this; I now describe its second ingredient.
\par
\textbf{Invertible sampling.}
I already briefly discussed invertible sampling in \cref{sec:rec_formally}, but here, I expose it in a related but different sense.
The primitive we need to discuss is an arbitrary prefix-free entropy code with an encoding function $\enc$ and a decoding function $\dec$.
However, for this section only, I need to slightly extend the definitions of $\enc$ and $\dec$.
For the rest of the section, assume that $\rvm'$ denotes a sequence of already compressed messages that I will call the \textit{stream}.
Then, given a new message $\rvx$ and a probability distribution $P$, the encoding function $\enc_P(\rvx, \rvm')$ encodes $\rvx$ into the stream using a prefix code derived from $P$, producing the updated stream $\rvm = \enc_P(\rvx, \rvm')$.
Then, $\dec_P$ will denote the inverse of this generalised encoder, in the sense that $(\rvx, \rvm') = \dec_P(\enc_P(\rvx, \rvm'))$ for any input $\rvx$ and stream $\rvm'$.
Arithmetic coding is an example of such a pair of $\enc, \dec$ (\cref{sec:arithmetic_coding}).
\par
Now, consider the following puzzle: if we know the source distribution $P$ and use it to encode a source symbol $\rvx \sim P$ into an empty stream $m = \epsilon$,\footnote{$\epsilon$ denotes the empty string.} what is the distribution of the encoded message $\enc_P(\rvx, \epsilon)$?
Roughly speaking, we know that, on average, the encoded message will contain $\Exp[\abs{\enc_P(\rvx, \epsilon)}] \approx \Ent{\rvx}$ bits.
Furthermore, since $\enc_P(\cdot, \epsilon)$ is an invertible function, we also know that $\Ent{\enc_P(\rvx, \epsilon)} = \Ent{\rvx}$.
However, the entropy of a random binary string can only match its length if its bits are all independent, fair coin tosses.
\par
Given this observation, let us now consider what happens if we have an infinite sequence $\rvm$ of fair coin tosses and we apply $\dec_P$ and obtain $\rvx', \rvm' \gets \dec_P(\rvm)$?
By the invertibility of the above process, we must have $\rvx' \sim P$!
Thus, we can think of a prefix entropy code as an invertible sampler in the sense that if we think of $\rvm$ as a ``random seed,'' then we can ``decode'' a sample $\rvx' \sim P$ by computing $\rvx', \rvm' \gets \dec_P(\rvm)$, but we can also recover the seed by ``encoding'' $\rvx'$: $\rvm = \enc_P(\rvx', \rvm')$.
Here, I have omitted a few important technical points, such as the fact that we usually need a different number of random bits to obtain a $P$-distributed sample.
For a precise formulation of this argument, I refer the interested reader to the excellent paper of \citet{knuth1976complexity}.
\par
\textbf{Bits-back coding.}
Given some source distribution $P_\rvx$, and a prefix entropy code $\enc, \dec$, we could encode a source symbol $\rvx \sim P_\rvx$ into a stream $\rvm$ directly by computing $\enc_{P_\rvx}(\rvx, \rvm)$.
However, combining the above ideas of error correction and invertible sampling, we obtain a second method for encoding $\rvx$, called bits-back coding:
\begin{legal}
\item \textbf{Before communication.}
Given a source distribution $P_\rvx$, the sender and receiver design a channel $P_{\rvy \mid \rvx}$ so that the decoder can always recover $\rvx$ from $\rvy$.
Furthermore, they also compute the marginal $P_\rvy$.
\item \textbf{Bits-back Encoding}
\begin{legal}
\item 
Given a stream $\rvm$ and a source symbol $\rvx \sim P_\rvx$, the sender decodes a sample from the stream $\rvy, \rvm' \gets \dec_{P_{\rvy \mid \rvx}}(\rvm)$.
\item \textbf{Encoding Step 2.}
The sender encodes $\rvy$ into the stream using the marginal $P_\rvy$: $\rvm'' \gets \enc_{P_\rvy}(\rvy, \rvm')$.
\end{legal}
\item \textbf{Bits-back Decoding}
\begin{legal}
\item
Given $\rvm''$, the receiver decodes $\rvy, \rvm' \gets \dec_{P_\rvy}(\rvm'')$.
\item 
The receiver uses the error correction to recover $\rvx$ from $\rvy$.
\item 
The receiver encodes $\rvy$ back into the stream $\rvm \gets \enc_{P_{\rvy \mid \rvx}}(\rvy, \rvm')$.
\end{legal}
\end{legal}
Thus, bits-back coding is fully invertible, which hinges precisely on the receiver's ability to recover $\rvx$ from $\rvy$.
But how does its efficiency compare to encoding $\rvx$ directly?
First, decoding $\rvy \sim P_{\rvy \mid \rvx}$ decreases the message length by $\approx -\lb P_{\rvy \mid \rvx}(\rvy \mid \rvx)$ bits.
Then, encoding $\rvy$ into the stream using $P_\rvy$ increases the message length by $\approx -\lb P_\rvy(\rvy)$ bits.
In total, by applying Bayes' rule and noting that $\rvx$ is a deterministic function of $\rvy$, we find the message length increases by 
\begin{align}
\abs{m''} - \abs{m} 
\approx -\lb P_\rvy(\rvy) + \lb P_{\rvy \mid \rvx}(\rvy \mid \rvx)
= -\lb \frac{P_{\rvx}(\rvx)}{P_{\rvx \mid \rvy}(\rvx \mid \rvy)} 
= -\lb P_\rvx(\rvx) \text{ bits,}  
\label{eq:bits_back_no_side_info_codelength}
\end{align}
which is the same as the direct encoding technique.
However, if the efficiencies of direct and bits-back coding match, why should we bother with bits-back coding?
We bother because we can encode $\rvx$ without directly modelling its marginal distribution $P_\rvx$!
This is particularly powerful when $\rvx$ represents the equivalence class of $\rvy$ under some symmetry; this is the essence of the work of \citet{kunze2024entropy}.
\par
However, there are many cases where designing an error-correcting code that can always recover $\rvx$ from $\rvy$ only is challenging or impossible.
Instead, we introduce additional flexibility into the framework and require that the decoder can recover $\rvx$ from $\rvy$ given some side information $\rvs$. 
\par
As an example, I now sketch the method Jiajun He and I propose in our paper \citep{he2024getting}, which was partially inspired by the work of \citet{kunze2024entropy}. 
Consider a neural network $f$ whose weight parameterisation is rotationally invariant, i.e.\ for weights $\rvx$ and an arbitrary orthogonal matrix $Q$, we have $f(\cdot \mid \rvx) = f(\cdot \mid Q \rvx)$.
If we wish to compress such a model, it would be desirable to encode the rotational equivalence class of the weights 
\begin{align*}
[\rvx] = \{\rvx' \mid \text{ there exists an orthogonal matrix } Q \text{ such that } Q\rvx' = \rvx\}
\end{align*}
instead of $\rvx$. 
However, it is not straightforward to design such code directly.
Instead, we use bits-back coding to encode a \textit{randomly rotated} matrix.
To do this, without loss of generality, assume that the singular value decomposition (SVD) is of the form $\rvx = DV^\top$, i.e., the usual left-rotation that appears in the SVD is the identity matrix.
If it is not in this form already, we can always bring $\rvx$ in this form.
Now, we can define $\rvy = Q\rvx$ where $Q$ is a uniformly randomly drawn orthogonal matrix, i.e.\
$P_{\rvy \mid \rvx}$ randomly rotates $\rvx$.
Then, when the decoder receives $\rvy$, they compute its SVD $\rvy = \widetilde{Q}D\widetilde{V}^\top$.
Importantly, the SVD of a matrix is unique up to the sign of the rows of $\widetilde{Q}$; in other words, there is a diagonal matrix $\sigma$ with the diagonal entries taking values in $\{-1, 1\}$ such that $\sigma \widetilde{Q} = Q$.
\par
Thus, we can construct the following bits-back code: 
\begin{enumerate}
\item \textbf{Encoding step 1.} The sender ``standardises'' the weights: Given a weight setting $\rvx'$, they compute its SVD $\rvx' = UDV^{\top}$ and set $\rvx \gets DV^{\top}$.
\item \textbf{Encoding step 2.} Given the stream $\rvm$ and the standardized $\rvx$, the sender decodes a random rotation matrix $Q, \rvm' \gets \dec(\rvm)$.
\item \textbf{Encoding step 3.} They compute $\rvy = Q\rvx$ and encode it into the stream $\rvm'' \gets \enc(\rvy, \rvm')$.
\item \textbf{Encoding step 4.} They also compute the SVD of $\rvy = \widetilde{Q}D\widetilde{V}^\top$, and compute the sign matrix $\sigma$ such that $\sigma \widetilde{Q} = Q$ and encodes $\sigma$ into the stream $\rvm''$ as side information; this costs one bit per diagonal entry.
\end{enumerate}
The receiver decodes the sign matrix $\sigma$ from the stream first.
Then, using this side information, they can recover $Q$ for $\rvy$ and reverse the bits-back encoding process.
This procedure was used by \citet{he2024getting} to encode the weights of large language models with rotationally symmetric weight parameterisation, resulting in gains of up to a $5-7\%$ reduction in storage size.
For practical details, the interested reader should consult the original paper.
\par
\textbf{Connection to the relative entropy.}
Let us now consider the coding efficiency of general bits-back coding.
Compared to \cref{eq:bits_back_no_side_info_codelength}, the ``pure bits-back'' message length, i.e.\ discounting the communication cost of the side information, is on average
\begin{align*}
\Exp_{\rvy}[\abs{\rvm''} - \abs{\rvm}] 
\approx \Exp_{\rvy}[-\lb P_\rvy(\rvy) + \lb P_{\rvy \mid \rvx}(\rvy \mid \rvx)]
= \KLD{P_{\rvy \mid \rvx}}{P_\rvy} \text{ bits.}  
\end{align*}
This observation is precisely the bits-back argument of \citet{hinton1993keeping} when applied to the weights of a Bayesian neural network.
\par
\textbf{Connection to relative entropy coding.}
Note that the two connections of bits-back coding to relative entropy coding have been superficial so far.
First, they both rely on the concept of ``invertible sampling,'' but these concepts are distinct between the two frameworks.
Second, the relative entropy $\KLD{P_{\rvy \mid \rvx}}{P_\rvy}$ appears in the coding efficiency of both methods.
However, this is a deceptive connection in both frameworks' cases:
In the case of bits-back coding, the relative entropy discounts the communication cost of the side information. 
In contrast, in the case of relative entropy coding, it is a lower bound on the efficiency, which is not guaranteed to be tight in every case.
Indeed, relative entropy coding and bits-back coding are \textit{orthogonal} frameworks, so we can use bits-back coding to construct a relative entropy coding algorithm.
We can achieve this by setting the common randomness $\rvz$ of relative entropy coding as the side information $\rvs$ used by bits-back coding.
One example is \textit{bits-back quantisation}, a scheme I constructed with Lucas Theis to encode certain uniform random variables.
However, describing this scheme falls outside the scope of this thesis; the interested reader should consult Appendix C in \citet{flamich2023adaptive}.
\subsection{A Brief History of Relative Entropy Coding}
\label{sec:rec_history}
\par
\textbf{General relative entropy coding.}
Randomised codes are as old as information theory itself. 
Indeed, Shannon used them in his seminal work \citep{shannon1948mathematical} to prove his noiseless source coding theorem by using randomisation to show that good codes must exist asymptotically.
However, the study of channel simulation was started by
\citet{bennett2002entanglement}.
They studied communication problems over quantum channels, assuming the communicating parties have a shared quantum entanglement.
They were the first to observe that the classical analogue of quantum entanglement is common randomness and studied channel simulation for discrete memoryless channels.
The next important significantnt in the theory came with the work of \citet{harsha2010communication}, who constructed the first general-purpose relative entropy coding algorithm for arbitrary discrete channels.
The first fully general relative entropy coding algorithm for arbitrary Polish random variables was constructed by \citet{li2018strong}, who called their construction the ``Poisson functional representation.''
However, it turned out that a more general variant of this algorithm, solely for sampling, was proposed by \citet{maddison2014sampling}, who called it ``A* sampling.''
I have worked on and extended all of these works.
Concretely, in \citet{flamich2022fast}, among other contributions, I showed that the more general, branch-and-bound variants of A* sampling also give rise to fast relative entropy coding algorithms. I expose and analyse this algorithm in \cref{sec:global_a_star,sec:bnb_a_star}.
Later, in \citet{flamich2023adaptive} and \citet{flamich2023grc}, I generalised Harsha's greedy rejection sampler to arbitrary Polish spaces.
Furthermore, in \citet{flamich2023gprs} I proposed a new relative entropy coding algorithm inspired by the greedy rejection sampler, which I called the greedy Poisson rejection sampler (GPRS), which I will discuss and analyse in detail in \cref{sec:global_gprs,sec:bnb_gprs}.
\par
\textbf{Applying relative entropy coding to data compression.}
As I note at the beginning of \cref{sec:bits_back_vs_rec}, it was \citet{havasi2019minimal} who introduced relative entropy coding to the machine learning community and proposed an approximate relative entropy coding algorithm, called minimal random coding \citep[MRC;][]{havasi2019minimal,theis2022algorithms}, to compress variational Bayesian neural networks.
In my MPhil thesis, I worked with Marton Havasi to develop the first image compression algorithm that used MRC in conjunction with variational autoencoders \citet{flamich2020compressing}.
I later combined the approaches from both papers to develop general-purpose compression algorithms using Bayesian implicit neural representations in \citet{guo2023compression,he2024recombiner}; I devote the entirety of \cref{chapter:combiner} to discussing these.
While people did study channel simulation and relative entropy coding before, such as in the works of \citet{harsha2010communication,cuff2013distributed,braverman2014public,liu2018rejection,li2018strong}, these focused solely on proving existence and achievability results and did not pay any attention to developing computationally efficient and practically applicable algorithms.
\par
\textbf{Dithered Quantisation.}
Concurrently with my work, \citet{agustsson2020universally} have taken an alternative approach: they developed a very fast relative entropy coding algorithm for uniform channels only.
Concretely, they observed that dithered quantisation \citep[DQ;][]{ziv2003universal} can be viewed as a relative entropy coding algorithm when the conditional $P_{\rvy \mid \rvx}$ is uniformly distributed.
In the appendix of their paper, \citet{agustsson2020universally} already suggested a way to extend DQ to simulate one-dimensional Gaussian distributions; however, their work was later significantly generalised by \citet{hegazy2022randomized} and Lucas Theis and me in \citet{flamich2023adaptive} to work for general one-dimensional, unimodal channels.
As the name suggests, dithered quantisation works by randomising a quantiser in a highly controlled fashion, so that we achieve the desired output distribution.
As such, the development and theory of this line of relative entropy coders relies more on classical lattice quantisation \citep{zamir2014lattice} than general sampling theory, and hence I do not discuss DQ-based approaches further in this thesis.
Finally, very recently, there has been some progress in developing fast relative entropy coding algorithms for discrete multivariate distributions by \citet{sriramufast}, who achieved this by establishing a striking connection between error-correcting codes and relative entropy coding.
However, developing a fast algorithm for the practically most important channel, the multivariate Gaussian when $P_{\rvy \mid \rvx} = \Normal(\rvx, \sigma I)$, is still an open problem.
%
%
\chapter{The Fundamental Limits of Relative Entropy Coding}
\label{chapter:fundamental_limits}
This chapter explores the fundamental limits on the communication and computational complexity of relative entropy coding.
Furthermore, on the way to obtaining the characterisations, I also encounter concepts, such as the width function and selection sampling, that will be crucial in deriving and analysing the relative entropy coding algorithms in \cref{chapter:rec_with_pp,chapter:branch_and_bound}.
\section[The Communication Complexity of Relative Entropy Coding]{The Communication Complexity of \\ Relative Entropy Coding}
\label{sec:communication_complexity_of_rec}
\par
As I showed in \cref{thm:channel_simulation_mi_lower_bound}, the mutual information provides a lower bound to the average codelength of any relative entropy coding algorithm, and in \cref{chapter:rec_with_pp}, I show that it is also achievable up to a logarithmic factor as required by \cref{def:rec}.
However, as the first significant result of this section, I show that there is an even tighter lower bound.
To do this, I first define a new statistical distance that I call the \textit{channel simulation divergence}, then show that it gives a tighter lower bound to the expected codelength of any relative entropy coding algorithm.
Then, as the second significant result of the section, I will use this tighter lower bound to demonstrate that relative entropy coding has a non-trivial asymptotic coding efficiency.
\subsection{The Channel Simulation Divergence}
\label{sec:channel_simulation_divergence}
\begin{figure}[t]
\centering
\includegraphics{3-FundamentalLimits/img/gauss_width_function_illustration.tikz}
\caption[Illustration of the Gaussian width function.]{Illustration of the width function (\cref{def:width_function}) and the canonical representation of the density ratio (\cref{lemma:width_function_properties}.3 for the case when $Q = \Normal(0.3, 0.5^2)$ and $P = \Normal(0, 1)$.
\textbf{Left:} the density ratio/Radon-Nikodym derivation $\frac{dQ}{dP}$.
The $P$-measure of the {\color{red} red} interval is the value of the width function $w_P(h)$ at $h = 1.5$.
\textbf{Middle:} The width function $w_P(h)$, with the value at $h = 1.5$ marked out corresponding to the left plot.
It is also the probability density function of the associated random variable $H$ (\cref{lemma:width_function_properties}.1).
The {\color{red} shaded red area} is the survival function $S(h)$ evaluated at $h = 1.5$ (\cref{lemma:width_function_properties}.2).
\textbf{Right:} The canonical representation $\eta(p)$ of the density ratio (\cref{lemma:width_function_properties}.3).
The length of the red interval is the value of the width function $w_P(h)$ evaluated at $h = 1.5$, corresponding to the interval marked in the left plot.
}
\label{fig:gauss_width_function_illustration}
\end{figure}
\par
I define a new statistical distance called the \textit{channel simulation divergence} and establish its basic properties.
I first studied this quantity with Daniel Goc in our paper \citep{goc2024channel}, in which we established a lower bound to the coding efficiency of a subclass of relative entropy coding algorithms called \textit{causal rejection samplers}.
In this section, I give new, simplified proofs of the properties of the channel simulation divergence.
Then, I will use it to prove exciting new results that significantly extend our work in \citet{goc2024channel}, showing that it provides a general lower bound on the rate of any relative entropy coding algorithm in the next section.
To begin, I define an object that will play a crucial role over the next three chapters:
\begin{definition}[Width function]
\label{def:width_function}
Let $Q \ll P$ be probability measures over some space $\Omega$ with Radon-Nikodym derivative $r = dQ/dP$.
Then, the $P$ and $Q$ width functions are
\begin{align*}
w_P(h) &= \Prob_{Z \sim P}[r(Z) \geq h] \\
w_Q(h) &= \Prob_{Z \sim Q}[r(Z) \geq h]
\end{align*}
\end{definition}
The $P$-width function has a nice geometric interpretation: we can think of it as the size of the ``horizontal slice'' of $r$ at height $h$ weighted by $P$; see the left panel in \cref{fig:gauss_width_function_illustration} for an illustration.
Before I continue, let me collect some useful properties of width functions; see \cref{fig:gauss_width_function_illustration} for an illustration.
\par
\begin{lemma}
\label{lemma:width_function_properties}
Let $w_P$ and $w_Q$ be width functions as in \cref{def:width_function}.
Then, we have the following:
\begin{enumerate}
\item \textbf{Associated random variable.} There exists a non-negative random variable $H$ with probability density function $w_P$.
\item \textbf{Survival function.} Let $H$ denote the random variable with density $w_P$, and let $S(h) = \Prob[H \geq h]$ be its survival function.
Then,
\begin{align}
\label{eq:width_survival_function_identity}
S(h) = w_Q(h) - h \cdot w_P(h)
\end{align}
from which
\begin{align}
\label{eq:width_function_wP_wQ_identity}
dw_Q(h) = h \cdot dw_P(h)
\end{align}
in the Lebesgue-Stieltjes sense.
\item \textbf{Canonical representation of $r$.} 
$w_P$ is monotonically decreasing on $\Reals^+$, and hence in particular it admits the generalised inverse function \citep{embrechts2013note}:
\begin{align}
\label{eq:width_function_generalised_inverse}
\eta(p) = w_P^{-1}(p) = \sup\{h \in \supp(w_P) \mid w_P(h) \geq p \} = \lambda(w_P \geq p),
\end{align}
where $\lambda$ denotes the standard Lebesgue measure on $\Reals$.
Then, $\eta$ is monotonically decreasing and defines a valid probability density on $[0, 1]$.
\item \textbf{Relative entropy.}
We have the following three representations for $\KLD{Q}{P}$:
\begin{align*}
\KLD{Q}{P} &= \Exp[\lb H] + \lb(e) \\
&= -\DiffEnt{\eta}
\end{align*}
\end{enumerate}
\end{lemma}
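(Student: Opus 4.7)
The plan is to establish the four claims in order, building each on the previous, with Fubini-type manipulations and a change of variables $p = w_P(h)$ as the main tools. For Claim 1, non-negativity of $w_P$ is immediate, and the normalisation is a direct layer-cake computation:
\begin{align*}
\int_0^\infty w_P(h)\,dh = \int_0^\infty \Prob_{Z \sim P}[r(Z) \geq h]\,dh = \Exp_{Z \sim P}[r(Z)] = \int r\,dP = 1,
\end{align*}
since $r$ is a Radon-Nikodym derivative and $Q$ is a probability measure. For Claim 2, I would observe that both sides of the desired identity equal $\Exp_P[(r(Z) - h)_+]$: converting $w_Q$ to a $P$-expectation via $r$ gives $w_Q(h) - h\,w_P(h) = \Exp_P[(r(Z) - h)_+]$, and Fubini yields $S(h) = \int_h^\infty w_P(u)\,du = \Exp_P[(r(Z) - h)_+]$. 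The differential form $dw_Q(h) = h\,dw_P(h)$ then follows by differentiating $S(h) = w_Q(h) - h\,w_P(h)$ in the Lebesgue-Stieltjes sense and using $dS(h) = -w_P(h)\,dh$.

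For Claim 3, monotonicity of $w_P$ is immediate since it is a tail probability, and $\eta$ inherits monotonicity from its being a generalised inverse. To see that $\eta$ is a density on $[0,1]$, I would change variables via $p = w_P(h)$: as $h$ traverses $[0,\infty)$, $p$ traverses $[0,1]$, giving $\int_0^1 \eta(p)\,dp = -\int_0^\infty h\,dw_P(h) = \Exp_P[r(Z)] = 1$, where the final equality uses that $-dw_P$ is the law of $r(Z)$ under $P$. For Claim 4, both equalities follow from the representation $\KLD{Q}{P} = \Exp_P[r(Z)\lb r(Z)] = -\int_0^\infty h\,\lb h\,dw_P(h)$. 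For the first, I would integrate by parts, using $d(h\,\lb h) = (\lb h + \lb(e))\,dh$, to obtain $\KLD{Q}{P} = \int_0^\infty w_P(h)\,\lb h\,dh + \lb(e) = \Exp[\lb H] + \lb(e)$. For the second, substituting $p = w_P(h)$ in the same integral gives $\KLD{Q}{P} = \int_0^1 \eta(p)\,\lb \eta(p)\,dp = -\DiffEnt{\eta}$.

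The main obstacle I anticipate is justifying that the boundary terms vanish in the integration by parts for Claim 4, and that the Fubini swaps in Claims 1 and 2 are valid. The Fubini swaps are automatic from Tonelli, since the integrands are non-negative. The boundary terms require the mild integrability assumption $\KLD{Q}{P} < \infty$, which is customary in this setting; otherwise all three representations in Claim 4 are $+\infty$ simultaneously and the identities still hold vacuously. Working consistently in the Lebesgue-Stieltjes framework, as the paper already does, also sidesteps any issues arising from $w_P$ having jumps or flat pieces, the latter of which would otherwise require additional care when handling the generalised inverse in Claim 3.
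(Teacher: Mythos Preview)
Your proposal is correct and follows essentially the same approach as the paper: the layer-cake/Darth Vader rule for Claim 1, the Fubini computation yielding $\Exp_P[(r(Z)-h)_+]$ for Claim 2, the change of variables $p = w_P(h)$ for Claim 3, and the intermediate representation $\KLD{Q}{P} = -\int_0^\infty h\,\lb h\,dw_P(h)$ for both parts of Claim 4. The only cosmetic difference is that the paper reaches this intermediate representation via a Dirac-delta manipulation, whereas you state it directly as the pushforward identity for the law of $r(Z)$; your version is arguably cleaner, and your explicit attention to boundary terms and Tonelli is more careful than the paper's treatment.
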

\begin{proof}
The proofs follow by direct calculation.
\par
\textbf{(1)}
The function $w_P: [0, \infty) \to [0, 1]$ is non-negative on its domain since it is a probability. 
It remains to show that $w_P$ integrates to $1$.
To see this, note that
\begin{align*}
\int_0^\infty w_P(h) \, dh = \int_0^\infty \Prob_{Z \sim P}[r(Z) \geq h] \, dh \,\stackrel{\text{(a)}}{=}\, \Exp_{Z \sim P}[r(Z)] = \Exp_{Z \sim Q}[1] = 1,
\end{align*}
where in eq.\ (a), I used the Darth Vader rule \citep{muldowney2012darth}, which states that the expectation of a non-negative random variable equals the integral over its survival function.
\par
\textbf{(2)}
The proof follows by direct calculation:
\begin{align*}
S(h) &= \int_h^\infty w_P(\eta) \, d\eta \\
&= \int_\Omega \int_h^{\infty} \Ind[r(x) \geq \eta] \, d\eta \,dP(x) \\
&= \int_\Omega (r(x) - h) \Ind[r(x) \geq h] \, dP(x) \\
&= w_Q(h) - h w_P(h).
\end{align*}
For the second part, observe that the above equation implies the following equality of integral representations:
\begin{align*}
1 - \int_0^h w_P(\eta) \, d\eta &= \int_0^h \, d\big(w_Q(\eta) - \eta w_P(\eta)\big) \\
&= \int_0^h \, d w_Q(\eta) -  \int_0^h \, d\big(\eta w_P(\eta)\big) \\
&=\int_0^h \, d w_Q(\eta) - \int_0^h \eta\, dw_P(\eta) - \int_0^h w_P(\eta) \, d\eta,
\end{align*}
from which we have
\begin{align*}
1 + \int_0^h \eta\, dw_P(\eta) &= \int_0^h \, d w_Q(\eta),
\end{align*}
for which I will use the shorthand differential notation $h\cdot dw_P(h) = dw_Q(h)$. 
This can be formalised using the integration-by-parts formula for Lebesgue-Stieltjes integrals \citep{hewitt1960integration}.
Note that informally this relation is merely expressing that $r(z)\, dP(z) = dQ(z)$.
\par
\textbf{(3)}
That $w_P$ is monotonically decreasing on its domain follows from its definition as a survival probability.
To show that $\eta$ is a valid density on $[0, 1]$, observe that it is positive since it only takes values in $\supp(w_P) \subseteq \Reals^+$.
To see that it integrates to $1$, note that
\begin{align*}
\int_0^1 \eta(p) \, dp &= \int_0^1 w^{-1}_P(p) \, dp \\
&= \int_{w_P^{-1}(0)}^{w_P^{-1}(1)} w^{-1}_P(w_P(h)) \, dw_P(h) \\
&= -\int_0^\infty h \, dw_P(h) \\
&= \int_0^\infty w_P(h) \, dh \\
&= 1.
\end{align*}
\par
\textbf{(4)}
This follows from direct calculation:
\begin{align}
\KLD{Q}{P} &= \int_\Omega r(x) \lb (r(x)) \, dP(x) \nonumber\\
&= \int_0^\infty \int_\Omega \delta(r(x) - h) h \lb (h) \, dP(x) \, dh \nonumber\\
&= -\int_0^\infty h \lb (h) \, dw_P(h) \label{eq:width_fn_properties_proof_kl_diff_ent_identity}\\
&= \lb(e) + \int_0^\infty \lb (h) \cdot w_P(h)\, dh \nonumber\\
&= \lb(e) + \Exp[\lb (H)].
\end{align}
Similarly,
\begin{align}
-\DiffEnt{\eta} &= \int_0^1 \eta(p) \lb \eta(p) \, dp \nonumber\\
&= \int_0^\infty \int_0^1 \delta(\eta(p) - h) h \lb (h) \, dp \, dh \nonumber\\
&= -\int_0^\infty h \lb (h) \, dw_P(h), \nonumber\\
&= \KLD{Q}{P}, \tag{by \cref{eq:width_fn_properties_proof_kl_diff_ent_identity}}
\end{align}
where the third equality follows since the width functions of $\eta$ and $r$ are the same.
\end{proof}
The critical insight in \cref{lemma:width_function_properties} is that given two distributions $Q \ll P$ with $r = dQ/dP$, we can always create the random variable $r(Z)$ where $Z \sim P$.
Then, the $\KLD{Q}{P}$ can be expressed as the entropy of randomly selecting ``vertical slices'' of $\log r$: by definition, $\KLD{Q}{P} = \Exp_{Z \sim Q}[\log r(Z)]$. 
However, $r(Z)$ induces two new variables: 1) $H$ with PDF $w_P$, which allows us to compute the same information by considering ``horizontal slices'' instead; and 2) $\rho \sim \eta$, which defines a canonical representation of $r$ in the sense that $\KLD{Q}{P} = \KLD{\eta}{\Unif(0, 1)} = -\DiffEnt{\rho}$.
\par
Now, given the relation $\KLD{Q}{P} = -\DiffEnt{\rho}$, a natural question is to ask whether $\DiffEnt{H}$ also leads to a statistical distance.
The answer leads directly to the definition of the channel simulation divergence.
\begin{definition}[Channel Simulation Divergence]
\label{def:channel_simulation_divergence}
Let $Q \ll P$ with $r = dQ/dP$ and $P$-width function $w_P$, and associated random variable $H$.
Then, the channel simulation divergence of $Q$ from $P$ is
\begin{align*}
\CSD{Q}{P} &= -\int_0^\infty w_P(h) \lb w_P(h) \, dh \\
&= \DiffEnt{H}.
\end{align*}
\end{definition}
I had better justify the choice of name for the quantity $\CSD{Q}{P}$.
The rather humorous fact is that despite their prevalence in information theory, there isn't a commonly agreed-upon set of axioms that define a divergence.
However, one might expect a few properties from any divergence: it should be non-negative and vanish when and only when $Q = P$.
As I show in \cref{lemma:properties_of_csd}, the channel simulation divergence fulfils these properties; thus, I call it a divergence.
To justify the ``channel simulation'' part, I will show in \cref{thm:csd_rec_lower_bound} that $\CSD{Q}{P}$ is intimately linked to the channel simulation problem by showing that it provides a tighter lower bound to the best possible codelength.
To set up for this result, I now collect some useful properties of $\CSD{Q}{P}$.
\begin{lemma}[Properties of the channel simulation divergence]
\label{lemma:properties_of_csd}
Let $Q \ll P$ be probability measures over $\Omega$ with sigma algebra $\sigmaAlgebra$, and $\CSD{Q}{P}$ as in \cref{def:channel_simulation_divergence}.
Furthermore, let $r = dQ/dP$ and $\eta$ be its canonical representation.
Then,
\begin{enumerate}
\item \textbf{Divergence.} $\CSD{Q}{P} \geq 0$, with equality if and only if $Q = P$.
\item \textbf{Integral Representation.} We have
\begin{align*}
\CSD{Q}{P} &= \lb(e) \cdot \left(-1 + \int_0^1 \frac{1}{v} \int_0^v \eta(p) \, dp\, dv\right).
\end{align*}
\item \textbf{KL bound.} We have
\begin{align}
\label{eq:csd_kl_sandwich_bound}
\KLD{Q}{P} \leq \CSD{Q}{P} \leq \KLD{Q}{P} + \lb(\KLD{Q}{P} + 1) + 1.
\end{align}
\item \textbf{Self-information.} Let $\Omega$ be countable, and for $\rvx \in \Omega$ and $A \in \sigmaAlgebra$, denote the point-mass at $\rvx$ as ${Q_\rvx(A) = \Ind[\rvx \in A]}$.
Then,
\begin{align*}
\CSD{Q_\rvx}{P} = -\lb P(\rvx).    
\end{align*}
\end{enumerate}
\end{lemma}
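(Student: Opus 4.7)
I would handle Property 1 first by noting that $w_P(h) = \Prob_{Z \sim P}[r(Z) \geq h] \in [0, 1]$, so the integrand $-w_P \lb w_P$ is non-negative and hence $\CSD{Q}{P} \geq 0$; equality forces $w_P \in \{0, 1\}$ almost everywhere, and combined with monotonicity and $\int_0^\infty w_P = 1$ this pins $w_P$ down to $\Ind[\cdot \leq 1]$, giving $r \equiv 1$ almost surely and hence $Q = P$. For Property 2, I would apply Fubini to swap the order of integration in the double integral, rewriting it as $-\int_0^1 \eta(p) \ln p\,dp$, and then substitute $p = w_P(h)$ in the defining integral of $\CSD$ and integrate by parts to reach the working form $\CSD{Q}{P} = \Exp_{\rho \sim \eta}[-\lb \rho] - \lb(e)$, which matches the claimed representation after factoring out $\lb(e)$. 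Property 4 is a direct calculation: for a point mass $Q_\rvx$ the density ratio $r(z) = \Ind[z = \rvx]/P(\rvx)$ forces $w_P$ to be the single step equal to $P(\rvx)$ on $(0, 1/P(\rvx)]$, which plugged into the definition gives $-\lb P(\rvx)$.

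Property 3 is the main content. Both directions would start from the identity
\begin{align*}
\CSD{Q}{P} - \KLD{Q}{P} = \Exp_{\rho \sim \eta}\bigl[-\lb(\rho \eta(\rho))\bigr] - \lb(e),
\end{align*}
obtained by combining the working form from Property 2 with $\KLD{Q}{P} = -\DiffEnt{\eta}$ from \cref{lemma:width_function_properties}. For the lower bound I would invoke the probability integral transform: since $\eta$ is monotonically decreasing with $\int_0^1 \eta = 1$, its CDF $F$ satisfies $F(p) \geq p\eta(p)$, so $-\lb(\rho \eta(\rho)) \geq -\lb F(\rho)$; taking expectations and using $F(\rho) \sim \Unif(0, 1)$ yields $\Exp_\rho[-\lb(\rho \eta(\rho))] \geq \Exp_U[-\lb U] = \lb(e)$, which is exactly $\CSD{Q}{P} \geq \KLD{Q}{P}$.

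The upper bound will be the main obstacle. My plan is to use a coding-density argument in the spirit of the Li--El Gamal bound for positive-integer random variables: pick an auxiliary density $g$ on $[0, \infty)$ and apply the Gibbs inequality $\DiffEnt{H} \leq \Exp[-\lb g(H)]$ to the random variable $H$ with density $w_P$. A natural choice is a Pareto-type $g(h) \propto (1+h)^{-\alpha}$, reducing the task to bounding $\Exp[\lb(1 + H)]$. This can be controlled via the pointwise inequality $\lb(1 + h) \leq 1 + (\lb h)^+$ combined with the key observation that $w_P \leq 1$ forces $\Prob[H \leq t] \leq t$ on $[0, 1]$, so that $\Exp[(\lb H)^-] \leq \lb(e)$ and therefore $\Exp[(\lb H)^+] \leq \KLD{Q}{P}$. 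Optimising $\alpha$ over the resulting expression yields a bound of the claimed form $\KLD{Q}{P} + \lb(\KLD{Q}{P} + 1) + O(1)$; obtaining the tight additive constant of $1$ in the statement may require refining the choice of $g$ or a direct variational argument identifying the extremal decreasing density $\eta(p) \propto p^{-1/\mu}$ and evaluating the gap along this family.
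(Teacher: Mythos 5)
Your proofs of Properties 1, 2 and 4 match the paper's arguments essentially exactly: the non-negativity of $\varphi(x) = -x\lb x$ on $[0,1]$, the characterisation of the equality case via $w_P = \Ind[\cdot \leq 1]$, integration by parts to reach the $\eta$-form, and the direct evaluation for a point mass.

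For the lower bound in Property 3 you take a genuinely different, and arguably cleaner, route. The paper works in the $H$-coordinate: it writes $\lb(e) = -\int_0^1 \lb u\,du$, substitutes $u = 1 - S(h)$ where $S$ is the survival function of $H$, and applies the pointwise inequality $1 - S(h) \geq h\cdot w_P(h)$, which follows from the identity $S(h) = w_Q(h) - h w_P(h)$ (\cref{lemma:width_function_properties}) together with $w_Q \leq 1$. You instead work in the $\eta$-coordinate: from the intermediate form $\CSD{Q}{P} - \KLD{Q}{P} = \Exp_{\rho\sim\eta}[-\lb(\rho\eta(\rho))] - \lb(e)$, you use $F(p) \geq p\eta(p)$ (monotonicity of $\eta$) and the probability integral transform $F(\rho) \sim \Unif(0,1)$. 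The two inequalities are dual to one another — related by the generalised inverse connecting $w_P$ and $\eta$ — but your $\eta$-side version is self-contained given Property 2 and sidesteps introducing $S$ at all.

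The upper bound is where your plan has a genuine gap, which you flagged yourself. Your Gibbs-inequality strategy with a coding density for $H$ is exactly the paper's approach (packaged as \cref{lemma:bound_on_pos_bounded_rv}), and you correctly isolated the two essential facts: $\norm{w_P}_\infty \leq 1$ forces $\Prob[H \leq t] \leq t$ on $[0,1]$, hence $\Exp[(\lb H)^-] \leq \lb(e)$ and $\Exp[(\lb H)^+] \leq \KLD{Q}{P}$. But the density $g(h)\propto(1+h)^{-\alpha}$ does not deliver the constant $1$. Working it through — applying $\lb(1+h) \leq 1 + (\lb h)^+$ and optimising $\alpha$ — one lands at
\begin{align*}
\CSD{Q}{P} \leq \KLD{Q}{P} + \lb(\KLD{Q}{P}+1) + 1 + \lb(e) - \lb(\lb(e)),
\end{align*}
an excess of roughly $0.91$ bits. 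The fix used in \cref{lemma:bound_on_pos_bounded_rv} is the shape-matched density $q(h) = \min\{1, h^{-\lambda}\}/Z$ with $\lambda = 1 + 1/(\Exp[\lb H] + \lb(e))$ and $Z = 1 + 1/(\lambda - 1)$. The crucial feature is that $q$ is \emph{flat} on $[0,1]$ rather than Pareto-decaying there — precisely mirroring the constraint $\norm{w_P}_\infty \leq 1$ — which makes $-\lb q(h) = \lb Z + \lambda(\lb h)^+$ so that the Gibbs bound costs nothing on $\{H < 1\}$ beyond the fixed $\lb Z$. Plugging in $\Exp[\lb H] = \KLD{Q}{P} - \lb(e)$ then yields the stated constant exactly. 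Swapping in this density is the only change your proposal needs.
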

\begin{proof}
\par
\textbf{(1)}
Let $\varphi(x) = -x\lb (x)$, and observe that it is non-negative on $[0, 1]$.
However, since $w_P$ is a survival probability, its range is a subset of $[0, 1]$.
Hence, $\varphi \circ w_P$ is non-negative on its entire domain.
Since $\CSD{Q}{P}$ is the integral of $\varphi \circ w_P$ over its domain, it is also non-negative by the above.
\par
To see that the divergence vanishes when and only when $Q = P$, recall that since the integrand $\varphi \circ w_P$ is non-negative, the integral vanishes if and only if the integrand is almost surely zero over its support.
Now, since $\varphi(x) = 0 \Leftrightarrow x \in \{0, 1\}$, we see that $w_P$ must only take the values $\{0, 1\}$.
Pairing this with the facts that $w_P$ is monotonically decreasing and integrates to one, we must have $w_P(h) = \Ind[h \in [0, 1]]$, from which we see that $r(Z)$ must be almost surely $1$.
Since $Q \ll P$, this implies that $Q = P$.
\par
\textbf{(2)}
The result follows from direct calculation, namely:
\begin{align*}
\CSD{Q}{P} &= \int_0^\infty \varphi(w_P(h)) \, dh \\
&= \int_{w_P(0)}^{w_P(\infty)} \varphi(p) \, d\eta(p) \\
&= \underbrace{[\varphi(p) \eta(p)]_1^0}_{=0} + \int_0^1 \varphi'(p) \eta(p) \, dp \quad \big(= -1 - \Exp_{\rho \sim \eta}[\lb \rho]\big) \\
&=\underbrace{[\varphi'(p) \int_0^p\eta(q)\,dq]_0^1}_{= -\lb(e)} - \int_0^1 \varphi''(p)\int_0^p \eta(v) \, dv \, dp \\
&= \lb(e) \cdot \left(-1 + \int_0^1 \frac{1}{p} \int_0^p \eta(v) \, dv \, dp\right).
\end{align*}
\par
\textbf{(3)}
Loosely speaking, the lower bound expresses that $\KLD{Q}{P}$ is a first-order approximation of $\CSD{Q}{P}$ in the following sense.
Note, that by \cref{eq:width_survival_function_identity} in \cref{lemma:width_function_properties}, for $S(h) = \Prob[H \geq h]$, we have
\begin{align*}
1 - S(h) = h \cdot w_P(h) + (1 - w_Q(h)) \geq h \cdot w_P(h),
\end{align*}
where the inequality follows since $1 - w_Q(h) = \Prob_{Z \sim Q}[r(Z) < h] \geq 0$.
Then, we have
\begin{align*}
\lb(e) &= -\int_0^1 \lb (u) \, du \\
&= -\int_0^{\infty} w_P(h)\lb (1 - S(h)) \, dh \\
&\leq -\int_0^{\infty} w_P(h)\lb (h \cdot w_P(h)) \, dh \\
&= \CSD{Q}{P} - \Exp[\lb H],
\end{align*}
where in the second equality I used the substitution $u = 1 - S(h)$.
Rearranging the two sides yields the desired lower bound.
\par
For the upper bound, I rely on the following general result, which I prove in \cref{sec:bound_on_pos_bounded_rv}:
\begin{restatable}{lemma}{boundOnPosBoundedRV}
\label{lemma:bound_on_pos_bounded_rv}
Let $\rh$ be a positive, non-atomic random variable supported on $[0, \infty)$ admitting a density $p_\rh$ with respect to Lebesgue measure.
Furthermore, assume that $\norm{p_\rh}_\infty \leq 1$ and $-\lb(e) < \Exp[\lb (\rh)]$.
Then,
\begin{align*}
\DiffEnt{\rh} \leq \Exp[\lb (\rh)] + \lb(\Exp[\lb (\rh)] + 1 + \lb(e)) + 1 + \lb(e).
\end{align*}
\end{restatable}
Note, that by \cref{lemma:width_function_properties} (4), $H$ 
fulfils the criteria of \cref{lemma:bound_on_pos_bounded_rv}, from which we immediately obtain
\begin{align*}
\CSD{Q}{P} = \DiffEnt{H} &\leq \Exp[\lb (H)] + \lb(\Exp[\lb (H)] + 1 + \lb(e)) + 1 + \lb(e) \\
&= \KLD{Q}{P} + \lb(\KLD{Q}{P} + 1) + 1.
\end{align*}
\par
\textbf{(5)}
Follows from direct calculation.
\end{proof}
\subsubsection{Connection to Other Information Measures}
To my knowledge, the channel simulation divergence is not equivalent to any widely-used information measure.
However, it does turn out to be a very special case of a somewhat lesser-known quantity, called the \textit{cumulative residual entropy} \citep{rao2004cumulative}.
The cumulative residual entropy of a random vector $\rvx \in \Reals^n$ is defined as
\begin{align*}
\CumResEnt{\rvx} = -\int_{\Reals_+^n} \Prob(\abs{\rvx} > \lambda) \log \Prob(\abs{\rvx} > \lambda) \, d\lambda,
\end{align*}
where $\abs{\rvx}$ denotes taking absolute value elementwise.
Then, for $Q \ll P$ with $r = dQ/dP$, putting $R = dQ/dP(Y)$ with $Y \sim P$, we find that\footnote{I thank Yanxiao Liu for bringing this connection to my attention.}
\begin{align*}
\CSD{Q}{P} = \CumResEnt{R}
\end{align*}
However, despite this special relation, the two quantities have quite different applications.
On the one hand, the cumulative residual entropy was conceived as an alternative to Shannon entropy and differential entropy as a measure of uncertainty.
On the other hand, as we shall soon see, the main purpose of the channel simulation divergence is to provide a tighter characterisation of the communication complexity of relative entropy coding.
As such, I do not explore the above connection any further in this thesis.
\subsubsection{Computing the Channel Simulation Divergence}
Unfortunately, the channel simulation divergence is not as simple to evaluate as the relative entropy, even for reasonably simple densities, such as the Gaussian.
The difficulty is that even for relatively simple density ratios, the width function $w_P(h)$ can already be a non-elementary function.
Then, computing the differential entropy of the width function appears to be fairly hopeless; it is an interesting open question whether it admits an accurate approximation that is easier to evaluate.
Nonetheless, once we obtain the width function $w_P$, we can compute the differential entropy accurately via numerical integration.
In this section, I present some explicit cases where the width function and sometimes even the channel simulation divergence are available in analytic form.
I have computed the examples I present here in my papers \citet{flamich2023adaptive, flamich2023gprs} and \citet{goc2024channel}.
\par
\textbf{One-dimensional Laplace-Laplace case.}
To illustrate the difficulty, let me begin with a simple example: the one-dimensional Laplace distribution.
For a location parameter $m$ and scale parameter $s > 0$, it admits a density with respect to the Lebesgue measure:
\begin{align*}
\Laplace(x \mid m, s) = \frac{1}{2 s}\exp\left(-\frac{\abs{x - m}}{s}\right)
\end{align*}
Then, setting $Q \gets \Laplace(m, s)$ and $P \gets \Laplace(0, 1)$, we can immediately compute the density ratio $r = dQ/dP$ to be
\begin{align*}
r(x) &= \frac{1}{s}\exp\left(-\frac{\abs{x - m}}{s} + \abs{x}\right).
\end{align*}
However, the width function, assuming $0 < s < 1$ to guarantee that the density ratio is bounded, is quite involved even in this simple case \citep[Appendix G.5;][]{flamich2023gprs}:
\begin{align*}
w_P(h) &= 
\begin{cases}
1 - \exp\left(\frac{s}{1 - s} \ln(s \cdot h)\right) \cdot \cosh\left(\frac{m}{1 - s}\right) &\text{if } \ln h \leq - \frac{\abs{\mu}}{s} - \ln s \\
\exp\left(\frac{s^2}{1 - s^2 \cdot (\ln(s \cdot h) - \abs{m})} \right) \cdot \sinh\left(-\frac{s \cdot (\ln(s \cdot h) - \abs{m})}{1 - s^2 }  \right) &\text{if } \ln h > - \frac{\abs{\mu}}{s} - \ln s.
\end{cases}
\end{align*}
Sadly, there does not appear to be an analytic formula for $\CSD{Q}{P}$ at this level of generality.
However, for the special case when $m = 0$, we obtain \citep{goc2024channel}:
\begin{align*}
\CSD{\Laplace(0, s)}{\Laplace(0, 1)} \cdot \ln 2 &= s + \psi(1/s) + \gamma - 1,
\end{align*}
where $\gamma \approx 0.577$ is the Euler-Mascheroni constant and $\psi(x) = \Gamma'(x)/\Gamma(x)$ is the digamma function.
Comparing this with
\begin{align*}
\KLD{\Laplace(0, s)}{\Laplace(0, 1)} \cdot \ln 2 &= s +\ln(1 / s) - 1
\end{align*}
and using the fact that $-x \leq \psi(1/x) + \ln x \leq -x/2$ for $x > 0$ from eq.\ 2.2 in \citep{alzer1997some}, we find that
\begin{align*}
\gamma - s \leq (\CSD{\Laplace(0, s)}{\Laplace(0, 1)} - \KLD{\Laplace(0, s)}{\Laplace(0, 1)}) \cdot \ln 2 \leq \gamma - s/2
\end{align*}
Thus, we see that $\CSD{\Laplace(0, s)}{\Laplace(0, 1)} \to \KLD{\Laplace(0, s)}{\Laplace(0, 1)}) + \gamma \lb e$ as $s \to 0$, which I verify numerically in panel (A) of \cref{fig:csd_numerical_examples}.
\par
\textbf{Multivariate Isotropic Gaussian Case.}
A case that is significantly more practically relevant is the $n$-dimensional isotropic Gaussian case, when ${Q \gets \Normal(\mu, \sigma^2 I)}$ and ${P \gets \Normal(0, I)}$.
From this, we can compute the density ratio $r = dQ/dP$ to be \citep[Appendix G.4;][]{flamich2023gprs}:
\begin{gather*}
r(x) = \sigma^{-n} \exp\left(-\frac{\norm{x - m}^2}{2 s^2} + \frac{\norm{\mu}^2}{2(1-\sigma^2)}\right) \\
m = \frac{\mu}{1 - \sigma^2},\quad \quad s^2 = \frac{\sigma^2}{1 - \sigma^2},
\end{gather*}
From which the width function can be computed to be
\begin{align*}
w_P(h) &= \Prob\left[\chi^2_n(\norm{m}^2) \leq s^2\left(-2\lb (h) - n \lb (\sigma^2) + \frac{\norm{\mu}^2}{1 - \sigma^2}\right)\right],
\end{align*}
where $\chi^2_n(\lambda)$ is a noncentral chi-squared random variable with $n$ degrees of freedom and noncentrality parameter $\lambda$.
Unfortunately, there does not appear to be a closed-form solution for $\CSD{Q}{P}$ for any non-trivial setting of the parameters.
Nonetheless, we can compute it to arbitrary precision using numerical integration. 
\subsection[A Tighter Fundamental Lower Bound for the Communication Complexity]{A Tighter Fundamental Lower Bound for \\ the Communication Complexity}
\par
Armed with the definition of the channel simulation divergence and having examined its properties in \cref{sec:channel_simulation_divergence}, I now return to relative entropy coding.
In this section, I prove that the channel simulation divergence is a lower bound to the coding efficiency of any channel simulation / relative entropy coding algorithm.
\par
A weaker version of the following theorem appears as Proposition 1 in \citet{li2018strong} for channels $\rvx \to \rvy$ when $\rvy$ is discrete.%
\footnote{In fact, the lower bound that appears in Proposition 1 of \citet{li2018strong} seems more complicated at first sight. 
However, I realised that this quantity may be rewritten as the expected channel simulation divergence, which I showed in \citet{goc2024channel}; see Appendix B in the \texttt{arXiv} version of the paper.
This observation motivated \cref{thm:csd_rec_lower_bound}.}
Unfortunately, several steps in the proof use the discreteness assumption, which is the primary factor that complicates its extension to arbitrary $\rvy$.
\par
The high-level proof strategy in \cref{thm:csd_rec_lower_bound} is to fix some arbitrary common randomness $\rvz$ with $\rvz \perp \rvx$ such that $\rvy = g(\rvx, \rvz)$, as is usual in the channel simulation setup.
Then, I will use the conditional entropy $\Ent{\rvy \mid \rvz}$ as a proxy for the shortest achievable expected codelength for encoding $\rvy$ using $\rvz$; note that I use the Shannon entropy here rather than the differential entropy.
That there are cases where the conditional entropy can be finite even when $\rvy$ is continuous is \textit{the} non-trivial property of channel simulation.
Indeed, the rejection sampling scheme I construct in \cref{sec:rec_through_examples} already shows that this lower bound is non-trivial in many cases, in the sense that there exists some $\rvz$ for which $\Ent{\rvy \mid \rvz} < \infty$.
However, in \cref{chapter:rec_with_pp,chapter:branch_and_bound}, I construct  common randomness $\rvz$ and a coupling $g$ for which the conditional entropy is finite whenever $\KLD{Q}{P} < \infty$.
\par
\textbf{Definition of the conditional entropy.}
\textit{What is the definition} of the conditional entropy in this case?
When $\rvy$ is discrete with support $\YSpace$ with conditional probability mass function $p(y \mid z)$, the expectations over $\rvy$ and $\rvz$ in the definition are exchangeable, which I denote as identity (*) below:
\begin{align}
\label{eq:cond_entropy_discrete_exchangeability}
\Ent{\rvy \mid \rvz} = -\Exp_{\rvz}\left[\sum_{y \in \YSpace} p(y \mid \rvz) \lb p(y \mid \rvz) \right] \stackrel{(*)}{=} -\sum_{y \in \YSpace} \Exp_{\rvz}\left[p(y \mid \rvz) \lb p(y \mid \rvz) \right]
\end{align}
However, the difficulty is that when $\rvy$ is not discrete, each realisation of the common randomness $\rvz$ might select a different support for the conditional $\rvy \mid \rvz$, and so the identity (*) does not hold; in fact, the right-most expression does not even make sense.
Overcoming this issue is the main technical difficulty in extending the proof of \citet{li2018strong}. 
Therefore, I now describe the mathematical machinery necessary to handle such changing supports.
However, while I will give as much intuition for the construction as possible, full formal development of the necessary theory is well beyond the scope of this thesis; the interested reader should consult \citet{daley2007introduction} for a comprehensive introduction. 
\par
To reiterate, the difficulty is that a generalised definition of \cref{eq:cond_entropy_discrete_exchangeability} requires some language to talk about averaging probability masses with different supports.
The big idea, then, is instead of considering averaging over the conditional distribution $p(\rvy \mid \rvz)$, I define a random probability measure $\pi_\rvz(\rvy)$ that is equivalent to $p(\rvy \mid \rvz)$ when it exists, but makes sense when the usual definition of the conditional does not.
Then, I will directly average over the random measure $\pi_\rvz$ instead of $\rvz$.
To achieve this, I first need to be able to talk about the ``set of probability masses over $\rvy$.'' 
More precisely, let $\YSpace$ denote the Polish space in which $\rvy$ takes values.
Then, define
\begin{align*}
\PMSpace_\YSpace = \{\pi \in \mathcal{M}_\YSpace^{\#} \mid \pi(\YSpace) = 1, \pi \text{ purely atomic} \},
\end{align*}
where $\mathcal{M}_\YSpace^{\#}$ denotes the space of boundedly finite measures over $\YSpace$ \citep{daley2007introduction}.
The precise definition of $\mathcal{M}_\YSpace^{\#}$ is unimportant for my purposes. 
The only thing I need is that it contains $\PMSpace_\YSpace$, the set of all probability masses over $\YSpace$ as a subset. Thus, I can utilise all the powerful results that hold in general for boundedly finite measures.
Now, for a pair of dependent random variables $\rvy, \rvz \sim P_{\rvy, \rvz}$, the idea is to push $\rvz$ forward to a probability measure $\Palm^\rvz$ over $\PMSpace_\YSpace$.
Furthermore, $\Palm^\rvz$ should satisfy marginalisation constraint: in analogy to the fact that for a set $A \subset \YSpace$, we have $P_\rvy(A) = \Exp_\rvz[P_{\rvy \mid \rvz}(A)]$, we also require 
\begin{align*}
P_\rvy(A) = \int_{\PMSpace_\YSpace} \pi_\rvz(A) \, d\Palm^\rvz(\pi_\rvz).
\end{align*}
Now, I am finally ready to define the generalised definition of the conditional entropy:
\begin{definition}[Conditional Shannon Entropy for Continuous Random Variables]
\label{def:continuous_conditional_shannon_entropy}
Let $\rvy, \rvz \sim P_{\rvy, \rvz}$, and let $\PMSpace_\YSpace$ be as above.
Then, assuming $\rvz$ can be pushed forward to a probability measure $\Palm^\rvz$ over $\PMSpace_\YSpace$ such that $P_\rvy(A) = \int_{\PMSpace_\YSpace} \pi_\rvz(A) \, d\Palm^\rvz(\pi_\rvz)$, we define the Shannon entropy of $\rvy$ conditioned on $\rvz$ as
\begin{align*}
\Ent{\rvy \mid \rvz} = \int_{\PMSpace_\YSpace} \int_\YSpace - \lb \pi_\rvz(\rvy) \, d\pi_\rvz(\rvy) \, d\Palm^\rvz(\pi_\rvz).
\end{align*}   
\end{definition}
Note that when $\rvy$ is discrete, $\pi_\rvz(\rvy)$ is equal to the standard conditional ${p(\rvy \mid \rvz)}$, and the outer average can be taken over $\rvz$ instead of $\Palm^\rvz$; thus, \cref{def:continuous_conditional_shannon_entropy} becomes equivalent to the usual definition.
However, the power of \cref{def:continuous_conditional_shannon_entropy} is that it allows us to generalise \cref{eq:cond_entropy_discrete_exchangeability}.
The generalisation will follow how the joint probability of two random variables can be written in two different ways: ``${p(\rvy \mid \rvz)p(\rvz) = p(\rvz \mid \rvy)p(\rvy)}$.''
\Cref{def:continuous_conditional_shannon_entropy} follows the ``$p(\rvy \mid \rvz)p(\rvz)$ factorisation''; now I consider the ``$p(\rvz \mid \rvy)p(\rvy)$ factorisation.''
We already have a candidate for $p(\rvy)$: it should be $dP_\rvy$.
But what should be the analogue of $p(\rvz \mid \rvy)$?
The answer is given by Propositions 13.1.IV and 13.1.V of \citet{daley2007introduction}:
there exists a ``conditional distribution'' $\Palm_\rvy^\rvz$ over probability masses $\pi_\rvz$, that is ``conditioned on the event that $\rvy \in \supp( \pi_\rvz)$.''
Concretely, Proposition 13.1.IV guarantees that there exists a family of measures $\{\Palm_y^\rvz \mid y \in \YSpace\}$, called the \textit{local Palm distributions}, such that
\begin{align*}
\Ent{\rvy \mid \rvz} = \int_{\PMSpace_\YSpace} \int_\YSpace - \lb \pi_\rvz(\rvy) \, d\pi_\rvz(\rvy) \, d\Palm^\rvz(\pi_\rvz) = \int_\YSpace \int_{\PMSpace_\YSpace} - \lb \pi_\rvz(\rvy) \, d\Palm^\rvz_y(\pi_\rvz) \, dP_\rvy(y),
\end{align*}
and Proposition 13.1.V guarantees that
\begin{align}
\forall y \in \YSpace: \quad \supp \Palm_\rvy^\rvz = \{\pi_\rvz \in \supp \Palm^\rvz \mid \pi_\rvz(y) > 0 \}, \label{eq:local_palm_kernel_support_guarantee}
\end{align}
which finally yields the analogue of \cref{eq:cond_entropy_discrete_exchangeability}.
With this machinery in place, I am now ready to prove the first big result of the thesis.
\begin{importantTheorem}[Lower bound on the efficiency of relative entropy coding.]
\label{thm:csd_rec_lower_bound}
Let $\rvx, \rvy \sim P_{\rvx, \rvy}$.
Then, for any common randomness $\rvz$ such that $\rvz \perp \rvx$ and $\rvy = g(\rvx, \rvz)$, we have
\begin{align*}
\Exp_{\rvy}[\CSD{P_{\rvx \mid \rvy}}{P_{\rvx}}] \leq \Ent{\rvy \mid \rvz}. 
\end{align*}
\end{importantTheorem}
\begin{proof}
I begin by verifying that the requirements of \cref{def:continuous_conditional_shannon_entropy} apply.
To this end, assume the common randomness $\rvz$ takes values in $\ZSpace$ and has distribution $P_\rvz$.
Let $g_z(x) = g(x, z)$ and set $P_{\rvy \mid \rvz = z} = g_z \pushfwd P_\rvx$.
Next, consider the map $f: \ZSpace \to \PMSpace_\YSpace$ defined as $f(z) = P_{\rvy \mid \rvz = z}$ and set $\Palm^\rvz = f \pushfwd P_\rvz$.
Then, by construction and the law of the unconscious statistician, we have the required marginalisation constraint $P_\rvy(A) = \int_{\PMSpace_\YSpace} \pi_\rvz(A) \, d\Palm^\rvz(\pi_\rvz)$.
Next, for this proof only, I establish the following notation for $y \in \YSpace, \rvx \sim P_\rvx$ and an arbitrary measurable function $\gamma(y, \pi_\rvz)$:
\begin{align*}
\rho_y &= \frac{dP_{\rvx \mid \rvy}}{dP_\rvx}(\rvx \mid y) \\
w_y(h) &= \Prob[\rho_y \geq h] \\
\Exp_{\pi_\rvz \mid y}[\gamma(y, \pi_\rvz)]  &= \int_{\PMSpace_\YSpace} \gamma(y, \pi_\rvz) \, d\Palm^\rvz_y(\pi_\rvz).
\end{align*}
Observe that with this notation in place, we have
\begin{align*}
\Ent{\rvy \mid \rvz} = \Exp_\rvy [\Exp_{\pi_\rvz \mid \rvy}[-\lb \pi_\rvz(\rvy)]]
\end{align*}
Therefore, to prove the statement of the theorem, I will show the following pointwise inequality:
\begin{align*}
\forall y \in \YSpace: \quad \Exp_{\pi_\rvz \mid y}[-\lb \pi_\rvz(y)] \geq \CSD{P_{\rvx \mid \rvy = y}}{P_\rvx}.
\end{align*}
This inequality will follow from three steps:
\begin{align}
\Exp_{\pi_\rvz \mid y}[-\lb \pi_\rvz(y)] 
&= \frac{\Exp_{\pi_\rvz \mid y}[ \pi_\rvz(y)] \cdot \Exp_{\pi_\rvz \mid y}[-\lb \pi_\rvz(y)]}{\Exp_{\pi_\rvz \mid y}[ \pi_\rvz(y)]} \label{eq:proof_csd_description_length_lower_bound_multiply_by_one}\\
&\geq \frac{\Exp_{\pi_\rvz \mid y}[ -\pi_\rvz(y) \cdot \lb \pi_\rvz(y)]}{\Exp_{\pi_\rvz \mid y}[ \pi_\rvz(y)]} \label{eq:proof_csd_description_length_lower_bound_covariance_identity} \\
&\geq \frac{\Exp_{\pi_\rvz \mid y}[ \pi_\rvz(y)] \cdot \CSD{P_{\rvx \mid \rvy = y}}{P_\rvx}}{\Exp_{\pi_\rvz \mid y}[ \pi_\rvz(y)]}\label{eq:proof_csd_description_length_lower_bound_csd_identity} \\
&= \CSD{P_{\rvx \mid \rvy = y}}{P_\rvx}. \nonumber
\end{align}
\Cref{eq:proof_csd_description_length_lower_bound_multiply_by_one} follows from multiplying the left-hand side by $1$.
Note, that this multiplication is never troublesome, since $\Exp_{\pi_\rvz \mid y}[ \pi_\rvz(y)] > 0$ almost surely by \cref{eq:local_palm_kernel_support_guarantee}.
Next, note that \cref{eq:proof_csd_description_length_lower_bound_covariance_identity} is equivalent to the statement that $\Cov[\pi_\rvz(y), \lb \pi_\rvz(y)] \geq 0$.
This can be shown to be true by the following direct calculation.%
\footnote{This result and this result only in the proof was first shown by Henry Wilson, with whom I investigated this problem in the summer of 2024.}
First, let $\mu_y = \Exp_{\pi_\rvz \mid y}[ \pi_\rvz(y)]$, then:
\begin{align*}
\Cov[\pi_\rvz(y), \lb \pi_\rvz(y)] 
&= \Exp_{\pi_\rvz \mid y}[ \pi_\rvz(y) \cdot\lb \pi_\rvz(y)] - \mu_y \cdot \Exp_{\pi_\rvz \mid y}[\lb \pi_\rvz(y)] \\
&= \Exp_{\pi_\rvz \mid y}[ (\pi_\rvz(y) - \mu_y) \cdot\lb \pi_\rvz(y)] \\
&= \Exp_{\pi_\rvz \mid y}[ (\pi_\rvz(y) - \mu_y) \cdot\lb \pi_\rvz(y)] - \underbrace{\Exp_{\pi_\rvz \mid y}[ (\pi_\rvz(y) - \mu_y) ]}_{ = 0}\cdot \lb \mu_y \\
&= \Exp_{\pi_\rvz \mid y}[ (\pi_\rvz(y) - \mu_y) \cdot(\lb \pi_\rvz(y) - \lb \mu_y)] \\
&\geq 0,
\end{align*}
where the inequality follows, since by the monotonicity of $\lb$ the sign of $\alpha - \beta$ always matches the sign of $\lb (\alpha) - \lb (\beta)$ for $\alpha, \beta > 0$.
\par
It remains to show \cref{eq:proof_csd_description_length_lower_bound_csd_identity}, which will follow from a second-order stochastic dominance type argument, similar to, but somewhat more technical than the one given in the proof of Proposition 1 by \citet{li2018strong}.
To begin, let $\varphi(x) = -x \lb (x)$.
Then, we have
\begin{align}
\Exp_{\pi_\rvz \mid y}[ -\pi_\rvz(y) \cdot\lb \pi_\rvz(y)] 
&= \int_{\PMSpace_\YSpace} \varphi(\pi_\rvz(y)) \, d\Palm^\rvz_y(\pi_\rvz) \nonumber \\
&= \int_0^1  \varphi(u) \, d\Prob_{\pi_\rvz \mid y}[\pi_\rvz(y) \leq u] \nonumber\\
&= \varphi(1) - \int_0^1  \varphi'(u) \cdot \Prob_{\pi_\rvz \mid y}[\pi_\rvz(y) \leq u] \, du \nonumber\\
&= -\varphi'(1)(1 - \mu_y) + \int_0^1  \varphi''(u) \cdot \int_0^u \Prob_{\pi_\rvz \mid y}[\pi_\rvz(y) \leq v] \, dv \, du \nonumber \\
&= \lb(e) \cdot(1 - \mu_y) + \int_0^1  \varphi''(u) \cdot \int_0^u \Prob_{\pi_\rvz \mid y}[\pi_\rvz(y) \leq v] \, dv \, du, \label{eq:proof_csd_description_length_lower_bound_cond_expectation_integral_representation}
\end{align}
where I used integration by parts twice, used the Darth Vader rule \citep{muldowney2012darth} to obtain $\int_0^1 \Prob_{\pi_\rvz \mid y}[\pi_\rvz(y) \leq v] \, dv = 1 - \mu_y$ and used the facts that $\varphi(1) = 0$ and $\varphi'(1) = -\lb(e)$. 
Next, I will show the following second-order stochastic dominance type result:
\begin{align}
\int_0^u \Prob_{\pi_\rvz \mid y}[\pi_\rvz(y) \leq v] \, dv 
\leq u - \mu_y \int_0^u w_y^{-1}(p) \, dp,
\label{eq:proof_csd_description_length_lower_bound_second_order_stochastic_dominance}
\end{align}
where $w_y^{-1}$ is the generalised inverse of $w_y$ in the sense of \cref{eq:width_function_generalised_inverse}.
Then, using that $\varphi''(u) = -\lb(e)/x$ yields
\begin{align*}
\Exp_{\pi_\rvz \mid y}[ -\pi_\rvz(y) \cdot\lb \pi_\rvz(y)] 
&\stackrel{\text{\cref{eq:proof_csd_description_length_lower_bound_cond_expectation_integral_representation}}}{=} 
\lb(e) \cdot \left(1 - \mu_y - \int_0^1  \frac{1}{u} \cdot \int_0^u \Prob_{\pi_\rvz \mid y}[\pi_\rvz(y) \leq v] \, dv \, du \right)\\
&\stackrel{\text{\cref{eq:proof_csd_description_length_lower_bound_second_order_stochastic_dominance}}}{\geq}
\lb(e) \cdot \left( 1 - \mu_y  - \int_0^1 \frac{1}{u} \cdot \left(u - \mu_y \int_0^u w_y^{-1}(p) \, dp \right) \, du \right) \\
&\stackrel{\hphantom{\text{\cref{eq:proof_csd_description_length_lower_bound_second_order_stochastic_dominance}}}}{=}
\lb(e) \cdot \mu_y \left(-1 + \int_0^1 \frac{1}{u} \int_0^u w_y^{-1}(p) \, dp \, du \right) \\
&\stackrel{\hphantom{\text{\cref{eq:proof_csd_description_length_lower_bound_second_order_stochastic_dominance}}}}{=} \mu_y \cdot \CSD{P_{\rvx \mid \rvy}}{P_\rvx},
\end{align*}
where the last equality follows from the integral representation of the channel simulation divergence from \cref{lemma:properties_of_csd}.3, which shows \cref{eq:proof_csd_description_length_lower_bound_csd_identity} and will finish the proof.
To show that \cref{eq:proof_csd_description_length_lower_bound_second_order_stochastic_dominance} holds, let me first introduce the shorthand $(x)_+ = \max\{0, x\}$.
What follows is a combination of applying the tower rule and Fubini's theorem with one application of Jensen's inequality. 
\begin{align}
\int_0^u \Prob_{\pi_\rvz \mid y}[\pi_\rvz(y) \leq v] \, dv 
&= \Exp_{\pi_\rvz \mid y}[(u - \pi_\rvz(y))_+] \label{eq:proof_csd_description_length_lower_bound_cdf_expectation_identity} \\
&\leq \Exp_{\pi_\rvz \mid y}[(w_y(w_y^{-1}(u)) - \Exp_{\rvx}[\pi_\rvz(y \mid \rvx)])_+] \label{eq:proof_csd_description_length_lower_bound_generalised_inverse_inequality} \\
&= \Exp_{\pi_\rvz \mid y}\left[\left(\Exp_{\rho_y}\big[\Ind[\rho_y \geq w_y^{-1}(u)] - \Exp_{\rvx}[\pi_\rvz(y \mid \rvx) \mid \rho_y] \big]\right)_+\right] \nonumber \\
&\leq \Exp_{\pi_\rvz \mid y}\left[\Exp_{\rho_y}\big[\left(\Ind[\rho_y \geq w_y^{-1}(u)] - \Exp_{\rvx}[\pi_\rvz(y \mid \rvx) \mid \rho_y]\right)_+ \big]\right] \tag{Jensen} \\
&= \Exp_{\pi_\rvz \mid y}\left[\Exp_{\rho_y}\big[\Ind[\rho_y \geq w_y^{-1}(u)] \cdot (1 -  \Exp_{\rvx}[\pi_\rvz(y \mid \rvx) \mid \rho_y]) \big]\right]
\label{eq:proof_csd_description_length_lower_bound_hockeystick_identity} \\
&= \Exp_{\rho_y}\big[\Ind[\rho_y \geq w_y^{-1}(u)] \cdot (1 -  \Exp_{\rvx}[\Exp_{\pi_\rvz \mid y}[\pi_\rvz(y \mid \rvx) \mid \rho_y]]) \big] \tag{Fubini}\\
&= \Exp_{\rho_y}\left[\Ind[\rho_y \geq w_y^{-1}(u)] \cdot\left(\!1 - \Exp_{\rvx}\left[\frac{dP_{\rvx \mid \rvy}}{dP_\rvx}(\rvx \mid y) \cdot \Exp_{\pi_\rvz \mid y}[\pi_\rvz(y) ] \,\Big\vert\, \rho_y \right]\!\right) \right] \nonumber\\
&= \Exp_{\rho_y}\left[\Ind[\rho_y \geq w_y^{-1}(u)] \cdot \left(1 - \mu_y \cdot \Exp_{\rvx}\left[\frac{dP_{\rvx \mid \rvy}}{dP_\rvx}(\rvx \mid y) \,\Big\vert\, \rho_y\right] \right) \right] \nonumber\\
&= \Exp_{\rho_y}\left[\Ind[\rho_y \geq w_y^{-1}(u)] \cdot \left(1 - \mu_y \cdot \rho_y \right) \right] \nonumber\\
&= u - \mu_y \cdot \int_0^u w_y^{-1}(v) \, dv.
\label{eq:proof_csd_description_length_lower_bound_density_ratio_inverse_survival}
\end{align}
In the above, \cref{eq:proof_csd_description_length_lower_bound_cdf_expectation_identity} holds, since for any random variable $V\sim P_V$ supported on $[0, 1]$ we have
\begin{align*}
\,\int_0^u \Prob[V \leq t] \, dt 
&= \int_0^1 \int_0^1 \Ind[v \leq t] \Ind[t \geq u]  \, dt \, dP_V(v) \\
&= \int_0^1 (u - v)_+ \, dP_V(v) \\
&= \Exp_V[(u - V)_+],
\end{align*}
\cref{eq:proof_csd_description_length_lower_bound_generalised_inverse_inequality} holds because \begin{align*}
w_y(w_y^{-1}(u)) = w_y(\sup\{h \in \supp (w_y) \mid w_y(h) \geq u\}) \geq u,
\end{align*}
\cref{eq:proof_csd_description_length_lower_bound_hockeystick_identity} holds because setting $\pi = \Exp_\rvx[\pi_\rvz(y \mid \rvx) \mid \rho_y]$ and noting that $\pi \in (0, 1]$, we have
\begin{align*}
\Exp_{\rho_y}[(\Ind[\rho_y \geq w_y^{-1}(u)] - \pi)_+] &= \Exp_{\rho_y}\left[(\Ind[\rho_y \geq w_y^{-1}(u)] - \pi)\cdot\Ind[\Ind[\rho_y \geq w_y^{-1}(u)] \geq \pi]\right] \\
&= \Exp_{\rho_y}\left[(\Ind[\rho_y \geq w_y^{-1}(u)] - \pi)\cdot \Ind[\rho_y \geq w_y^{-1}(u)]\right] \\
&= \Exp_{\rho_y}\left[(1 - \pi)\cdot \Ind[\rho_y \geq w_y^{-1}(u)]\right],
\end{align*}
and finally \cref{eq:proof_csd_description_length_lower_bound_density_ratio_inverse_survival} holds because
\begin{align*}
\Exp_{\rho_y}\left[\Ind[\rho_y \geq w_y^{-1}(u)] \cdot \left(1 - \mu_y \cdot \rho_y \right) \right] &= -\int_0^\infty \Ind[\rho_y \geq w_y^{-1}(u)] \cdot \left(1 - \mu_y \cdot \rho_y \right) \, dw_y(\rho_y) \\
&= \int_0^1 \Ind[w_y^{-1}(v) \geq w_y^{-1}(u)] \cdot \left(1 - \mu_y \cdot w_y^{-1}(v) \right) \, dv \\
&= \int_0^u \left(1 - \mu_y \cdot w_y^{-1}(v)\right) \, dv,
\end{align*}
which finishes the proof.
\end{proof}
\subsubsection{Computing the Channel Simulation Divergence}
\begin{figure}[t]
\centering
\ref{legend:experiments}
\\
\includegraphics{3-FundamentalLimits/img/experiments.tikz}
\vspace{-0.6cm}
\caption[Numerical demonstration of the looseness of the channel simulation bound]{Numerical demonstration of the looseness of the channel simulation bound in \cref{eq:csd_kl_sandwich_bound}.
\textbf{(A)} We plot $\Delta(Q, P)$ for $Q = \Laplace(0, b)$ and $P = \Laplace(0, 1)$ as a function of the target log-precision $-\ln b$.
\textbf{(B)} We plot $\Delta(Q, P)$ for $Q = \Normal(1, 1/4)^{\otimes d}$ and $P = \Normal(0, 1)^{\otimes d}$ as a function of the dimension $d$.
}
\label{fig:csd_numerical_examples}
\end{figure}

\par
In \cref{thm:channel_simulation_mi_lower_bound}, I showed that $\MI{\rvx}{\rvy}$ bounds from below the best possible description length any relative entropy coding algorithm can achieve for the channel $\rvx \to \rvy$.
The practical utility of \cref{thm:csd_rec_lower_bound} is that by \cref{eq:csd_kl_sandwich_bound}, we can obtain a more precise estimate of the best possible performance by computing the channel simulation divergence instead.
But how much tighter is it exactly?
To simplify the problem, I compute the coding efficiency for the one-shot case (with $Q$ and $P$ fixed) for the one-dimensional Laplace and the isotropic multivariate Gaussian cases in \cref{fig:csd_numerical_examples}.  Concretely, for two distributions $Q$ and $P$, let me denote the difference between the bounds by 
\begin{align*}
\Delta(Q, P) = \CSD{Q}{P} - \KLD{Q}{P},
\end{align*}
this is what I plot in \cref{fig:csd_numerical_examples}.
We see interesting behaviour in both cases: the difference remains bounded for the one-dimensional Laplace case; in fact, it monotonically tends to a constant.
However, more surprisingly, in the isotropic multivariate Gaussian case, we see some strange behaviour: the overhead seems to scale as $\frac{1}{2}\lb (n)$.
In \cref{sec:rec_asymptotic_behaviour}, I reveal that the reason for this surprising phenomenon is the concentration of measure.
However, before moving on, let me also derive a connection between $\Delta(Q, P)$ and the associated random variable $H$.
\begin{lemma}
\label{lemma:delta_logh_diffent_identity}
Let $Q \ll P$ be probability distributions with $r = dQ/dP$ and let $H$ be the associated random variable.
\deleted[id={CM}, comment={Moved definition of differential entropy to \cref{sec:notation}}]{And for a real-valued random variable $X$ with probability density $f$, let
denote its differential entropy in nats.}%
Then,
\begin{align*}
\Delta(Q, P) \cdot \ln 2 = \DiffEnt{\ln H} \cdot \ln 2  - 1.
\end{align*}
\end{lemma}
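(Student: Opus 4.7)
The plan is to unpack both sides of the claimed identity in terms of the random variable $H$ and then show they coincide via a single change-of-variables computation. The key observation is that the statement is really an assertion that the ``$-1$'' on the right-hand side (i.e.\ $-\lb(e)$ in bits) is precisely the Jacobian correction one picks up when passing from $H$ to $\ln H$ via the substitution $t = \ln h$.

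First, I would rewrite the left-hand side using the two representations collected in \Cref{lemma:width_function_properties}.4: namely $\CSD{Q}{P} = \DiffEnt{H}$ (by definition) and $\KLD{Q}{P} = \Exp[\lb H] + \lb(e)$. Thus
\begin{align*}
\Delta(Q,P) \;=\; \DiffEnt{H} - \Exp[\lb H] - \lb(e),
\end{align*}
so the target identity reduces to showing
\begin{align*}
\DiffEnt{\ln H} \;=\; \DiffEnt{H} - \Exp[\lb H].
\end{align*}

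To establish this, I would compute the density of $T = \ln H$ from that of $H$: since $H$ has density $w_P$ on $[0,\infty)$, the variable $T$ has density $f_T(t) = w_P(e^t)\, e^t$ on $\Reals$. Substituting into the definition of $\DiffEnt{T}$ and changing variables back via $u = e^t$, the $e^t$ factor and the Jacobian $du/u$ cancel, leaving
\begin{align*}
\DiffEnt{\ln H} \;=\; -\!\int_0^\infty w_P(u)\lb w_P(u)\, du \;-\; \int_0^\infty w_P(u)\lb u\, du \;=\; \DiffEnt{H} - \Exp[\lb H],
\end{align*}
which is exactly what is needed. Multiplying through by $\ln 2$ converts the constant $\lb(e)$ into $1$ and yields the stated identity.

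There is no real obstacle here: the only thing to be careful about is consistent use of the binary logarithm throughout (both $\DiffEnt{\cdot}$ and $\KLD{\cdot}{\cdot}$ are defined with $\lb$ in this thesis), so that the Jacobian contribution comes out as exactly $\lb(e)$ rather than $1$, and the factor of $\ln 2$ on both sides of the final identity is handled symmetrically.
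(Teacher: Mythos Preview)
Your proposal is correct and takes essentially the same approach as the paper: both arguments compute the density of $\ln H$ as $e^t w_P(e^t)$ and use the substitution $t = \ln h$ to relate $\DiffEnt{\ln H}$ to $\DiffEnt{H} - \Exp[\lb H]$. The only difference is organisational: you first isolate the clean identity $\DiffEnt{\ln H} = \DiffEnt{H} - \Exp[\lb H]$ (which is just the standard Jacobian correction for differential entropy under a smooth change of variables) and then plug in, whereas the paper carries the combined expression $\Delta(Q,P)\cdot\ln 2$ through the substitution in one go.
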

\begin{proof}
First, note that by \cref{lemma:width_function_properties} and the definition of the channel simulation divergence, we have
\begin{align*}
\KLD{Q}{P} \cdot \ln 2 &= \Exp[\ln H] + 1 \\
\CSD{Q}{P} \cdot \ln 2 &= \DiffEnt{H} \cdot \ln 2 = -\Exp[\ln w_P(H)]
\end{align*}
Therefore,
\begin{align*}
\Delta(Q, P) \cdot \ln 2 &= (\CSD{Q}{P} - \KLD{Q}{P}) \cdot \ln 2 \\
&= -\int_0^\infty w_P(h) \ln(h \cdot w_P(h)) \, dh - 1 \\
&= - \int_{-\infty}^\infty e^{t}\cdot w_P(e^t) \ln(e^t \cdot w_P(e^t)) \, dt - 1\\
&= \DiffEnt{\ln H} \cdot \ln 2 - 1,
\end{align*}
where the third equality follows via the substitution $t \gets \ln h$ and the last equality follows since for the density $\ln H$ we have
\begin{align*}
\frac{d}{dt}\Prob[\ln H \leq t] = \frac{d}{dt} \Prob[H \leq e^t] = w_P(e^t) \cdot e^t,
\end{align*}
which finishes the proof.
\end{proof}
\subsection{The Asymptotic Behaviour of Relative Entropy Coding}
\label{sec:rec_asymptotic_behaviour}
In this section, I show that the behaviour of the scaling law in \cref{fig:csd_numerical_examples} (B) is no coincidence: 
It arises as a consequence of a concentration of measure phenomenon.
Essentially, this behaviour has to do with the \textit{predictability of the length} of the code produced by a channel simulation algorithm.
The simplest form of predictability is when we have the following deterministic relationship:
\begin{definition}[Singular channel]
Let $\rvx, \rvy \sim P_{\rvx, \rvy}$ be dependent random variables.
Then, the channel $\rvx \to \rvy$ is \textit{singular} if there is a $P_\rvy$-measurable function $g$, such that   
\begin{align*}
\frac{dP_{\rvx \mid \rvy}}{dP_{\rvx}}(x \mid y) = g(y) \quad \text{for } P_{\rvx, \rvy}\text{-almost every } (x, y)
\end{align*}
Otherwise, the channel $\rvx \to \rvy$ is non-singular.
\end{definition}
In other words, the channel $\rvx \to \rvy$ is singular if the \textit{reverse channel} $\rvy \to \rvx$ is a truncated version of $P_\rvy$.
Recently, \citet{sriramu2024optimal} have shown that singular channels behave like \cref{fig:csd_numerical_examples} (A): as the dimensionality $n$ increases, $\Delta(Q^{\otimes n}, P^{\otimes n})$ remains bounded.
In what follows, I now tackle the non-singular case.
\newpage
\begin{importantTheorem}[Second-order behaviour of asymptotic relative entropy coding for non-singular channels.]
\label{thm:second_order_behaviour_of_csd}
Let $\rvx, \rvy \sim P_{\rvx, \rvy}$ be dependent random variables.
Assume that the channel $\rvx \to \rvy$ is non-singular and let $r_{\rvy} = dP_{\rvx \mid \rvy}/dP_\rvx$.
For all $n \geq 1$, let $\rvx^n, \rvy^n \sim P_{\rvx, \rvy}^{\otimes n}$.
Assume that 
$\sigma^2 = \Var_{\rvx, \rvy}\left[\lb \frac{dP_{\rvx, \rvy}}{d(P_\rvx \otimes P_\rvy)}(\rvx, \rvy) \right] < \infty$.
Then, for any sequence of common randomness $\rvz_n$ with $\rvz_n \perp \rvx^n$ and $\rvy^n = g(\rvx^n, \rvz_n)$, we have
\begin{align*}
\lim_{n \to \infty} \frac{\Ent{\rvy^n \mid \rvz_n} - \MI{\rvx^n}{\rvy^n}}{\lb (n)} \geq \frac{1}{2}.
\end{align*}
\end{importantTheorem}
The theorem essentially states that the expected codelength $n \cdot R_n$ of any relative entropy coding algorithm for non-singular channels grows at least as
\begin{align*}
n \cdot R_n \approx \MI{\rvx^n}{\rvy^n} + \frac{1}{2}\lb(\MI{\rvx^n}{\rvy^n} + 1) + \Oh(1).
\end{align*}
\begin{proof}
This proof proceeds more naturally if we measure the information-theoretic quantities in nats rather than bits.
However, to keep the notation consistent with the rest of the thesis, I keep measuring the KL and channel simulation divergences in bits.
Instead, for each $n \geq 1$, I introduce the notation
\begin{align*}
\kappa^n = \KLD{P_{\rvx^n \mid \rvy^n}}{P_{\rvx^n}} \cdot \ln 2, 
\end{align*}
i.e., $\kappa^n$ is the KL divergence in nats for the $n$-length block.
Now, note that \cref{thm:csd_rec_lower_bound} implies that 
\begin{align*}
\Exp_{\rvy^n}[\CSD{P_{\rvx^n \mid \rvy^n}}{P_\rvx^n} - \KLD{P_{\rvx^n \mid \rvy^n}}{P_{\rvx^n}}] \leq \Ent{\rvy^n \mid \rvz_n} - \MI{\rvx^n}{\rvy^n}.
\end{align*}
Thus, if we can show that the left-hand side above scales as $\frac{1}{2}\lb n$ asymptotically, then we have the desired result.
Now, denote by $H_n$ the associated random variable to $P_{\rvx^n \mid \rvy^n}$ and $P_{\rvx^n}$.
Then, using \cref{lemma:delta_logh_diffent_identity}, we see that the above inequality implies
\begin{align*}
\Ent{\rvy^n \mid \rvz_n} - \MI{\rvx^n}{\rvy^n} + \lb(e) 
&\geq \DiffEnt{\ln H_n \mid \rvy^n} \\
&= \DiffEnt*{n^{-1/2} \cdot (\ln H_n - \kappa^n) \,\Big\vert\, \rvy^n} + \frac{1}{2} \lb n,
\end{align*}
where I used the translation invariance and the scaling property of the differential entropy.
In the remainder of the proof, I show that the conditional entropy above converges to a constant as $n \to \infty$, which finishes the proof.
As the specific scaling and translation above suggest, I will use an argument based on the central limit theorem (CLT) to show this result.
To begin, let $s = n^{1/2}$ and $b = \kappa^n$.
Now, note that 
\begin{align}
\Prob\Big[n^{-1/2} \cdot (\ln\, H_n - \kappa^n) \leq t \,\Big\vert\, \rvy^n\Big] 
&= \Prob\left[H_n \leq \exp\left(s \cdot t + b\right) \,\Big\vert\, \rvy^n\right] \nonumber \\
&= \int_0^{\exp\left(s \cdot t + b\right)} w_{P_{\rvx^n}}(h) \, dh \nonumber \\
&= \int_{-\infty}^t s \cdot e^{s \cdot u + b} w_{P_{\rvx^n}}(e^{s \cdot u + b}) \, du,\label{eq:proof_csd_second_order_exp_substitution}
\end{align}
where \cref{eq:proof_csd_second_order_exp_substitution} follows from substituting $h = \exp(s \cdot u + b)$.
Now, let me introduce the shorthand 
\begin{align*}
\Lambda_n(t) = w_{P_{\rvx^n \mid \rvy^n}}(e^{s \cdot t + b}).
\end{align*}
Then, continuing \cref{eq:proof_csd_second_order_exp_substitution}, we have 
\begin{align}
\int_{-\infty}^t s \cdot e^{s \cdot u + b} w_{P_{\rvx^n}}(e^{s \cdot u + b}) \, du 
&= -\int_{-\infty}^t s \cdot e^{s \cdot u + b} \cdot \int_{e^{s \cdot u + b}}^\infty \frac{1}{v} \, dw_{P_{\rvx^n \mid \rvy^n}}(v) \, du \label{eq:proof_csd_second_order_wP_wQ_identity} \\
&= -\int_{-\infty}^t s \cdot e^{s \cdot u + b} \cdot \int_{u}^\infty e^{-(s \cdot \eta + b)} \, d\Lambda_n(\eta) \, du \label{eq:proof_csd_second_order_exp_substitution_2}\\
&= -\int_{-\infty}^t s \int_{u}^\infty e^{-s ( \eta - u)} \, d\Lambda_n(\eta) \, du \nonumber\\
&= -\int_{-\infty}^t \int_{u}^\infty \frac{\partial}{\partial u} e^{-s ( \eta - u) } \, d\Lambda_n(\eta) \, du, \nonumber \\
&= -\int_{-\infty}^t \frac{d}{d u} \int_{u}^\infty \!\!\!e^{-s ( \eta - u) } \, d\Lambda_n(\eta) \, du -\int_{-\infty}^t \!\!\!d\Lambda_n(\eta)\label{eq:proof_csd_second_order_leibniz_rule} \\
&= \int_{-\infty}^t  e^{-s ( \eta - t) } \, d\Lambda_n(\eta) + 1 - \Lambda_n(t), \label{eq:proof_csd_second_order_fun_thm_of_calc}
\end{align}
where \cref{eq:proof_csd_second_order_wP_wQ_identity} follows from \cref{eq:width_function_wP_wQ_identity} in \cref{lemma:width_function_properties}, \cref{eq:proof_csd_second_order_exp_substitution_2} follows from substituting $v = \exp(s \cdot \eta + b)$, \cref{eq:proof_csd_second_order_leibniz_rule} follows from the Leibniz integral rule, and \cref{eq:proof_csd_second_order_fun_thm_of_calc} follows from applying the fundamental theorem of calculus to both terms.
Now, let me study the behaviour of the terms in \cref{eq:proof_csd_second_order_fun_thm_of_calc}.
First, note that
\begin{align}
\lim_{n \to \infty}\int_{-\infty}^t  e^{-s ( \eta - t) } \, d\Lambda_n(\eta) 
&= \lim_{n \to \infty}\int_{-\infty}^t  e^{-n^{1/2} ( \eta - t) } \, d\Lambda_n(\eta) \tag{since $s = n^{1/2}$}\\
&= \int_{-\infty}^t \Ind[\eta = t] \, d\lim_{n \to \infty}\Lambda_n(\eta) \nonumber\\
&=0, \nonumber
\end{align}
since the indicator is $0$ everywhere over the domain of integration except for at the upper limit of integration, where it is equal to $1$.
Second, observe that 
\begin{align*}
1 - \Lambda_n(t) &= 1 - w_{P_{\rvx^n \mid \rvy^n}}(e^{s \cdot t + b})\\
&= \Prob_{\rvx^n \sim P_{\rvx^n \mid \rvy^n}}\left[\frac{dP_{\rvx^n \mid \rvy^n}}{dP_{\rvx^n}}(\rvx^n \mid \rvy^n) < e^{s \cdot t + b} \mid \rvy^n\right] \\
&= \Prob_{\rvx^n \sim P_{\rvx^n \mid \rvy^n}}\left[\prod_{i = 1}^n\frac{dP_{\rvx \mid \rvy}}{dP_{\rvx}}(\rvx_i \mid \rvy_i) < e^{s \cdot t + b} \mid \rvy^n\right] \\
&= \Prob_{\rvx^n \sim P_{\rvx^n \mid \rvy^n}}\left[n^{-1/2} \cdot \sum_{i = 1}^n\left(\ln\frac{dP_{\rvx \mid \rvy}}{dP_{\rvx}}(\rvx_i \mid \rvy_i) - \kappa^n \right) < t \mid \rvy^n\right].
\end{align*}
Thus, assuming the limit exists, we have
\begin{align}
\label{eq:proof_csd_second_order_dist_limit_equality}
\lim_{n \to \infty} \!\frac{\ln H_n - \kappa^n}{\sqrt{n}} 
\,\, \disteq\,\,
\lim_{n \to \infty} \frac{1}{\sqrt{n}} \sum_{i = 1}^n\!\left(\ln\frac{dP_{\rvx \mid \rvy}}{dP_{\rvx}}(\rvx_i \mid \rvy_i) - \kappa^n \right)
\end{align}
$P_{\rvy^\infty}$-almost surely.
I am now ready to apply the central limit theorem to the right-hand side.
However, note that the terms in the sum are only independent but not identically distributed, as $\rvx_i \sim P_{\rvx_i \mid \rvy_i}$.
Hence, I need to apply a more powerful version of the theorem that applies to this case, known as the Lindeberg-Feller CLT.
To do this, I now show that the following Lindeberg condition holds:
\begin{lemma}[Lindeberg condition for the conditional density ratio.]
\label{lemma:eq:proof_csd_second_order_lindeberg_lemma}
Denote the conditional variance of $r_\rvy$ as $\sigma_{\rvy}^2 = \Var_{\rvx \mid \rvy}[\ln r_{\rvy}(\rvx)]$ with $0 < \sigma_\rvy^2$, and let $s_n = \sum_{k = 1}^n \sigma_{\rvy_k}^2$.
Furthermore, let $\kappa_{\rvy} = \Exp_{\rvx \mid \rvy}[\ln r_{\rvy}(\rvx)] = \KLD{P_{\rvx \mid \rvy}}{P_\rvx} \cdot \ln (2)$.
Then, $P_{\rvy^\infty}$-almost surely the following Lindeberg condition holds:
\begin{align}
\label{eq:asymptotic_scaling_proof_lindeberg_condition}
\forall \epsilon > 0: \quad \lim_{n \to \infty}\frac{1}{s_n}\sum_{k = 1}^n \Exp_{\rvx_k \mid \rvy_k}\left[(\ln r_{\rvy_k}(\rvx_k) - \kappa_{\rvy_k})^2 \cdot \Ind[\abs{\ln r_{\rvy_k}(\rvx_k) - \kappa_{\rvy_k}} \geq s_n \cdot \epsilon]\right] = 0.
\end{align}
Then, by the Lindeberg-Feller central limit theorem \citep[Theorem 3.4.5;][]{durrett2019probability}, we have
\begin{align*}
\frac{1}{\sqrt{n}} \sum_{i = 1}^n\left(\ln\frac{dP_{\rvx \mid \rvy}}{dP_{\rvx}}(\rvx_i \mid \rvy_i) - \kappa^n \right) \stackrel{d}{\to} \Normal(0, \Exp_\rvy[\sigma_\rvy^2]) \quad \text{as } n \to \infty.
\end{align*}
\end{lemma}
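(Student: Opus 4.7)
The plan is to verify the Lindeberg condition by combining two applications of the strong law of large numbers (SLLN) with a truncation-plus-dominated-convergence argument. Throughout, I would condition on the realisation of $\rvy^\infty = (\rvy_1, \rvy_2, \ldots)$, under which the variables $X_k := \ln r_{\rvy_k}(\rvx_k) - \kappa_{\rvy_k}$ are independent with $\Exp[X_k \mid \rvy^\infty] = 0$ and $\Var[X_k \mid \rvy^\infty] = \sigma_{\rvy_k}^2$. A first SLLN gets the scale $s_n$ off the ground: the law of total variance gives $\Exp_\rvy[\sigma_\rvy^2] \leq \sigma^2 < \infty$, and non-singularity of the channel forbids $\sigma_\rvy^2 = 0$ $P_\rvy$-almost surely, so $\Exp_\rvy[\sigma_\rvy^2] > 0$. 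Since the $\sigma_{\rvy_k}^2$ are i.i.d.\ (as the $\rvy_k$ are) with finite positive mean, $n^{-1} s_n \to \Exp_\rvy[\sigma_\rvy^2] > 0$ a.s., and in particular $s_n \to \infty$ a.s.

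The heart of the argument is a truncation: for each $M > 0$, set
\begin{align*}
\phi_M(y) := \Exp_{\rvx \mid \rvy = y}\!\left[(\ln r_y(\rvx) - \kappa_y)^2 \cdot \Ind[\abs{\ln r_y(\rvx) - \kappa_y} \geq M]\right],
\end{align*}
which satisfies $0 \leq \phi_M(y) \leq \sigma_y^2$, so $\Exp_\rvy[\phi_M(\rvy)] < \infty$. Because $s_n \to \infty$ a.s., for any fixed $\epsilon > 0$ and all sufficiently large $n$ (depending on the realisation of $\rvy^\infty$) we have $\epsilon s_n \geq M$, whence the Lindeberg sum is dominated termwise:
\begin{align*}
\frac{1}{s_n}\sum_{k=1}^n \Exp_{\rvx_k \mid \rvy_k}\!\left[X_k^2 \Ind[\abs{X_k} \geq \epsilon s_n]\right] \leq \frac{1}{s_n}\sum_{k=1}^n \phi_M(\rvy_k).
\end{align*}
Applying SLLN a second time to the i.i.d.\ sequence $\phi_M(\rvy_k)$ and combining with the first SLLN yields $\limsup_n (\cdot) \leq \Exp_\rvy[\phi_M(\rvy)] / \Exp_\rvy[\sigma_\rvy^2]$ a.s. Since $\Exp_{\rvx, \rvy}[(\ln r_\rvy(\rvx) - \kappa_\rvy)^2] = \Exp_\rvy[\sigma_\rvy^2] < \infty$, dominated convergence then gives $\Exp_\rvy[\phi_M(\rvy)] \downarrow 0$ as $M \to \infty$.

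The main technical obstacle I anticipate is the measurability and coupling bookkeeping: each a.s.\ statement above sits on its own full-probability event that depends on $M$, and I must intersect them across all $M$ used in the limit. I would handle this by restricting $M$ to the countable set $\posNats$, so that the countable intersection of the resulting null sets remains null; on this probability-one event the displayed $\limsup$ vanishes for every $\epsilon > 0$, which is precisely \cref{eq:asymptotic_scaling_proof_lindeberg_condition}. The Gaussian limit with variance $\Exp_\rvy[\sigma_\rvy^2]$ then follows directly from the Lindeberg-Feller invocation already stated in the lemma.
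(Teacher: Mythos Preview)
Your argument is correct and takes a genuinely different route from the paper. The paper first takes expectation of the Lindeberg sum over $\rvy^n$, then uses the exchangeability of the summands under this expectation to collapse the sum to $n$ copies of a single term $T_n$; it shows $T_n \to 0$ in probability via the weak law on $s_n/n$, upgrades to $\Exp[T_n] \to 0$ by dominated convergence, and finally deduces the almost-sure statement from the non-negativity of the limit. By contrast, you work directly on the almost-sure event from the outset: two applications of the strong law (one for $s_n/n$, one for the truncated second moments $\phi_M(\rvy_k)$) give an explicit upper bound $\Exp_\rvy[\phi_M(\rvy)]/\Exp_\rvy[\sigma_\rvy^2]$ on the $\limsup$, and a monotone limit in $M \in \posNats$ drives this to zero. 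Your approach is the more standard textbook verification for Lindeberg conditions built from i.i.d.\ rows and avoids the somewhat delicate final interchange of limit and expectation in the paper's argument; the paper's exchangeability trick, on the other hand, is a slick way to reduce a sum to a single term and would generalise more readily to situations where the row variables are exchangeable but not independent.
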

The proof follows from a ``standard'' argument for showing the Lindeberg condition:
\begin{enumerate}
\item For each $n \geq 1$, I replace the infinite sum in \cref{eq:asymptotic_scaling_proof_lindeberg_condition} with a single quantity that is proportional to the expectation $\Exp[T_n]$ of an appropriate random variable $T_n$.
\item I show that $\plim_{n \to \infty} T_n = 0$, where $\plim$ denotes the limit in probability (if it exists).
\item Then, by the dominated convergence theorem, $\Exp[T_n] \to 0$ as $n \to \infty$.
\end{enumerate}
The core trick in the proof is to observe that if we take expectations of the $n$th partial sum with respect to $\rvy^n$, then the resulting quantity is exchangeable.
I now proceed to prove the statement.
\begin{proof}
Fix $\epsilon >0$.
Then, note that
\begin{align*}
\lim_{n \to \infty}\Exp_{\rvy^n}&\left[\frac{1}{s_n^2}\sum_{k = 1}^n \Exp_{\rvx_k \mid \rvy_k}\left[(\ln r_{\rvy_k}(\rvx_k) - \kappa_{\rvy_k})^2 \cdot \Ind[\abs{\ln r_{\rvy_k}(\rvx_k) - \kappa_{\rvy_k}} \geq s_n \cdot \epsilon]\right] \right] \\
&=\lim_{n \to \infty}\sum_{k = 1}^n \Exp_{\rvx_k, \rvy^n}\left[\frac{1}{s_n^2}(\ln r_{\rvy_k}(\rvx_k) - \kappa_{\rvy_k})^2 \cdot \Ind[\abs{\ln r_{\rvy_k}(\rvx_k) - \kappa_{\rvy_k}} \geq s_n \cdot \epsilon]\right] \\
&=\lim_{n \to \infty} n \cdot \Exp_{\rvx_1, \rvy^n}\left[\frac{1}{s_n^2}(\ln r_{\rvy_1}(\rvx_1) - \kappa_{\rvy_1})^2 \cdot \Ind[\abs{\ln r_{\rvy_1}(\rvx_1) - \kappa_{\rvy_1}} \geq s_n \cdot \epsilon]\right], 
\end{align*}
where the last line follows since the expectations over $(\rvx_k, \rvy^n)$ and $(\rvx_1, \rvy^n)$ are exchangeable.
Now, let $T_n = \frac{n}{s_n^2}(\ln r_{\rvy_1}(\rvx_1) - \kappa_{\rvy_1})^2 \cdot \Ind[\abs{\ln r_{\rvy_1}(\rvx_1) - \kappa_{\rvy_1}} \geq s_n \cdot \epsilon]$.
Now, by the weak law of large numbers, we have $\plim_{n \to \infty} s_n^2/n = \Exp_\rvy[\sigma_\rvy^2]$, hence
\begin{align*}
\plim_{n \to \infty} T_n &= \frac{(\ln r_{\rvy_1}(\rvx_1) - \kappa_{\rvy_1})^2}{\Exp_\rvy[\sigma_\rvy^2]} \cdot \plim_{n \to \infty}\Ind[\abs{\ln r_{\rvy_1}(\rvx_1) - \kappa_{\rvy_1}} \geq s_n \cdot \epsilon] \\
&= \frac{(\ln r_{\rvy_1}(\rvx_1) - \kappa_{\rvy_1})^2}{\Exp_\rvy[\sigma_\rvy^2]} \cdot \plim_{n \to \infty}\Ind\left[\abs{\ln r_{\rvy_1}(\rvx_1) - \kappa_{\rvy_1}} \geq \sqrt{n} \cdot \sqrt{\frac{s_n^2}{n}} \cdot \epsilon\right] \\
&= 0.
\end{align*}
Now, we have by assumption that $\Var_{\rvx, \rvy}[\ln r_\rvy(\rvx)] < \infty$, hence applying the dominated convergence theorem to $T_n$ with $(\ln r_\rvy(\rvx))^2$ as the dominating random variable yields
\begin{align*}
\Exp_{\rvy^\infty}\!&\left[\lim_{n \to \infty}\frac{1}{s_n^2}\sum_{k = 1}^n \Exp_{\rvx_k \mid \rvy_k}\left[(\ln r_{\rvy_k}(\rvx_k) - \kappa_{\rvy_k})^2 \cdot \Ind[\abs{\ln r_{\rvy_k}(\rvx_k) - \kappa_{\rvy_k}} \geq s_n \cdot \epsilon]\right] \right] \\
&=\lim_{n \to \infty}\Exp_{\rvy^n}\left[\frac{1}{s_n^2}\sum_{k = 1}^n \Exp_{\rvx_k \mid \rvy_k}\left[(\ln r_{\rvy_k}(\rvx_k) - \kappa_{\rvy_k})^2 \cdot \Ind[\abs{\ln r_{\rvy_k}(\rvx_k) - \kappa_{\rvy_k}} \geq s_n \cdot \epsilon]\right] \right] = 0.
\end{align*}
Since the term inside the $\rvy^\infty$-expectation is a non-negative quantity whose expectation is $0$, it must be $P_{\rvy^\infty}$-almost surely $0$, which finishes the proof of the lemma.
\end{proof}
Since we assumed the channel $\rvx \to \rvy$ to be non-singular, the variance condition of \cref{lemma:eq:proof_csd_second_order_lindeberg_lemma} is satisfied in the case of \cref{eq:proof_csd_second_order_dist_limit_equality}, hence we have that $P_{\rvy^\infty}$-almost surely
\begin{align*}
\frac{\ln H_n - \kappa^n}{\sqrt{n}} \quad \stackrel{d}{\to} \quad \Normal(0, \Exp_\rvy[\sigma_\rvy^2]) \quad \text{as } n \to \infty
\end{align*}
Then, by adapting Theorem 1 of \citet{barron1986entropy}, it can be shown that in the case of the CLT, the terms converge in relative entropy. 
Hence we have $P_{\rvy^\infty}$-almost surely
\begin{align*}
\lim_{n \to \infty} \DiffEnt*{\frac{\ln H_n - \kappa^n}{\sqrt{n}} \mid \rvy^n} = \DiffEnt*{\Normal(0, \Exp_\rvy[\sigma_\rvy^2])}.
\end{align*}
The right-hand term is a constant, which finishes the proof.
\end{proof}
\Cref{thm:second_order_behaviour_of_csd} demonstrates a non-trivial lower bound for the asymptotic relative entropy coding of non-singular channels.
But how tight is this lower bound?
Indeed, \citet{sriramu2024optimal} have shown that it is tight up to a constant and also that in the case of singular channels, the second order redundancy reduces to $0$.
Hence, combining \cref{thm:second_order_behaviour_of_csd} with \citet[Theorem 1;][]{sriramu2024optimal}, we now have the complete characterisation of asymptotic relative entropy coding:
\begin{corollary}[Characterisation of Asymptotic Relative Entropy Coding.]
Let $\rvx, \rvy \sim P_{\rvx, \rvy}$ be dependent random variables, and assume that $\Var_{\rvx, \rvy}\left[\lb \frac{dP_{\rvx, \rvy}}{d(P_\rvx \otimes P_\rvy)} (\rvx, \rvy) \right] < \infty$.
Define the optimal second-order redundancy (also called the excess functional information by \citet{li2018strong}) as
\begin{align*}
\Psi(\rvx \to \rvy) = \inf_{\rvz \mid \rvz \perp \rvx} \Ent{\rvy \mid \rvz} - \MI{\rvx}{\rvy}.
\end{align*}
For all $n \geq 1$, let $\rvx^n, \rvy^n \sim P_{\rvx, \rvy}^{\otimes n}$.
Then:
\begin{align*}
\lim_{n \to \infty}\frac{\Psi(\rvx^n \to \rvy^n)}{\lb n} = 
\begin{cases}
0 &\text{if } \rvx \to \rvy \text{ is singular}\\
\frac{1}{2} &\text{if } \rvx \to \rvy \text{ is non-singular.}
\end{cases}
\end{align*}
\end{corollary}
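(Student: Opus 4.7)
The plan is to establish the corollary by combining the lower bound from \Cref{thm:second_order_behaviour_of_csd} with the matching upper bound of \citet{sriramu2024optimal}, treating the singular and non-singular cases separately.

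For the non-singular case, I would first observe that \Cref{thm:second_order_behaviour_of_csd} applied to the product channels $\rvx^n \to \rvy^n$ directly yields
\begin{align*}
\liminf_{n \to \infty}\frac{\Psi(\rvx^n \to \rvy^n)}{\lb n} \geq \frac{1}{2},
\end{align*}
since this inequality must hold for \emph{any} sequence $\rvz_n$ of common randomness with $\rvz_n \perp \rvx^n$, and in particular for the sequence that attains the infimum defining $\Psi$. The variance hypothesis on the log-density ratio is exactly the non-singularity plus moment condition required by \Cref{thm:second_order_behaviour_of_csd}. To close the case, I would invoke the matching achievability from \citet[Theorem 1]{sriramu2024optimal}, which constructs an explicit coupling $\rvz_n$ achieving $\Ent{\rvy^n \mid \rvz_n} - \MI{\rvx^n}{\rvy^n} \leq \tfrac{1}{2}\lb n + O(1)$, giving the reverse inequality $\limsup_{n \to \infty} \Psi(\rvx^n \to \rvy^n)/\lb n \leq 1/2$.

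For the singular case, the lower bound is immediate: by \Cref{thm:channel_simulation_mi_lower_bound}, $\Psi(\rvx^n \to \rvy^n) \geq 0$, and hence $\liminf_{n \to \infty} \Psi/\lb n \geq 0$. The matching upper bound again follows from \citet{sriramu2024optimal}, who show that in the singular case the second-order redundancy $\Psi(\rvx^n \to \rvy^n)$ remains bounded as $n \to \infty$, so dividing by $\lb n$ forces the limit to vanish. Together these give the claimed dichotomy.

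The main obstacle is not in the corollary itself, which is essentially a bookkeeping combination, but in verifying that the hypotheses align: specifically, that the finite-variance condition in \Cref{thm:second_order_behaviour_of_csd} is precisely what drives the non-singular case of Sriramu et al.'s achievability bound, and that the singular case genuinely sidesteps the CLT-based argument because the conditional density ratio $r_\rvy$ becomes $P_\rvx$-almost surely constant, collapsing the variance $\sigma_\rvy^2$ to zero. I would therefore devote a sentence or two to reconciling the two notions of singularity (the $P_\rvy$-almost sure constancy I defined versus the variance condition underlying the CLT) before stringing the bounds together.
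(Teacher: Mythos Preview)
Your proposal is correct and matches the paper's approach exactly: the paper states this corollary without an explicit proof, simply noting that it follows from combining \Cref{thm:second_order_behaviour_of_csd} with \citet[Theorem 1]{sriramu2024optimal}, who supply both the achievability in the non-singular case and the full characterisation in the singular case. Your breakdown into the four bounds (non-singular lower via \Cref{thm:second_order_behaviour_of_csd}, non-singular upper via Sriramu et al., singular lower via $\Psi \geq 0$, singular upper via Sriramu et al.'s boundedness result) is precisely the bookkeeping the paper leaves implicit.
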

\section[The Computational Complexity of Relative Entropy Coding]{The Computational Complexity of \\ Relative Entropy Coding}
\label{sec:computational_complexity_of_rec}
\par
In the previous sections, I investigated the limits of relative entropy coding in terms of its average description length.
However, in practice, another essential factor of compression algorithms is their speed.
Therefore, in this section, I briefly investigate the runtime of channel simulation algorithms.
\par
For this, I need to establish a suitable computational framework that 1) covers a broad and relevant set of channel simulation algorithms and 2) has a natural notion of runtime.
Thankfully, there is already a very general framework that fits the bill perfectly: selection sampling, as I introduced it in \cref{sec:rec_through_examples}!
Therefore, following \citet{flamich2024some}, I now define selection samplers more formally.
\begin{definition}[(Exact) Selection Sampler]
\label{def:exact_selection_sampler}
Let $Q \ll P$ be probability measures, and let $(X_i)_{i \in \Nats}$ be a sequence of i.i.d.\ $P$-distributed random variables.
An (exact) selection sampler for a target distribution $Q$ with proposal distribution $P$ is defined by two random variables, $N$ and $K$, such that the following hold:
\begin{itemize}
\item \textbf{Correctness.} $X_N \sim Q$.
\item \textbf{Computable Termination Criterion.} 
$K$ is a stopping time adapted to $(X_i)_{i \in \Nats}$, i.e., $X_i, X_{i + 1}, \dotsc$ are independent of the event $\{K \geq i\}$.
Furthermore, $1 \leq N \leq K$.
\end{itemize}
Finally, I call:
\begin{itemize}
\item $N$ the selection rule,
\item $K$ the runtime and
\item $\Exp[K]$ the sample complexity
\end{itemize}
of the selection sampler.
\end{definition}
To connect \cref{def:exact_selection_sampler} with the intuitive algorithm from \cref{sec:rec_through_examples}, first note that we can realise the sequence $(X_i)$ by simulating samples from $P$ one-by-one.
Then, at each step $k$, we check if we wish to stop; hence we can ``set'' $K = \min_{k \in \Nats}\{\mathtt{stop}(X_{1:k}) = \mathtt{True}\}$, and $N = \mathtt{select}(X_{1:K})$.
\par
Furthermore, there is a natural notion of runtime for selection samplers, as I already noted in the definition: we count how many samples the algorithm needs to examine from $P$ before it terminates.
However, since $K$ can be random, I instead study the expected runtime, which is also known as the sample complexity \citep{block2023sample}.
The reason why I study this setting is twofold: first, it is already a very general setting with minimal required structure, while also admitting very precise results.
\par
Next, we need to choose an appropriate measure of efficiency.
A natural guess would be that ``the higher the information content, the longer the runtime.'' That is, for a target distribution $Q$ and proposal $P$, and for some monotone function $f$, we would like to make a statement of the form $\Exp[K] = \Oh(f(\KLD{Q}{P}))$.
However, in the general case, it turns out that the scaling of the runtime is not given by the relative entropy but by the so-called R{\'e}nyi $\infty$-divergence, which I discuss next.
\begin{definition}[R{\'e}nyi $\infty$-divergence]
\label{eq:renyi_inf_div}
Let $Q \ll P$ be probability measures over some space $\Omega$ with Radon-Nikodym derivative $r = \frac{dQ}{dP}$.
Then, their R{\'e}nyi $\infty$-divergence is defined as the $\mathcal{L}^\infty(\Omega, P)$-norm of $\lb r$, i.e.,
\begin{align*}
\infD{Q}{P} = \lb \norm{r}_{\infty} = \lb (\esssup\{r\}) = \lb (\inf\{M \in \Reals \mid P(\{x \mid M > r(x)\}) = 0 \}).
\end{align*}
\end{definition}
Note that the essential supremum is needed to avoid measure-theoretic difficulties;
when $\Omega$ is discrete or when $r$ is continuous, the above definition collapses onto the simpler $\infD{Q}{P} = \lb (\sup_{x \in \Omega}\{r(x)\})$.
The relationship of $\KLD{Q}{P}$ and $\infD{Q}{P}$ is analogous to the relationship of the $1$- and $\infty$-norms: the former captures the average behaviour, while the latter captures the worst-case behaviour.
With this intuition, we can derive the following result:
\begin{align*}
\KLD{Q}{P} = \Exp_{Z \sim Q}[\lb r(Z)] \leq \Exp_{Z \sim Q}[\lb \norm{r}_\infty] = \infD{Q}{P}.
\end{align*}
\par
I am now ready to state the lower bound on the runtime of selection samplers.
Unfortunately, this bound not only involves $\infD{Q}{P}$ instead of $\KLD{Q}{P}$, but it also scales very poorly in this quantity.%
\begin{importantTheorem}[Lower bound on the runtime of selection samplers.]
\label{thm:selection_sampler_runtime_lower_bound}
Let $K, N$ be a selection sampler (\cref{def:exact_selection_sampler}) for $Q \ll P$, with $(X_i)_{i \in \Nats}$ defined as above.
Then,
\begin{align*}
\expb(\infD{Q}{P}) \leq \Exp[K].
\end{align*}
\end{importantTheorem}
\begin{proof}
Let $\epsilon > 0$, set $M = \expb(\infD{Q}{P}) = \norm{r}_\infty$ and define $A_\epsilon = r^{-1}([M - \epsilon, \infty))$ to be the set of points $x$ in the sample space for which $r(x) \geq M - \epsilon$.
By the definition of $\infD{Q}{P}$, for any $\epsilon > 0$ we have $P(A_\epsilon) > 0$.
Furthermore, we also have
\begin{align}
\label{eq:proof_of_exponential_runtime_epsilon_sets_bound}
M - \epsilon = \frac{\int_{A_\epsilon}(M - \epsilon)\,dP(x)}{P(A_\epsilon)} \stackrel{(a)}{\leq} \frac{\int_{A_\epsilon}r(x)\,dP(x)}{P(A_\epsilon)} = \frac{Q(A_\epsilon)}{P(A_\epsilon)} \stackrel{(b)}{\leq} \frac{\int_{A_\epsilon}\norm{r}_\infty\,dP(x)}{P(A_\epsilon)} = M,
\end{align}
where inequality (a) follows from the definition of $A_\epsilon$ and inequality (b) follows from the definition of $\infD{Q}{P}$.
I now show that for any $\epsilon > 0$ we have $Q(A_\epsilon)/P(A_\epsilon) \leq \Exp[K]$.
This fact, taken together with \cref{eq:proof_of_exponential_runtime_epsilon_sets_bound}, yields the desired result.
Thus, to finish, observe that
\begin{align}
Q(A_\epsilon) &= \Prob[X_N \in A_\epsilon] \label{eq:proof_of_exponential_runtime_selection_sampler_correctness}\\
&=\sum_{n = 1}^\infty \Prob[N = n, X_n \in A_\epsilon] \nonumber\\
&=\sum_{n = 1}^\infty \Prob[K \geq n, N = n, X_n \in A_\epsilon] \nonumber\\
&= \sum_{n = 1}^\infty \Prob[K \geq n] \cdot P(A_\epsilon) \cdot \Prob[N = n, \mid K \geq n,  X_n \in A_\epsilon] \label{eq:proof_of_exponential_runtime_stopping_time} \\
&\leq \sum_{n = 1}^\infty \Prob[K \geq n] \cdot P(A_\epsilon) \nonumber\\
&= P(A_\epsilon) \cdot \Exp[K], \label{eq:proof_of_exponential_runtime_darth_vader}
\end{align}
where \cref{eq:proof_of_exponential_runtime_selection_sampler_correctness} follows from the correctness requirement of selection samplers, \cref{eq:proof_of_exponential_runtime_stopping_time} follows from the requirement that $K$ be a stopping time adapted to $(X_i)_{i \in \Nats}$ and \cref{eq:proof_of_exponential_runtime_darth_vader} follows from the Darth Vader rule \citep{muldowney2012darth}.
Finally, dividing through by $P(A_\epsilon)$ finishes the proof.
\end{proof}
\Cref{thm:selection_sampler_runtime_lower_bound} generalises a result of \citet{letac1975building} for selection samplers where the selection rule must match the runtime: $N = K$.%
\footnote{I thank Sam Power for bringing Letac's result to my attention.}
Independently, Daniel Goc discovered the same result and appears as Theorem III.1 in our paper \citet{goc2024channel}, while \cref{thm:selection_sampler_runtime_lower_bound} appears as Theorem A.2 in the same paper.
\par
While \cref{thm:selection_sampler_runtime_lower_bound} might seem discouraging at first sight, perhaps it should not come as a surprise.
I made virtually no assumptions on the structure of $Q$ and $P$, and sampling in such a general setting is known to be hard in other computational frameworks as well.
However, we do not need to solve such a general problem in practice.
As I will demonstrate in \cref{chapter:combiner}, it is sufficient to develop a relative entropy coding algorithm that can encode samples from a highly structured distribution, such as a multivariate Gaussian.
Then we can use an expressive machine learning model to transform the encoded samples to follow the right distribution.
Therefore, after developing some general sampling algorithms in \cref{chapter:rec_with_pp}, I will specialise these algorithms in \cref{chapter:branch_and_bound} to exploit the structure of certain problems that will result in an exponential reduction in the algorithms' average runtime.
\section[Conclusion, Future Directions and Open Questions]{Conclusion, Future Directions and \texorpdfstring{\\}{} Open Questions}
\label{sec:fundamental_limits_conclusion}
\par
In this chapter, I established fundamental lower bounds to the communication efficiency and the sample complexity of relative entropy coding.
To complement these lower bounds and demonstrate their achievability, I develop two families of relative entropy coding algorithms in the next chapter.
\par
There are many interesting possible future directions for investigation and theory-building.
One promising direction would be to deepen the theory of the associated random variable $H$ with density $w$ and canonical representation $\eta$. 
These probability densities possess the highly unusual property of being inverses of one another.
This relation further means that their cumulative distribution functions are convex duals of each other.
Thus, it would be interesting to see if we could use convex analysis to simplify some of the results I presented here or prove new ones.
Recently, \citet{esposito2022functional} has successfully applied convex analysis to study more classical information measures such as the Shannon and relative entropy and the R{\'e}nyi family of divergences. As such, their work could provide a blueprint to apply these techniques to the theory of relative entropy coding.
Perhaps a good starting point for this project is with the channel simulation divergence itself, which Daniel Goc and I showed to be convex in its first argument in our paper \citet{goc2024channel}.
Thus, it would be interesting to derive the convex dual/variational representation similar to the Donsker-Varadhan representation for the relative entropy.


\chapter[Relative Entropy Coding with Poisson Processes]{Relative Entropy Coding with\\ Poisson Processes}
\label{chapter:rec_with_pp}
%
\ifpdf
    \graphicspath{{ChannelSimulationWithPoissonProcesses/img}}
\else
    \graphicspath{{ChannelSimulationWithPoissonProcesses/img}}
\fi
\noindent
After establishing the basic theory in \cref{chapter:source_coding} and the fundamental limits of the theory in \cref{chapter:fundamental_limits}, I now turn my attention to deriving general-purpose relative entropy algorithms.
The chapter begins with establishing the basic properties of Poisson processes, which will form the building blocks of the two families of algorithms I develop: A* coding and greedy Poisson rejection sampling.
Then, I present the analysis of these algorithms and prove their correctness and codelength efficiency.
Next, I develop more advanced results for Poisson processes, which I finally apply to develop fast variants of these algorithms.
I conclude the chapter with a discussion on developing approximate variants of the algorithms and their efficient implementation on a computer.
\par
This chapter aims to familiarise the reader with Poisson processes and the elegant ways we can use them to construct sampling algorithms, which in turn give rise to relative entropy coding algorithms.
In short, it mainly contributes to the development of the ``Poisson process model for Monte Carlo'' \citep{maddison2016poisson}.
As such, \cref{sec:basic_pp_theory} gives some technical background that mainly follows the exposition of \citet{maddison2016poisson}.
Furthermore, the derivations of rejection sampling (\cref{sec:global_rs}) and A* sampling (\cref{sec:global_a_star}) follow the works of \citet{maddison2014sampling,maddison2016poisson} and \citet{li2018strong}, with the exceptions of one of the proofs of \cref{thm:global_rs_runtime} and the proof of \cref{thm:global_a_star_runtime}, which are mine (though not too deep).
The rest of the chapter contains my original work.
\section{Basic Poisson Process Theory}
\label{sec:basic_pp_theory}
\par
This section follows the excellent exposition by \citet{maddison2016poisson}.
I first define Poisson processes, then describe four pointwise operations: thinning, mapping, restriction and superposition that preserve Poisson processes.
Then, in \cref{sec:global_rs,sec:global_a_star,sec:global_gprs}, I show that the first three operations lead to a different sampling algorithm, respectively.
Furthermore, in \cref{sec:superposition_parallelisation} I show how we can use superposition to construct parallelised variants of our samplers.
As such, Poisson processes provide a beautiful, unifying language to construct and analyse sampling algorithms.
\par
But what are Poisson processes, anyway?
Poisson processes are perhaps the most fundamental kind of \textit{point processes}, which formalise the notion of a ``distribution'' over randomly selected points in some space.
Since general sampling algorithms often follow the selection sampling recipe (\cref{def:exact_selection_sampler}), point processes offer a way to capture the entire sampling process with a single mathematical object.
In this thesis, I will be interested only in Poisson processes, perhaps the simplest among point processes, as they are ``completely random'': knowing the location of a point of the process tells us nothing about the location of any of the other points.
More formally, in a Poisson process $\PoissonProcess$ over some space $\Omega$ consists of a countable collection of points, such that the number of points $\rmN(A) = \abs{\PoissonProcess \cap A}$ of $\PoissonProcess$ that fall in some part of space $A \subseteq \Omega$ is independent of the number of points $\rmN(B) = \abs{\PoissonProcess \cap B}$ falling in some other part of space, i.e., when $A$ and $B$ are disjoint.
Note that $\rmN(A) = \abs{\PoissonProcess \cap A}$ specifies a random integer for each set $A$. 
Hence, it is called the \textit{random counting measure} of the process $\PoissonProcess$.
By \citet[Theorem 2.4.V;][]{daley2003introduction}, the independence requirement alone almost completely determines Poisson processes. 
Indeed, we only need two extra requirements: 1) if a set $A$ is bounded, almost surely finitely many points of the process fall in $A$ and 2) \textit{lightning does not strike twice in the same place:} at any point in the ambient space $x \in \Omega$, the probability that two or more points of the process fall on $x$ is almost surely $0$.
Adding in these extra requirements turns out to be equivalent to requiring that for each set $A \subset \Omega$, the counting measure $\rmN(A)$ is Poisson distributed \citep[Theorem 2.4.V;][]{daley2003introduction}, which gives Poisson processes their name, and it is this latter statement that I will take as the formal definition.
\begin{definition}[Poisson Point Process.]
\label{def:poisson_process}
Let $\Omega$ be a Polish space, and let $\PoissonProcess = \{X_1, X_2, \hdots\} \subseteq \Omega$ be a random, countable collection of points over $\Omega$.
For each (Borel) set $A \subseteq \Omega$, define the counting measure of $\PoissonProcess$ as
\begin{align*}
\rmN(A) = \abs{\PoissonProcess \cap A}.
\end{align*}
Furthermore, define the mean measure of $\PoissonProcess$ as
\begin{align*}
\mu(A) = \Exp[\rmN(A)].
\end{align*}
Then, $\Pi$ is a Poisson process if and only if
\begin{enumerate}
\item \textbf{Independence.} For any two disjoint (Borel) sets $A, B \subseteq \Omega, A \cap B = \emptyset$, the counts are independent:
\begin{align*}
\rmN(A) \perp \rmN(B).
\end{align*}
\item \textbf{Poisson-distributed counts.} For a bounded set $A$, the mean measure is bounded and non-atomic, i.e., $\mu(A) < \infty$ and $\mu(\{x\}) = 0$ for any $x \in \Omega$, and the counting measure is Poisson distributed:
\begin{align*}
\Prob[\rmN(A) = n] = \frac{\mu(A)^n \cdot e^{-\mu(A)}}{n!}.
\end{align*}
\end{enumerate}
\end{definition}
\subsection{Poisson Process Operations}
\label{sec:poisson_process_operations}
\par
I now review four operations on Poisson processes that preserve them.
Below, I will only present the statements of the theorems; their proofs can be found in the works of \citet{kingman1992poisson} and \citet{maddison2016poisson}.
However, the verity of the statements is intuitively clear: each describes a ``local, pointwise operation'' and changes at any given point of the process that depends on that point only and not the others.
Hence, starting with a completely random process and performing a local, pointwise operation still leaves us with a completely random process.
\par
\textbf{Restriction.}
Perhaps the most straightforward operation on a Poisson process $\PoissonProcess$ over some space $\Omega$ is to consider only the points of $\PoissonProcess$ that fall in some subset $A \subseteq \Omega$.
The restriction theorem \citep[Section 2.2;][]{kingman1992poisson} guarangees that $\PoissonProcess \cap A$ is also a Poisson process:
\begin{theorem}[Restriction Theorem, \citep{kingman1992poisson}]
\label{thm:restriction_theorem}
Let $\PoissonProcess$ be a Poisson process over $\Omega$ with mean measure $\mu$, and let $A, B \subseteq \Omega$ be measurable under $\mu$.
Then,
\begin{align*}
\PoissonProcess\vert_A = \PoissonProcess \cap A
\end{align*}
is a Poisson process over $\Omega$ with mean measure
\begin{align*}
\mu\vert_A(B) = \mu(A \cap B).
\end{align*}
\end{theorem}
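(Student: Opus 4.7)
The plan is to verify directly that $\PoissonProcess|_A$ satisfies both conditions in Definition 4.1.1, exploiting the key observation that the counting measure of the restricted process satisfies the simple identity $\rmN_A(B) = \rmN(A \cap B)$, where $\rmN$ is the counting measure of the original process $\PoissonProcess$. This identity holds because $\PoissonProcess|_A \cap B = (\PoissonProcess \cap A) \cap B = \PoissonProcess \cap (A \cap B)$, so counting points of $\PoissonProcess|_A$ in $B$ is the same as counting points of $\PoissonProcess$ in $A \cap B$.

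First, I would verify the Poisson-distributed counts condition. By the identity above, $\rmN_A(B) = \rmN(A \cap B)$, and since $\PoissonProcess$ is a Poisson process, $\rmN(A \cap B) \sim \mathrm{Pois}(\mu(A \cap B))$. By the proposed mean measure $\mu|_A(B) = \mu(A \cap B)$, this gives exactly $\rmN_A(B) \sim \mathrm{Pois}(\mu|_A(B))$, as required. I would also briefly check that $\mu|_A$ inherits boundedness on bounded sets and non-atomicity from $\mu$: for any $x \in \Omega$, $\mu|_A(\{x\}) = \mu(A \cap \{x\}) \leq \mu(\{x\}) = 0$, and for bounded $B$, $\mu|_A(B) \leq \mu(B) < \infty$.

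Second, I would verify independence. Given two disjoint measurable sets $B_1, B_2 \subseteq \Omega$, observe that $(A \cap B_1) \cap (A \cap B_2) = A \cap (B_1 \cap B_2) = \emptyset$, so $A \cap B_1$ and $A \cap B_2$ are themselves disjoint. Applying the independence property of $\PoissonProcess$ to these two disjoint sets gives $\rmN(A \cap B_1) \perp \rmN(A \cap B_2)$, which by the identity above is exactly $\rmN_A(B_1) \perp \rmN_A(B_2)$.

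There is no real obstacle here: the proof is essentially a ``transport of structure'' argument where every defining property of $\PoissonProcess|_A$ reduces via the identity $\rmN_A(\cdot) = \rmN(A \cap \cdot)$ to the corresponding property of $\PoissonProcess$ applied to intersections with $A$. The only mild subtlety is ensuring the regularity conditions (boundedness on bounded sets, non-atomicity of $\mu|_A$) are inherited, but these follow immediately from the monotonicity $\mu|_A(B) \leq \mu(B)$.
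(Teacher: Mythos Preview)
Your proposal is correct and is precisely the standard direct verification one finds in Kingman's book. The paper itself does not prove this theorem: it explicitly states that for the four Poisson process operations (restriction, thinning, mapping, superposition) it will ``only present the statements of the theorems; their proofs can be found in the works of \citet{kingman1992poisson} and \citet{maddison2016poisson}.'' So there is nothing to compare against beyond noting that your argument is the expected one.
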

\par
The viewpoint of \cref{thm:restriction_theorem} is global and static: we took all the (possibly infinitely many) points of $\PoissonProcess$, intersected them with some set and looked at the remainder.
However, there is an alternative, more computational view: we can define a function $\mathtt{filter}: \Omega \to \{0, 1\}$.
Then, we examine the points of $\PoissonProcess$ one-by-one and only keep those for which $\mathtt{filter}$ returns $1$ and ``delete'' the others.
Setting $A = \mathtt{filter}^{-1}(1) \subseteq \Omega$ to be the set of points for which $\mathtt{filter}$ returns $1$, we see that the ``computational'' and the ``global'' viewpoints are equivalent:
\begin{align*}
\{x \in \PoissonProcess \mid \mathtt{filter}(x) = 1 \} = \PoissonProcess \cap A.
\end{align*}
This computational view suggests the following useful generalisation of restriction.
\par
\textbf{Thinning.} Instead of using a deterministic filter, we could delete the points randomly.
Then, so long as the probability that we delete any given point of the process is independent of all other points of the process, the thinning theorem guarantees that the remaining points still form a Poisson process:
\begin{theorem}[Thinning theorem \citep{maddison2016poisson}]
\label{thm:thinning_theorem}
Let $\PoissonProcess$ be a Poisson process over $\Omega$ and mean measure $\mu$.
Let $\{\thinOp(x) \}_{x \in \Omega}$ be a family of Bernoulli random variables with index set $\Omega$, and let $\rho(x) = \Exp[\thinOp(x)]$.
Then,
\begin{align*}
\thinOp(\PoissonProcess) = \{x \in \PoissonProcess \mid \thinOp(x) = 1\}
\end{align*}
is a Poisson process with mean measure
\begin{align*}
\mu'(A) = \int_A \rho(x) \, d\mu(x).
\end{align*}
\end{theorem}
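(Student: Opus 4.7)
The plan is to verify the two defining properties of Definition~\ref{def:poisson_process} for the candidate process $\thinOp(\PoissonProcess)$: mutual independence of counts on disjoint sets, and Poisson-distributed counts with mean $\mu'(A)$. I would first handle sets $A$ with $\mu(A) < \infty$ (the hypothesised regime for bounded $A$), and then extend to general measurable sets by writing $\Omega$ as a countable disjoint union of finite-measure pieces and using countable additivity of $\mu'$ together with the fact that a sum of independent Poissons on a disjoint partition is Poisson with the summed parameter.

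First I would compute the marginal law of the thinned count $\rmN'(A) = \abs{\thinOp(\PoissonProcess) \cap A}$ for $A$ with $\mu(A) < \infty$. Conditioning on $\rmN(A) = n$, I would invoke the standard fact that the $n$ points of $\PoissonProcess \cap A$ are i.i.d.\ with distribution $\mu\vert_A / \mu(A)$ (a consequence of the defining Poisson counts on a partition). Conditionally on these points, the marks $\thinOp(X_1), \ldots, \thinOp(X_n)$ are independent Bernoullis with parameters $\rho(X_i)$, so the conditional probability generating function factorises:
\begin{align*}
\Exp[z^{\rmN'(A)} \mid \rmN(A) = n] = (1 + (z-1) p_A)^n, \qquad p_A := \mu'(A)/\mu(A).
\end{align*}
Unconditioning against $\rmN(A) \sim \Pois(\mu(A))$ collapses this to $\exp((z-1)\mu'(A))$, identifying $\rmN'(A)$ as $\Pois(\mu'(A))$. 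Non-atomicity of $\mu'$ is immediate from that of $\mu$ together with $0 \leq \rho \leq 1$.

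For independence on disjoint sets $A, B$, the key observation is that $\PoissonProcess \cap A$ and $\PoissonProcess \cap B$ are already independent by the defining property of $\PoissonProcess$, and the marks attached to points in $A$ are independent of the marks attached to points in $B$ by the pointwise independence of the family $\{\thinOp(x)\}_{x \in \Omega}$. Hence the joint PGF of $(\rmN'(A), \rmN'(B))$ factorises as a product of two PGFs of the form computed above, yielding $\rmN'(A) \perp \rmN'(B)$; the extension to finitely many disjoint sets is identical. The main technical obstacle will be measure-theoretic rather than probabilistic: one must ensure $\rho$ is measurable so that $\int_A \rho \, d\mu$ is well-defined, and that the mark family $\{\thinOp(x)\}_{x \in \Omega}$ is set up jointly with $\PoissonProcess$ so that, conditionally on the realisation of $\PoissonProcess$, the marks $\thinOp(X_i)$ at the random points $X_i \in \PoissonProcess$ form an independent Bernoulli family with the prescribed parameters. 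Once this joint construction is in place, both the PGF computation and the independence argument go through by inspection.
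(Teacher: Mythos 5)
The paper does not prove this theorem; it explicitly defers all four operation theorems to \citet{kingman1992poisson} and \citet{maddison2016poisson}, offering only the heuristic remark that each operation is ``local and pointwise.'' So there is no in-paper proof to compare against, and your argument should be judged on its own merits against the standard literature.

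Your argument is correct and follows one of the two standard textbook routes. The conditional-law route you take (condition on $\rmN(A) = n$, invoke the conditional i.i.d.\ location property of Poisson processes on a finite-measure set $A$, compute the conditional probability generating function using the independence of the Bernoulli marks, and uncondition against $\rmN(A) \sim \Pois(\mu(A))$) gives the PGF computation
\begin{align*}
\Exp\big[(1 + (z-1)p_A)^{\rmN(A)}\big] = \exp\big(\mu(A)(z-1)p_A\big) = \exp\big((z-1)\mu'(A)\big),
\end{align*}
which is right. The independence on disjoint sets follows exactly as you describe: both the underlying counts and the mark families factorise across $A$ and $B$, so the joint PGF is a product. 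Non-atomicity and boundedness of $\mu'$ on bounded sets ($\mu'(A) \le \mu(A)$ since $\rho \le 1$) are handled correctly, and the extension to general sets via a countable disjoint partition of $\Omega$ into finite-measure pieces is standard. By contrast, the Kingman proof proceeds via the marking theorem (lift to a Poisson process on $\Omega \times \{0,1\}$ with product mean measure, then restrict to the slice $\Omega \times \{1\}$); this buys a shorter proof once the marking theorem is available but requires that machinery up front, whereas your route is fully self-contained given only the defining axioms and the conditional i.i.d.\ property.

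Your own flag at the end is the one real caveat: $\rho$ must be measurable (the theorem statement implicitly assumes this, since $\mu'(A) = \int_A \rho \, d\mu$ must be defined), and the mark family $\{\thinOp(x)\}_{x \in \Omega}$ must be jointly constructed with $\PoissonProcess$ so that, conditionally on the realisation of $\PoissonProcess$, the marks at the random locations are independent Bernoullis. Once that joint construction is set up (e.g.\ as an i.i.d.\ uniform mark at each point compared against $\rho(x)$), your argument is complete.
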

\par
\textbf{Mapping.}
Another natural operation is to ``move'' the points of the process.
Then, as long as how we move a given point is independent of all other points of the process and we do not move multiple points on top of each other, the Mapping theorem guarantees that the resulting collection of points will form a Poisson process.
\begin{theorem}[Mapping Theorem \citep{kingman1992poisson}]
\label{thm:mapping_theorem}
Let $\PoissonProcess$ be a Poisson process over some space $\Omega$ with mean measure $\mu$.
Let $f$ be a measurable function such that the pushforward measure $(f\pushfwd \mu)(A) = \mu(f^{-1}(A))$ is non-atomic.
Then,
\begin{align*}
f(\PoissonProcess) = \{f(x) \mid x \in \PoissonProcess\}
\end{align*}
is a Poisson process with mean measure $f\pushfwd \mu$.
\end{theorem}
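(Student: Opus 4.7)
The plan is to verify the two defining conditions of \cref{def:poisson_process} for $f(\PoissonProcess)$, treated as a random countable subset of the codomain $\Omega'$ of $f$. The key observation driving the whole argument is the pullback identity for counting measures: if $\rmN'$ denotes the counting measure of $f(\PoissonProcess)$ and $\rmN$ the counting measure of $\PoissonProcess$, then for any measurable $A \subseteq \Omega'$,
\begin{align*}
\rmN'(A) = \abs{\{x \in \PoissonProcess \mid f(x) \in A\}} = \abs{\PoissonProcess \cap f^{-1}(A)} = \rmN(f^{-1}(A)).
\end{align*}
This reduces everything about $f(\PoissonProcess)$ to statements about $\PoissonProcess$ restricted to preimages, which we already understand.

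For condition 1 (independence), I would take two disjoint measurable $A, B \subseteq \Omega'$. Since $f$ is a function, their preimages $f^{-1}(A)$ and $f^{-1}(B)$ are disjoint in $\Omega$. Applying the independence property of the original Poisson process $\PoissonProcess$ to these disjoint preimages together with the pullback identity yields $\rmN'(A) \perp \rmN'(B)$, as required.

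For condition 2 (Poisson-distributed counts with the right mean measure), I would apply the pullback identity once more and invoke the Poisson property of $\PoissonProcess$ on the preimage:
\begin{align*}
\Prob[\rmN'(A) = n] = \Prob[\rmN(f^{-1}(A)) = n] = \frac{\mu(f^{-1}(A))^n \cdot e^{-\mu(f^{-1}(A))}}{n!} = \frac{(f\pushfwd\mu)(A)^n \cdot e^{-(f\pushfwd\mu)(A)}}{n!},
\end{align*}
using $\mu(f^{-1}(A)) = (f\pushfwd\mu)(A)$ by the definition of the pushforward. For this to be a legitimate Poisson probability we need $(f\pushfwd\mu)(A) < \infty$ for bounded $A$ and $(f\pushfwd\mu)$ non-atomic, which is exactly the hypothesis on $f$.

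The hardest (and most easily overlooked) step is ensuring that $f(\PoissonProcess)$ is actually a simple point process, i.e.\ that almost surely no atom of $\Omega'$ receives two distinct points from $\PoissonProcess$ under $f$. Without this, $f(\PoissonProcess)$ would not be captured by a $\{0,1,2,\dots\}$-valued counting measure and the Poisson-count calculation above would double-count. The non-atomicity of $f\pushfwd\mu$ is precisely what rules out collisions: for any $y \in \Omega'$, $\Exp[\rmN(f^{-1}(\{y\}))] = (f\pushfwd\mu)(\{y\}) = 0$, so $\PoissonProcess$ almost surely places no points in $f^{-1}(\{y\})$, and a standard argument over a countable generating class of singletons extends this to the almost-sure absence of collisions. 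With simplicity in hand, the two verifications above complete the proof.
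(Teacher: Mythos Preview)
The paper does not give its own proof of \cref{thm:mapping_theorem}; it explicitly states the result without proof and refers the reader to \citet{kingman1992poisson} and \citet{maddison2016poisson}. Your argument via the pullback identity $\rmN'(A) = \rmN(f^{-1}(A))$ is exactly the standard route (it is essentially Kingman's proof), and the verification of independence and Poisson counts is correct.

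The one place where your write-up is loose is the simplicity step. Your appeal to ``a countable generating class of singletons'' does not work: singletons are not countable in a general Polish space, and showing $\rmN(f^{-1}(\{y\})) = 0$ a.s.\ for each \emph{fixed} $y$ does not by itself give the a.s.\ statement simultaneously for all $y$. The clean fix is a second-moment computation: by the second factorial moment of a Poisson process,
\begin{align*}
\Exp\!\left[\sum_{\substack{x_1, x_2 \in \PoissonProcess \\ x_1 \neq x_2}} \Ind[f(x_1) = f(x_2)]\right]
= \int_{\Omega}\!\int_{\Omega} \Ind[f(x_1) = f(x_2)]\,\mu(dx_1)\,\mu(dx_2)
= \int_{\Omega'} (f\pushfwd\mu)(\{y\})\,(f\pushfwd\mu)(dy) = 0,
\end{align*}
using non-atomicity of $f\pushfwd\mu$ in the last step. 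Hence collisions have probability zero, and your pullback identity is justified. With that patch the argument is complete. (As a minor aside, you assert that ``$(f\pushfwd\mu)(A) < \infty$ for bounded $A$'' is part of the hypothesis; the theorem as stated only demands non-atomicity, so strictly speaking local finiteness is an implicit extra assumption here, but this is a quirk of the statement rather than of your proof.)
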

\par
\textbf{Superposition.}
Finally, as a somewhat dual construction to restriction, overlaying two independent Poisson processes also results in a Poisson process.
\begin{theorem}[Superposition Theorem \citep{kingman1992poisson}]
\label{thm:superposition_theorem}
Let $\PoissonProcess_1$ and $\PoissonProcess_2$ be independent Poisson processes over the same space $\Omega$ with mean measures $\mu_1$ and $\mu_2$, respectively.
Then,
\begin{align*}
\PoissonProcess = \PoissonProcess_1 \cup \PoissonProcess_2
\end{align*}
is a Poisson process with mean measure
\begin{align*}
\mu = \mu_1 + \mu_2.
\end{align*}
\end{theorem}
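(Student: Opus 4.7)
The plan is to verify the two defining properties of Definition 4.1.1 directly for the random counting measure of $\PoissonProcess = \PoissonProcess_1 \cup \PoissonProcess_2$. Writing $\rmN_i(A) = \abs{\PoissonProcess_i \cap A}$ and $\rmN(A) = \abs{\PoissonProcess \cap A}$, the key identity I want to establish is $\rmN(A) = \rmN_1(A) + \rmN_2(A)$ almost surely for every measurable $A$. Once this holds, both the Poisson count property and the independence property collapse onto standard facts about sums of independent Poisson variables, so essentially all the real work is in justifying the additivity of the counting measures under the union.

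First I would dispose of the measure-theoretic subtlety that the union could lose mass if $\PoissonProcess_1$ and $\PoissonProcess_2$ share points. For any fixed $x \in \Omega$, non-atomicity of $\mu_2$ gives $\mu_2(\{x\}) = 0$, so $\rmN_2(\{x\}) \sim \mathrm{Poisson}(0)$ and hence $\Prob[x \in \PoissonProcess_2] = 0$. Conditioning on $\PoissonProcess_1$ and using that $\PoissonProcess_1 \cap A$ is almost surely finite for bounded $A$, countable subadditivity yields $\Prob[\PoissonProcess_1 \cap \PoissonProcess_2 \cap A \neq \emptyset] = 0$; a countable bounded cover of $\Omega$ then gives $\PoissonProcess_1 \cap \PoissonProcess_2 = \emptyset$ almost surely, which in turn gives the desired additivity $\rmN(A) = \rmN_1(A) + \rmN_2(A)$.

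With additivity in hand, the Poisson-distributed-counts property is immediate: for bounded $A$, $\rmN(A)$ is a sum of independent $\mathrm{Poisson}(\mu_1(A))$ and $\mathrm{Poisson}(\mu_2(A))$ variables, hence $\mathrm{Poisson}(\mu(A))$ with $\mu = \mu_1 + \mu_2$; and non-atomicity of $\mu$ inherits from the summands. For independence, given disjoint $A, B$, I would observe that the pairs $(\rmN_1(A), \rmN_2(A))$ and $(\rmN_1(B), \rmN_2(B))$ are independent: within each process the pair is independent by the Poisson property of $\PoissonProcess_i$, and across processes everything is independent by assumption. Any measurable function of these tuples is therefore independent across $A$ and $B$, so in particular $\rmN(A) = \rmN_1(A) + \rmN_2(A) \perp \rmN_1(B) + \rmN_2(B) = \rmN(B)$.

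The only genuinely delicate step is the almost-sure disjointness of $\PoissonProcess_1$ and $\PoissonProcess_2$; the rest is essentially bookkeeping on top of elementary Poisson arithmetic. An alternative route via probability generating functionals would verify both conditions in a single line but would hide the role of non-atomicity, so I would prefer the direct argument above, which matches the geometric style \citet{maddison2016poisson} use for the companion operations of thinning, mapping and restriction.
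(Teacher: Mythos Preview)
The paper does not actually prove this statement: immediately before stating the four operation theorems, it says ``Below, I will only present the statements of the theorems; their proofs can be found in the works of \citet{kingman1992poisson} and \citet{maddison2016poisson}.'' So there is no in-paper proof to compare against.

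Your proposal is correct and is the standard direct verification. The almost-sure disjointness step is the right thing to isolate, and your argument for it is sound; in fact you can streamline it slightly by noting that $\PoissonProcess_1$ is countable by definition, so conditioning on $\PoissonProcess_1$ and applying countable subadditivity directly to $\sum_{x \in \PoissonProcess_1} \Prob[x \in \PoissonProcess_2] = 0$ avoids the detour through bounded sets and a cover. The independence step is also fine: the four variables $\rmN_1(A), \rmN_1(B), \rmN_2(A), \rmN_2(B)$ are mutually independent (pairwise within each process by the Poisson property, and across processes by assumption), so $\rmN(A)$ and $\rmN(B)$ are functions of disjoint blocks of independent variables.
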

At this level of generality, the computational aspect of the superposition theorem is not immediately apparent.
However, it does not mean that it does not have one: one possible interpretation is that if $\PoissonProcess_1$ and $\PoissonProcess_2$ are Poisson processes that we are simulating in parallel, the superposition theorem tells us how ``merging the results'' will behave.
I will use the theorem precisely in this way in \cref{sec:superposition_parallelisation} to design parallelised versions of the sampling algorithms I describe in this chapter.
\subsection{Spatio-Temporal Poisson Processes and How to Simulate Them}
\par
So far, the discussion has focused on Poisson processes over general spaces. 
However, I now introduce further structure that will be crucial in the constructions I perform in the next section. 
Indeed, in this thesis, I will always assume that the Poisson processes $\PoissonProcess$ are defined over $\Omega = \YSpace \times \nonnegReals$, for some Polish space $\YSpace$.
Therefore, each point $\pi \in \PoissonProcess$ of such a process consists of two coordinates $\pi = (Y, T)$, with $Y \in \YSpace$ and $T \in \nonnegReals$.
Since the second coordinate $T$ is a positive real number, it will be helpful to associate it with some imaginary concept of ``time.''
Then, as a complementary notion, we can associate the first coordinate with ``location.''
Accordingly, each point $\pi \in \PoissonProcess$ describes a point in some ``space-time,'' hence I will call such processes \textbf{spatio-temporal Poisson processes}.
For the thesis, I will solely consider spatio-temporal Poisson processes. 
\textbf{Hence, from now I drop the ``spatio-temporal'' qualifier and refer to these processes as ``Poisson processes'' for brevity}.
\par
Among other things, this additional structure defines a natural total order over the points of $\PoissonProcess$ by inheriting the order structure from $\nonnegReals$: for $\pi, \pi' \in \PoissonProcess$ with $\pi = (Y, T)$ and $\pi' = (Y', T')$, we define 
\begin{align*}
\pi \leq \pi' \quad \Leftrightarrow  \quad T \leq T'.
\end{align*}
Therefore, we can present the points of $\PoissonProcess$ as an ordered sequence instead of just as a set: $\PoissonProcess = ((Y_1, T_1), (Y_2, T_2), \hdots)$ with $T_i \leq T_j$ for $i \leq j$.
Due to the ordering, another helpful/popular association is with races.
That is, we could imagine that the spatial components $Y_i$ of points $\pi_i$ are entities competing in some race, and $T_i$ are times at which they ``cross the finish line.''
Therefore, I will often refer to the $T_i$ as \textbf{arrival times}, and for $n \geq 1$, I will refer to the $n$th point $\pi_n$ in the ordered presentation of $\PoissonProcess$ as the \textbf{$n$th arrival}.
By convention, I also define the ``zeroth arrival time'' as $T_0 = 0$.
\par
\textbf{Arrival time distribution.}
Let $\PoissonProcess$ now be a Poisson process over $\YSpace \times \nonnegReals$ with counting measure $\rmN$ and mean measure $\mu$.
Given the $n$th arrival time $T_n$, what can we say about the distribution of $T_{n + 1}$?
It turns out that this question can be answered by using the following often-used trick: for some time $t \geq T_n$, we have $T_{n + 1} > t$ if and only if no points arrive between $T_n$ and $t$.
Therefore:
\begin{align}
\label{eq:pois_process_inter_arrival_identity}
\Prob[T_{n + 1}\! >\! t \mid T_n] = \Prob[\rmN(\YSpace\! \times\! (T_n, t])\! =\! 0 \mid T_n] = \exp\big(\!-\mu(\YSpace \!\times\! (T_n, t])\big) \cdot \Ind[t > T_n]
\end{align}
where the second equality follows, since the number of points in $ \YSpace \times (T_n, t]$ is independent of $T_n$ by the independence property of Poisson processes and $\rmN( \YSpace \times (T_n, t]) \mid T_n$ is therefore Poisson distributed with the corresponding mean measure.
\par
This chapter will show how we can use Poisson processes to construct sampling algorithms.
Thus, I will always assume that $\YSpace$ is the sample space of interest and I have access to some probability measure $P$ over $\YSpace$ that I call the \textbf{proposal distribution} and assume that I can simulate $P$-distributed samples.
Thus, I will always start with a Poisson process $\PoissonProcess$ whose mean measure is given by a product measure: $\mu = P \otimes \lambda $, where $\lambda$ is the usual Lebesgue measure.
The product measure is characterised by the property that for $A \subseteq \YSpace$ and $B \subseteq \nonnegReals$, we have $\mu(A \times B) = P(A)\cdot \lambda(B)$.
In this situation, since $P$ is a probability measure, we get that
$\mu(\YSpace \times (T_n, t]) = t - T_n$, hence \cref{eq:pois_process_inter_arrival_identity} simplifies to
\begin{align*}
\Prob[T_{n + 1} > t \mid T_n] = \exp\big(-(t - T_n)\big) \cdot \Ind[t \geq T_n].
\end{align*}
In other words, we have the recursive relationship 
\begin{align}
\label{eq:time_homogeneous_process_arrival_identity}
T_{n + 1} \sim T_n + \Exponential(1),
\end{align}
where $\Exponential(\rho)$ denotes an exponentially distributed random variable with rate $\rho$.
Such a process is called \textit{time-homogeneous}, since the distribution of the difference of arrival times (also called the inter-arrival time) $T_{n + 1} - T_n$ is always $\Exponential(1)$.
In particular, it does not depend on $T_n$.%
\par
\added[id={CM}, comment={Corrected argument for why conditioning on a point from a Poisson process over a disjoint set does not destroy the Poisson structure.}]{\textbf{Restriction up to a random point of the process.}}
Before moving on to more computational matters, I show the following ``obvious'' result, which I will repeatedly use in the following sections.
Its essence is that if we condition on a randomly selected arrival time $T_K$ of a spatio-temporal Poisson process, the arrivals before $T_K$ still form a Poisson process.
\begin{lemma}
\label{lemma:random_restriction}
Let $\PoissonProcess = \{(Y_i, T_i)\}_{i = 1}^\infty$ be a Poisson process over $\Omega = \YSpace \times \nonnegReals$ with mean measure $\mu = P \otimes \lambda$. 
Let $K = K(\Pi) \in \Nats$ be a random variable that is allowed to depend on $\PoissonProcess$.
Now, consider the restriction of $\PoissonProcess$ to $A = \YSpace \times [0, T_K)$.
Then, conditioned on the event $\{T_K = t\}$, the restriction theorem (\cref{thm:restriction_theorem}) still applies and we have that $\PoissonProcess\vert_A$ is a Poisson process with mean measure $\mu\vert_A$.
\end{lemma}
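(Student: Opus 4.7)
The approach will be to combine the deterministic restriction theorem (\cref{thm:restriction_theorem}) with a Palm-type disintegration of $\Pi$ against the random arrival time $T_K$. The crucial structural fact I would exploit is that the mean measure $\mu = P \otimes \lambda$ factors into a rate-one temporal component and an independent $P$-distributed spatial mark, so $\Pi$ can be viewed as a rate-one Poisson process on $\nonnegReals$ whose arrival times $(T_i)$ are tagged with i.i.d.\ $P$-distributed marks $(Y_i)$, independent of the times.

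First, I would prove the corresponding statement for a \emph{deterministic} time $t \geq 0$: by \cref{thm:restriction_theorem} applied to the fixed set $A_t = \YSpace \times [0, t)$, the restriction $\Pi\vert_{A_t}$ is a Poisson process with mean measure $\mu\vert_{A_t}$, and by the independence axiom of \cref{def:poisson_process} it is independent of $\Pi \cap (\YSpace \times [t, \infty))$. Next, I would lift this to the random arrival time $T_K$ using Slivnyak's theorem, which asserts that the reduced Palm distribution of a Poisson process at a point coincides with the law of the original process. Concretely, conditioning on the event that $\Pi$ has an arrival at time $t$ and then removing that arrival leaves a process distributed as the unconditioned $\Pi$; its restriction to $A_t$ is therefore still a Poisson process with mean measure $\mu\vert_{A_t}$, which gives exactly the claim after identifying $A_t$ with $A = \YSpace \times [0, T_K)$ on the event $\{T_K = t\}$.

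The main obstacle will be making the disintegration rigorous, since $\{T_K = t\}$ is a probability-zero event and $K$ is allowed to depend on all of $\Pi$. The cleanest way around this is to test the claim against arbitrary bounded measurable functionals of $\Pi\vert_A$ via the Campbell--Mecke formula, which encodes Slivnyak's theorem and turns the conditioning on $T_K = t$ into an integral against the mean measure; the spatial mark $Y_K$ integrates out trivially by independence of marks. In all applications of this lemma in \Cref{sec:global_rs,sec:global_a_star,sec:global_gprs}, $K$ is in fact determined by a stopping time of the natural filtration of $\Pi$, in which case the strong Markov property for spatio-temporal Poisson processes provides a more elementary route that avoids Palm theory altogether.
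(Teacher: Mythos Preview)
Your proposal is correct and follows essentially the same route as the paper: both arguments hinge on the Slivnyak--Mecke characterisation of Poisson processes, i.e.\ that the reduced Palm distribution at a point coincides with the unconditioned law. The paper conditions on the full arrival $(Y_K,T_K)=(y,t)$, invokes Mecke's equation in the form $\rmN_{(y,t)} = \rmN * \delta_{(y,t)}$ (citing Proposition~13.1.VII of Daley--Vere-Jones), observes that $(y,t)\notin A$ so the restricted counts are unaffected, and then integrates out $Y_K\mid T_K=t$; you instead exploit the product structure of $\mu$ to separate the temporal Palm conditioning from the independent spatial mark and package the disintegration via Campbell--Mecke, which amounts to the same computation. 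Your additional remark that the strong Markov property suffices in the actual applications (where $K$ is a stopping time) is a useful observation the paper does not make.
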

At a high level, the result follows from the fact that Poisson processes are \textit{completely random} point processes, and hence conditioning on an arrival does not affect the probability structure of the remaining points.
\begin{proof}
To begin, let $y \in \YSpace$ be an arbitrary element in the support of $Y_{K} \mid T_K = t$, and consider conditioning $\PoissonProcess$ on the $K$th arrival $\{Y_K = y, T_K = t\}$.
Furthermore, let $\rmN$ and $\rmN_{(y, t)}$ be the random counting measures of $\PoissonProcess$ and $\PoissonProcess \mid \{Y_K = y, T_K = t\}$, respectively.
Note that by Proposition 13.1.V of \citet{daley2007introduction} (see also Exercise 13.1.9), $\rmN_{(y, t)}$ exists and is also a random counting measure.
Furthermore, by Proposition 13.1.VII of \citet{daley2007introduction}, since $\PoissonProcess$ is a Poisson process, $\rmN_{(y, t)}$ has the following special structure:
\begin{align}
\label{eq:mecke_equation}
\rmN_{(y, t)} = \rmN * \delta_{(y, t)},
\end{align}
where $\delta_{(y, t)}(A) = \Ind[(y, t) \in A)]$ is the point measure on $(y, t)$.
This identity is a form of what is known as Mecke's theorem or the Mecke equation \citep{mecke1967stationare}.
\par
Consider now any $\rmN_{(y, t)}$-measurable subsets of $B, C \subseteq A = \YSpace \times [0, T_k)$ with $B \cap C = \emptyset$.
By \cref{eq:mecke_equation}, we have 
\begin{align*}
\rmN_{(y, t)}(B) = (\rmN * \delta_{(y, t)})(B) = \rmN(B)
\end{align*}
where the second equality follows from the fact that by construction $(y, t) \not\in B$.
From this, we also have $\rmN_{(y, t)}(B) = \rmN(B) \perp \rmN(C) = \rmN_{(y, t)}(C)$.
Finally, recall that since $y$ was arbitrary, integrating out $Y_{K} \mid T_K = t$ finishes the proof.
\end{proof}
\par
\textbf{Simulating Poisson Processes.}
So far, I have only discussed Poisson processes as abstract objects satisfying \cref{def:poisson_process}, but how can we simulate them?
\Cref{eq:time_homogeneous_process_arrival_identity} suggests a neat approach to \textit{lazily}\footnote{Here, I use the term to mean \textit{lazy evaluation}, also known as \textit{call-by-need} in programming language theory, which refers to not generating all values immediately, but only when they are needed.} generate the points of a time-homogeneous Poisson process $\PoissonProcess$ with mean measure $P \otimes \lambda$ in order.
First, assume we already generated the first $n \geq 0$ points $((Y_i, T_i))_{i=1}^n$ of $\PoissonProcess$. 
To generate the next point, we first simulate an exponential random variable $\Delta_{n + 1} \sim \Exponential(1)$ and set the next arrival time as $T_{n + 1} = T_n + \Delta_{n + 1}$.
Since the mean measure of the process is a product measure, the time and location are independent: $T_{n + 1} \perp Y_{n + 1}$.%
\begin{wrapfigure}[17]{r}{0.5\textwidth}
\begin{minipage}{0.5\textwidth}
\begin{algorithm}[H]
\SetAlgoLined
\DontPrintSemicolon
\SetKwInOut{Input}{Input}\SetKwInOut{Output}{Output}
\textbf{Input:}\;
Spatial distribution $P$\;
\;
$T_0 \gets 0$\;
\For{$n = 1, 2, \hdots$}{
$\Delta_n \sim \Exponential(1)$\;
$T_n \gets T_{n - 1} + \Delta_n$\;
$Y_n \sim P$ \;
\textbf{yield } $(Y_n, T_n)$\;
}
\caption{Generating a time-homogeneous Poisson process with mean measure $P \otimes \lambda$.
See main text for the explanation of \textbf{yield}.}
\label{alg:global_pp_simulation}
\end{algorithm}%
\end{minipage}%
\end{wrapfigure}%
Hence, to obtain the location of the ${(n + 1)}$st arrival, we simulate a $P$-distributed random variate and set it as $Y_{n + 1}$.
I capture this procedure in \cref{alg:global_pp_simulation}, which uses the \texttt{yield} keyword from the \texttt{Python} programming language for defining a \textit{generator}.
This means that once we make a call to $\cref{alg:global_pp_simulation}$, it returns an object $\PoissonProcess$, which represents a Poisson process and which maintains an internal state that keeps track of the points we generated so far.
Then, to generate the next point of the process, we call $\mathtt{next}(\PoissonProcess)$, which executes the code in \cref{alg:global_pp_simulation} until it hits the \texttt{yield} keyword.
At this point, $\mathtt{next}(\PoissonProcess)$ returns the generated point and updates the internal state of the generator, which then pauses execution and waits until we call $\mathtt{next}(\PoissonProcess)$ again.
\par
While there are other, more general algorithms that can simulate essentially arbitrary Poisson processes, such as Algorithm 1 of \citet{maddison2016poisson}, I emphasize that the importance of \cref{alg:global_pp_simulation} is that it generates the points of a process in order, which will be crucial for the analysis of the sampling algorithms I construct in the next section.
\section[Sampling as Search: Developing Sampling Algorithms using Poisson Processes]{Sampling as Search: Developing Sampling \texorpdfstring{\\}{} Algorithms using Poisson Processes}
\label{sec:sampling_as_search}
\begin{figure}[t]
\centering
\begin{minipage}[t]{.33\textwidth}
\begin{algorithm}[H]
\SetAlgoLined
\DontPrintSemicolon
\SetKwIF{If}{ElseIf}{Else}{if}{:}{else if}{else}{end}
\SetKwFunction{SimulatePP}{SimulatePP}
\textbf{Input:}\;
Proposal distribution $P\hspace{-1cm}$\;
Density ratio $r = \nicefrac{dQ}{dP}\hspace{-3mm}$\;
Upper bound $M$ for $r\hspace{-1cm}$\;
\textbf{Output:}\;
$Y \sim Q$ and its index $N\hspace{-1cm}$\;
\;
\tcp{Call \cref{alg:global_pp_simulation}.}
$\Pi \gets \SimulatePP(P)$\;
\;
\For{$n = 1, 2, \hdots$}{
\;
$(Y_n, T_n) \gets \mathtt{next}(\Pi)\hspace{-1cm}$\;
\;
\;
$U_n \sim \Unif(0, 1)$\;
\;
\If{$U_n < r(Y_n) / M\hspace{-1mm}$}{
\Return{$Y_n, n$}
}
}
\vspace{0.65mm}
\caption{\\Rejection sampler.}%
\label{alg:global_rs}
\end{algorithm}
\end{minipage}%
\hfill
\begin{minipage}[t]{.33\textwidth}
\begin{algorithm}[H]
\SetAlgoLined
\DontPrintSemicolon
\SetKwIF{If}{ElseIf}{Else}{if}{:}{else if}{else}{end}
\SetKwInOut{Input}{Input}\SetKwInOut{Output}{Output}
\textbf{Input:}\;
Proposal distribution $P\hspace{-1cm}$\;
Density ratio $r = \nicefrac{dQ}{dP}\hspace{-3mm}$\;
Upper bound $M$ for $r\hspace{-1cm}$\;
\textbf{Output:}\;
$Y \sim Q$ and its index $N\hspace{-1cm}$\;
\;
\tcp{Call \cref{alg:global_pp_simulation}.}
$\Pi \gets \SimulatePP(P)$\;
$N, Y^*, U \gets (0, \perp, \infty)\hspace{-2mm}$\;
\For{$n = 1, 2, \hdots$}{
\;
$(Y_n, T_n) \gets \mathtt{next}(\Pi)\hspace{-1cm}$\;
\If{$T_n / r(Y_n) < U$}{
$U \gets T_n / r(Y_n)\hspace{-3mm}$\;
$Y^*, N \gets (Y_n, n)\hspace{-1cm}$
}
\If{$U < T_n / M\hspace{-1mm}$}{
\Return{$Y^*, N\hspace{-1mm}$}\;
}
}
\vspace{0.15mm}
\caption{\\A* sampler.}%
\label{alg:global_a_star}
\end{algorithm}
\end{minipage}%
\hfill
\begin{minipage}[t]{.33\textwidth}
\begin{algorithm}[H]
\parbox{0.96\linewidth}{\caption{Greedy Poisson rejection sampler.
}%
\label{alg:global_gprs}%
\SetAlgoLined
\DontPrintSemicolon
\SetKwIF{If}{ElseIf}{Else}{if}{:}{else if}{else}{end}
\SetKwInOut{Input}{Input}\SetKwInOut{Output}{Output}
\textbf{Input:}\;
Proposal distribution $P$\;
Density ratio $r = \nicefrac{dQ}{dP}\hspace{-1cm}$\;
Stretch function $\sigma$\;
\textbf{Output:}\;
$Y \sim Q$ and its index $N$\;
\;
\tcp{Call \cref{alg:global_pp_simulation}.}
$\Pi \gets \SimulatePP(P)$\;
\;
\For{$n = 1, 2, \hdots$}{
\;
$(Y_n, T_n) \gets \mathtt{next}(\Pi)\hspace{-1cm}$\;
\;
\;
\;
\;
\If{$T_n < \sigma\left(r(Y_n)\right)\hspace{-1mm}$}{
\Return{$Y_n, n$} \;
}
}%
}%
\end{algorithm}
\end{minipage}%
\caption{The samplers I present in \cref{chapter:rec_with_pp}, with their common parts aligned.}%
\label{fig:global_pp_algorithms}
\end{figure}%
\noindent
This section shows how the Poisson process operations I described in \cref{sec:poisson_process_operations} give rise to sampling algorithms.
\Cref{fig:global_pp_algorithms} provides a summary of the whole section by presenting rejection sampling, A* sampling \citep{maddison2014sampling} and greedy Poisson rejection sampling \citep{flamich2023gprs} side-by-side in \Cref{alg:global_rs,alg:global_a_star,alg:global_gprs}, respectively, showcasing their common structure.
My presentation and analysis of rejection sampling and A* sampling in \cref{sec:global_rs,sec:global_a_star} roughly follows the work of \citet{maddison2016poisson}, who was the first to connect rejection sampling and A* sampling.
The analysis of the description length of A* sampling is due to \citet{li2018strong}, who developed and analysed A* sampling independently of \citep{maddison2014sampling} in the context of lossy source coding.
In \cref{sec:global_gprs}, I complete the picture with my own work by developing greedy Poisson rejection sampling \citep{flamich2023gprs}.
\par
An interesting aspect of using Poisson processes to develop sampling algorithms is that 
\replaced[id={PL}, comment={PL/email/3}]{%
we can recast random variate simulation as a search problem over the points of an appropriate Poisson process.}{%
they allow to recast random variate simulation as a search problem over the points of a Poisson process.}
\par
\textbf{The sampling-as-search recipe.}
While \citet{aaronson2014equivalence} has explored similar ideas in the complexity theory literature, searching over point processes was proposed by \citet{maddison2014sampling}.
The high-level recipe to develop sampling algorithms within this ``sampling-as-search'' framework is as follows:
\begin{enumerate}
\item We wish to simulate a random variate with some target distribution $Q$ over some space $\YSpace$, but doing so ``directly'' is infeasible.
\item We have access to a second distribution $P \gg Q$, called the proposal, from which we can simulate an arbitrary number of samples.
\item We construct a time-homogeneous Poisson process $\PoissonProcess$ over $\YSpace \times \nonnegReals$ with mean measure $P \otimes \lambda$.
Then, we use one of the Poisson process operations from \cref{sec:poisson_process_operations} to transform $\PoissonProcess$ into a new process $\PoissonProcess'$, such that the location $Y$ of the first arrival of $\PoissonProcess'$ is $Q$-distributed.
\item Computationally, we simulate the original process $\PoissonProcess$ and find which of its points correspond to the first arrival of $\PoissonProcess'$.
\end{enumerate}
\par
\textbf{Channel simulation with Poisson processes.}
A benefit of these algorithms is that we can immediately turn them into channel simulation algorithms (\cref{def:channel_simulation_alg}); A* sampling and greedy Poisson rejection sampling will even turn out to be relative entropy coding algorithms (\cref{def:rec})!
To see how, let $\rvx, \rvy \sim P_{\rvx, \rvy}$ be dependent random variables over $\XSpace \times \YSpace$, and assume we wish to develop a channel simulation algorithm for $\rvx \to \rvy$.
We assume that the encoder and decoder know $P_\rvy$, and, given some source symbol $\rvx \sim P_\rvx$, the encoder wishes to transmit $\rvy \sim P_{\rvy \mid \rvx}$ to the decoder.
Now let $\Pi$ be a Poisson process over $\YSpace \times \nonnegReals$ with mean measure $P_\rvy \otimes \lambda$ and set it as the common randomness $\rvz \gets \Pi$.
In practice, this means that the encoder and decoder initiate \cref{alg:global_pp_simulation} with the same PRNG seed to simulate the points of $\Pi$.
Then, the encoder uses one of the algorithms in \cref{fig:global_pp_algorithms} to simulate a sample from $P_{\rvy \mid \rvx}$ using the shared process $\Pi$.
Conveniently, all these algorithms are selection samplers (\cref{def:exact_selection_sampler}), meaning they not only return a sample but also their index $N$ within $\Pi$.
Finally, the sender encodes $N$ using the entropy coding technique I already outlined in \cref{sec:rec_through_examples}, but repeat here for completeness.
First, \cref{lemma:li_el_gamal_bound_on_pos_int_random_variable} tells us that by using the Zeta distribution $\zeta(n \mid \alpha) \propto n^{-\alpha}$ with exponent $\alpha = 1 + 1 / \Exp[\lb N]$ as the coding distribution to entropy code $N$, then the average description length will be bounded above by ${\Exp[\lb N] + \lb(\Exp[\lb N] + 1) + 1}$.
Therefore, if we can show that $\Exp[\lb N] \leq \MI{\rvx}{\rvy} + \Oh(1)$, then this scheme yields a relative entropy coding algorithm!
\par
In the following three sections, I present rejection sampling, A* sampling and greedy Poisson rejection sampling. The layout of the sections is roughly the same: I start with a mathematical construction based on a Poisson process operation, after which I describe an algorithmic construction to simulate the resultant process.
Finally, I analyse the runtime/sample complexity of the algorithms.
\subsection{Thinning: Rejection Sampling}
\label{sec:global_rs}
\noindent
Let us now see how we can ``rediscover'' rejection sampling by thinning a Poisson process.
First, let me recap our assumptions: we have two distributions $Q \ll P$ over some space $\YSpace$ with Radon-Nikodym derivative $r = dQ/dP$.
We wish to simulate a sample from $Q$ but can only draw $P$-distributed samples.
\par
To this end, let $\Pi$ be a Poisson process over $\Omega = \YSpace \times \nonnegReals$ with mean measure $P \otimes \lambda$.
Now, the thinning theorem (\cref{thm:thinning_theorem}) tells us that if we define some function $\rho: \YSpace \times \nonnegReals \to [0, 1]$ and randomly delete points of $\PoissonProcess$, such that we keep a point $\pi \in \PoissonProcess$ with probability $\rho(\pi)$, then the remaining points form a Poisson process $\thinOp(\PoissonProcess)$ with mean measure
\begin{align*}
A \subseteq \YSpace,\, B \subseteq \nonnegReals: \quad \mu'(A \times B) = \int_B \int_A \rho(t, y) \, dP(y) \, dt. 
\end{align*}
While we can pick any appropriate $\rho$, for simplicity, let us consider functions that depend only on their spatial argument: $\rho(y)$.
This way, the mean measure of the thinned process simplifies to a product measure:
\begin{align*}
\mu'(A \times B) = \int_B \int_A \rho(y) \, dP(y) \, dt = \lambda(B) \cdot \int_A \rho(y) \, dP(y).
\end{align*}
Observe the striking similarity to the change of measure formula $Q(A) = \int_A r(y) \, dP(t)$: it seems that by setting $\rho = r$, the thinned process $\PoissonProcess'$ will have mean measure $Q \otimes \lambda$, and we are already done!
This is almost right, but there is one slight discrepancy: $r$'s range is a subset of $\nonnegReals$, not $[0, 1]$.
This highlights a crucial additional assumption we need for rejection sampling to work: we must require that $r$ be bounded.
Therefore, we assume we have some upper bound $M \geq \norm{r}_\infty$.
Then, since $1\geq r(y) / M$ for every $y$, we can set $\rho = r / M$ to obtain a valid thinning probability and lead to a thinned process $\thinOp(\PoissonProcess)$ with mean measure $1/M \cdot Q \otimes \lambda$.
\par
\textbf{Constructing the rejection sampler.}
The above argument tells us that if we find any point $\pi$ of $\thinOp(\PoissonProcess)$, the location of $\pi$ will be $Q$-distributed.
Perhaps the simplest such point we can attempt to find is the first arrival of $\thinOp(\PoissonProcess)$, which is equivalent to finding the first point of the original process $\PoissonProcess$ that was not deleted.
Thus, to implement rejection sampling, we can use \cref{alg:global_pp_simulation} to simulate the points of $\PoissonProcess$ in order.
In particular, at step $n$, we generate the next point of the process $(Y_n, T_n)$.
Then, we flip a biased coin with probability $\rho(Y_n) = r(Y_n) / M$ of landing heads.
If the coin lands tails, it means we deleted the point, and we move onto step $n + 1$; otherwise, if it lands heads, then it means we found the first point of $\thinOp(\PoissonProcess)$, and we terminate.
This procedure is captured in \cref{alg:global_rs}, and now I turn to the analysis of its runtime.
\par
\textbf{The runtime of rejection sampling.}
Due to the relative simplicity of rejection sampling, we can analytically compute the distribution of the number of samples it examines before terminating:
\begin{theorem}[Runtime of rejection sampling.]
\label{thm:global_rs_runtime}
Let $Q \ll P$ be distributions over $\YSpace$ with Radon-Nikodym derivative $r = dQ/dP$, and assume $r$ is bounded above by some constant $M > 1$.
Let $K$ denote the number of proposal samples simulated by \cref{alg:global_rs} before terminating.
Then,
\begin{align*}
K \sim \Geom(1/M).
\end{align*}
\end{theorem}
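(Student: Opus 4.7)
The plan is to give two short, complementary arguments: a direct elementary one based on the fact that each iteration of \cref{alg:global_rs} is an independent Bernoulli trial, and a more structural one that reuses the thinning construction on which the algorithm was built.

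For the direct argument, I would first observe that the samples $(Y_n, T_n)$ produced by $\mathtt{next}(\Pi)$ satisfy $Y_n \stackrel{iid}{\sim} P$, since \cref{alg:global_pp_simulation} generates spatial coordinates by drawing independently from $P$. The arrival times $T_n$ play no role in the acceptance test of \cref{alg:global_rs}, and the uniforms $U_n \sim \Unif(0,1)$ are drawn freshly and independently at each step. Hence the events $A_n = \{U_n < r(Y_n)/M\}$ are i.i.d., with success probability
\begin{align*}
\Prob[A_n] = \Exp_{Y_n \sim P}\!\left[\Prob[U_n < r(Y_n)/M \mid Y_n]\right] = \Exp_{Y_n \sim P}[r(Y_n)/M] = \tfrac{1}{M},
\end{align*}
where the last equality uses $\Exp_{Y \sim P}[r(Y)] = \int r \, dP = Q(\YSpace) = 1$. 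By definition of the geometric distribution, the number $K$ of trials until the first success is $\Geom(1/M)$.

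For the structural argument, I would appeal to the thinning theorem derivation already established in this subsection. By construction, the algorithm selects the first arrival of the thinned process $\thinOp(\Pi)$, whose mean measure is $(1/M) Q \otimes \lambda$. The arrival time $T^*$ of this first accepted point is therefore $\Exponential(1/M)$-distributed by \cref{eq:time_homogeneous_process_arrival_identity} applied to $\thinOp(\Pi)$. On the other hand, $K$ is exactly the index in $\Pi$ of this first accepted point, so
\begin{align*}
K = \abs{\Pi \cap (\YSpace \times [0, T^*])} = \rmN(\YSpace \times [0, T^*]).
\end{align*}
Conditioning on $T^* = t$, the restriction theorem (\cref{thm:restriction_theorem}) together with \cref{lemma:random_restriction} guarantees that the number of points of $\Pi$ strictly before $T^*$ is Poisson with mean $t(1 - 1/M)$, independent of the accepted point itself. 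Mixing over $T^* \sim \Exponential(1/M)$ recovers the geometric distribution with parameter $1/M$.

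I do not anticipate any real obstacle: the first argument is essentially a one-line computation once one identifies the per-step success probability, and the only subtlety worth flagging explicitly is that the $U_n$ are drawn independently of the process $\Pi$ so that the successes form a Bernoulli sequence. The second argument is included mainly to make the link with the thinning construction transparent and to emphasise the self-consistency of the Poisson-process viewpoint.
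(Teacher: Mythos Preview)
Your proposal is correct and matches the paper's structure exactly: the paper also presents two proofs, a ``standard'' elementary one identical to your direct argument and a Poisson-process one that conditions on the first arrival $T^*=T_K$ of the thinned process and counts the points of $\Pi$ arriving before it.

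One remark on the structural argument. Your Poisson rate $t(1-1/M)$ for the number of points of $\Pi$ strictly before $T^*$ is correct, but it follows most cleanly from the thinning theorem rather than from \cref{lemma:random_restriction}: the deleted points form a Poisson process of time-rate $\int_\YSpace(1-r(y)/M)\,dP(y)=1-1/M$ that is independent of the kept process, $T^*$ is a function of the kept process alone, and every point of $\Pi$ in $[0,T^*)$ is necessarily a deleted one. The paper's second proof in fact invokes \cref{lemma:random_restriction} to claim the rate is $t$ instead of $t(1-1/M)$; this overlooks that conditioning on $T_K=t$ also forces all earlier points to have been rejected, and with rate $t$ the subsequent Poisson--Exponential mixture does not evaluate to $\tfrac{1}{M}(1-\tfrac{1}{M})^k$. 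So your version is the one that actually closes.
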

Due to its simplicity and beauty, I first present the ``usual proof'' of this statement.
\begin{proof}[The ``standard'' proof.]
For the $i$th arrival $(Y_i, T_i)$ in $\PoissonProcess$, let $\beta_i \sim \Bern(r(Y_k) / M)$ denote the Bernoulli coin flip that determines whether the point is kept or not.
Then, the probability that the point is kept is
\begin{align*}
\Exp_{Y_i \sim P}[\Prob[\beta_i = 1 \mid Y]] = \Exp_{Y \sim P}[r(Y)] / M = 1 / M.
\end{align*}
Then, since the locations $Y_i$ of the points in $\PoissonProcess$ are independent, the probability that the algorithm simulates more than $k$ samples is
\begin{align*}
\Prob[K > k] = \Prob[\forall i \in [1:k]: \beta_i = 0] = \prod_{i = 1}^k \Prob[\beta_i = 0] = (1 - 1/M)^k
\end{align*}
This is the survival function of a geometric random variable with rate $1/M$, as claimed.
\end{proof}
While the above proof is elegant, it makes no use of the Poisson process structure of \cref{alg:global_rs}.
Thus, I now provide a second proof, which also allows me to showcase a useful trick that I will use in analysing the algorithms in the coming sections too: 
\textit{conditioning on the first arrival of the transformed process, then integrating it out}.
\begin{proof}[Proof using the Poisson process structure.]
As we saw above, by the thinning theorem, the mean measure of the thinned process $\thinOp(\PoissonProcess)$ is $1/M \cdot Q \otimes \lambda$.
Hence, by \cref{eq:pois_process_inter_arrival_identity},
the first arrival time $\tau$ of $\thinOp(\PoissonProcess)$, has survival function
\begin{align*}
\Prob[\tau > t] = \exp\big(1/M \cdot  Q(\YSpace) \cdot \lambda([0, t)) \big) = \exp(-t/M),
\end{align*}
which shows that $\tau \sim \Exponential(1 / M)$.
\par
Now, recall that the first arrival time of $\thinOp(\PoissonProcess)$ is also the $K$th arrival time of $\PoissonProcess$: $\tau = T_K$.
Then, there are $K - 1$ points of $\PoissonProcess$ that arrived earlier than $T_K$.
\replaced[id={CM}, comment={replaced troublesome step}]{%
However, by \cref{lemma:random_restriction}, the number of points of $\PoissonProcess$ that fall in $\YSpace \times [0, T_K)$%
}{However, by the definition of Poisson processes, the number of points of $\PoissonProcess$ that fall in $\YSpace \times [0, T_K)$ is independent of $T_K$ and}
is Poisson distributed with rate $(P \otimes \lambda)(\YSpace \times [0, T_K)) = T_K$!
\par
Collecting the information we derived so far, we have that
\begin{align*}
T_K \sim \Exponential(1 / M) \quad \text{and}\quad (K - 1 \mid T_K) \sim \Pois(T_k).
\end{align*}
Therefore,
\begin{align*}
\Prob[K - 1 = k] 
&= \Exp_{T_K}[\Prob[K - 1 = k \mid T_K]] \\
&= \frac{1}{k!}\Exp_{T_K}[T_K^k \cdot e^{-T_K}] \\
&= \frac{1}{M}\frac{1}{k!}\int_0^\infty t^k e^{-t(1 + 1/M)} \, dt \\
&= \frac{1}{M}\left(1 - \frac{1}{M}\right)^k,
\end{align*}
where the last equality follows from integrating by parts $k$ times.
Therefore, this shows that $\Prob[K = k] = \frac{1}{M}\left(1 - \frac{1}{M}\right)^{k - 1}$, which shows that $K$ is a geometric random variable with failure probability $1/M$, as desired.
\end{proof}
While this second proof method is slightly longer in this case than the first one, its power lies in its applicability even when the termination probabilities are not independent. 
Indeed, I will leverage this argument to analyse the algorithms in the following two sections as well.
\par
As a small side note, we could generalise rejection sampling and re-introduce dependence between the acceptance criteria by making the thinning probability $\rho$ depend on the arrival time as well.
Then, it would be interesting to investigate whether the necessary criteria for $\rho$ to define a valid rejection sampler for $Q$ could be related to the ones derived by \citep{casella2004generalized} for their version of generalised rejection samplers.
\par
As a second side note, observe that since rejection sampling is a selection sampler, \cref{thm:global_rs_runtime} shows that the lower bound I derived for the sample complexity of selection samplers in \cref{thm:selection_sampler_runtime_lower_bound} is maximally tight and exactly achievable: if we know the optimal lower bound $M = \norm{r}_\infty$, we have $\Exp[K] = \norm{r}_\infty = \expb(\infD{Q}{P})$, and we cannot do better.
\par
\textbf{Channel simulation with rejection sampling.}
\Cref{thm:global_rs_runtime} shows that the distribution of the index $K$ returned by \cref{alg:global_rs} is geometric with rate $1/M$.
Now, if we wish to use rejection sampling for channel simulation, we also have to worry about how efficiently we can encode $K$.
Since in channel simulation we get to design the channel, we might as well do it in such a way that we can work out the best possible upper bound $M$ to use, which is $M = \norm{r}_\infty$.
In particular, for $\rvx, \rvy \sim P_{\rvx, \rvy}$ and $r_\rvx = dP_{\rvy \mid \rvx}/dP_\rvy$,
I have already shown in \cref{eq:rs_expected_index_log_upper_bound} that
encoding $K$ with a Zeta distribution is not efficient since we can only obtain the loose bound $\Exp[\lb K] \leq e \cdot \Exp_{\rvx}[\lb \norm{r_\rvx}_\infty] + e$.
However, it is instructive to consider why $\Exp[\lb K]$ is not lower and how we could design algorithms for which it is lower.
Here, it turns out that the very thing that enabled the simple proof of \cref{thm:global_rs_runtime} is the culprit: the independent acceptance criteria.
To compress the index better, we would ideally like it to be more predictable; in the context of the Zeta coding approach I outlined in the introduction of this section, this is essentially equivalent to ensuring that the distribution $K$ is skewed towards $0$ as much as possible.
To illustrate why rejection sampling goes against this desideratum, assume the sample space $\YSpace$ is finite.
Then, it is perfectly possible that rejection sampling proposes the same element of the sample space $y \in \YSpace$ twice in a row, rejecting $y$ the first time and accepting it the second time.
Note that this could happen precisely because the acceptance decisions are independent, hence the algorithm has no way of ``remembering'' to instantly reject $y$ the second time it is proposed.
This discrepancy is what A* sampling and greedy Poisson rejection sampling eliminate, which will also lead to further benefits, as I show when developing approximate sampling algorithms in \cref{sec:approximate_sampling}. 
\subsection{Mapping: A* Sampling}
\label{sec:global_a_star}
\noindent
Analogously to \cref{sec:global_rs}, I now describe how we could ``rediscover'' A* sampling, whose construction uses mapping instead of thinning.
The setting is the same as before: we have two probability measures $Q \ll P$ over $\YSpace$ with $r = dQ/dP$; we want to simulate a $Q$-distributed sample but can only draw samples from $P$.
\par
First, I specialise in the mapping theorem specialised for spatio-temporal processes.
\begin{proposition}[A special case of \cref{thm:mapping_theorem}.]
Let $\PoissonProcess$ be a Poisson process over $\Omega = \YSpace \times \nonnegReals$ with mean measure $P \otimes \lambda$, and let $f: \YSpace \times \nonnegReals \to \YSpace \times \nonnegReals$ be a measurable function such that $f \pushfwd (P \otimes \lambda)$ is non-atomic.
Then, the points
\begin{align*}
f(\PoissonProcess) = \{f(\pi) \in \Omega \mid \pi \in \PoissonProcess\}
\end{align*}
form a Poisson process with mean measure $f \pushfwd (P \otimes \lambda)$.
\end{proposition}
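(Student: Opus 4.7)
The plan is very short because this proposition is nothing more than a direct instantiation of the general Mapping Theorem (\cref{thm:mapping_theorem}) in the specific setting where both the domain and codomain are the spatio-temporal product space $\YSpace \times \nonnegReals$ and the starting mean measure is the product measure $P \otimes \lambda$. The work, therefore, is really just verifying that all hypotheses of \cref{thm:mapping_theorem} are met and then reading off its conclusion.

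First, I would recall the hypotheses of \cref{thm:mapping_theorem}: we need a Poisson process $\PoissonProcess$ on some Polish space, a measurable map $f$ out of that space, and the assumption that the pushforward measure $f \pushfwd \mu$ of the mean measure is non-atomic. In our setting, the ambient space $\Omega = \YSpace \times \nonnegReals$ is Polish because $\YSpace$ is Polish by standing assumption and $\nonnegReals$ is Polish with its usual topology, and products of Polish spaces are Polish. The process $\PoissonProcess$ is Poisson by assumption, $f$ is measurable by assumption, and the non-atomicity of $f \pushfwd (P \otimes \lambda)$ is assumed explicitly.

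Having checked the hypotheses, I would simply invoke \cref{thm:mapping_theorem} with $\mu = P \otimes \lambda$ to conclude that $f(\PoissonProcess) = \{f(\pi) \mid \pi \in \PoissonProcess\}$ is a Poisson process with mean measure $f \pushfwd (P \otimes \lambda)$, which is exactly the claim.

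The only subtle point that might merit a sentence of comment is that \cref{def:poisson_process} requires the mean measure of a Poisson process to be non-atomic and to assign finite mass to bounded sets; the non-atomicity requirement is handed to us by hypothesis, but one might worry whether $f \pushfwd (P \otimes \lambda)$ is boundedly finite. This is not actually an obstacle: the requirement enters only through \cref{thm:mapping_theorem} itself (whose own proof handles local finiteness), so no separate verification is needed here. Hence the proof is essentially a one-line application of the general mapping theorem, and I would present it that way rather than reprove the general result.
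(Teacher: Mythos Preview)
Your proposal is correct and matches the paper's treatment: the proposition is stated explicitly as ``A special case of \cref{thm:mapping_theorem}'' and the paper gives no separate proof, immediately moving on to specialise $f$ further. Your one-line invocation of the general Mapping Theorem after checking the hypotheses is exactly the intended argument.
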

This result holds for very general maps $f$. 
Hence, similarly to \cref{sec:global_rs}, we now consider some ``common sense'' specialisation for it.
First, observe that if $f$ is one-to-one, then $f \pushfwd (P \otimes \lambda)$ is guaranteed to be non-atomic; hence, let us only consider one-to-one maps from now on.
Then, by definition
\begin{align*}
A \subseteq \YSpace,\,B \subseteq \nonnegReals: \quad (f \pushfwd (P \otimes \lambda))(A \times B) = \int_{f^{-1}(A \times B)} d(P \otimes \lambda)(y, t).
\end{align*}
What makes the pushforward measure complicated is that $f^{-1}(A \times B)$ does not ``factorise,'' which means that we cannot easily split the integral on the right-hand side into two.
Thus, let us consider how we might restrict $f(y, t)$ further to obtain such a factorisation: since $f$ maps $(y, t) \mapsto (y', t')$, what should $y'$ be?
As we are interested in constructing a map to create a sampling algorithm, let us approach this from a practical, computational viewpoint. 
Since we are not assuming any a priori structure on $\YSpace$ that would imply that it possesses some special element or would tell us how its elements can be transformed, the only choice we have is the identity map $y' = y$, making $f(y, t) = (y, t')$ essentially the only possible choice!
What does this imply for the mean measure?
For some fixed $y \in \YSpace$, define $f_y(t) = f(y, t)$.
Then, note that since $f(A \times B) = \{(y, f_y(t)) \in \Omega \mid (y, t) \in A \times B \}$, we have $f^{-1}(A \times B) = \{(y, f_y^{-1}(t)) \in \Omega \mid (y, t) \in A \times B \}$, where $f_y^{-1}$ is unique, since we are assuming $f$ is one-to-one.
Excitingly, we can use this now to split the integral in the pushforward measure:
\begin{align*}
A \subseteq \YSpace,\,B \subseteq \nonnegReals: \quad (f \pushfwd (P \otimes \lambda))(A \times B) 
&= \int_{f^{-1}(A \times B)} d(P \otimes \lambda)(y, t) \\
&= \int_{\{(y, f_y^{-1}(t)) \in \Omega \mid (y, t) \in A \times B \}} d(P \otimes \lambda)(y, t) \\
&= \int_A \int_{f_y^{-1}(B)} \, dt \, dP(y).
\end{align*}
Now, notice again the similarity of the last equation with the change of measure formula $(Q \otimes \lambda)(A \times B) = \int_A \int_B \, r(y) \,dt \, dP(y)$.
This suggests that we should pick $f_y$ such that 
\begin{align*}
\int_{f_y^{-1}(B)} \, dt = \int_B \, r(y) \,dt = r(y) \cdot \lambda(B),
\end{align*}
meaning that $f_y^{-1}$ should scale the size of $B$ by a factor of $r(y)$.
However, if we now finally also require that $f_y$ be continuous, the only map that scales an arbitrary subset of $\nonnegReals$ by a fixed constant is, well, the scaling map $f_y^{-1}(t) = t \cdot r(y)$, which means we have found our function!
To reiterate, we now see that by the mapping theorem, picking $f_y^{-1}(t) = t \cdot r(y)$, hence $f(y, t) = (y, t / r(y))$ will ensure that the mapped process $f(\PoissonProcess)$ is a Poisson process with mean measure $Q \otimes \lambda$.
\par
\textbf{Constructing the A* sampler.}
As was the case for rejection sampling, when it comes to designing a sampling algorithm based on the above construction, we are free to search for any point of $f(\PoissonProcess)$, so long as we can always determine their arrival index in $\PoissonProcess$.
However, from a computational and compression perspective, finding the first arrival of $f(\PoissonProcess)$ makes the most sense.
The arrival times of $f(\PoissonProcess)$ are given by $T_n / r(Y_n)$ for $(Y_n, T_n) \in \PoissonProcess$, hence searching for the first point of $f(\PoissonProcess)$ can be formulated as an optimisation problem:
\begin{align*}
(Y^*, T^*) = \argmin_{(Y, T) \in \PoissonProcess}\{T / r(Y)\}.
\end{align*}
Due to the spatio-temporal structure of $\PoissonProcess$, it is most convenient to search through its points in order.
Thus, let me now reformulate this as a search over indices:
\begin{align*}
(Y^*, T^*) &= (Y_N, T_N) \\
N &= \argmin_{n \in \Nats}\{T_n / r(Y_n) \mid (Y_n, T_n) \in \PoissonProcess \}.
\end{align*}
But $\PoissonProcess$ has infinitely many points, so can we even compute this minimum?
Fortunately, when $r$ is bounded from above, it is indeed possible!
This assumption is not restrictive since \cref{thm:selection_sampler_runtime_lower_bound} tells us that an unbounded $r$ will cause the sample complexity to be infinite.
Hence, assuming the boundedness of $r$ is necessary to obtain a sampler with finite sample complexity (it will also be sufficient).
Therefore, let us assume that there is some upper bound $M > 1$ such that $\norm{r}_\infty < M$.
The trick to constructing a sampling algorithm, then, is to make the following two observations: 
\begin{enumerate}
\item \textbf{Lower bound:} For all points $(Y, T) \in \PoissonProcess$ we have $r(Y) < M$, hence ${T / r(Y) > T / M}$, and in particular $T^* / r(Y^*) > T^* / M$.
\item \textbf{Upper bound:} For any $k \geq 1$, the minimum over the first $k$ arrivals is greater than the global minimum:
\begin{align*}
\min_{n \in \Nats}\{T_n / r(Y_n) \mid (Y_n, T_n) \in \PoissonProcess \} \leq \min_{n \in [1:k]}\{T_n / r(Y_n) \mid (Y_n, T_n) \in \PoissonProcess \}.
\end{align*}
\end{enumerate}
For convenience, let $\tau_k = \min_{n \in [1:k]}\{T_n / r(Y_N) \mid (Y_n, T_n) \in \PoissonProcess \}$ denote the minimum over the first $k$ arrivals of $\PoissonProcess$ and let $\tau = \tau_\infty$ denote the global minimum.
We can now implement A* sampling as follows: at step $k$ we simulate the $k$th arrival $(Y_k, T_k)$ of $\PoissonProcess$ using \cref{alg:global_pp_simulation} and compute $\tau_k$, which we know to be an upper bound to $\tau$.
But how can we tell when to stop?
Observe that at each step, the smallest achievable global minimum increases: if $T_k / r(Y_k) = \tau$ is the global minimum, it has to be at least $T_k / M$ since $r(Y_k)$ is at most $M$.
However, since $T_k \to \infty$ almost surely as $k \to \infty$, we also have $T_k / M \to \infty$ almost surely.
On the other hand, since $f(\PoissonProcess)$ is time-homogeneous with mean measure $Q \otimes \lambda$, by \cref{eq:time_homogeneous_process_arrival_identity} we have that $\tau \sim \Exponential(1)$ and is hence almost surely finite.
These two facts taken together mean that there must exist some step $K$ for which $T_{K + 1} / M > \tau_K$, meaning that $T_{K + 1}$ and later arrivals are now too large and none of the future arrivals can improve upon the current minimum $\tau_K$.
In turn, this means that we must have found the global minimum: $\tau_K = \tau$.
At this point, we terminate and return the index $N$ alongside the arrival $Y_N, T_N$ that achieved the smallest ratio $T_N/r(Y_N) = \tau_K$ in the first $K$ steps.
I describe this procedure in \cref{alg:global_a_star}, and I turn to its analysis next.
\par
\textbf{The runtime of A* sampling.}
The above argument regarding $T_{k} / M$ exceeding $\tau_k$ proves that the runtime of A* sampling is finite.
Next, I fully characterise its runtime.
\begin{theorem}[Runtime of A* sampling.]
\label{thm:global_a_star_runtime}
Let $Q \ll P$ be distributions over $\YSpace$ with Radon-Nikodym derivative $r = dQ/dP$, and assume $r$ is bounded above by some constant $M > 1$.
Let $K$ denote the number of proposal samples simulated by \cref{alg:global_a_star} before terminating.
Then,
\begin{align*}
K \sim \Geom(1/M).
\end{align*}
\end{theorem}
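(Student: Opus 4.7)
The plan is to mirror the Poisson-process proof of \cref{thm:global_rs_runtime}: use the transformed process $f(\Pi)$ under $f(y,t) = (y, t/r(y))$, whose first arrival time $\tau^*$ is $\Exponential(1)$ by the mapping construction preceding the algorithm; then count how many arrivals of the original process $\Pi$ are needed before the termination condition can be verified, by conditioning on $\tau^*$ and integrating it out.

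First I would rewrite the termination rule in a form that exposes the Poisson structure. The running variable $U$ at the end of iteration $n$ equals $\min_{i \leq n} T_i / r(Y_i)$, and the algorithm exits at the first $n$ with $T_n/M > U$. Since the index $N$ attaining the global minimum $\tau^*$ satisfies $T_N = \tau^* r(Y_N) \leq M\tau^*$, the running minimum has already stabilised at $\tau^*$ by step $N$, strictly before termination. Hence $K$ equals the first index $n$ with $T_n > M\tau^*$, and $K - 1$ is simply the number of points of $\Pi$ whose time coordinate lies in $[0, M\tau^*]$.

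Second, I would condition on $\tau^* = s$ and decompose $\YSpace \times [0, Ms]$ into the disjoint pieces $A_s = \{(y,t): t \leq s\, r(y)\}$ and $B_s = \{(y,t): s\, r(y) < t \leq Ms\}$. The piece $A_s$ is exactly $f^{-1}(\YSpace \times [0, s])$, so on the event $\{\tau^* = s\}$ it contains precisely the single point $(Y^*, T^*)$. For $B_s$, I would invoke \cref{lemma:random_restriction}: because $A_s$ and $B_s$ are disjoint, conditioning on $\tau^*$ (which is a functional only of $\Pi \cap A_s$) leaves the restriction $\Pi \cap B_s$ with its unconditional Poisson distribution, whose mean is $(P \otimes \lambda)(B_s) = \int_\YSpace (Ms - s\,r(y))\, dP(y) = s(M - 1)$.

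Finally, combining $K - 1 = 1 + |\Pi \cap B_{\tau^*}|$ with $\tau^* \sim \Exponential(1)$ gives a Poisson-mixed-exponential integral of the form $\int_0^\infty \frac{(s(M-1))^{k-1}}{(k-1)!} e^{-s(M-1)} \cdot e^{-s}\, ds$, which via a standard Gamma evaluation collapses to $\frac{1}{M}\bigl(1 - \frac{1}{M}\bigr)^{k-1}$, matching the geometric law at parameter $1/M$. The main obstacle is justifying the independence step cleanly: the event $\{\tau^* = s\}$ is measure-zero, so the Restriction Theorem \cref{thm:restriction_theorem} does not apply directly. The right tool is the Mecke-equation reasoning already used inside the proof of \cref{lemma:random_restriction}, which handles precisely this kind of conditioning on a specified arrival and transfers the Poisson property to any region disjoint from the conditioned point.
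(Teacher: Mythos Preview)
Your approach is essentially the paper's: condition on the first-arrival time $\tau$ of the mapped process $f(\Pi)$, identify the remaining count with the points of $\Pi$ lying in the wedge $\{(y,t):\tau\,r(y)<t<M\tau\}$ (conditionally Poisson with mean $\tau(M-1)$ via \cref{lemma:random_restriction}/Mecke), and integrate out $\tau\sim\Exponential(1)$; your explicit $A_s/B_s$ split is exactly the paper's ``remove $(Y_N,T_N)$ and restrict'' phrased set-theoretically. The one discrepancy is bookkeeping: the paper writes $K-1=|\Pi\cap B_\tau|$ directly, whereas your more careful stepwise argument gives $K-1=1+|\Pi\cap B_{\tau^*}|$, so your mixture integral actually yields $K-1\sim\Geom(1/M)$ rather than $K$ --- an off-by-one reflecting whether the terminating sample (which always exists, since the algorithm cannot stop at $n=1$) is counted in $K$.
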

Here, I give a new proof of this theorem that showcases the power of the Poisson-process-based trick in the second proof I gave for \cref{thm:global_rs_runtime}: condition on the first arrival, then integrate it out.
\begin{proof}
Let $\PoissonProcess$ be the Poisson process over $\Omega = \YSpace \times \nonnegReals$ with mean measure $P \otimes \lambda$ that \cref{alg:global_a_star} transforms with the map $f(y, t) =  (y, t / r(y))$.
Moreover, let $\tau$ denote the first arrival of the transformed process, whose mean measure is $Q \otimes \lambda$.
As we saw already, $\tau \sim \Exponential(1)$ by \cref{eq:time_homogeneous_process_arrival_identity}.
Now, by the definition of $K$, it is the number of points $(Y, T) \in \PoissonProcess$ such that $T / M < \tau$, that is, the number of points such that the termination criterion of the algorithm is not violated.
Now, let us remove the point $(Y_N, T_N) \in \PoissonProcess$ that corresponds to the first arrival of the mapped process $f(\PoissonProcess)$.
Then, conditioned on $\tau$, we have
\begin{align*}
K - 1 &= \abs{\{(Y, T) \in \PoissonProcess \mid \tau < T / r(Y) \text{ and } T / M < \tau\}} \\
&= \abs{\{(Y, T) \in \PoissonProcess \mid \tau \cdot r(Y) < T < \tau \cdot M\}}.
\end{align*}
The first condition holds since ${\tau = \min_{(Y', T') \in \PoissonProcess}\{T' / r(Y')\} < T / r(Y)}$ with strict inequality, since $(T, Y)$ is by definition not the point corresponding to the global minimum).
Now, by \replaced[id={CM}, comment={replaced troublesome step}]{\cref{lemma:random_restriction}}{the independence property of Poisson processes}, the number of points that fall in the set $\{(Y, T) \in \PoissonProcess \mid \tau < T / r(Y) \text{ and } T / M < \tau\}$ is Poisson distributed with mean
\begin{align*}
(P \otimes \lambda)(\{(y, t) \in \Omega \mid \tau \cdot r(y) < t < \tau \cdot M\}) &= \int_\YSpace \int_{\tau \cdot r(y)}^{\tau \cdot M} \, dt \, dP(y) \\
&= \int_\YSpace\tau \cdot (M - r(y)) \, dP(y) \\
&= \tau \cdot (M - 1).
\end{align*}
Thus, we have $\tau \sim \Exponential(1)$ and $(K - 1 \mid \tau) \sim \Pois(\tau \cdot (M - 1))$.
Now, if we integrate out the first arrival $\tau$, we get
\begin{equation}
\begin{aligned}
\label{eq:global_a_star_runtime_integrating_out_first_arrival}
\Prob[K - 1 = k] &= \Exp_{\tau}[\Prob[K - 1 = k \mid \tau]] \\
&= \frac{1}{k!}\Exp_{\tau}[(\tau \cdot (M - 1))^k \cdot \exp(-\tau \cdot (M - 1))] \\
&= \frac{1}{k!} \int_0^\infty (t \cdot (M - 1))^k \cdot \exp(-tM) \, dt \\
&= \frac{1}{M} \left(1 - \frac{1}{M}\right)^k
\end{aligned}
\end{equation}
where the last equality follows from integration by parts $k$ times.
Akin to rejection sampling, this shows that $\Prob[K = k] = \frac{1}{M} \left(1 - \frac{1}{M}\right)^{k - 1}$, hence $K$ is a geometric random variable with failure probability $1/M$, as claimed.
\end{proof}
While \cref{thm:global_a_star_runtime} characterises the runtime $K$ of A* sampling, remember that it is not the index $N$ of the selected sample.
However, it turns out that the same trick also works to characterise this index, which I do next.
\begin{theorem}[Conditional distribution of the selected index, \citep{li2021unified}.]
\label{thm:global_a_star_conditional_index_distribution}
Let $Q \ll P$ be distributions over $\YSpace$ with Radon-Nikodym derivative $r = dQ/dP$, and assume $r$ is bounded above by some constant $M > 1$.
Let $N$ denote the index of the sample selected by \cref{alg:global_a_star} before terminating, and let $Y_N \sim Q$ be the location of the selected point.
Then,
\begin{align*}
N \mid Y_N \sim \Geom(1/\Exp_{Y \sim P}[\max\{r(Y_N), r(Y)\}]).
\end{align*}
\end{theorem}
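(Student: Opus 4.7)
The plan is to mirror the structure of the proof of \cref{thm:global_a_star_runtime}: condition on the first arrival of the mapped process $f(\PoissonProcess)$, do the combinatorial accounting, then integrate out the temporal coordinate. The only new ingredient is that we keep the spatial coordinate fixed throughout.

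First I would set $\tau = T_N / r(Y_N)$, the first arrival time of $f(\PoissonProcess)$. Because $f(\PoissonProcess)$ is a time-homogeneous Poisson process with mean measure $Q \otimes \lambda$, we have $Y_N \sim Q$, $\tau \sim \Exponential(1)$ and $Y_N \perp \tau$. Second, I would fix a realisation $(Y_N, \tau) = (y, t)$, so that the selected point sits at $(y, t \cdot r(y))$ in $\PoissonProcess$. The index $N-1$ counts the other points $(Y, T) \in \PoissonProcess$ examined before termination, i.e.\ those with $T < t \cdot r(y)$; such points must not have beaten the selected one, which forces $T/r(Y) > t$. Hence $N-1$ equals the number of points of $\PoissonProcess$ falling in
\[
A_{y,t} = \{(Y, T) \in \YSpace \times \nonnegReals \mid t \cdot r(Y) < T < t \cdot r(y)\}.
\]

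Third, invoking \cref{lemma:random_restriction} (which is itself the Mecke/Palm argument applied to the random index $N$), the restriction of $\PoissonProcess$ to $A_{y,t}$, conditional on the selected point $(y, t \cdot r(y))$, is still a Poisson process with intensity $P \otimes \lambda$. Therefore
\[
(N - 1) \mid Y_N = y, \tau = t \;\sim\; \Pois\!\left(\int_\YSpace (t \cdot r(y) - t \cdot r(Y'))_+ \, dP(Y')\right) = \Pois(t \cdot \mu_y),
\]
where I write $\mu_y = \int_\YSpace (r(y) - r(Y'))_+ \, dP(Y')$. Splitting $\max\{a,b\} = b + (a-b)_+$ and using $\Exp_{Y\sim P}[r(Y)] = 1$ gives the key identity $\Exp_{Y \sim P}[\max\{r(y), r(Y)\}] = 1 + \mu_y$.

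Finally, since $\tau \sim \Exponential(1)$ is independent of $Y_N$, I would integrate $\tau$ out exactly as in \cref{eq:global_a_star_runtime_integrating_out_first_arrival}:
\[
\Prob[N - 1 = k \mid Y_N = y] = \frac{\mu_y^k}{k!} \int_0^\infty t^k e^{-t(1 + \mu_y)} \, dt = \frac{1}{1 + \mu_y}\left(1 - \frac{1}{1 + \mu_y}\right)^k,
\]
which is the stated geometric distribution. The only real obstacle is step three: one has to argue carefully that, after conditioning on both coordinates of the selected point (whose index is itself a function of $\PoissonProcess$), the remaining points of $\PoissonProcess$ still inherit a Poisson structure on $A_{y,t}$. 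This is exactly the content already distilled in \cref{lemma:random_restriction}, so once that is invoked, everything else is a near verbatim repeat of the integration-by-parts computation from \cref{thm:global_a_star_runtime}.
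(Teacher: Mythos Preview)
Your proposal is correct and follows essentially the same approach as the paper's proof: condition on the first arrival $(Y_N,\tau)$ of the mapped process, identify $N-1$ as the Poisson count over the region $\{(Y,T):\tau\cdot r(Y)<T<\tau\cdot r(Y_N)\}$ via \cref{lemma:random_restriction}, compute its mean $\tau\cdot\Exp_{Y\sim P}[(r(Y_N)-r(Y))_+]$, and integrate out $\tau\sim\Exponential(1)$ exactly as in \cref{eq:global_a_star_runtime_integrating_out_first_arrival}. Your explicit mention of the independence $Y_N\perp\tau$ and the identity $1+\mu_y=\Exp_{Y\sim P}[\max\{r(y),r(Y)\}]$ are precisely the ingredients the paper uses.
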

The proof of this was first given by \citet{li2021unified}, which also inspired my proofs of \cref{thm:global_rs_runtime,thm:global_a_star_runtime}.
Thus, I repeat their proof here to showcase the power of the condition-on-the-first-arrival trick further.
\begin{proof}
Let $\PoissonProcess$ be the Poisson process over $\YSpace \times \nonnegReals$ with mean measure $P \otimes \lambda$ that \cref{alg:global_a_star} uses for its proposals, let $f$ denote the map that shifts the arrivals and let $f(\PoissonProcess)$ be the transformed process with mean measure $Q \otimes \lambda$.
Denote the first arrival of $f(\PoissonProcess)$ as $\Upsilon, \tau$.
As I have shown in the proof of \cref{thm:global_a_star_runtime}, $\tau \sim \Exponential(1)$.
Moreover, from the mean measure of $f(\PoissonProcess)$ we also see that $\Upsilon \sim Q$, and note the correspondence $\Upsilon = Y_N$ and $T_N = \tau \cdot r(\Upsilon)$ between the points of $\PoissonProcess$ and $f(\PoissonProcess)$.
Now, consider the $N - 1$ points of $\PoissonProcess$ that arrive before $T_N$:
\begin{align*}
N - 1 &= \abs{\{(Y, T) \in \PoissonProcess \mid \tau < T / r(Y), T < T_N\}} \\
&= \abs{\{(Y, T) \in \PoissonProcess \mid \tau \cdot r(Y) < T < \tau \cdot r(\Upsilon) \}}.
\end{align*}
The first condition holds since $\tau = \min_{(Y', T') \in \PoissonProcess}\{T' / r(Y')\} < T / r(Y)$ with strict inequality, since by definition $T / r(Y)$ is not the minimum. 
Now, by \replaced[id={CM}, comment={replaced troublesome step}]{\cref{lemma:random_restriction}}{Poisson process independence}, the number of points that fall in $\abs{\{(T, Y) \in \PoissonProcess \mid \tau < T / r(Y), T < T_N\}}$ is Poisson distributed with mean
\begin{align*}
(P \otimes \lambda)(\{(y, t) \in \Omega \mid \tau \cdot r(y) < t < \tau \cdot r(\Upsilon) \}) 
&= \int_\YSpace \Ind[r(y) < r(\Upsilon)] \cdot \int_{\tau \cdot r(y)}^{\tau \cdot r(\Upsilon)} \, dt \, dP(y) \\
&= \tau \cdot \Exp_{Y \sim P}[(r(\Upsilon) - r(Y))_+],
\end{align*}
where recall that $(x)_+ = \max\{0, x\}$.
Hence, we have
\begin{align*}
\tau \sim \Exponential(1) \quad \text{and} \quad (N - 1 \mid \Upsilon, \tau) \sim \Pois(\tau \cdot \Exp_{Y \sim P}[(r(\Upsilon) - r(Y))_+])
\end{align*}
Now, set $\alpha = \Exp_{Y \sim P}[(r(\Upsilon) - r(Y))_+]$.
Analogously to \cref{eq:global_a_star_runtime_integrating_out_first_arrival}, we can now integrate out the first arrival $\tau$ to get
\begin{align*}
\Prob[N - 1 = n \mid \Upsilon] &= \Exp_{\tau}[\Prob[N - 1 = n \mid \Upsilon, \tau]] \\
&= \frac{1}{1 + \alpha}\left(1 - \frac{1}{1 + \alpha}\right)^n.
\end{align*}
This demonstrates that $\Prob[N = n \mid \Upsilon] = \frac{1}{1 + \alpha}\left(1 - \frac{1}{1 + \alpha}\right)^{n - 1}$, which shows that $N \mid \Upsilon$ is geometric with failure probability $1 / (1 + \alpha) = 1 / (1 + \Exp_{Y \sim P}[ (r(\Upsilon) - r(Y))_+])$.
Finally, noticing that 
\begin{align*}
1 \!+\! \Exp_{Y \sim P}[(r(\Upsilon) - r(Y))_+] &= \Exp_{Y \sim P}[r(Y) + (r(\Upsilon) - r(Y))_+] = \Exp_{Y \sim P}[\max\{r(\Upsilon), r(Y)\}]
\end{align*}
finishes the proof.
\end{proof}
\par
\textbf{Relative entropy coding with A* sampling.}
To encode a sample with A* sampling, the sender compresses the selected index $N$ rather than the total number of simulated samples $K$.
This already shows that good things are happening: while the runtimes of A* sampling and rejection sampling are equal in distribution by \cref{thm:global_rs_runtime,thm:global_a_star_runtime}, $N \leq K$ always, hence we should already expect a gain in compression efficiency.
As I show next, the description length of the index produced by A* sampling is optimal.
\begin{theorem}[Relative entropy coding with A* sampling.]
\label{thm:global_a_star_codelength}
Let $\rvx, \rvy \sim P_{\rvx, \rvy}$ be dependent random variables over the space $\XSpace \times \YSpace$.
Moreover, let $\PoissonProcess$ be a Poisson process over $\YSpace \times \nonnegReals$ with mean measure $P_{\rvy} \otimes \lambda$.
\par
Consider the following channel simulation protocol: the sender and receiver set $\rvz \gets \PoissonProcess$ as their common randomness.
Then, upon receiving a source symbol $\rvx \sim P_\rvx$, the sender uses A* sampling (\cref{alg:global_a_star}) to simulate $\rvy \sim P_{\rvy \mid \rvx}$ using the points of $\PoissonProcess$ as proposals.
Now, let $N$ denote the index of $\rvy$ in $\PoissonProcess$, which the sender encodes $N$ using a Zeta distribution $\zeta(n \mid \alpha) \propto n^{-\alpha}$ with exponent $\alpha = 1 + 1 / (\MI{\rvx}{\rvy} + 1)$.
Then, the average description length of this protocol is upper bounded by
\begin{align*}
\MI{\rvx}{\rvy} + \lb(\MI{\rvx}{\rvy} + 2) + 3 \text{ bits}.
\end{align*}
\end{theorem}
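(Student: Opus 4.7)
The plan is to reduce the expected codelength to a bound on $\Exp[\lb N]$ via the Zeta-coding recipe recalled in \Cref{sec:li_el_gamal_bound_on_pos_int_random_variable}. That lemma shows that encoding a positive integer random variable $N$ with the Zeta distribution of exponent $1 + 1/(\MI{\rvx}{\rvy} + 1)$ uses on average at most $\Exp[\lb N] + \Exp[\lb N]/(\MI{\rvx}{\rvy} + 1) + \lb(\MI{\rvx}{\rvy} + 2) + 1$ bits. Once I establish that $\Exp[\lb N] \leq \MI{\rvx}{\rvy} + 1$, this quantity is dominated by $\MI{\rvx}{\rvy} + \lb(\MI{\rvx}{\rvy} + 2) + 3$, which is the claimed bound.

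To control $\Exp[\lb N]$, I would invoke \Cref{thm:global_a_star_conditional_index_distribution}, which characterises $N \mid \rvx, Y_N$ as a geometric random variable with mean $\alpha(Y_N, \rvx) = \Exp_{Y \sim P_\rvy}[\max\{r_\rvx(Y_N), r_\rvx(Y)\}]$, where $r_\rvx = dP_{\rvy \mid \rvx}/dP_\rvy$. Jensen's inequality applied to the conditional geometric variable gives $\Exp[\lb N \mid \rvx, Y_N] \leq \lb \Exp[N \mid \rvx, Y_N] = \lb \alpha(Y_N, \rvx)$. The pointwise bound $\max\{a, b\} \leq a + b$, combined with $\Exp_{Y \sim P_\rvy}[r_\rvx(Y)] = 1$, yields $\alpha(Y_N, \rvx) \leq r_\rvx(Y_N) + 1$, and the algebraic identity $\lb(t + 1) = \lb t + \lb(1 + 1/t) \leq \lb t + \lb(e)/t$ (valid for $t > 0$) splits $\lb \alpha$ into an information-theoretic part and a residual.

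Taking the full expectation over $\rvx$ and $Y_N \sim P_{\rvy \mid \rvx}$, the first term integrates exactly to $\MI{\rvx}{\rvy}$ by definition of mutual information. The residual is $\lb(e) \cdot \Exp_{\rvx, Y_N \sim P_{\rvy \mid \rvx}}[1/r_\rvx(Y_N)]$, and a change of measure to $P_\rvy$ collapses this to $\lb(e) \cdot \Exp_\rvx[P_\rvy(\supp P_{\rvy \mid \rvx})] \leq \lb(e)$. Feeding $\Exp[\lb N] \leq \MI{\rvx}{\rvy} + \lb(e)$ into the Zeta-coding bound then recovers the result up to a small constant gap.

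The main obstacle is precisely that small gap between the $\lb(e) \approx 1.44$ produced by the straightforward argument and the tighter constant $1$ required to hit the theorem's stated $+3$ term. I expect this to be closed by case-splitting the residual expectation on $\{r_\rvx(Y_N) \geq 1\}$ and its complement: on the former region the contribution is already controlled by $\KLD{P_{\rvy \mid \rvx}}{P_\rvy}$, while on the latter region the pointwise inequality $-r \lb r \leq \lb(e)/e$ (maximised at $r = 1/e$) together with the trivial estimate $\lb \alpha \leq 1$ should contribute a further $O(1)$ term with the sharper constant. The broader proof structure---reduce to $\Exp[\lb N]$, apply \Cref{thm:global_a_star_conditional_index_distribution} and Jensen, bound $\alpha \leq r_\rvx(Y_N) + 1$, change measure to $P_\rvy$---is robust to these technical refinements.
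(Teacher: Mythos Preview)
Your overall strategy matches the paper's proof exactly: reduce to a bound on $\Exp[\lb N]$ via the Zeta-coding lemma, invoke \cref{thm:global_a_star_conditional_index_distribution}, apply Jensen to the conditional geometric, use $\max\{a,b\}\le a+b$ together with $\Exp_{Y\sim P_\rvy}[r_\rvx(Y)]=1$ to arrive at $\Exp[\lb N\mid\rvx]\le \Exp_{\Upsilon\sim P_{\rvy\mid\rvx}}[\lb(r_\rvx(\Upsilon)+1)]$, and then handle the residual $\Exp_{\Upsilon}[\lb(1+1/r_\rvx(\Upsilon))]$ after splitting off $\KLD{P_{\rvy\mid\rvx}}{P_\rvy}$.

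The only gap is in how you treat that residual. Your pointwise bound $\lb(1+1/t)\le \lb(e)/t$ gives $\lb(e)\approx 1.44$, and the case-split you propose on $\{r\gtrless 1\}$ does \emph{not} close this: it yields $\Exp_Q[\lb(r+1)]\le 1+\Exp_Q[(\lb r)_+]$, and since $\Exp_Q[(\lb r)_+]-\KLD{Q}{P}=\int_{\{r<1\}}(-r\lb r)\,dP\le \lb(e)/e$, you end up with roughly $\KLD{Q}{P}+1.53$, which is worse than what you started with. The correct fix is simpler: apply Jensen once more to the concave $\lb$ \emph{before} bounding pointwise. Writing $Q=P_{\rvy\mid\rvx}$ and $r=r_\rvx$,
\[
\Exp_{\Upsilon\sim Q}\bigl[\lb(1+1/r(\Upsilon))\bigr]\ \le\ \lb\!\bigl(1+\Exp_{\Upsilon\sim Q}[1/r(\Upsilon)]\bigr)\ =\ \lb\!\bigl(1+P(\supp Q)\bigr)\ \le\ \lb 2\ =\ 1,
\]
where the change of measure $\Exp_Q[1/r]=P(\supp Q)$ is exactly the one you already identified. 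This gives $\Exp[\lb N\mid\rvx]\le \KLD{P_{\rvy\mid\rvx}}{P_\rvy}+1$, and averaging over $\rvx$ finishes the proof as you outlined. (For what it is worth, the paper's own justification at this step invokes $1+x\le 2^x$ for $x\ge 1$, i.e.\ $\lb(1+1/r)\le 1/r$; but that inequality fails when $0<1/r<1$, so the Jensen argument is the clean version in both cases.)
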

The argument I present here is based on the works of \citet{li2021unified} and \citet{li2024pointwise}.
\begin{proof}
By \cref{lemma:li_el_gamal_bound_on_pos_int_random_variable}, using a Zeta distribution with exponent $\alpha = 1 + 1/\Exp[\lb N]$ to encode an positive integer-valued random variable $N$ yields a scheme with average description length upper bounded by
\begin{align}
\label{eq:a_star_codelength_zeta_efficiency}
\Exp[\lb N] + \lb(\Exp[\lb N] + 1) + 2.
\end{align}
Hence, all we need to do to show the desired result is to bound $\Exp[\lb N]$.
To this end, let $r_\rvx = dP_{\rvy \mid \rvx}/dP_\rvy$ and let $\Upsilon \sim P_{\rvy \mid \rvx}$ denote the first arrival location of the mapped process, as in the proof of \cref{thm:global_a_star_conditional_index_distribution}.
Now, observe that
\begin{align}
\Exp[\lb N \mid \rvx]
&= \Exp_{\Upsilon \sim P_{\rvy \mid \rvx}}[\Exp[\lb N \mid \Upsilon]\mid \rvx] \nonumber\\
&\leq \Exp_{\Upsilon \sim P_{\rvy \mid \rvx}}[\lb \Exp[N \mid \Upsilon]\mid \rvx] \tag{Jensen}\\
&= \Exp_{\Upsilon \sim P_{\rvy \mid \rvx}}[\lb \Exp_{Y \sim P_{\rvy}}[\max\{r_\rvx(\Upsilon), r_\rvx(Y)\} \mid \Upsilon] \mid \rvx]  \tag{by \cref{thm:global_a_star_conditional_index_distribution}} \\
&\leq \Exp_{\Upsilon \sim P_{\rvy \mid \rvx}}[\lb \Exp_{Y \sim P_{\rvy}}[r_\rvx(\Upsilon) + r_\rvx(Y) \mid \Upsilon]\mid \rvx ] \nonumber \\
&= \Exp_{\Upsilon \sim P_{\rvy \mid \rvx}}[\lb (r_\rvx(\Upsilon) + 1)\mid \rvx] \nonumber \\
&= \KLD{P_{\rvy \mid \rvx}}{P_\rvy} + \Exp_{\Upsilon \sim P_{\rvy \mid \rvx}}[\lb (1 + 1 / r_\rvx(\Upsilon))\mid \rvx] \nonumber \\
&= \KLD{P_{\rvy \mid \rvx}}{P_\rvy} + \Exp_{\Upsilon \sim P_{\rvy \mid \rvx}}[\lb 2^{1 / r_\rvx(\Upsilon)}\mid \rvx] \tag{$1 + x \leq 2^x$ for $x \geq 1$} \\
&=\KLD{P_{\rvy \mid \rvx}}{P_\rvy} + 1. \nonumber
\end{align}
Taking expectation over $\rvx$, we find that $\Exp[\lb N] \leq \MI{\rvx}{\rvy} + 1$, thus plugging it into \cref{eq:a_star_codelength_zeta_efficiency} finishes the proof.
\end{proof}
\par
\textbf{A historical note.}
A* sampling was first discovered by \citet{maddison2014sampling}, who described not only the basic version I explain here but also its branch-and-bound version, which I discuss in \cref{sec:branch_and_bound_samplers}.
Originally, \citet{maddison2014sampling} formulated A* sampling using so-called Gumbel processes, which were shown by \citet{maddison2016poisson} to have a one-to-one correspondence to Poisson processes.
The advantage of formulating A* sampling (and all other algorithms in this thesis, for that matter) is that the theory is more straightforward and more intuitive.
On the other hand, simulating Gumbel processes on a computer is more numerically stable.
Hence, as I discuss in \cref{sec:implementation_considerations}, a good recipe is to design algorithms using Poisson processes but implement them using Gumbel processes.
Independently, \Cref{alg:global_a_star} was discovered by \citet{li2018strong}, who showed that it could be used for relative entropy coding.
\subsection{Restriction: Greedy Poisson Rejection Sampling}
\label{sec:global_gprs}
Finally, I present greedy Poisson rejection sampling (GPRS), which completes the picture of Poisson process operations giving rise to sampling algorithms.
Once again, the setting is the same: we have two probability measures $Q \ll P$ over $\YSpace$ with $r = dQ/dP$ and wish to simulate a $Q$ distributed sample using samples from $P$.
\begin{wrapfigure}[24]{r}[0pt]{0.5\textwidth}%
\includegraphics{4-RelativeEntropyCodingWithPoissonProcesses/img/global_gprs_illustration.tikz}
\caption[Illustration of greedy Poisson rejection sampling]{Illustration of greedy Poisson rejection sampling for a Gaussian target ${Q = \Normal(1, 0.25^2)}$ and Gaussian proposal distribution $P = \Normal(0, 1)$, with the time axis truncated to the first $17$ units.
The algorithm searches for the first arrival of a spatio-temporal Poisson process $\PoissonProcess$ with mean measure $P \otimes \lambda$ under the graph of $\varphi = \sigma \circ r$ indicated by the \textbf{thick dashed black line} in each plot.
Here, $r = dQ/dP$ is the target-proposal density ratio, and $\sigma $ is given by \Cref{eq:stretch_function_integral_identity}.
The green circle (\tikzcircle[green, fill=green]{3pt}) shows the first point of $\Pi$ that falls under $\varphi$, and is accepted.
All other points are rejected, as indicated by red crosses ({\color{red}\xmark}).
In practice, \cref{alg:global_gprs} does not simulate points of $\Pi$ that arrive after the accepted arrival.
}
\label{fig:global_gprs_illustration}
\end{wrapfigure}
\par
I constructed GPRS in my paper \citet{flamich2023gprs} by following the recipe I already outlined in \cref{sec:global_rs,sec:global_a_star}.
Starting from a base Poisson process, I applied the restriction operation, worked out the mean measure of the resultant process, and set up the restriction such that the resultant process had a point with $Q$-distributed location.
However, GPRS breaks a pattern compared to the first two algorithms I presented. When constructing rejection and A* sampling, I used thinning and mapping to transform a Poisson process with mean measure $P \otimes \lambda$ into another process whose mean measure was a product measure, and in particular, the spatial component had the desired distribution $Q$.
This meant that we could choose to find any point of the resultant process; it was only out of computational considerations that we decided to search for the first arrival in both cases.
In stark contrast, while GPRS's transformed process is still a Poisson process, its mean measure is no longer a product measure.
Furthermore, we have no freedom in choosing which point to search for: the first arrival is the only point in the resultant process with a $Q$-distributed location.
\par
\textbf{The construction with restriction.}
To begin, I first specialise the restriction theorem for spatio-temporal processes.
\begin{proposition}[A special case of \cref{thm:restriction_theorem}.]
\label{prop:restriction_theorem}
Let $\PoissonProcess$ be a Poisson process over $\Omega = \YSpace \times \nonnegReals$ with mean measure $P \otimes \lambda$, and let $A, B \subseteq \Omega$ be measurable under $P \otimes \lambda$.
Then,
\begin{align*}
\PoissonProcess\vert_A = \PoissonProcess \cap A
\end{align*}
is a Poisson process over $\Omega$ with mean measure
\begin{align*}
(P \otimes \lambda)\vert_A(B) = (P \otimes \lambda)(A \cap B).
\end{align*}    
\end{proposition}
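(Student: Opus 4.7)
My plan is to deduce the statement as an immediate corollary of the general Restriction Theorem (\cref{thm:restriction_theorem}). First I would note that the hypotheses of the general theorem are satisfied by construction: $\PoissonProcess$ is a Poisson process on $\Omega = \YSpace \times \nonnegReals$ with mean measure $\mu = P \otimes \lambda$, and $A$ is measurable under $\mu$ by assumption. Invoking \cref{thm:restriction_theorem} with this particular $\mu$ therefore yields directly that $\PoissonProcess \vert_A = \PoissonProcess \cap A$ is a Poisson process over $\Omega$ with mean measure $\mu\vert_A(B) = \mu(A \cap B)$.

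The remaining step is purely notational: substituting $\mu = P \otimes \lambda$ gives $(P \otimes \lambda)\vert_A(B) = (P \otimes \lambda)(A \cap B)$, which matches the claim. The one implicit housekeeping check is that $A \cap B$ is $(P\otimes \lambda)$-measurable whenever $A$ and $B$ are, but this is immediate from the $\sigma$-algebra axioms, so no genuine work is required here.

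In short, I do not expect any real obstacle. The proposition is essentially a verbatim specialisation of \cref{thm:restriction_theorem} to the spatio-temporal product mean measure $P \otimes \lambda$ used throughout this chapter, and I include it principally as a convenient reference that matches the notation required in the construction of greedy Poisson rejection sampling that follows. Any added value of the proposition is presentational rather than mathematical: it isolates the shape of the restricted mean measure in the exact form that will be plugged into the computation of arrival-time distributions via \cref{eq:pois_process_inter_arrival_identity} in the next subsection.
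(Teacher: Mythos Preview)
Your proposal is correct and matches the paper's approach exactly: the proposition is labelled ``A special case of \cref{thm:restriction_theorem}'' and is presented without proof precisely because it is a verbatim specialisation of the general Restriction Theorem to the mean measure $P \otimes \lambda$. There is nothing further to add.
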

As before, this result is far too general on its own, and this time we need to look perhaps a bit further for some inspiration.
The first clue comes from the domain and range of the density ratio $r: \YSpace \to \nonnegReals$.
In particular, consider the graph of $r$:
\begin{align*}
\graph(r) = \{(y, r(y)) \in \YSpace \times \nonnegReals \}.
\end{align*}
Note that \replaced[id={CM}, comment={fixed typo}]{$\graph(r) \subseteq \Omega$}{$G(r) \subseteq \Omega$}, i.e.\ we can embed the graph of $r$ in the same space in which $\PoissonProcess$'s points live!
This now leads to our second clue: let $F$ be some probability distribution over $[0, 1]$ with density $f: [0, 1] \to \nonnegReals$.
Now, consider the \textit{hypograph} of $f$, i.e.\ the set of points that fall under the graph of $f$:
\begin{align*}
\hypograph(f) = \{(x, y) \in [0, 1] \times \nonnegReals \mid y \leq f(x)\}.
\end{align*}
A standard fact is that if we draw a pair of coordinates $(X, Y) \sim \Unif(\hypograph(f))$, then $X \sim F$.
To see this, note that for a subset $A \subseteq [0, 1]$, we have
\begin{align*}
\Prob[X \in A] = \Prob[(X, Y) \in (A \times \nonnegReals) \cap \hypograph(f)] = \int_A \int_0^{f(x)} \, dy \, dx = \int_A f(x) \, dx.
\end{align*}
Can we replicate this somehow with Poisson processes?
To this end, consider the hypograph of $r$:
\begin{align*}
\hypograph(r) = \{(y, t) \in \Omega \mid t \leq r(y)\}.
\end{align*}
This allows me to define the following terminology: I say that ``a point $\pi$ falls under the graph of $r$'' to mean that $\pi \in \hypograph(r)$.
Furthermore, let me introduce the following shorthand notation for restriction for an arbitrary function $\varphi: \YSpace \to \nonnegReals$:
\begin{align*}
\PoissonProcess\vert_\varphi = \PoissonProcess\vert_{\hypograph(\varphi)} \quad \text{and} \quad (P \otimes \lambda)\vert_\varphi = (P \otimes \lambda)\vert_{\hypograph(\varphi)}.
\end{align*}
\par
What happens if we restrict our base process $\PoissonProcess$ to $\hypograph(r)$?
Unfortunately, this cannot possibly work: by \cref{prop:restriction_theorem}, mean measure of all of $\Omega$ under the restricted process is
\begin{align*}
(P \otimes \lambda)\vert_{r}(\Omega) = (P \otimes \lambda)(\hypograph(r)) = \int_{\YSpace}\int_0^{r(y)} \, dt \,dP(y) = \int_{\YSpace}r(y) \,dP(y) = 1.
\end{align*}
This, in turn, means that the probability that no points fall under the graph of $r$ is $1 / e > 0$, and we cannot base an algorithm on a process that might not even have any points! 
However, this prompts a new question: could some other function work even if restricting under $r$ does not work?
\par
\textbf{Restriction to the hypograph of a function.}
In general, consider an arbitrary measurable function $\varphi: \YSpace \to \nonnegReals$, such that its image has the same cardinality as $r$'s image: $\abs{\varphi(\YSpace)} = \abs{r(\YSpace)}$.
By definition, this means that there is a bijection $f$ between the images $\varphi(\YSpace)$ and $r(\YSpace)$.
Finally, since the $\varphi(\YSpace), r(\YSpace) \subseteq \nonnegReals$, there is a total order on the domain and range of the bijection $f$.
Hence, there exist strictly monotonically increasing bijections $\tilde{\sigma}: r(\YSpace) \to \varphi(\YSpace)$ between $r(\YSpace)$ and $\varphi(\YSpace)$.
This means that there always exists a strictly monotonically increasing $\tilde{\sigma}$ such that $\varphi = \tilde{\sigma} \circ r$.
Finally, since $\tilde{\sigma}$ is strictly monotonically increasing, it always has a continuous, monotonically increasing extension $\sigma: \nonnegReals \to \nonnegReals$ to the entirety of $\nonnegReals$ and is therefore also invertible on the entirety of $\nonnegReals$.
The significance of this argument is that we can reduce the investigation of the properties of arbitrary functions $\varphi: \YSpace \to \nonnegReals$ to invertible functions $\sigma: \nonnegReals \to \nonnegReals$.
\par
Now, fix some $\varphi: \YSpace \to \nonnegReals$ and consider $\hypograph(\varphi) = \{(y, t) \in \Omega \mid t \leq \varphi(y)\}$. 
What can we say if we now restrict the process $\PoissonProcess$ with mean measure $P \otimes \lambda$ to $\hypograph(\varphi)$?
What is the distribution of the first arrival of the restricted process?
Before we can derive it, we need to address the issue of existence: as we have seen before, if $(P \otimes \lambda)(\hypograph(\varphi)) < \infty$, there is a positive probability that the restricted process has no points, which is something we need to avoid.
Hence, we need to require that $(P \otimes \lambda)(\hypograph(\varphi)) = \infty$, for which a necessary and sufficient condition is that $\varphi$ is unbounded.
To see why, assume $\varphi$ is bounded above by some constant $M > 0$.
Then, $(P \otimes \lambda)(\hypograph(\varphi)) \leq (P \otimes \lambda)(\YSpace \times [0, M]) = M < \infty$.
\par
Under these assumptions, I now characterise the distribution of the first arrival of the restricted process under $\varphi$.
\begin{lemma}[Distribution of the first arrival of a Poisson process restricted to an unbounded hypograph]
\label{lemma:first_arrival_rd_derivative_of_restricted_process}
Let $\varphi: \YSpace \to \nonnegReals$ be unbounded and let $\PoissonProcess$ be a Poisson process over $\Omega = \YSpace \times \nonnegReals$ with mean measure $\mu = P \otimes \lambda$. 
Let $\PoissonProcess\vert_\varphi$ be the restriction of $\PoissonProcess$ under the graph of $\varphi$, and let $\tilde{\rmN}$ and $\tilde{\mu} = (P \otimes \lambda)\vert_\varphi$ be its counting and mean measures, respectively.
Since by the above assumption and argument $\tilde{\mu}(\Omega) = \infty$, almost surely $\PoissonProcess\vert_\varphi$ consists of infinitely many points; hence, its first arrival $(\tilde{Y}, \tilde{T})$ exists almost surely.
Then, we have that
\begin{align}
\label{eq:gprs_restricted_first_arrival_rd_derivative}
\frac{d\Prob[\tilde{T} = t, \tilde{Y} = y]}{d\mu} = \Ind[t \leq \varphi(y)] \cdot \Prob[\tilde{T} > t].
\end{align}
\end{lemma}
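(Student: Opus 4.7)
My plan is to derive the joint density by first recognising that $\PoissonProcess\vert_\varphi$ is itself a Poisson process, so I may work entirely inside the restricted process and then convert back to $\mu$ at the very end. Applying \cref{thm:restriction_theorem} tells me that $\PoissonProcess\vert_\varphi$ has mean measure $\tilde{\mu} = (P \otimes \lambda)\vert_{\hypograph(\varphi)}$, whose Radon-Nikodym derivative with respect to $\mu$ is exactly $\Ind[t \leq \varphi(y)]$. Thus, by the chain rule, it will suffice to show that
\begin{align*}
\frac{d\Prob[\tilde{T} = t, \tilde{Y} = y]}{d\tilde{\mu}} = \Prob[\tilde{T} > t],
\end{align*}
on the hypograph of $\varphi$, since multiplying by the Radon-Nikodym derivative $d\tilde{\mu}/d\mu$ will recover the claimed expression.

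To establish this intermediate identity, I would use the Mecke equation (\cref{eq:mecke_equation}) in the same spirit as the proof of \cref{lemma:random_restriction}. For any measurable $B \subseteq \Omega$,
\begin{align*}
\Prob[(\tilde{Y}, \tilde{T}) \in B] = \Exp\!\left[\sum_{\pi \in \PoissonProcess\vert_\varphi} \Ind\!\left[\pi \in B,\; \PoissonProcess\vert_\varphi \cap (\YSpace \times [0, t(\pi))) = \emptyset \right]\right],
\end{align*}
where $t(\pi)$ denotes the temporal coordinate of $\pi$. Mecke's formula turns the expected sum into an integral against $\tilde{\mu}$ of the probability that an independent copy of $\PoissonProcess\vert_\varphi$ places no points in $\YSpace \times [0, t)$, yielding $\int_B \Prob[\tilde{T} > t] \, d\tilde{\mu}(y, t)$ and hence the desired density.

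As a sanity check (and as an alternative route avoiding direct appeal to Mecke), I would verify the formula by a more elementary infinitesimal argument: for a shrinking rectangle $A \times (t - \epsilon, t]$ contained inside $\hypograph(\varphi)$, the event that $(\tilde{Y}, \tilde{T})$ lies inside factors into two independent events, namely that $\PoissonProcess\vert_\varphi$ has no points in $\YSpace \times [0, t - \epsilon]$ and that it has at least one point in $A \times (t - \epsilon, t]$. Poisson independence over disjoint subsets applied to the restricted process (itself Poisson by the first step) gives these factors as $\Prob[\tilde{T} > t - \epsilon]$ and $\tilde{\mu}(A \times (t - \epsilon, t]) + o(\tilde{\mu}(A \times (t - \epsilon, t]))$, respectively. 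Dividing by $\mu(A \times (t - \epsilon, t])$ and taking limits gives the same Radon-Nikodym derivative.

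The main obstacle I anticipate is bookkeeping around the existence of $(\tilde{Y}, \tilde{T})$ and the measure-theoretic justification of ``conditioning on a single arrival.'' The unboundedness assumption on $\varphi$ is what makes $\tilde{\mu}(\Omega) = \infty$ and guarantees that $\PoissonProcess\vert_\varphi$ almost surely has a first arrival, but I will need to handle the possibility of atoms in the law of $\tilde{T}$ carefully (there are none, since $\tilde{\mu}$ is non-atomic because $\mu$ is). Once the Mecke-style identity is set up correctly, the algebra is routine.
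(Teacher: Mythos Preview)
Your proposal is correct and takes a genuinely different route from the paper. The paper factors the joint law as a conditional times a marginal: it first shows $\Prob[\tilde{Y} \in A \mid \tilde{\rmN}(\YSpace \times \{t\}) = 1] = \tilde{\mu}_t(A) / \tilde{\mu}_t(\YSpace)$, where $\tilde{\mu}_t(A) = \Prob_{Y \sim P}[\varphi(Y) \geq t,\, Y \in A]$, via Poisson independence on the disjoint sets $A \times \{t\}$ and $A^C \times \{t\}$; then it differentiates the survival function $\Prob[\tilde{T} > t] = \exp(-\tilde{\mu}(\YSpace \times (0,t]))$ to get $\Prob[\tilde{T} \in dt] = \tilde{\mu}_t(\YSpace) \cdot \Prob[\tilde{T} > t]$; multiplying and observing $d\tilde{\mu}_t/dP = \Ind[\varphi(y) \geq t]$ finishes. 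Your Mecke-based route is a cleaner Palm-calculus argument that sidesteps the informal ``condition on a single arrival at $t$'' step entirely and lands directly on the density against $\tilde{\mu}$; this is arguably more rigorous and aligns with machinery the paper already invokes elsewhere (\cref{eq:mecke_equation}). Your infinitesimal sanity-check is essentially a compressed version of the paper's argument, so you have both routes in hand. One minor imprecision in that alternative: the event $\{(\tilde{Y},\tilde{T}) \in A \times (t-\epsilon, t]\}$ does not quite factor into just two events, since you must also rule out a point in $(\YSpace \setminus A) \times (t-\epsilon, t]$ arriving first; that third event has probability $1 - o(1)$ and is absorbed by your $o$-term, but it is worth making explicit.
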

\begin{proof}
First, note that for measurable $A \subseteq \YSpace$ and $B\subseteq \nonnegReals$, it holds that
\begin{align*}
\tilde{\mu}(A \times B) = \int_B \int_A \Ind[\varphi(y) \geq t] \, dP(y) \, dt = \int_B \Prob_{Y \sim P}[\varphi(Y) \geq t, Y \in A] \, dt.
\end{align*}
Thus, by inspection we find the conditional measure $\tilde{\mu}_t(A) = \Prob_{Y \sim P}[\varphi(Y) \geq t, Y \in A]$.
Hence,
\begin{align*}
\Prob[\tilde{Y} \in A \mid \tilde{\rmN}(\YSpace \times \{t\}) = 1] 
&= \frac{\Prob[\tilde{Y} \in A, \tilde{\rmN}(\YSpace \times \{t\}) = 1]}{\Prob[\tilde{\rmN}(\YSpace \times \{t\}) = 1]} \\
&= \frac{\Prob[\tilde{\rmN}(A \times \{t\}) = 1,  \tilde{\rmN}(A^C \times \{t\}) = 0]}{\Prob[\tilde{\rmN}(\YSpace \times \{t\}) = 1]} \\
&= \frac{\Prob[\tilde{\rmN}(A \times \{t\} ) = 1] \cdot \Prob[\tilde{\rmN}(A^C \times \{t\}) = 0]}{\Prob[\tilde{\rmN}(\YSpace \times \{t\}) = 1]} \\
&= \frac{\tilde{\mu}_t(A) e^{-\tilde{\mu}_t(A)} \cdot e^{-\tilde{\mu}_t(A^C)}}{\tilde{\mu}_t(\YSpace) \cdot e^{-\tilde{\mu}_t(\YSpace)}} \\
&= \frac{\tilde{\mu}_t(A)}{\tilde{\mu}_t(\YSpace)},
\end{align*}
which is well-defined for every $t$ since $\varphi$ is assumed to be unbounded hence $\tilde{\mu}_t(\YSpace) > 0$.
Furthermore, since $\Prob_{Y \sim P}[\varphi(Y) \geq t, Y \in A]$ is right-continuous, non-negative and decreasing in $t$, its integral $\tilde{\mu}((0, t) \times A)$ is concave and almost everywhere differentiable in $t$.
Therefore, for almost every $t$ we have
\begin{align*}
\Prob[\tilde{T} \in dt] 
&= -\frac{d}{dt} \Prob[\tilde{T} > t] \\
&= -\frac{d}{dt} \Prob[\tilde{\rmN}(\YSpace \times (0, t]) = 0] \\
&= -\frac{d}{dt} \exp\left(-\tilde{\mu}(\YSpace \times (0, t])\right) \\
&= \tilde{\mu}_t(\YSpace) \cdot \exp\left(-\tilde{\mu}(\YSpace \times (0, t])\right) \\
&= \tilde{\mu}_t(\YSpace) \cdot \Prob[\tilde{T} > t].
\end{align*}
Therefore, we find that
\begin{align*}
\Prob[\tilde{T} \in dt, \tilde{Y} \in A] &= \Prob[\tilde{T} \in dt] \cdot \Prob[\tilde{Y} \in A \mid \tilde{T} \in dt] \\
&= \Prob[\tilde{T} \in dt] \cdot \Prob[\tilde{Y} \in A \mid \tilde{\rmN}(\YSpace \times \{t\}) = 1] \\
&= \tilde{\mu}_t(A) \cdot \Prob[\tilde{T} > t].
\end{align*}
Noting that $\frac{d\tilde{\mu}_t}{dP}(y) = \Ind[\varphi(y) \geq t]$ finishes the proof.
\end{proof}
\par
\textbf{Deriving the stretch function $\sigma$.}
Let $\tilde{Q}(A) = \Prob[\tilde{Y} \in A]$.
Then, integrating out the arrival time in \cref{lemma:first_arrival_rd_derivative_of_restricted_process}, we find that for $P$-almost every $y$ we have
\begin{align*}
\frac{d \tilde{Q}}{dP}(y) 
= \int_0^\infty \Ind[t \leq \varphi(y)] \cdot \Prob[ \tilde{T} > t] \, dt = \int_0^{\varphi(y)} \Prob[ \tilde{T} > t] \, dt.
\end{align*}
Thus, if we can find a $\varphi$ such that $d\tilde{Q}/dP = dQ/dP = r$, we will have the desired mathematical foundation for a sampling algorithm!
To this end, recall that we may write $\varphi = \sigma \circ r$, where $\sigma: \nonnegReals \to \nonnegReals$ is an increasing, continuous, invertible function.
Hence, it is enough for us to derive the appropriate $\sigma$.
To this end, I set the left-hand side in the above inequality to $r$, which yields
\begin{align}
r(y) &= \int_0^{\sigma(r(y))} \Prob[ \tilde{T} > t] \, dt \nonumber\\
\Rightarrow \quad \sha(\tau) &= \int_0^{\tau} \Prob[ \tilde{T} > t] \, dt, \label{eq:shrink_function_integral_identity}
\end{align}
where I introduced $\tau = \sigma(r(y))$ and $\sha = \sigma^{-1}$.
Now, differentiating both sides, we get
\begin{align*}
\sha'(\tau) &= \Prob[\tilde{T} > t] \\
&= \Prob[\tilde{T} > t, \varphi(\tilde{Y}) \geq t] \\
&= \Prob[\varphi(\tilde{Y}) \geq t] - \Prob[\varphi(\tilde{Y}) \geq t \geq \tilde{T}] \\
&= \Prob[r(\tilde{Y}) \geq \sha(t)] - \Prob[r(\tilde{Y}) \geq \sha(t) \geq \sha(\tilde{T})] \\
&= w_Q(\sha(t)) - \sha(t) \cdot w_P(\sha(t)),
\end{align*}
where $w_Q$ and $w_P$ are the width functions of \cref{def:width_function}, and where the second term in the last equation follows from
\begin{align}
\Prob[r(\tilde{Y}) \geq \sha(t) \geq \sha(\tilde{T})] 
&= \int_\YSpace \int_0^t \Ind[r(y) \geq \sha(t)]\Ind[r(y) \geq \sha(\tau)] \Prob[\tilde{T} > \tau] \, d\tau \, dP(y) \nonumber \\
&= \int_\YSpace \Ind[r(y) \geq \sha(t)] \int_0^t \Prob[\tilde{T} > \tau] \, d\tau \, dP(y) \tag{since $t \geq \tau$}
\end{align}
\begin{align}
\hphantom{\Prob[r(\tilde{Y}) \geq \sha(t) \geq \sha(\tilde{T})]} &= \int_\YSpace \Ind[r(y) \geq \sha(t)] \sha(t) \, dP(y) \tag{by \cref{eq:shrink_function_integral_identity}} \\
&= \sha(t) \cdot w_P(\sha(t)) \nonumber.
\end{align}
Thus, to summarise, I have shown that the shrink function $\sha = \sigma^{-1}$ is the solution to the differential equation
\begin{align}
\label{eq:shrink_function_diffeq_identity}
\sha' = w_Q(\sha) - \sha \cdot w_P(\sha) \quad \text{with} \quad \sha(0) = 0, \, \sha'(0) = 1.
\end{align}
Using the inverse function theorem, we find
\begin{align}
\sigma'(h) &= \frac{1}{w_Q(h) - h \cdot w_P(h)} \nonumber \\
&= \frac{1}{\Prob[H \geq h]} \tag{by \cref{eq:width_function_wP_wQ_identity}},
\end{align}
where $H$ is the associated random variable of $r(Z)$ with $Z \sim P$ from \cref{lemma:width_function_properties}.
Integrating, we finally obtain
\begin{align}
\label{eq:stretch_function_integral_identity}
\sigma(h) &= \int_0^h \frac{1}{\Prob[H \geq h]} \, dh.
\end{align}
\par
\textbf{Constructing the greedy Poisson rejection sampler.}
While the mathematical construction above was more involved than for rejection sampling or A* sampling, creating a sampling algorithm is now straightforward.
Let $\PoissonProcess$ be a Poisson process with mean measure $P \otimes \lambda$.
Then, by the construction in the above paragraph, our algorithm needs to find the first arrival of $\PoissonProcess$ under the graph of $\varphi = \sigma \circ r$, where $\sigma$ is given by \cref{eq:stretch_function_integral_identity}.
To this end, we simulate the arrivals $(Y_n, T_n)$ of $\PoissonProcess$ in order and return the first arrival that satisfies the condition $\sigma(r(Y_n)) \geq T_n$; this is the greedy Poisson rejection sampler (GPRS) that I describe in \cref{alg:global_gprs}.
\par
\textbf{Implementation challenges.}
Observe that \cref{alg:global_gprs} requires more information than rejection sampling or A* sampling: we need to be able to evaluate $\sigma$ for any $h \in r(\YSpace)$.
However, analytically solving the integral in \cref{eq:stretch_function_integral_identity} is usually difficult.
This is not an immediate issue, as we can use high-precision numerical integrators to solve \cref{eq:stretch_function_integral_identity}.
Instead, the practical challenge with numerical integration is that the integrand $1/\Prob[H \geq h]$ diverges as $h \to \infty$, and this makes evaluating $\sigma(h)$ for $h \gg 1$ troublesome.
Fortunately, there is a simple fix: the condition $\sigma(r(Y_n)) \geq T_n$ is equivalent to $r(Y_n) \geq \sha(T_n)$, which we can use as the termination criterion instead.
This is more desirable because $\sha$ and $\sha'$ are a lot better behaved: $\sha'$ is a survival probability. 
Hence it is bounded between $[0, 1]$, and $\sha \leq \norm{r}_\infty$ since $\sha$ maps into the image of $r$.
Therefore, in practice, I implement \cref{alg:global_gprs} by integrating \cref{eq:shrink_function_diffeq_identity} and checking the condition $r(Y_n) \geq \sha(T_n)$, with which I encountered no stability issues.
\par
\textbf{The runtime of greedy Poisson rejection sampling.}
Unfortunately, giving a full characterisation of the runtime of GPRS akin to \cref{thm:global_a_star_conditional_index_distribution} does not seem possible.
Instead, I now turn to analysing its sample complexity.
\begin{importantTheorem}[Runtime of greedy Poisson rejection sampling.]
\label{thm:global_gprs_runtime}
Let $Q \ll P$ be distributions over $\YSpace$ with Radon-Nikodym derivative $r = dQ/dP$, and assume ${\norm{r}_\infty < \infty}$.
Let $K$ be the number of proposal samples simulated by \cref{alg:global_gprs} before terminating.
Then,
\begin{align*}
\Exp[K] = \norm{r}_\infty.
\end{align*}
\end{importantTheorem}
Here, I utilise the same trick I employed for the analysis of rejection sampling and A* sampling: condition on the arrival time and average.
\begin{proof}
Let $\PoissonProcess$ be the Poisson process with mean measure $P \otimes \lambda$ that \cref{alg:global_gprs} uses to propose candidate points. 
Let $\sigma^{-1}$ be the shrink function, i.e., the solution of \cref{eq:shrink_function_diffeq_identity} and $\varphi = \sigma \circ r$.
Finally, let $(\tilde{T}, \tilde{Y})$ be the first arrival of $\PoissonProcess\vert_\varphi$, i.e.\ the first point of $\PoissonProcess$ that falls under $\varphi$.
Since $\tilde{T}$ is a nonnegative random variable, by the Darth Vader rule \citep{muldowney2012darth}, we have
\begin{align}
\Exp[\tilde{T}] 
&= \int_0^\infty \Prob[\tilde{T} > t] \, dt \nonumber \\
&= \lim_{t \to \infty} \sigma^{-1}(t) \tag{by \cref{eq:shrink_function_integral_identity}} \\
&= \norm{r}_\infty, \label{eq:gprs_expected_arrival_time}
\end{align}
where the last equality follows from $\lim_{h \to \norm{r}_\infty} \sigma(h) = \infty$, since $\varphi$ is unbounded, and from the continuity of $\sigma^{-1}$.
Next, consider the following ``probability integral transform-type'' identity:
\begin{align}
\Exp[\tilde{\mu}( \YSpace \times (0, \tilde{T}])] 
&= \int_0^\infty \tilde{\mu}(\YSpace \times (0, t])\cdot \tilde{\mu}_t(\YSpace) \cdot e^{-\tilde{\mu}(\YSpace \times (0, t])} \, dt \nonumber\\
&= \int_0^\infty u e^{-u} \, du  \tag{substitute $u \gets \tilde{\mu}(\YSpace \times (0, t])$}\\
&= 1. \label{eq:gprs_expected_arrival_mean_time}
\end{align}
Now, consider the points of $\PoissonProcess$ that did not fall under the graph of $\varphi$: $\PoissonProcess\vert_\varphi^C = \PoissonProcess \cap (\hypograph \varphi)^C$.
By the restriction theorem, $\PoissonProcess\vert_\varphi^C$ is also a Poisson process with mean measure $\nu(A) = \mu(A \cap (\hypograph \varphi)^C)$ for $A \subseteq \Omega$.
Observe that by the independence property of Poisson processes $\tilde{T} \perp \PoissonProcess\vert_\varphi^C$.
Hence, conditioned on $\tilde{T}$, the first $K - 1$ points of $\PoissonProcess$ that did not fall under the graph of $\varphi$ are Poisson distributed with mean
\begin{align*}
\Exp[K - 1 \! \mid\! \tilde{T}] = \int_0^{\tilde{T}} \!\!\!\int_\YSpace\!\! \Ind[t > \varphi(y)] \, dP(y) \, dt
= \int_0^{\tilde{T}}\!\!\! \int_\YSpace (1 - \Ind[t \leq \varphi(y)]) \, dP(y) \, dt
= \tilde{T} - \tilde{\mu}(\tilde{T})
\end{align*}
Therefore,
\begin{align}
\Exp[K] &= 1 + \Exp_{\tilde{T}}[\Exp[K - 1 \mid \tilde{T}]] \nonumber\\
&= 1 + \Exp_{\tilde{T}}[\tilde{T} - \tilde{\mu}(\tilde{T})] \nonumber\\
&= \norm{r}_\infty, \tag{by \cref{eq:gprs_expected_arrival_time,eq:gprs_expected_arrival_mean_time}}
\end{align}
which finishes the proof.
\end{proof}
\par
\textbf{Relative entropy coding with greedy Poisson rejection sampling.}
Encoding samples with GPRS proceeds as before: we encode the index $K$ of the selected sample using an appropriate Zeta distribution.
Similarly to A* sampling, the average description length of GPRS is also optimal.
\begin{importantTheorem}[Relative entropy coding with greedy Poisson rejection sampling]
\label{thm:global_gprs_codelength}    
Let $\rvx, \rvy \sim P_{\rvx, \rvy}$ be dependent random variables over the space $\XSpace \times \YSpace$.
Moreover, let $\PoissonProcess$ be a Poisson process over $\YSpace \times \nonnegReals$ with mean measure $P_{\rvy} \otimes \lambda$.
\par
Consider the following channel simulation protocol: the sender and receiver set $\rvz \gets \PoissonProcess$ as their common randomness.
Then, upon receiving a source symbol $\rvx \sim P_\rvx$, the sender uses greedy Poisson rejection sampling (\cref{alg:global_gprs}) to simulate $\rvy \sim P_{\rvy \mid \rvx}$ using the points of $\PoissonProcess$ as proposals, and let $K$ denote the index of $\rvy$ in $\PoissonProcess$.
Finally, the sender encodes $K$ using a Zeta distribution $\zeta(k \mid \alpha) \propto k^{-\alpha}$ with exponent ${\alpha = 1 + 1 / (\MI{\rvx}{\rvy} + 1)}$.
Then, the average description length of this protocol is upper bounded by
\begin{align*}
\MI{\rvx}{\rvy} + \lb(\MI{\rvx}{\rvy} + 2) + 3 \text{ bits}.
\end{align*}
\end{importantTheorem}
The proof I present below is an improved version of the one in \citet{flamich2023gprs}.
The improvement is due to Daniel Goc, who suggested the use of the inequality 
\begin{align}
\label{eq:daniels_fraction_inequality}
a,b,c,d \geq 0: \quad \frac{a + c}{b + d}\, \leq\, \max\left\{\frac{a}{b}, \frac{c}{d}\right\},
\end{align}
with the right-hand side equal to $\infty$ if either $b$ or $d$ equal $0$.
\begin{proof}
Similarly to the proof of \cref{thm:global_a_star_codelength}, I start by noting that by \cref{lemma:li_el_gamal_bound_on_pos_int_random_variable}, the average description length of a scheme using a Zeta distribution with exponent $\alpha = 1 + 1 / \Exp[\lb K]$ to encode a positive integer-valued random variable $K$ is upper bounded by
\begin{align}
\label{eq:global_gprs_zeta_coding_bound}
\Exp[\lb K] + \lb(\Exp[\lb K] + 1) + 2.
\end{align}
Hence, I will show that $\Exp[\lb K] \leq \MI{\rvx}{\rvy} + 1$, which will finish the proof.
First, however, I need to set up a few quantities and derive some preliminary results.
Thus, given $\rvx \sim P_\rvx$, let $\PoissonProcess$ be the Poisson process with mean measure $P_\rvy \otimes \lambda$ that \cref{alg:global_gprs} uses for its proposals, let $r_\rvx = dP_{\rvy \mid \rvx}/dP_\rvy$, let $\sigma$ be given by \cref{eq:stretch_function_integral_identity}, let $\sha = \sigma^{-1}$ and let $\varphi = \sigma \circ r$.
Then, let $\PoissonProcess\vert_\varphi$ be the restriction of $\PoissonProcess$ under the graph of $\varphi$ with mean measure $\tilde{\mu}$.
Finally, let $\tilde{T}$ denote the first arrival time of $\PoissonProcess\vert_\varphi$ and let $H_\rvx$ be the associated random variable of $r_\rvx(\rvy), \rvy \sim P_\rvy$.
To begin, observe that
\begin{align}
\Prob[\tilde{T} > t] &= w_{P_{\rvy \mid \rvx}}(\sha(t)) - \sha(t) \cdot w_{P_{\rvy}}(\sha(t)) \tag{by \cref{eq:shrink_function_diffeq_identity}} \\
&= \Prob[H_\rvx > \sha(t)] \tag{by \cref{eq:width_function_wP_wQ_identity}} \\
&= \Prob[\sigma(H_\rvx) > t], \nonumber\\
\Rightarrow\quad \tilde{T} &\sim \sigma(H_\rvx) \label{eq:global_gprs_arrival_time_disteq_sigma_of_h}
\end{align}
Next, I show the following ``two-sided Chernoff-type'' inequality:
\begin{align}
\label{eq:daniels_inequality}
\frac{1 + \sigma(h) - \tilde{\mu}(\sigma(h))}{h} \leq \max\left\{\frac{1}{\Prob[H_\rvx > h]}, \frac{1}{\Prob[H_\rvx \leq h]} \right\}.
\end{align}
To see this, note first that
\begin{align}
\frac{1 + \sigma(h) - \tilde{\mu}(\sigma(h))}{h} 
&= \frac{1 + \sigma(h) - \tilde{\mu}(\sigma(h))}{h + \Prob[H_\rvx \leq h] - \Prob[H_\rvx \leq h]} \nonumber \\
&\leq \max\left\{\frac{ \sigma(h) - \tilde{\mu}(\sigma(h))}{h - \Prob[H_\rvx \leq h]}, \frac{1}{\Prob[H_\rvx \leq h]} \right\}. \tag{by \cref{eq:daniels_fraction_inequality}}
\end{align}
One concern that might arise is that $h - \Prob[H_\rvx \leq h]$ becomes negative, while \cref{eq:daniels_fraction_inequality} requires all involved quantities to be nonnegative.
However, fortunately this is never so: recall that $\Prob[H_\rvx \in dh] = w_{P_\rvy}(h) \leq 1$, meaning $\Prob[H_\rvx \leq h]$ grows sublinearly.
Furthermore, $\Prob[H_\rvx \leq 0] = 0$, hence for $h > 0$ we must have $h - \Prob[H_\rvx \leq 0] \geq 0$.
To proceed, observe that
\begin{align*}
\tilde{\mu}(\sigma(h)) &= \int_0^{\sigma(h)} \Prob_{Y \sim P_\rvy}[\varphi(Y) > t] \, dt \\
&= \int_0^{\sigma(h)} \Prob_{Y \sim P_\rvy}[r(Y) > \sha(t)] \, dt \\
&= \int_0^h \sigma'(h) \cdot w_{P_\rvy}(h) \, dh.
\end{align*}
Hence,
\begin{align}
\sigma(h) - \tilde{\mu}(\sigma(h)) 
&= \int_0^h \sigma'(\eta)(1 - w_{P_\rvy}(\eta)) \, d\eta \nonumber \\
&\leq \sigma'(h)\int_0^h (1 - w_{P_\rvy}(\eta)) \, d\eta \tag{$\sigma'$ is increasing} \\
&= \frac{h - \Prob[H_\rvx \leq h]}{\Prob[H_\rvx > h]}. \nonumber
\end{align}
Therefore,
\begin{align*}
\frac{1 + \sigma(h) - \tilde{\mu}(\sigma(h))}{h} &\leq \max\left\{\frac{ \sigma(h) - \tilde{\mu}(\sigma(h))}{h - \Prob[H_\rvx \leq h]}, \frac{1}{\Prob[H_\rvx \leq h]} \right\} \\
&\leq \max\left\{\frac{1}{h - \Prob[H_\rvx \leq h]} \cdot \frac{h - \Prob[H_\rvx \leq h]}{\Prob[H_\rvx > h]}, \frac{1}{\Prob[H_\rvx \leq h]} \right\},
\end{align*}
which finally yields \cref{eq:daniels_inequality}.
\par
Given the above setup, I am ready to bound $\Exp[\lb K]$.
For this, I begin by developing a pointwise bound for $\Exp[\lb K \mid \rvx]$ as follows.
First, recall from the proof of \cref{thm:global_gprs_runtime} that $(K - 1) \mid \tilde{T}, \rvx$ is Poisson distributed with mean $\tilde{T} - \tilde{\mu}(\tilde{T})$.
Hence,
\begin{align}
\Exp[\lb K \mid \rvx] 
&\leq \Exp_{\tilde{T}}[\lb \Exp[ K \mid \tilde{T}] \mid \rvx] \tag{Jensen} \\
&= \Exp_{\tilde{T}}[\lb(1 + \tilde{T} - \tilde{\mu}(\tilde{T})) \mid \rvx] \nonumber\\
&= \Exp_{H_\rvx}[\lb(1 + \sigma(H_\rvx) - \tilde{\mu}(\sigma(H_\rvx)))]. \tag{by \cref{eq:global_gprs_arrival_time_disteq_sigma_of_h}}\\
&\leq \Exp_{H_\rvx}[\lb(H_\rvx)] + \int_0^\infty w_{P_\rvy}(h) \lb \left(\max\left\{\frac{1}{\Prob[H_\rvx > h]}, \frac{1}{\Prob[H_\rvx \leq h]} \right\} \right)\, dh \tag{by \cref{eq:daniels_inequality}} \\
&= \Exp_{H_\rvx}[\lb(H_\rvx)] + \int_0^1 \lb \left(\max\left\{\frac{1}{u}, \frac{1}{1-u} \right\}\right) \, du \tag{substitute $u \gets \Prob[H_\rvx \leq h]$} \\
&= \KLD{P_{\rvy \mid \rvx}}{P_\rvy} + 1. \tag{by \cref{lemma:width_function_properties}/4}
\end{align}
Taking expectation over $\rvx$ yields the desired result.
\end{proof}
\par
Comparing \cref{thm:global_a_star_runtime,thm:global_gprs_runtime} and \cref{thm:global_a_star_codelength,thm:global_gprs_codelength}, we see that the sample complexities and our bounds on the channel simulation efficiencies of A* sampling and GPRS match exactly.
However, as I discussed, GPRS can be much more challenging to implement, as computing $\sigma$ or $\sha$ can be difficult.
Indeed, I am not aware of any convincing case where \cref{alg:global_gprs} would be more advantageous to use than \cref{alg:global_a_star}.
A possible scenario in the discrete case might be speculative decoding \citep{leviathan2023fast}, which was recently connected to greedy rejection sampling by \citet{kobus2025speculative}.
However, from my perspective, the real redemption for GPRS will come in \cref{sec:branch_and_bound_samplers}, where I develop fast variants of A* sampling and GPRS.
As I will show, branch-and-bound GPRS will be a fully optimal algorithm, while A* sampling will be slightly inefficient.
\par
\textbf{A historical note.}
The inspiration for greedy Poisson rejection sampling came from the greedy rejection sampler developed by \citet{harsha2010communication}.
However, their algorithm is formulated for discrete random variables only.
Taking it as a starting point, I first generalised this algorithm to arbitrary Polish spaces in \citet{flamich2023adaptive} and developed a fast variant of it (not presented in this thesis) in \citet{flamich2023grc}.
However, these algorithms were all developed by positing a certain ``greedy acceptance criterion'' that accepts samples with as high a probability as possible and then deriving the rest of the algorithm such that the output distribution is correct.
In contrast, greedy Poisson rejection sampling derives the whole algorithm as an inevitable consequence of its geometric construction.
\subsection{Superposition: Parallelisation}
\label{sec:superposition_parallelisation}
\par
Rather than construct a conceptually new algorithm, this section applies the superposition theorem to develop parallelised variants of the samplers I described up to this point.
The problem to solve remains the same: there are probability measures $Q \ll P$ over $\YSpace$ with $r = dQ/dP$, and we wish to simulate a $Q$-distributed sample using $P$-distributed candidates.
But now I also assume we have access to $J$ \textit{parallel threads}.
I am intentionally vague about the computational meaning of a ``thread'', as this may differ between environments and applications.
Essentially, the two properties threads should have are that we can perform independent computation in them, but it is also possible to synchronise data between them.
In the analysis I present below, I make the simplifying assumption that the operations t synchronise data between the threads require negligible time.
Hence, I will not account for threads potentially blocking for a long time to read from or write to some shared resource.
\begin{wrapfigure}[22]{r}[0pt]{0.5\textwidth}%
\centering
\includegraphics{4-RelativeEntropyCodingWithPoissonProcesses/img/parallel_gprs_illustration.tikz}
\caption[Illustration of parallelised greedy Poisson rejection sampling]{Illustration of parallelised greedy Poisson rejection sampling (\cref{alg:parallel_gprs}) for a Gaussian target $Q = \Normal(1, 0.25^2)$ and Gaussian proposal distribution $P = \Normal(0, 1)$, with the time axis truncated to the first $17$ units.
The algorithm in the illustration uses two independent Poisson processes $\PoissonProcess_1$ and $\PoissonProcess_2$, both with mean measures $1/2 \cdot (P \otimes \lambda)$.
{\color{blue}Blue} points are arrivals in $\PoissonProcess_1$ and {\color{orange} orange} points are arrivals in $\PoissonProcess_2$.
Crosses (\xmark) indicate rejected, and circles (\tikzcircle[black, fill=black]{3pt}) indicate accepted points by each thread.
The algorithm accepts the earliest arrival across the two processes, which in this case is marked by the blue circle (\tikzcircle[blue, fill=blue]{3pt}).
By the superposition theorem, this is equivalent to finding the first arrival of $\PoissonProcess = \PoissonProcess_1 \cup \PoissonProcess_2$.
}
\label{fig:parallel_gprs_illustration}
\end{wrapfigure}
\par
\textbf{Parallelised construction with superposition.}
The starting point for the construction is that \cref{alg:global_rs,alg:global_a_star,alg:global_gprs} all use the same Poisson process $\PoissonProcess$ over $ \YSpace \times \nonnegReals$ with mean measure $P \otimes \lambda$ to propose candidates.
The superposition theorem allows us to ``divide'' this process into subprocesses and ``merge the results'' at the end, as its following specialisation shows.%
\begin{proposition}[A special case of \cref{thm:superposition_theorem}]
\label{prop:superposition_theorem}
Let $J$ be a positive integer, and let $\PoissonProcess_1,\hdots, \PoissonProcess_J$ be Poisson processes such that the mean measure of each $\PoissonProcess_j$ is $\frac{1}{J} \cdot (P \otimes \lambda)$.
Then,
\begin{align*}
\PoissonProcess = \bigcup_{j = 1}^J \PoissonProcess_j
\end{align*}
is a Poisson process with mean measure $P \otimes \lambda$.
\end{proposition}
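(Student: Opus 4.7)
The plan is to derive Proposition 4.2.1 directly from the superposition theorem (Theorem 4.1.4) by induction on $J$. First I would handle the base case $J = 1$, which is trivial since $\PoissonProcess_1$ has mean measure $\frac{1}{1}(P \otimes \lambda) = P \otimes \lambda$ by assumption. For the inductive step, suppose the claim holds for $J - 1$ independent Poisson processes with mean measures $\frac{1}{J}(P \otimes \lambda)$ each. Letting $\PoissonProcess' = \bigcup_{j = 1}^{J - 1} \PoissonProcess_j$, the inductive hypothesis would give that $\PoissonProcess'$ is a Poisson process, but with mean measure $\frac{J - 1}{J}(P \otimes \lambda)$ rather than $P \otimes \lambda$. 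This is a trivial rescaling, so I would state the inductive hypothesis accordingly from the start: the union of any $k$ independent Poisson processes with mean measures $\frac{1}{J}(P \otimes \lambda)$ is a Poisson process with mean measure $\frac{k}{J}(P \otimes \lambda)$.

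Then, since $\PoissonProcess_J$ is independent of $\PoissonProcess_1, \ldots, \PoissonProcess_{J - 1}$ (and hence of the measurable functional $\PoissonProcess'$ of them), applying Theorem 4.1.4 to the pair $\PoissonProcess'$ and $\PoissonProcess_J$ yields that $\PoissonProcess' \cup \PoissonProcess_J$ is a Poisson process with mean measure
\begin{align*}
\frac{J - 1}{J}(P \otimes \lambda) + \frac{1}{J}(P \otimes \lambda) = P \otimes \lambda,
\end{align*}
completing the induction.

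The only subtle point to address is that the set-theoretic union $\bigcup_j \PoissonProcess_j$ could, in principle, glue together points that coincide across different $\PoissonProcess_j$, which would destroy the ``no lightning strikes twice'' property required by Definition 4.1.1. However, for any two independent Poisson processes with non-atomic mean measures on a Polish space, the probability that they share a common point is zero: conditioning on the points of one process, the other has non-atomic mean measure on the (at most countable) support of the first and hence places no points there almost surely. Iterating across the $J$ processes (a finite union of null events is null) shows that $\PoissonProcess$ is almost surely a simple point process, so the union is well-defined as a random countable set and Theorem 4.1.4 applies cleanly. I expect this measure-theoretic check to be the only place where care is needed; the rest is bookkeeping.
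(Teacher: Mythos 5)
Your proposal is correct, and it fills in exactly the inductive argument that the paper leaves implicit: the paper states this proposition without proof, treating it as an immediate consequence of the two-process superposition theorem, and your induction on $J$ (with the strengthened hypothesis that any $k$ of the processes superpose to a Poisson process with mean measure $\tfrac{k}{J}(P \otimes \lambda)$) is the standard way to make that precise. The one caveat worth flagging is that the proposition's statement, like the theorem it specializes, tacitly requires that the $\PoissonProcess_j$ be mutually independent; you invoke this in the inductive step, and it is indeed assumed throughout the paper's use of the proposition, but it is worth stating explicitly if one were writing the proof out. Your observation about coincident points across processes is a legitimate measure-theoretic subtlety that the paper does not address; your argument via non-atomicity of the mean measures is the right way to dispatch it, though in practice the cited superposition theorem (from \citet{kingman1992poisson}) is usually formulated in terms of sums of counting measures rather than set-theoretic unions of point sets, which sidesteps the issue by construction.
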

\begin{figure}[t]
\centering
\begin{minipage}[t]{0.49\textwidth}
 \begin{algorithm}[H]
\SetAlgoLined
\DontPrintSemicolon
\SetKwInOut{Input}{Input}\SetKwInOut{Output}{Output}
\SetKwFor{ParFor}{in parallel for}{do}{end}
\SetKw{TermThread}{terminate thread}
\SetKw{SingleWhile}{while}
\textbf{Input:}\;
Proposal distribution $P$\;
Density ratio $r = dQ/dP\hspace{-1cm}$\;
Upper bound $M$ on $r$\;
Number of parallel threads $J$\;
Time-rates $\{c_j\}_{j = 1}^J$\;
\textbf{Output:}\;
Sample $Y \sim Q$ and its code $(j^*, N_{j^*})\hspace{-2mm}$\;
\;
$U, Y^*, j^*, N_{j^*} \gets \infty, \perp, \perp, \perp \hspace{-1cm}$\;
\ParFor{$j = 1, \hdots, J$}{
$\Pi_j \gets \SimulatePP(1/J, P)$\;
\For{${n_j} = 1, 2, \hdots$}{
\vspace{0.1cm}
$Y^{(j)}_{n_j}, T^{(j)}_{n_j} \gets \mathtt{next}(\Pi_j)$\; 
$T^{(j)}_{n_j} \gets T^{(j)}_{n_j} \big/ c_j$\;
\;
\If{$T^{(j)}_{n_j} / r(Y^{(j)}_{n_j}) < U$}{
\vspace{0.1cm}
$U \gets T^{(j)}_{n_j} / r(Y^{(j)}_{n_j})$\;
$Y^*, j^*, N_{j^*} \gets Y^{(j)}_{n_j}, j, n_j\hspace{-1cm}$\;
}
\If{$U < T_{n_j}^{(j)} / M$}{
\TermThread $j$.\;
}
}
}
\Return{$Y^*, (j^*, N_{j^*})$}
\caption{Parallelised A* sampling with $J$ available threads.}
\label{alg:parallel_a_star}
\end{algorithm}   
\end{minipage}%
\hfill
\begin{minipage}[t]{0.49\textwidth}
 \begin{algorithm}[H]
\SetAlgoLined
\DontPrintSemicolon
\SetKwInOut{Input}{Input}\SetKwInOut{Output}{Output}
\SetKwFor{ParFor}{in parallel for}{do}{end}
\SetKw{TermThread}{terminate thread}
\SetKw{SingleWhile}{while}
\textbf{Input:}\;
Proposal distribution $P$\;
Density ratio $r = dQ/dP\hspace{-1cm}$\;
Stretch function $\sigma$\;
Number of parallel threads $J$\;
Time-rates $\{c_j\}_{j = 1}^J$\;
\textbf{Output:}\;
Sample $Y \sim Q$ and its code $(j^*, N_{j^*})\hspace{-2mm}$\;
\;
$T^*, Y^*, j^*, N_{j^*} \gets \infty, \perp, \perp, \perp$\;
\ParFor{$j = 1, \hdots, J$}{
$\Pi_j \gets \SimulatePP(1/J, P)$\;
\For{${n_j} = 1, 2, \hdots$}{
\vspace{0.1cm}
$Y^{(j)}_{n_j}, T^{(j)}_{n_j} \gets \mathtt{next}(\Pi_j)$\; 
$T^{(j)}_{n_j} \gets T^{(j)}_{n_j} \big/ c_j$\;
\;
\If{$T^* < T_{n_j}^{(j)}$}{
\TermThread $j$.\;
}
\If{$T^{(j)}_{n_j} < \sigma(r(Y^{(j)}_{n_j}))$}{
$T^*\!\!, Y^*\!\!\!, j^*\!, N_{j^*} \gets T^{(j)}_{n_j}\!\!, Y^{(j)}_{n_j}\!\!\!, j, n_j\hspace{-1cm}$\;
\TermThread $j$.\;
}
}
}
\Return{$Y^*, (j^*, N_{j^*})$}
\vspace{1.45mm}
\caption{Parallel GPRS with $J$ available threads.\vspace{1.05mm}}
\label{alg:parallel_gprs}
\end{algorithm}   
\end{minipage}%
\end{figure}%
Therefore, \cref{prop:superposition_theorem} suggests the following design principle for parallelising our sampling algorithms over $J$ available threads: make each thread $j$ simulate $\PoissonProcess_j$ and search for the distinguished point specified by the construction of the sampling algorithm, keeping the termination criterion synchronised across the threads.
Then, once we locate the sought-after point in one of the $j$ threads, \cref{prop:superposition_theorem} shows that it is also the sought-after point in $\PoissonProcess$, ensuring the correctness of the procedure.
\par
The case for rejection sampling is quite simple and uninteresting since its termination criterion does not depend on the arrival time.
Hence, parallelising requires no modification to the original algorithm and amounts to just running \cref{alg:global_rs} $J$ times in parallel.
The situation is more interesting in the case of A* sampling and GPRS, and hence I describe the parallelised algorithms in \cref{alg:parallel_a_star,alg:parallel_gprs}; for a visual example of parallelised GPRS, see \cref{fig:parallel_gprs_illustration}.
\par
\textbf{Runtime of the parallelised samplers.}
If we use $J$ parallel threads to speed up the sampling process, we would ideally like to see a $J$ times improvement in the expected runtime.
As I show below, this is almost the case, bar a very small inefficiency.
\begin{theorem}[Expected runtime of parallelized A* sampling]
\label{thm:parallel_a_star_runtime}
Let $Q \ll P$ be distributions over $\YSpace$ with Radon-Nikodym derivative $r = dQ/dP$ and assume $r$ is bounded above by some constant $M > 1$.
Assume we run \cref{alg:parallel_a_star} with $J$ parallel threads, and let $K_j$ denote the number of samples the algorithm simulates in thread $j$.
Then, for all $j \in [1:J]$, we have
\begin{align*}
K_j \sim \Geom\left(\frac{J}{J + M - 1}\right)
\end{align*}
and thus, for the total number of simulated samples $K$, we have
\begin{align*}
\Exp[K] = \Exp\left[\sum_{j = 1}^J K_j\right] = M + J - 1.
\end{align*}
\end{theorem}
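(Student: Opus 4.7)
The plan is to mimic the proof of \cref{thm:global_a_star_runtime}, exploiting the superposition structure of \cref{alg:parallel_a_star} to transfer the Poisson-process argument to each thread. Assuming $c_j = 1$ for all $j$ (so that the symmetry $K_j \disteq K_{j'}$ holds), by the superposition theorem (\cref{prop:superposition_theorem}) the combined process $\Pi = \bigcup_{j=1}^J \Pi_j$ is a Poisson process with mean measure $P \otimes \lambda$. Applying the A* map $f(y,t) = (y, t/r(y))$ to $\Pi$ yields a Poisson process with mean measure $Q \otimes \lambda$, whose first arrival time is $\tau \sim \Exponential(1)$ by \cref{eq:time_homogeneous_process_arrival_identity}. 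Let $(Y^*, T^*) \in \Pi$ denote the arrival achieving this minimum, so $T^* = \tau \cdot r(Y^*) \leq \tau M$, and let $j^*$ denote the (random) thread such that $(Y^*, T^*) \in \Pi_{j^*}$; by symmetry $j^* \sim \Unif([1:J])$.

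The first step is to identify $K_j$ in terms of $\Pi_j$. Since the global variable $U \geq \tau$ throughout execution and drops to $\tau$ exactly once thread $j^*$ processes $(Y^*, T^*)$, the termination check $U < T^{(j)}_{n_j}/M$ can only fire at arrivals with $T^{(j)}_{n_j} > \tau M$. Consequently each thread $j$ draws precisely all arrivals of $\Pi_j$ with $T \leq \tau M$, together with at most one additional arrival that triggers its termination. Careful bookkeeping shows that $K_j - 1$ equals the number of arrivals of $\Pi_j$ in the region $\{(y,t) : \tau r(y) < t < \tau M\}$: the subtracted arrival is the minimum $(Y^*, T^*)$ when $j = j^*$ (which sits on the boundary $t = \tau r(y)$) and is the terminating arrival when $j \neq j^*$ (which has $t > \tau M$).

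Next, I would invoke a Slivnyak-type argument analogous to the one in the proof of \cref{thm:global_a_star_runtime}: conditional on $\tau$ and $j^*$, independent thinning across threads together with Mecke's identity imply that the residual processes $\Pi_{j^*} \setminus \{(Y^*, T^*)\}$ and $\{\Pi_{j'}\}_{j' \neq j^*}$ are independent Poisson processes, each with mean measure $\frac{1}{J}(P \otimes \lambda)$ restricted to $\{(y,t) : t > \tau r(y)\}$. Reusing the computation from \cref{thm:global_a_star_runtime},
\begin{align*}
\tfrac{1}{J}(P \otimes \lambda)\bigl(\{(y,t) : \tau r(y) < t < \tau M\}\bigr) = \tfrac{\tau(M-1)}{J},
\end{align*}
so $K_j - 1 \mid \tau \sim \Pois(\tau(M-1)/J)$, identically for every $j$.

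Finally, integrating out $\tau \sim \Exponential(1)$ via exactly the same integration-by-parts computation as \cref{eq:global_a_star_runtime_integrating_out_first_arrival}, with $M-1$ replaced by $(M-1)/J$, yields the geometric mass function with parameter $J/(J+M-1)$, and summing $\Exp[K_j] = (J+M-1)/J$ across the $J$ threads gives $\Exp[K] = M + J - 1$. The main obstacle is the bookkeeping in the second step: establishing that, regardless of the wall-clock interleaving of the $J$ threads, each thread's drawn arrivals are exactly those described above, and that the shared $U$ never allows a thread to terminate prematurely on an arrival with $T \leq \tau M$. Once this synchronisation invariant is pinned down, the rest of the proof is a clean specialisation of the sequential A* argument to each thread's thinned sub-process.
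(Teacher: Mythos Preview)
Your proposal is correct and follows essentially the same approach as the paper: superpose the $J$ thinned processes, identify $\tau \sim \Exponential(1)$ as the first arrival of the mapped process, argue that $K_j - 1 \mid \tau \sim \Pois(\tau(M-1)/J)$ via the same mean-measure computation as in \cref{thm:global_a_star_runtime}, and integrate out $\tau$. The only difference is that the paper sidesteps your ``main obstacle'' by explicitly assuming instantaneous inter-thread propagation of $U$, so the synchronisation bookkeeping you flag is simply declared away rather than verified.
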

\textbf{For this proof, I ignore the possible delay synchronisation between threads and assume that information in one thread is immediately propagated to other threads.}
The proof follows the steps of the proof of the A* sampling runtime (\cref{thm:global_a_star_runtime}) mutatis mutandis.
\begin{proof}
Let $\PoissonProcess_1, \hdots \PoissonProcess_J$ be the Poisson processes with mean measures $\frac{1}{J} \cdot P \otimes \lambda$ \cref{alg:parallel_a_star} simulates in parallel, and let $\PoissonProcess = \bigcup_{j = 1}^J \PoissonProcess_j$.
Let $f(y, t) = (y, t/r(y))$ denote the shift function, $f(\PoissonProcess)$ the shifted process and $\tau$ the first arrival time of $f(\PoissonProcess)$, that is, the first arrival time across all shifted processes $f(\PoissonProcess_j)$.
Then, by the same reasoning as I gave in the proof of \cref{thm:global_a_star_runtime}, the number of points simulated in thread $j$ given the global first arrival time $\tau$ is $K_j - 1 \mid \tau \sim \Pois(\frac{\tau \cdot (M - 1)}{J})$.
Then, integrating out $\tau \sim \Exponential(1)$, we have the desired result.
\end{proof}
\begin{theorem}[Expected runtime of parallelized greedy Poisson rejection sampling]
\label{thm:parallel_gprs_runtime}
Let $Q \ll P$ be distributions over $\YSpace$ with Radon-Nikodym derivative $r = dQ/dP$ and assume $\norm{r}_\infty \infty$.
Assume we run \cref{alg:parallel_gprs} with $J$ parallel threads, and let $K_j$ denote the number of samples simulated in thread $j$.
Then,
\begin{align*}
\Exp[K_j] = \frac{\norm{r}_\infty - 1}{J} + 1.
\end{align*}
Hence, for the total number of simulated samples $K$, we have
\begin{align*}
\Exp[K] = \Exp\left[\sum_{j = 1}^J K_j\right] = \norm{r}_\infty + J - 1
\end{align*}
\end{theorem}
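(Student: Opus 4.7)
The plan is to combine the superposition theorem with the bookkeeping strategy used in the proof of \cref{thm:global_gprs_runtime}. First, I would invoke \cref{prop:superposition_theorem} to obtain that $\PoissonProcess = \bigcup_{j=1}^J \PoissonProcess_j$ is a Poisson process over $\YSpace \times \nonnegReals$ with mean measure $P \otimes \lambda$. Letting $\sigma$ be the stretch function, $\varphi = \sigma \circ r$, and $\tilde{T}$ the first-arrival-under-$\varphi$ time of $\PoissonProcess|_\varphi$, it follows that $\tilde{T}$ is distributed exactly as in the sequential case, and in particular $\Exp[\tilde{T}] = \norm{r}_\infty$ by \cref{eq:gprs_expected_arrival_time}.

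The key structural step would be to show that for every thread $j$ we have $K_j = M_j + 1$, where $M_j$ denotes the number of arrivals of $\PoissonProcess_j$ falling in $\YSpace \times [0, \tilde{T})$. To see this, write $j^* = \argmin_j \tilde{T}_j$ for the winning thread (where $\tilde{T}_j$ is the first-arrival-under-$\varphi$ time of $\PoissonProcess_j$), so that $\tilde{T} = \tilde{T}_{j^*}$. If $j = j^*$, the thread processes every arrival of $\PoissonProcess_{j^*}$ with time $< \tilde{T}$ (none of which lie in $\hypograph(\varphi)$, by definition of $\tilde{T}_{j^*}$) and then accepts the arrival at $\tilde{T}$. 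If $j \neq j^*$, then $\tilde{T}_j > \tilde{T}$, so again no arrival of $\PoissonProcess_j$ in $[0, \tilde{T})$ lies under $\varphi$; the thread processes all such arrivals and terminates on its first arrival past $\tilde{T}$ via the $T^* < T^{(j)}_{n_j}$ check. In both cases $K_j = M_j + 1$.

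Setting $M = \sum_{j=1}^J M_j$, this equals the number of arrivals of $\PoissonProcess$ in $\YSpace \times [0, \tilde{T})$, all of which avoid $\hypograph(\varphi)$. This is precisely the quantity analysed in the proof of \cref{thm:global_gprs_runtime}: conditionally on $\tilde{T}$ it is Poisson with mean $\tilde{T} - \tilde{\mu}(\YSpace \times [0, \tilde{T}))$, and combining \cref{eq:gprs_expected_arrival_time,eq:gprs_expected_arrival_mean_time} yields $\Exp[M] = \norm{r}_\infty - 1$. Since the $\PoissonProcess_j$ are i.i.d.\ with mean measure $\tfrac{1}{J}(P \otimes \lambda)$, the $M_j$ are exchangeable and $\Exp[M_j] = \Exp[M]/J = (\norm{r}_\infty - 1)/J$, yielding $\Exp[K_j] = (\norm{r}_\infty - 1)/J + 1$ and, upon summing, $\Exp[K] = \norm{r}_\infty + J - 1$.

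The most delicate part of the argument is the bookkeeping in the middle step: one must verify carefully that the terminating sample is counted correctly in both the winning and non-winning threads, so that the ``$+1$'' appears symmetrically across all $J$ threads. A secondary subtlety, which I would sidestep by assuming idealised synchronisation (as flagged in the statement of \cref{thm:parallel_a_star_runtime}), is that in a genuine asynchronous implementation, a non-winning thread may simulate a few extra arrivals after $\tilde{T}$ before the $T^*$ variable is propagated to it; accounting for this rigorously would require a more careful model of inter-thread communication.
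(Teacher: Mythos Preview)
Your proposal is correct and follows precisely the approach the paper sketches: combine the superposition argument from the parallel A* proof (\cref{thm:parallel_a_star_runtime}) with the condition-on-$\tilde{T}$ bookkeeping from the sequential GPRS proof (\cref{thm:global_gprs_runtime}). Your exchangeability step to split $\Exp[M]$ evenly across threads is a clean way to handle the fact that the accepted point lives in only one thread, and your careful verification that $K_j = M_j + 1$ holds symmetrically for both winning and non-winning threads fills in exactly the detail the paper omits.
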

\begin{proof}
The proof follows the same idea as the proof of \cref{thm:parallel_a_star_runtime}, but executed as a modification of \cref{thm:global_gprs_runtime}, and hence I omit it.
The reader can find the full proof in \citep[Appendix C;][]{flamich2023gprs}.
\end{proof}
\par
\textbf{Partitioning-based parallelisation.}
Note that the superposition theorem I gave in \cref{prop:superposition_theorem} is a heavy specialisation of \cref{thm:superposition_theorem}, and we could have chosen other sensible ways to split the original process $\PoissonProcess$ into subprocesses.
For example, we could have partitioned the sample space $\YSpace$ into sets $B_1, \hdots B_J$, and created restricted subprocesses $\PoissonProcess_j = \PoissonProcess \cap (B_j \times \nonnegReals)$.
By the restriction theorem, $\PoissonProcess_j$ is a Poisson process, and by the superposition theorem, we see that $\cup_{j = 1}^J \PoissonProcess_j = \PoissonProcess$.
This way, each subprocess searches over a different part of the sample space.
This way of parallising A* sampling is the basis of the algorithm Jiajun He and I developed in \citet{he2024accelerating}.
%
\section{Approximate Samplers from Exact Ones}
\label{sec:approximate_sampling}
So far, I have only considered exact samplers, i.e., algorithms whose output is guaranteed to be distributed according to the target distribution $Q$ we specify.
However, we have also seen that all these samplers are \textit{Las Vegas} algorithms: while they always return an exact answer and terminate almost surely, \textit{their runtime is random}.
Unfortunately, a random runtime can be quite problematic in practical situations unless its variance is small.
Furthermore, an exact sample is not necessary for good performance in many scenarios, such as the learned compression setting I describe in \cref{chapter:combiner}.
Therefore, we might prefer to use \textit{Monte Carlo} algorithms in these situations instead: algorithms with deterministic runtime which return approximate results only.
\par
Thus, in this section, I develop approximate sampling algorithms from exact ones based on my paper \citet{flamich2024some}.
My constructions will be elementary: I take an exact sampling algorithm, such as \cref{alg:a_star_sac}, and specify a ``runtime budget'' $k$.
Then, I run the exact sampler as usual, except if its runtime exceeds $k$ steps, I terminate it and return one of the proposal samples the algorithm examined previously.
The exciting insight in my paper \citet{flamich2024some} is that there is a serendipitous ``alignment of interests'' between approximate sampling and relative entropy coding. We can reuse the bounds on the coding efficiency of an exact relative entropy coding algorithm to obtain optimal bounds on the sample quality of its approximate variants.
However, before I proceed, I need to establish a reasonable measure of approximation.
\subsection{The Total Variation Distance}
\par
Naturally, one should choose a measure of error that fits well for their application.
In this thesis, the application I am interested in is one-shot lossy source coding, where we reconstruct the data from a single encoded sample.
Hence, I need a measure that can quantify the quality of a single sample.
Fortunately, there is an excellent candidate for this: the total variation distance, which I define next.
\begin{definition}[Total variation distance]
\label{def:tv_distance}
Let $\mu, \nu$ be probability measures over the measurable space $\YSpace, \sigmaAlgebra$.
Then, the total variation (TV) distance between $\mu$ and $\nu$ is defined as
\begin{align*}
\TVD{\mu}{\nu} = \sup_{A \in \sigmaAlgebra}\abs{\mu(A) - \nu(A)}.
\end{align*}
\end{definition}
In the following lemma, I collect some important properties of the TV distance.
\begin{lemma}[Properties of the TV distance]
\hfill
\begin{enumerate}
\item \textbf{Bounded metric.} The total variation distance $\TVD{\mu}{\nu}$ is a metric:
\begin{itemize}
\item $\TVD{\mu}{\nu} \geq 0$, and $\TVD{\mu}{\nu} = 0 \Leftrightarrow \mu = \nu$.
\item $\TVD{\mu}{\nu} = \TVD{\nu}{\mu}$.
\item $\TVD{\mu}{\nu} \leq \TVD{\mu}{\tilde{\mu}} + \TVD{\tilde{\mu}}{\nu}$.
\item $\TVD{\mu}{\nu} \leq 1$.
\end{itemize}
\item \textbf{Characterisation via measurable functions.} Let $\mu, \nu$ be probability measures over the measurable space $(\YSpace, \sigmaAlgebra)$.
Then, for measurable ${\psi: \YSpace \to [0, 1]}$, we have
\begin{align}
\label{eq:tvd_measurable_fun_characterisation}
\TVD{\mu}{\nu} = \sup_{\psi: \YSpace \to [0, 1]}\abs{\Exp_{X \sim \nu}[\psi(X)] - \Exp_{X \sim \mu}[\psi(X)]}.
\end{align}
\item \textbf{Data processing.} The total variation satisfies the data processing inequality: Let $K_{Y \mid X}$ be a transition kernel for the Markov chain $X \to Y$.
Let $P_X, Q_X$ be two source distributions over $X$, and denote the marginals they induce over $Y$ as $P_Y$ and $Q_Y$, respectively.
Then:
\begin{align*}
\TVD{P_X}{Q_X} \geq \TVD{P_Y}{Q_Y}.
\end{align*}
\item \textbf{Two-alternative forced choice characterisation.} For $\mu, \nu$ over the same measurable space, set $X_1 \sim \mu_1$ and $X_2 \sim \mu_2$.
Finally, set $X = X_B$ for $B \sim \Bern(1/2)$.
Now, consider the following task: given $X$, predict the value of $B$, i.e., the observer has to decide whether $X$ is $\mu_1$ or $\mu_2$ distributed.
Denote the observer's prediction as $B'$.
Then, for any observer/decision rule, we have
\begin{align*}
\Prob[B' \neq B] = \frac{1}{2}(1 - \TVD{\mu}{\nu}).
\end{align*}
\end{enumerate}    
\end{lemma}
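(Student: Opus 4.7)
The plan is to prove items (1)--(4) in order, since the data-processing statement (3) will follow painlessly from the variational characterisation (2), and (4) is essentially algebra once the definition is unpacked. For (1), nonnegativity, symmetry, and the upper bound $\TVD{\mu}{\nu} \leq 1$ all follow immediately from the fact that $\abs{\mu(A) - \nu(A)} \in [0,1]$ for each measurable $A$. The triangle inequality comes from the pointwise bound $\abs{\mu(A) - \nu(A)} \leq \abs{\mu(A) - \tilde{\mu}(A)} + \abs{\tilde{\mu}(A) - \nu(A)}$ followed by taking the supremum over $A$; the identity of indiscernibles uses the standard fact that two probability measures agreeing on every measurable set are equal.

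For (2), I would show both inequalities separately. The ``$\geq$'' direction follows by restricting the supremum to indicators $\psi = \Ind[x \in A]$, which recovers the original definition exactly. For ``$\leq$'', my plan is to use the layer-cake representation $\psi(x) = \int_0^1 \Ind[\psi(x) \geq t]\,dt$, which is valid since $\psi$ takes values in $[0,1]$. Applying Fubini's theorem yields
\begin{align*}
\Exp_\mu[\psi] - \Exp_\nu[\psi] = \int_0^1 \bigl(\mu(\{\psi \geq t\}) - \nu(\{\psi \geq t\})\bigr)\,dt,
\end{align*}
and the integrand is bounded in absolute value by $\TVD{\mu}{\nu}$ uniformly in $t$, so the integral is too.

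For (3), I would invoke (2) directly. Fix any measurable $A \subseteq \YSpace$ and set $\psi(x) = K_{Y \mid X}(A \mid x) \in [0,1]$. By the definition of the transition kernel, $P_Y(A) - Q_Y(A) = \Exp_{X \sim P_X}[\psi(X)] - \Exp_{X \sim Q_X}[\psi(X)]$, and by (2) this is bounded in absolute value by $\TVD{P_X}{Q_X}$. Taking the supremum over $A$ finishes the argument; the only technical worry is measurability of $\psi$, which is part of the definition of a transition kernel.

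For (4), the main obstacle is interpretive rather than technical: the statement as written quantifies over ``any'' decision rule, but the equality only holds for the optimal (Bayes) rule, so I would first clarify this. Given a decision region $A$ for the prediction $B' = 1$, I would write $\Prob[B' \neq B] = \tfrac{1}{2}\bigl(\mu_1(A^c) + \mu_2(A)\bigr) = \tfrac{1}{2}\bigl(1 - (\mu_1(A) - \mu_2(A))\bigr)$. Minimising over $A$ and recognising the supremum in the definition of $\TVD{\mu_1}{\mu_2}$ gives $\tfrac{1}{2}(1 - \TVD{\mu_1}{\mu_2})$, with the optimum attained by the Bayes/likelihood-ratio rule.
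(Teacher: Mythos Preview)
Your proposal is correct and, for items (1)--(3), matches the paper's proof essentially line-for-line: the paper also uses the pointwise triangle inequality for (1), the layer-cake / Darth Vader representation $\Exp[\psi] = \int_0^1 \Prob[\psi \geq t]\,dt$ for the ``$\leq$'' half of (2), and exactly your choice $\psi(x) = K_{Y\mid X}(A\mid x)$ for (3).

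For item (4) there is a genuine, if small, methodological difference. The paper argues via Radon--Nikodym derivatives with respect to the mixture $\nu = \tfrac{1}{2}(\mu_1 + \mu_2)$: it identifies the MAP detector as optimal, then computes the error as $\Exp_\nu[\min\{\tfrac{1}{2}\tfrac{d\mu_1}{d\nu}, \tfrac{1}{2}\tfrac{d\mu_2}{d\nu}\}]$ and reduces this to the total variation using the algebraic identity $2\min\{a,b\} = a+b-\abs{a-b}$. Your argument stays at the level of decision regions $A$, writes the error probability as $\tfrac{1}{2}(1 - (\mu_1(A) - \mu_2(A)))$, and minimises directly. Your route is more elementary (no densities, no dominating measure), while the paper's route makes the optimal rule explicit as a likelihood-ratio test. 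One small point: your restriction to deterministic decision regions does not lose generality---randomised rules $\psi:\YSpace\to[0,1]$ give error $\tfrac{1}{2}(1 - (\Exp_{\mu_1}[\psi] - \Exp_{\mu_2}[\psi]))$, and by (2) the infimum is the same---but it would be worth saying so in a sentence. Your observation that the ``for any decision rule'' phrasing in the statement is loose and really means ``for the optimal rule'' is well taken; the paper's proof makes the same tacit correction.
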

The proofs of the first two items are standard; I adapt the proof from \citet{nielsen2013hypothesis}.
\begin{proof}
\textbf{Bounded metric.}
Non-negativity and symmetry follow from the non-negativity of the absolute value in the definition of $\TVD{\mu}{\nu}$.
For $\mu = \nu$ we have $\TVD{\mu}{\nu} = 0$ by definition and for the other direction, note that $\TVD{\mu}{\nu} = 0$ implies that $\mu(A) = \nu(A)$ for every possible event $A$, hence $\mu = \nu$.
For the triangle inequality, note that by the triangle inequality for the absolute value, we have
\begin{align*}
\TVD{\mu}{\nu} &= \sup_{A \in \sigmaAlgebra}\abs{\mu(A) - \nu(A)} \\
&\leq \sup_{A \in \sigmaAlgebra}\abs{\mu(A) - \tilde{\mu}(A)} + \sup_{A \in \sigmaAlgebra}\abs{\tilde{\mu}(A) - \nu(A)} \\
&= \TVD{\mu}{\tilde{\mu}} + \TVD{\tilde{\mu}}{\nu}.
\end{align*}
Finally, for the boundedness, note that for any $A \in \sigmaAlgebra$, it holds that $0 \leq \mu(A), \nu(A) \leq 1$, hence $-1 \leq \mu(A) - \nu(A) \leq 1$ and therefore $\abs{\mu(A) - \nu(A)} \leq 1$.
\par
\textbf{Characterisation via measurable functions.}
First, consider setting $\psi_A(x) = \Ind[x \in A]$ for each $A \in \sigmaAlgebra$.
This shows that
\begin{align*}
\TVD{\mu}{\nu} \leq \sup_{\psi: \YSpace \to [0, 1]}\abs{\Exp_{X \sim \nu}[\psi(X)] - \Exp_{X \sim \mu}[\psi(X)]}.
\end{align*}
For the other direction, fix an arbitrary measurable function $\psi: \YSpace \to [0, 1]$.
Now, by the Darth Vader rule, we have
\begin{align*}
\Exp_{X \sim P}[\psi(X)] = \int_0^1 \Prob_{X \sim P}[\psi(X) \geq t] \, dt.
\end{align*}
Hence,
\begin{align*}
\abs{\Exp_{X \sim P}[\psi(X)] - \Exp_{X \sim Q}[\psi(X)]}  
&= \abs*{\int_0^1 \Prob_{X \sim P}[\psi(X) \geq t] - \Prob_{X \sim Q}[\psi(X) \geq t] \, dt} \\
&\leq  \abs*{\int_0^1\TVD{P}{Q} \, dt} \\
&= \TVD{P}{Q}.
\end{align*}
\textbf{Data processing.}
Let $\YSpace$ denote the space over which $Y$ is defined, and fix $A \subseteq \YSpace$.
Then:
\begin{align*}
\abs{P_Y(A) - Q_Y(A)} 
&= \abs{\Exp_{X \sim P_X}[K_{Y \mid X}(A \mid X)] - \Exp_{X \sim Q_X}[K_{Y \mid X}(A \mid X)]} \\
&\leq \TVD{P_X}{Q_X},
\end{align*}
where I applied $\cref{eq:tvd_measurable_fun_characterisation}$ for $\psi(x) = K_{Y \mid X}(A \mid x)$.
\\
\noindent
\textbf{Two-alternative forced choice characterisation.}
Denote the distribution of $X$ as $\nu$.
The decision rule that minimises the probability of error is the MAP Bayes detector \citep{nielsen2013hypothesis}:
\begin{align*}
B' = \argmax_{b} \Prob[X \sim \mu_b \mid X],
\end{align*}
which has probability of error
\begin{align}
\Prob[B \neq B'] 
&= \Exp_{X \sim \nu}[\min\{\Prob[X \sim \mu_1 \mid X], \Prob[X \sim \mu_2 \mid X]\}] \nonumber\\
&= \Exp_{X \sim \nu}\left[\min\left\{\Prob[X \sim \mu_1]\frac{d\mu_1}{d\nu}(X), \Prob[X \sim \mu_2]\frac{d\mu_2}{d\nu}(X)\right\}\right] \tag{Bayes' rule} \\
&=\frac{1}{2} \Exp_{X \sim \nu}\left[\min\left\{\frac{d\mu_1}{d\nu}(X), \frac{d\mu_2}{d\nu}(X)\right\}\right] \nonumber\\
&= \frac{1}{4} \Exp_{X \sim \nu}\left[\frac{d\mu_1}{d\nu}(X) + \frac{d\mu_2}{d\nu}(X) - \abs*{\frac{d\mu_1}{d\nu}(X) - \frac{d\mu_2}{d\nu}(X)}\right] \nonumber \\
&= \frac{1}{2} \left(1 - \TVD{\mu_1}{\mu_2}\right), \nonumber 
\end{align}
where the fourth equality holds since $2\min\{a, b\} = a + b - \abs{a - b}$, and the last equality follows since (using the notation $(x)_+ = \max\{0, x\}$)
\begin{align*}
\Exp\left[\abs*{\frac{d\mu_1}{d\nu}(X) - \frac{d\mu_2}{d\nu}(X)}\right] 
&= \Exp\left[\left(\frac{d\mu_1}{d\nu}(X) - \frac{d\mu_2}{d\nu}(X)\right)_+ \!\!\!\!+ \left(\frac{d\mu_2}{d\nu}(X) - \frac{d\mu_1}{d\nu}(X) \right)_+\right] \\
&= 2\Exp\left[\left(\frac{d\mu_1}{d\nu}(X)\right)_+ \right] \\
&= 2 \TVD{\mu_1}{\mu_2}.
\end{align*}
\end{proof}
\par
\textbf{Why should we choose the total variation as our measure of error?}
When compressing some data $\rvx \sim P_\rvx$, we usually want to directly compare its quality to its reconstruction $\tilde{\rvx} \sim P_{\tilde{\rvx}}$.
However, as I will discuss at greater length in \cref{sec:nonlinear_transform_coding}, it is usually more convenient to transform $\rvx$ into another random variable $\rvy \sim P_{\rvy \mid \rvx}$ that we encode (which will also form the basis for using relative entropy coding), and use some deterministic function $g$ to reconstruct the data: $\tilde{\rvx} = g(\rvy)$.
Then, $P_{\tilde{\rvx} \mid \rvx} = g \pushfwd P_{\rvy \mid \rvx}$.
Now, the advantage of using total variation as the measure of error here is that if we have control over the TV error of the compressor, i.e.\ for some $\epsilon > 0$ it holds that $\TVD{P_{\tilde{\rvx}}}{P_{\rvx}} \leq \epsilon$, then switching to an approximate scheme to encode $\rvy$ also translates to a guarantee in ``data space.''
Namely, assume that we now use an approximate relative entropy coding algorithm whose output $\tilde{\rvy}$ has distribution $P_{\tilde{\rvy} \mid \rvx}$ and we have the TV guarantee that for some $\delta > 0$ it holds that $\TVD{P_{\tilde{\rvy} \mid \rvx}}{P_{\rvy \mid \rvx}} < \delta$.
Then, this translates to an $\epsilon + \delta$ TV error guarantee in ``data space'':
\begin{align}
\Exp[\TVD{P_{\rvx}}{g \pushfwd P_{\tilde{\rvy} \mid \rvx}}]
&\leq \Exp[\TVD{P_{\rvx}}{g \pushfwd P_{\rvy \mid \rvx}} + \TVD{P_{g \pushfwd P_{\rvy \mid \rvx}}}{g \pushfwd P_{\tilde{\rvy} \mid \rvx}}] \tag{triangle inequality} \\
&\leq \Exp[\TVD{P_{\rvx}}{g \pushfwd P_{\rvy \mid \rvx}} + \TVD{P_{P_{\rvy \mid \rvx}}}{P_{\tilde{\rvy} \mid \rvx}}] \tag{data processing} \\
&\leq \epsilon + \delta. \tag{assumptions}
\end{align}
\subsection{Step-limited Selection Samplers}
\par
As I mentioned at the start of this section, we can turn any exact selection sampler into an approximate one by limiting the number of proposal samples it is allowed to examine and returning one of the previously examined samples if it exceeds the limit:
\begin{definition}[Step-limited Selection Sampler]
\label{def:step_limited_selection_sampler}
Let $Q \ll P$ be probability measures, and let $(X_i)_{i\in \Nats}$ be a sequence of i.i.d.\ $P$-distributed random variables.
Furthermore, \added[id={PL}, comment={PL/email/4}]{let} $N, K$ be a selection sampler (\cref{def:exact_selection_sampler}) for $Q$ using $(X_i)_{i\in \Nats}$ and let $m$ be a positive integer.
Then, define the step-limited selection rule as
\begin{align*}
N_m = 
\begin{cases}
N &\text{if } N \leq m \\
f(X_{1:m}) &\text{otherwise, for some arbitrary function $f$,}
\end{cases}
\end{align*}
and denote their output distribution as $X_{N_m} \sim \tilde{Q}_m$.
The sampler is \textbf{$\epsilon$-approximate} if for some $\epsilon > 0$ it holds that
\begin{align*}
\TVD{Q}{\tilde{Q}} \leq \epsilon.    
\end{align*}
\end{definition}
\par
\textbf{Analysing step-limited selection samplers using relative entropy coding.}
Now, let us consider the total variation error of the sampler.
\added[id={CM}, comment={Corrected the argument below}]{Note that by construction, we have that}
\begin{align}
\tilde{Q}_m(A) &= \Prob[X_{N_m} \in A, N \leq m] + \Prob[X_{N_m} \in A, N > m] \nonumber\\
&= \Prob[X_N \in A, N \leq m] + \Prob[X_{N_m} \in A, N > m] \tag{by definition of $N_m$}\\
&= Q(A) - \Prob[X_N \in A, N > m] + \Prob[X_{N_m} \in A, N > m] \tag{$\Prob[X_N \in A] = Q(A)$} \\
&= Q(A) + \Prob[N > m] \cdot (\Prob[X_{N_m} \in A\mid N > m] - \Prob[X_N \in A \mid N > m]) \label{eq:step_limited_dist_identity}
\end{align}
Therefore:
\begin{align}
\TVD{\tilde{Q}_m}{Q} 
&= \sup_{A \in \sigmaAlgebra}\abs*{\tilde{Q}_m(A) - Q(A)} \tag{definition}\\
&= \sup_{A \in \sigmaAlgebra}\abs*{\Prob[N > m] \cdot (\Prob[X_{N_m} \in A\mid N > m] - \Prob[X_N \in A \mid N > m])} \tag{by \cref{eq:step_limited_dist_identity}}\\
&= \Prob[N > m] \cdot \sup_{A \in \sigmaAlgebra}\abs*{\Prob[X_{N_m} \in A\mid N > m] - \Prob[X_N \in A \mid N > m]} \nonumber\\
&\leq \Prob[N > m], \nonumber
\end{align}
where for the inequality, I recognised the supremum term as a total variation distance and used the fact that it is bounded above by $1$.
The importance of this result is that \textit{the tail probability of the selection rule $\Prob[N > m]$ controls approximation quality}, hence so long as it decays sufficiently quickly, it does not matter what our ``contingency selection rule'' $f$ is.
A setting that is almost the same as the step-limited selection sampler (\cref{def:step_limited_selection_sampler}) was first studied by \citet{block2023sample}, who used rejection sampling (\cref{alg:global_rs}) as the basis of their approximate sampler.
The difference between their setting and \cref{def:step_limited_selection_sampler} is that they need to introduce a second approximation step between $Q$ and $\tilde{Q}_m$ to have good control of $\TVD{\tilde{Q}_m}{Q}$; this also means that the bounds that they obtain are slightly inefficient.
However, as I show next, when using a selection sampler that has relative entropy coding guarantees, \cref{def:step_limited_selection_sampler} is enough to obtain essentially optimal approximate samplers.
The reason is that there is a serendipitous alignment between relative entropy coding and approximate sampling. In both cases, we try to skew the selection rule's distribution as much towards $1$ as possible.
Formally, the relationship can be seen from an application of Markov's inequality, which we can apply since $\lb N \geq 0$: 
\begin{align}
\label{eq:approx_sampling_log_Markov_ineq}
\Prob[N > m] = \Prob[\lb N > \lb m] \leq \frac{\Exp[\lb N]}{\lb m}.
\end{align}
This result shows that control over $\Exp[\lb N]$ directly translates to control over $\Prob[N > m]$ and hence $\TVD{\tilde{Q}_m}{Q}$.
In particular, this gives us a recipe for picking the step budget based on the desired level of approximation: for $\epsilon > 0$
\begin{align}
\label{eq:approx_sampling_necessary_budget}
\epsilon = \frac{\Exp[\lb N]}{\lb m} \quad \Leftrightarrow \quad m = 2^{\frac{\Exp[\lb N]}{\epsilon}}
\end{align}
But now, note that $\Exp[\lb N]$ was precisely what I bounded when analysing the average description length of A* coding and GPRS!
Therefore, we have the following corollary based on the above reasoning, taken together with the proofs of \cref{thm:global_a_star_codelength,thm:global_gprs_codelength}.
\begin{importantCorollary}
\label{corollary:approximate_a_star_and_gprs}
Let $Q \ll P$ be probability measures with bounded $r = dQ/dP$, and let $\epsilon > 0$ be the desired maximal total variation error.
Then:
\begin{enumerate}
\item \textbf{$\epsilon$-Approximate A* sampler.} Let $\PoissonProcess$ be the Poisson process over $ \YSpace \times \nonnegReals$ with mean measure $P \otimes \lambda$ used by \cref{alg:global_a_star} to propose points.
Then, define the step-limited A* selection rule as
\begin{align*}
N_m = \argmin_{n \in [1:m]}\{T_n / r(Y_n) \mid (Y_n, T_n) \in \PoissonProcess\},
\end{align*}
and denote $Y_{N_m} \sim \tilde{Q}_m$.
Then, for a budget of
\begin{align*}
m = \left\lceil \exptwo\left(\frac{\KLD{Q}{P} + 1}{\epsilon}\right)\right\rceil,
\end{align*}
we have $\TVD{\tilde{Q}_m}{Q} \leq \epsilon$.
\item \textbf{$\epsilon$-Approximate greedy Poisson rejection sampler.} Let $N$ be the index returned by \cref{alg:global_gprs}.
Then, define the step-limited GPRS selection rule as
\begin{align*}
N_m = \begin{cases}
N &\text{if } N \leq m \\
m &\text{if } N > m,
\end{cases}
\end{align*}
and denote $Y_{N_m} \sim \tilde{Q}_m$.
Then, for a budget of
\begin{align*}
m = \left\lceil \exptwo\left(\frac{\KLD{Q}{P} + 1}{\epsilon}\right)\right\rceil,
\end{align*}
we have $\TVD{\tilde{Q}_m}{Q} \leq \epsilon$.
\end{enumerate}
\end{importantCorollary}
\begin{proof}
The proofs in both cases follow from combining \cref{eq:approx_sampling_log_Markov_ineq,eq:approx_sampling_necessary_budget} with the facts that for both A* sampling and GPRS, I have shown that their selection rule $N$ satisfies
\begin{align*}
\Exp[\lb N] \leq \KLD{Q}{P} + 1.
\end{align*}
\end{proof}
Similar approximation results can be shown for depth-limited variants of branch-and-bound A* sampling and GPRS (see \cref{alg:a_star_sac,alg:gprs_sac} in  \cref{chapter:branch_and_bound} below) and for other relative entropy coding algorithms where we have control over $\Exp[\lb N]$, such as ordered random coding \citep{theis2022algorithms}.
\Cref{corollary:approximate_a_star_and_gprs} improves upon the approximate rejection sampler proposed and analysed by \citet{block2023sample}, as it exhibits two $\epsilon$-approximate sampling algorithms that achieve the same $\epsilon$ TV error guarantee at a lower sample complexity.
In fact, the approximate algorithms in \cref{corollary:approximate_a_star_and_gprs} are almost completely optimal, as the following special case of Theorem 5 of \citet{block2023sample} shows:
\begin{theorem}[Minimum necessary sample complexity of $\epsilon$-approximate selection samplers]
Let $0 < \epsilon \leq 1/4$ and $\delta > \lb (e) - 1$.
Then, there exist a pair of probability measures $Q \ll P$ such that $\KLD{Q}{P} \leq \delta$ and any selection rule $N_m$ over $m$ i.i.d.\ $P$-distributed samples $(X_i)_{i = 1}^m$ with output distribution $X_{N_m} \sim \tilde{Q}_m$ that satisfies $\TVD{\tilde{Q}_m}{Q} \leq \epsilon$ must have
\begin{align*}
m \geq \frac{1}{2}\exptwo\left(\frac{\delta}{2\epsilon}\right).
\end{align*}
\end{theorem}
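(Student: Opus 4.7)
The plan is to establish the lower bound by exhibiting an explicit hard instance: a pair of Bernoulli-like distributions on a two-point space, chosen so that the target $Q$ places noticeable mass on an event that is very rare under $P$. The sampler must replicate this mass, but any function of $m$ i.i.d.\ $P$-samples can only do so if at least one of the samples lands in the rare event, giving a Bernoulli-trial bottleneck.

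Concretely, I would set $\YSpace = \{0,1\}$, $P(\{1\}) = p$, $Q(\{1\}) = \alpha$, with $\alpha \in (0,1/2]$ chosen as $\alpha = 2\epsilon$ (using the hypothesis $\epsilon \leq 1/4$) and $p$ chosen as the largest value such that $\KLD{Q}{P} \leq \delta$. A short calculation with the binary KL
\begin{align*}
\KLD{Q}{P} = \alpha\, \lb\!\frac{\alpha}{p} + (1-\alpha)\, \lb\!\frac{1-\alpha}{1-p}
\end{align*}
gives, after discarding the (negative) second term and solving, a value of $p$ of order $2\epsilon \cdot \exptwo(-\delta/(2\epsilon))$; the hypothesis $\delta > \lb e - 1$ is what guarantees this choice is consistent (the neglected second term stays small enough that the resulting $p$ really does satisfy the KL constraint). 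This is the step where I expect the bookkeeping to be the fiddliest, since one must verify that the simplified one-term KL bound is tight enough to match the stated constant $1/2$ in front of $\exptwo(\delta/(2\epsilon))$.

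Next, for any selection rule $N_m$ valued in $[1:m]$ based on i.i.d.\ $P$-samples $X_1, \ldots, X_m$, the output $X_{N_m}$ can land in $\{1\}$ only if at least one of the $X_i$ equals $1$. By the union bound,
\begin{align*}
\Prob[X_{N_m} = 1] \leq \Prob[\exists i \in [1:m]: X_i = 1] \leq m \cdot p.
\end{align*}
On the other hand, the TV constraint $\TVD{\tilde Q_m}{Q} \leq \epsilon$ applied to the set $\{1\}$ gives $\Prob[X_{N_m} = 1] \geq \alpha - \epsilon = \epsilon$. Combining the two inequalities yields $m \geq \epsilon / p$, and substituting the value of $p$ derived above produces precisely $m \geq \tfrac{1}{2}\exptwo(\delta/(2\epsilon))$.

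The main obstacle I anticipate is the careful parameter tuning in the first step: the construction is tight only when $\alpha$ is matched to $2\epsilon$, and the hypothesis $\delta > \lb e - 1$ is what keeps the $(1-\alpha)$ term in the binary KL from spoiling the one-term approximation. Once that is pinned down, the sampler lower bound itself is essentially a one-line Bernoulli-trial calculation, and the optimisation $\alpha = 2\epsilon$ can be motivated by noticing that a larger $\alpha$ would make the KL term $\alpha \lb(\alpha/p)$ dominate (forcing $p$ even smaller and $m$ correspondingly larger, but at the expense of feasibility), while a smaller $\alpha$ makes the TV margin $\alpha - \epsilon$ collapse; $\alpha = 2\epsilon$ is the sweet spot that balances these two effects.
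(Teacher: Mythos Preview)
The paper does not give its own proof of this statement: it is quoted as a special case of Theorem~5 of \citet{block2023sample} and used as a benchmark against which \Cref{corollary:approximate_a_star_and_gprs} is compared. So there is no in-paper argument to compare against.

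That said, your proposed proof is correct and in fact cleaner than you anticipate. The ``fiddly bookkeeping'' you worry about in the first step largely evaporates: with $\alpha = 2\epsilon$ and $p = \alpha \cdot \exptwo(-\delta/\alpha) = 2\epsilon \cdot \exptwo(-\delta/(2\epsilon))$, the first term of the binary KL equals $\delta$ exactly, and since $p < \alpha$ the second term $(1-\alpha)\lb\frac{1-\alpha}{1-p}$ is \emph{negative}. Hence $\KLD{Q}{P} < \delta$ holds without any further estimation, and no tightness argument is needed to recover the constant $\tfrac{1}{2}$. The rest of your argument (the inclusion $\{X_{N_m}=1\} \subseteq \{\exists i: X_i = 1\}$, the union bound $\tilde Q_m(\{1\}) \leq mp$, and the TV lower bound $\tilde Q_m(\{1\}) \geq \alpha - \epsilon = \epsilon$) goes through verbatim and yields $m \geq \epsilon/p = \tfrac{1}{2}\exptwo(\delta/(2\epsilon))$ on the nose.

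One observation: your construction does not actually seem to require the hypothesis $\delta > \lb(e) - 1$; it works for any $\delta > 0$. That condition is presumably an artefact of the more general parameterisation in the original \citet{block2023sample} statement from which the thesis extracts this special case, rather than something intrinsic to the two-point hard instance you build.
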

This shows that the algorithms in \cref{corollary:approximate_a_star_and_gprs} are optimal up to a factor of $2$ in the exponent and up to a factor of $2$ in front of the exponent.
\section[Considerations for Implementation on a Computer]{Considerations for Implementation \\ on a Computer}
\label{sec:implementation_considerations}
This section briefly discusses some practical tricks and considerations when implementing relative entropy coding algorithms on a digital computer.
\subsection{Using Gumbel Variates to Improve Numerical Stability}
\par
When using one of the Poisson process-based selection samplers, such as \cref{alg:global_a_star} or \cref{alg:global_gprs}, we usually need to simulate a large number of arrivals from the proposal process
$\PoissonProcess$ over $\YSpace \times \nonnegReals$ with mean measure $P \otimes \lambda$.
Concretely, if we wish to draw samples from $Q$ with $r = dQ/dP$, then by \cref{thm:selection_sampler_runtime_lower_bound}, we need to simulate at least $\norm{r}_\infty$ number of samples on average.
There is a potential problem with this when $\norm{r}_\infty$ is very large:  the order of magnitude of most of the simulated times will reach $\norm{r}_\infty$, which might become problematic to represent at a certain floating-point precision.
\par
Here, I adopt the standard solution to such problems, which is to perform all computation in the log-domain, i.e.\ to compute log-arrival times.
That is, we maintain the logarithm of the $n$th arrival time $L_n = \ln T_n$, we simulate the next inter-arrival time $\Delta_{n + 1} \sim \Exponential(1)$ and compute 
\begin{align*}
L_{n + 1} = \ln(\exp(L_n) + \Delta_{n + 1}).
\end{align*}
While this is a perfectly sensible thing to do, we can efficiently implement it in a numerically stable way via the \texttt{logaddexp} operation.%
\footnote{$\mathtt{logaddexp}(a, b)$ finds $\mu = \max\{a, b\}$, and computes ${\mu + \ln (\exp(a - \mu) + \exp(b - \mu))}$.
This is equivalent to computing $\ln (\exp(a) + \exp(b))$, but we need to exponentiate much smaller numbers, thus making it more stable numerically.}
The distribution of $L_n$ is related to the well-known Gumbel distribution:
\begin{align*}
\Prob[-L_{n + 1} \leq a \mid L_n] 
&= \Prob[\ln (\exp(L_n) + \Delta_{n + 1}) \geq a \mid L_n] \\
&= \Prob[\Delta \geq \exp(a) - \exp(L_n) \mid L_n] \\
&= \exp(-(\exp(a) - \exp(L_n)))\Ind[a \geq L_n],
\end{align*}
which we recognise as the CDF of a standard Gumbel distribution truncated to the interval $(-\infty, L_n)$.
In fact, \citet{maddison2014sampling} originally formulated A* sampling using ``Gumbel processes,'' which \citet{maddison2016poisson} reformulated in the language of Poisson processes.
\subsection{Efficient Implementation of the Common Randomness in Channel Simulation via Foldable PRNG Seeds}
\par
As mentioned throughout this thesis, the sender and receiver usually implement the common randomness required for channel simulation / relative entropy coding by sharing the seed for their pseudo-random number generator (PRNG).
For example, the sender can include the seed at negligible extra cost in the metadata of the message they send to the receiver.
But what exactly is the structure of a PRNG?
A fairly general model is that the PRNG consists of a single function $\mathtt{sample}(\rvs, P) = (\rvs', \rvx)$ that receives the seed $\rvs$ (e.g.\ a 32 or 64-bit integer) and the description of a distribution $P$ and depending on the support of $P$ outputs a new seed $\rvs'$ and a fixed- or floating-point number $\rvx$ at a certain precision.
Importantly, $\mathtt{sample}$ is deterministic, meaning that $\mathtt{sample}(\rvs, P)$ will always return the same $\rvs', \rvx$ tuple, we only have $\rvx \sim P$ when ``averaging over'' different seeds $\rvs$.
We can already implement our relative entropy coding algorithms \cref{alg:global_a_star,alg:global_gprs} using $\mathtt{sample}$.
First, the sender shares their initial seed $\rvs_1$ with the receiver.
Then, they simulate the arrivals of the base Poisson process in \cref{alg:global_pp_simulation} by chaining the seeds output by $\mathtt{sample}$: at step $n \geq 1$, the sender simulates $\rvs_n', \Delta_n \gets \mathtt{sample}(\rvs_n, \Exp(1))$ and $\rvs_{n + 1}, Y_n \gets \mathtt{sample}(\rvs_n, P)$.
They continue this process until they find the selected index $N$, which they transmit to the receiver.
Then, the receiver can simulate the selected sample $Y_N$ by chaining their $\mathtt{sample}$ functions the same way the encoder did.
\par
\textbf{Improving the speed using foldable PRNGs.}
Observe that in the above procedure, the decoding algorithm's runtime matches the runtime of the encoding algorithm.
However, if we have a slightly more sophisticated PRNG, we can significantly reduce the decoder's runtime.
In particular, assume that the PRNG offers a second function: $\mathtt{fold\_in}(\rvs, n) = \rvs'$, which takes a seed $\rvs$ and an integer $n$ and produces a new seed $\rvs'$.
Then, the sender can modify their simulation procedure as follows.
They share a \textit{base seed} $\rvs$ with the receiver.
Then, at each step $n \geq 1$, they compute $\rvs_n \gets \mathtt{fold\_in}(\rvs, n)$ and simulate $\rvs_n', \Delta_n \gets \mathtt{sample}(\rvs_n, \Exp(1))$ and $\rvs_{n}'', Y_n \gets \mathtt{sample}(\rvs_n, P)$.
Finally, they communicate the selected index $N$ to the receiver as before.
\par
The advantage of this modified procedure is that this way, \textbf{the receiver can decode the selected sample in one step}: given the base seed $\rvs$ and the index $N$ they decoded, they compute $\rvs_N \gets \mathtt{fold\_in}(\rvs, N)$ using which they can immediately simulate $Y_N$!
Such a speedup can be vital in implementing relative entropy coding in low-latency, high-throughput situations, such as video streaming.
In such scenarios, the encoding time is irrelevant from the decoder's perspective. 
However, fast decoding is essential to avoid negatively impacting the user experience.
Hence, I implemented all algorithms I used for the experiments in \cref{sec:numerical_experiments} in \texttt{Python} using the \texttt{jax} numerical computing library \citep{jax2018github}, as it implements the \texttt{Threefry} PRNG \citep{salmon2011parallel}, which offers the folding functionality I described above.
\section{Conclusion and Open Questions}
\label{sec:point_processes_conclusion}
This chapter presented a unifying picture of how Poisson processes can be used to construct sampling and relative entropy coding algorithms.
My contributions in this chapter are the construction and analysis of greedy Poisson rejection sampling in \cref{sec:global_gprs}, parallelised sampling in \cref{sec:superposition_parallelisation} and almost optimal approximate selection sampling in \cref{sec:approximate_sampling}.
\par
There are many possible directions for future research based on what I presented here.
First, it would be interesting to investigate if using \textit{quasi-random number generators} (QRNG) using low-discrepancy sequences, such as Sobol sequences, could be used to develop more efficient relative entropy coding.
Quasi-Monte Carlo methods are known to have strictly better convergence guarantees when used for numerical integration.
However, can they also lead to benefits in the approximate sampling/relative entropy coding case?
\par
Another direction is to investigate whether we could use different point processes to construct sampling algorithms, such as determinantal point processes
\citep{kulesza2012determinantal}.
For example, the original greedy rejection sampling algorithm of \citet{harsha2010communication}, which I generalised to general Polish spaces in \citet{flamich2023adaptive,flamich2023grc}, can be formulated as using a point process that is a rather unnatural approximation of a Poisson process.
The main difficulty in this direction is that giving up the ``complete randomness'' of the point process will likely complicate the analysis significantly.
Nonetheless, such an algorithm could be significantly more efficient than general-purpose selection samplers in some cases.
\par
Finally, it is exciting to consider whether any of the techniques I presented here could be extended for the construction and analysis of Markov chain Monte Carlo algorithms, too.
Furthermore, from the compression perspective, it would be interesting to consider if we could construct a relative entropy coding algorithm using Markov chain Monte Carlo methods.

\chapter[Fast Relative Entropy Coding with Branch-and-Bound Samplers]{Fast Relative Entropy Coding with Branch-and-Bound Samplers}
\label{chapter:branch_and_bound}
While the algorithms I developed in the previous chapter are mathematically elegant, they are general-purpose selection samplers and are all \textit{slow}.
If we cannot assume that our sample space $\YSpace$ possesses any structure beyond being Polish, selection sampling is virtually the only thing we can do, and \cref{thm:selection_sampler_runtime_lower_bound} already shows that selection sampling, and \cref{alg:global_rs,alg:global_a_star,alg:global_gprs} in particular, are optimal.
From a search perspective, we can think of these algorithms as performing linear search: the proposal Poisson process $\PoissonProcess = ((Y_1, T_1), (Y_2, T_2), \hdots)$ is an infinite ordered list, and the samplers I developed examine these points one-by-one.
However, in many cases, we might know a great deal more about both our sample space and the target-proposal density ratio, which, if we incorporate it into our sampling algorithm, can speed it up significantly.
\par
A prevalent advanced search paradigm is binary / branch-and-bound search.
The high-level idea is that we recursively \textit{break up the problem} of finding an item in the whole of space into smaller subproblems over smaller and smaller parts of space.
Furthermore, each time we split up the problem, we use additional information about the problem at hand to \textit{eliminate parts of the search space} that cannot contain the object we are looking for.
The variants of A* sampling and GPRS I develop in \cref{sec:branch_and_bound_samplers} will follow this line of thought; hence, I will collectively refer to them as \textit{branch-and-bound samplers}.
\added[id={CM}, comment={Addressing ``You mentioned in Ch2 that it's optimally fast, can you just say again, which settings is it optimal, i.e., what are the structural assumptions?''}]{After describing them, I analyse their average codelength and runtime.
The highlight of the chapter is \Cref{thm:bnb_gprs_runtime}, which shows that for one-dimensional, unimodal density ratios, branch-and-bound GPRS achieves $\Oh(\KLD{Q}{P})$ runtime. This result shows that branch-and-bound GPRS is the first optimally fast relative entropy coding algorithm, since any such algorithm needs at least $\KLD{Q}{P}$ steps to output the bits of the code for the sample they return. 
Unfortunately, a GPRS requires that we be able to compute the width function of the density ratio, which is a significant limitation for general-purpose sampling.
Fortunately, we can use branch-and-bound A* sampling instead, which I show to have a slightly worse $\Oh(\infD{Q}{P})$ runtime (compared to GPRS) for the same problem class in \cref{thm:bnb_a_star_runtime}, without having to compute the width function.
However, I conjecture that the $\Oh(\infD{Q}{P})$ runtime cannot be improved to $\Oh(\KLD{Q}{P})$ without further structure.}
Now, before I move on to describing these algorithms and their analysis, I will develop the necessary mathematical language in the next section.
\section[Advanced Poisson Process Theory]{Advanced Poisson Process Theory}
\par
This section first develops an alternative way to construct Poisson processes when additional geometric structure is available.
Then, it discusses how to simulate the alternative construction on a computer.
\subsection{Splitting Functions and Binary Space-partitioning Trees}
\label{sec:splitting_functions_and_bsp_trees}
\begin{figure}[t]
\centering
\includegraphics{4-RelativeEntropyCodingWithPoissonProcesses/img/bsp_tree_illustration.tikz}
\caption[Illustartion of a binary space-partitioning tree]{Illustration of various concepts from \cref{sec:splitting_functions_and_bsp_trees}. The \textbf{solid} nodes indicate a \textbf{search tree} $\SubTree$ of the BSP tree induced by the dyadic splitting function (\cref{def:dyadic_splitting_function}).
The \textbf{dashed} nodes indicate the \textbf{frontier} $\Frontier_\SubTree$ (\cref{def:frontier}).
The {\color{red} red} nodes form a \textbf{proper, maximal}, search tree $\SubTree' \lhd \SubTree$.
Each node has three labels associated with it: \textbf{H}: the heap index, \textbf{D}: the depth and \textbf{B}: the bounds of the node.
The heap index enumerates the tree nodes in a breadth-first fashion: the numbers increase top-to-bottom and left-to-right.
The node's depth indicates its distance to the root node.
The picture also illustrates \cref{lemma:bsp_frontier_partitions_space}: the bounds associated to the frontier of every sub-search tree of $\SubTree$ form a partition of $(0, 1]$.
}
\label{fig:bsp_tree_illustration}
\end{figure}%
\par
\textbf{Splitting functions.}
The starting point is a way to break the search space into smaller parts to reduce the larger search problem into subproblems.
This consideration leads me to the following definition.
\begin{definition}[Splitting function]
Let $(\YSpace, \sigmaAlgebra, P)$ be a measure space.
Then, a (possibly random) splitting function $\splitFun: \sigmaAlgebra \to \sigmaAlgebra \times \sigmaAlgebra$ for the measure space takes a measurable set $A \in \sigmaAlgebra$ and returns a $P$-almost partition of $A$:
\begin{align*}
\splitFun(A) = L, R \quad \Rightarrow\quad L, R \subseteq A, \,\, L \cap R = \emptyset, \,\,P(A \setminus (L \cup R)) = 0.
\end{align*}
\end{definition}
The splitting function is the core mechanism that I  leverage to implement the branch-and-bound search algorithms. 
My definition is quite abstract to demonstrate the generality of the idea.
However, I now give the two most important examples, the latter of which I will use for the rest of the thesis.
\begin{definition}[Dyadic splitting function.]
\label{def:dyadic_splitting_function}
Let $\YSpace = (0, 1]$ be the half-open unit interval equipped with the standard Borel $\sigma$-algebra.
Then, for $0 < l < r \leq 1$, the dyadic splitting function is defined as
\begin{align*}
\splitFun((l, r]) = \big(l, (l + r) / 2\big],\,\, \big((l + r)/2, r\big]
\end{align*}
\end{definition}
In plain words, the dyadic splitting function takes half-open intervals and splits them exactly in half at their midpoint.
The reason why I call this function \textit{dyadic} is that if we start by splitting the unit interval $(0, 1]$ and then apply $\splitFun$ repeatedly to the sets it produces, their sizes will be negative powers of two and their endpoints will be dyadic integers, see \cref{fig:bsp_tree_illustration} for an illustration.
A slightly more complex splitting function, though similar in spirit, is what I call the \textit{on-sample} splitting function.
\begin{definition}[On-sample splitting function.]
\label{def:on_sample_splitting_function}
Let $\YSpace$ be a connected subset of $\Reals$ with the standard Borel $\sigma$-algebra and some non-atomic measure $P$ over it.
Then, the on-sample splitting function for some $l, r \in \YSpace$ with $l < r$ is defined as
\begin{align*}
\splitFun((l, r]) = (l, Y],\,\, (Y, r] \quad Y \sim P\vert_{(l , r]}.
\end{align*}
where $P\vert_{A}(B) = P(A \cap B) / P(A)$.
\end{definition}
Thus, the on-sample splitting function randomly splits intervals in two instead of always picking the midpoint as the ``pivot.''
While the randomisation might appear an unnecessary complication at first, it will turn out to be the key to designing the optimally fast variants of A* sampling and GPRS, and thus the samplers and analyses in \cref{sec:branch_and_bound_samplers} will revolve entirely around this splitting function.
\par
As I already mentioned, I will repeatedly apply $\splitFun$ to its own output, which now leads to the following important definition.
\begin{definition}[Induced binary space-partitioning tree]
Let $(\YSpace, \sigmaAlgebra, P)$ be a measure space with splitting function $\splitFun$.
Then, the binary space-partitioning tree (BSP-tree) induced by $\splitFun$ is an infinite, (possibly randomly) labelled, rooted, directed binary tree $\Tree$ with root node $\rho$, with the following labelling rules:
\begin{enumerate}
\item The labels take values in $\sigmaAlgebra$.
Thus, let $B_\nu \in \sigmaAlgebra$ denote the label of an arbitrary node $\nu \in \Tree$.
I will refer to $B_\nu$ as the \textbf{bounds} associated with $\nu$.
\item The label associated with the root $\rho$ is $\YSpace$.
\item 
For each node $\nu \in \Tree$ with children $\alpha, \beta$, set
\begin{align*}
\splitFun(B_\nu) = B_\alpha,\,\, B_\beta.
\end{align*}
\end{enumerate} 
\end{definition}
The utility of the induced BSP-tree of some splitting function $\splitFun$ is that it conveniently collects all the sets we can obtain by repeatedly applying $\splitFun$ to its own output.
Next, I introduce another key concept that I will make heavy use of.
\begin{definition}[Search tree and frontier]
\label{def:frontier}
Let $(\YSpace, \sigmaAlgebra, P)$ be a measure space, with splitting function $\splitFun$ and induced BSP-tree $\Tree$.
Then, I will refer to finite subtrees $\SubTree$ of $\Tree$ as \textbf{search trees}.
Furthermore, let $\Frontier_\SubTree$ collect the descendants of the leaves in $\SubTree$ in $\Tree$:
\begin{align*}
\Frontier_{\SubTree} = \{\nu \in \Tree \mid \nu \text{ is a leaf node in } \SubTree\}.
\end{align*}
Then, I call the $\Frontier_\SubTree$ the \textbf{frontier} of the search tree $\SubTree$.
\end{definition}
The frontier has an intuitive interpretation from the search perspective: assuming we have already examined the nodes in a search tree $\SubTree$, the frontier $\Frontier$ represents the set of nodes we can examine next; see \cref{fig:bsp_tree_illustration} for an illustration.
The following is an important result about frontiers.
\begin{lemma}[The frontier partitions the ambient space]
\label{lemma:bsp_frontier_partitions_space}
Let $(\YSpace, \sigmaAlgebra, P)$ be a measure space, with splitting function $\splitFun$ and induced BSP-tree $\Tree$.
Let $\SubTree$ be a search tree of $\Tree$ with frontier $\Frontier$.
Then, $\bigcup_{\nu \in \Frontier} B_\nu$ is a $P$-almost partition of $\YSpace$.
\end{lemma}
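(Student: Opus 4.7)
The plan is to prove the statement by induction on the number of internal (non-leaf) nodes of the search tree $\SubTree$, using the fact that $\splitFun$ produces a $P$-almost partition at every split.

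For the base case, when $\SubTree$ has zero internal nodes, it consists only of the root $\rho$. Then $\Frontier_\SubTree = \{\rho\}$ and $B_\rho = \YSpace$ by definition of the BSP-tree, which trivially is a $P$-almost partition of $\YSpace$. For the inductive step, I assume the claim for all search trees with $k$ internal nodes and consider a search tree $\SubTree$ with $k+1$ internal nodes. The first observation I would establish is that since $\SubTree$ is a finite binary tree with at least one internal node, there must exist an internal node $\mu \in \SubTree$ whose two children $\alpha, \beta$ are both leaves of $\SubTree$ (a standard argument: pick any internal node of $\SubTree$ at maximum depth).

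I then form the pruned search tree $\SubTree' = \SubTree \setminus \{\alpha, \beta\}$, which has exactly $k$ internal nodes. Its frontier is
\begin{align*}
\Frontier_{\SubTree'} = \bigl(\Frontier_{\SubTree} \setminus \{\alpha, \beta\}\bigr) \cup \{\mu\},
\end{align*}
since $\mu$ becomes a leaf of $\SubTree'$ and no other frontier relationships change. By the inductive hypothesis, the sets $\{B_\nu\}_{\nu \in \Frontier_{\SubTree'}}$ form a $P$-almost partition of $\YSpace$. Now I invoke the defining property of $\splitFun$ applied to $B_\mu$: the children satisfy $B_\alpha \cap B_\beta = \emptyset$, $B_\alpha \cup B_\beta \subseteq B_\mu$, and $P\!\left(B_\mu \setminus (B_\alpha \cup B_\beta)\right) = 0$. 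Substituting $B_\mu$ with $B_\alpha \cup B_\beta$ in the partition inherited from $\SubTree'$ therefore yields a $P$-almost partition $\{B_\nu\}_{\nu \in \Frontier_{\SubTree}}$, as required.

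The only technical care needed is in bookkeeping the $P$-null sets: each split introduces at most one $P$-null discrepancy, and since $\SubTree$ is finite, the union of these discrepancies over all splits is a finite union of $P$-null sets and hence itself $P$-null. This keeps the ``$P$-almost'' qualifier intact throughout the induction, so no genuine obstacle arises. The main subtlety — more stylistic than mathematical — is just choosing the right induction variable (internal nodes rather than total nodes or depth), which makes the peeling step clean.
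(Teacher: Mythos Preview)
Your proof is correct and takes essentially the same inductive approach as the paper: each step applies the $P$-almost-partition property of $\splitFun$ once to relate a parent's bounds to its two children's. The paper runs the induction in the opposite direction---building $\SubTree$ up from the root by \emph{expanding} one frontier node at a time rather than pruning sibling leaves---but the content is identical.
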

The proof of this fact proceeds by run-of-the-mill induction.
However, it is useful because it mimics an important feature of the search algorithms I develop in \cref{sec:branch_and_bound_samplers}: obtaining a new search tree from an old one by taking a node from its frontier and \textbf{expanding} or splitting that node.
\begin{proof}
Let $\SubTree'$ be a \textit{maximal, proper} sub-search tree of $\SubTree$, if there is precisely one fewer node in $\SubTree'$ and the missing node $\nu$ is in the frontier of $\SubTree'$.
Formally, $\abs{\SubTree'} + 1 = \abs{\SubTree}$ and $\nu \in \Frontier_{\SubTree'}$ for $\nu \in \SubTree \setminus \SubTree'$.
I will denote this relationship by $\SubTree' \lhd \SubTree$, and it implies that we can obtain $\SubTree$ from $\SubTree'$ by \textbf{expanding} one of the nodes on its frontier $\Frontier_{\SubTree'}$ and adding it to $\SubTree'$; see \cref{fig:bsp_tree_illustration} for an illustration.
\par
Now, let $\rho = \SubTree_1 \lhd \SubTree_2 \lhd \hdots \lhd \SubTree_n = \SubTree$ be a chain of maximal, proper, search trees.
Note that such a chain always exists and is finite with length $n = \abs{\Frontier_\SubTree} + 1$: the frontier of the root node $\rho$ has two elements and going from $\SubTree_i$ to $\SubTree_{i + 1}$ always increases the size of the frontier by $1$. 
This is because we remove the leaf node $\nu \in \SubTree_i$ that we expand from the next frontier, but now its two children become leaf nodes in the new tree $\SubTree_{i + 1}$.
\par
To prove the statement of the lemma, we proceed by induction.
For the base case, note that the root node's label $B_\rho = \YSpace$ and hence its descendants' associated bounds are the two outputs of $\splitFun(\YSpace)$, which form a $P$-almost partition of $\YSpace$ by definition.
Then, assuming the statement holds for some $1 \leq j < n$, we can perform the inductive step as follows.
Let $\mu \in \SubTree_j$ the node that we expand to obtain $\SubTree_{j + 1}$ with descendants $\alpha, \beta$.
Now, note that $\bigcup_{\nu \in \Frontier_{\SubTree_i} \setminus \mu} B_\nu$ forms a $P$-almost partition of $\YSpace \setminus B_\mu$ and $\{B_{\alpha}, B_{\beta}\}$ form a $P$-almost partition of $B_\mu$, hence taking the union of these two sets forms a $P$-almost partition of $\YSpace$.
But $(\Frontier_{\SubTree_i} \setminus \mu) \cup \{\alpha, \beta\}$ is precisely $\Frontier_{\SubTree_{j + 1}}$, which finishes the proof.
\end{proof}
For data compression, it will also be important to be able to refer to each node of an induced BSP-tree using a positive integer; hence, I introduce this next.
\begin{definition}[Heap index and depth]
\label{def:heap_index_and_depth}
Let $\Tree$ be an infinite, rooted, directed binary tree.
Then, we can address each node $\nu \in \Tree$ via a unique positive integer that I will call the heap index of $\nu$, defined as follows:
\begin{enumerate}
\item The heap index of the root node $\rho$ is $1$.
\item For a node $\nu \in \Tree$ with heap index $H_\nu$ and children $\alpha, \beta$, we set
\begin{align*}
H_\alpha &= 2 H_\nu \\
H_\beta &= 2 H_\nu + 1.
\end{align*}
\end{enumerate}
Finally, I define the depth of a node $\nu$ in the tree as 
\begin{align*}
D_\nu = \lfloor \lb (H_\nu) \rfloor.
\end{align*}
\end{definition}
Intuitively, the heap index is a breadth-first enumeration of the nodes of the binary tree, and the depth of a node is its distance from the root; see \cref{fig:bsp_tree_illustration} for an illustration.
\par
The fast samplers I develop in \cref{sec:branch_and_bound_samplers} will use the on-sample splitting function, but of course they will not realise its entire induced BSP-tree.
Instead, they will use a \textit{search heuristic} to inform the algorithms which part of space they should investigate next.
In technical terms, these algorithms will maintain a search tree $\SubTree$, and the search heuristic will guide which node on its frontier the algorithms should expand next.
\subsection[Simulating Poisson Processes via Binary Space-partitioning Trees]{Simulating Poisson Processes via \\ Binary Space-Partitioning Trees}
\label{sec:simulating_bnb_poisson_processes}
In this section, I combine Poisson processes with the machinery of splitting functions and BSP-trees I developed in the previous section.
The two central ideas I develop are 1) how a BSP-tree induced by a splitting function can be used to construct an alternative way of simulating a Poisson process and 2) connecting the indices given by ordering of the points of a Poisson process given by their arrival times and the heap indices of the points induced by a BSP-tree.
The following construction shows the first central idea.
\begin{definition}[BSP structure over a Poisson process.]
\label{def:bsp_structure_over_poisson_process}
Let $(\YSpace, \sigmaAlgebra, P)$ be a probability space, and $\PoissonProcess$ a Poisson process over $\YSpace \times \nonnegReals$ with mean measure $P \otimes \lambda$.
Let $\splitFun$ be a splitting function for $\YSpace$ that can depend on $\PoissonProcess$ and denote its induced BSP-tree over $\YSpace$ by $\Tree$.
Then, the \textbf{binary space-partitioning structure} $\Tree_\PoissonProcess$ over $\PoissonProcess$ is the following recursive extension of the BSP-tree $\Tree$:
\begin{enumerate}
\item We adjoin the first arrival $(Y_1, T_1)$ of $\PoissonProcess$ and the \textbf{time index} $1$ to the label of the root node of $\Tree$.
\item For any further node $\nu \in \Tree$ with associated bounds $B_\nu$, we adjoin the first arrival $(Y_\nu, T_\nu)$ of $\PoissonProcess$ restricted to $B_\nu$ that is not already adjoined to an ancestor of $\nu$.
Furthermore, we also adjoin the \textbf{time index} $N_\nu$ of the arrival, that is $N_\nu = n$ if $(Y_\nu, T_\nu)$ is the $n$th arrival of $\PoissonProcess$.
\end{enumerate}
\end{definition}
Importantly, as the following result shows, we almost surely do not ``lose'' any of the points of a Poisson process $\PoissonProcess$ in its BSP structure.
\begin{lemma}[BSP structure contains almost all points of its underlying Poisson process]
\label{lemma:bsp_structure_is_complete}
Let $\YSpace$, $\PoissonProcess$, $\Tree_\PoissonProcess$ as in \cref{def:bsp_structure_over_poisson_process}.
Then, $\Tree_\PoissonProcess$ contains $P$-almost all points of $\PoissonProcess$.
\end{lemma}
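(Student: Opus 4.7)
The plan is to show that almost surely every point of $\PoissonProcess$ is adjoined to some node of $\Tree_\PoissonProcess$; since $\PoissonProcess$ has countably many points, a union bound then yields the conclusion. The heart of the argument is a pigeonhole run along a suitable root-to-leaf path in $\Tree$, using that the ``earliest not-yet-attached'' rule attaches an arrival with time index at most $n$ whenever $\pi_n$ is in play.

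First, I would fix a point $\pi_n = (Y_n, T_n) \in \PoissonProcess$ in the natural time ordering and exhibit an infinite descending path $\nu_0, \nu_1, \nu_2, \ldots$ in $\Tree$ with $Y_n \in B_{\nu_d}$ for every $d \geq 0$. This follows from \cref{lemma:bsp_frontier_partitions_space} applied to the complete binary subtree of $\Tree$ truncated at each depth: the corresponding frontier $P$-almost partitions $\YSpace$, so $Y_n$ lies in exactly one cell at each depth almost surely, and the containing cells at depths $d$ and $d+1$ are parent--child in $\Tree$. The only possible obstruction is that $Y_n$ might land on the boundary of some split, but there are countably many splits and each boundary is $P$-null by definition of $\splitFun$, so the union of these exceptional events has $P$-measure zero while $Y_n \sim P$ conditionally on being an arrival of $\PoissonProcess$.

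Next, I would run the pigeonhole on the time indices $k_d = N_{\nu_d}$ of the arrivals adjoined along this path. Because sibling bounds are disjoint up to null sets, each arrival is adjoined to at most one node of $\Tree$, so $k_0, k_1, k_2, \ldots$ are pairwise distinct. Moreover, so long as $\pi_n$ has not yet been adjoined among $\nu_0, \ldots, \nu_{d-1}$, the point $\pi_n$ is a legitimate candidate at $\nu_d$ because $Y_n \in B_{\nu_d}$, and hence the ``earliest not-yet-attached'' rule of \cref{def:bsp_structure_over_poisson_process} forces $k_d \leq n$. If $\pi_n$ were never adjoined along the path, then $k_0, \ldots, k_n$ would be $n+1$ pairwise distinct elements of $\{1, \ldots, n-1\}$, which is impossible. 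Hence $k_d = n$ for some $d \leq n$, so $\pi_n$ is in $\Tree_\PoissonProcess$.

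Taking the countable intersection over $n \in \posNats$ of the almost-sure events ``$\pi_n$ is adjoined'' completes the argument. The main technical care required lies in the first paragraph: one must ensure the descending path is truly infinite and that $Y_n$ avoids split boundaries. Both are automatic under the mild assumption---satisfied by the dyadic and on-sample splitting functions of \cref{def:dyadic_splitting_function,def:on_sample_splitting_function}---that the induced BSP-tree is infinite along every branch of positive $P$-mass, which in turn follows because $P$ is non-atomic and the mean measure $P \otimes \lambda$ is non-atomic by \cref{def:poisson_process}.
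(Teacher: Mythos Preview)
Your argument is correct and takes a genuinely different route from the paper. The paper proceeds by induction on $n$: assuming the first $k$ arrivals lie in $\Tree_\PoissonProcess$, it first proves the intermediate structural result that these $k$ arrivals form a connected subtree $\Gamma_k$ (stated as \cref{lemma:first_k_arrivals_form_a_search_tree} inside the proof), then observes that the frontier of $\Gamma_k$ $P$-almost partitions $\YSpace$, so the earliest arrival over the frontier nodes must be the $(k{+}1)$-th arrival of $\PoissonProcess$. Your approach sidesteps this subtree lemma entirely by following the unique root-to-leaf path through the cells containing $Y_n$ and pigeonholing on the time indices $k_d$ adjoined along it. This is more elementary and yields a quantitative byproduct the paper does not state, namely that $\pi_n$ is adjoined at depth at most $n{-}1$ (your count has a harmless off-by-one: already $k_0,\ldots,k_{n-1}$ give $n$ distinct elements of $\{1,\ldots,n{-}1\}$). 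Conversely, the paper's route surfaces the search-tree structure of the first $k$ arrivals, which is what makes \cref{corollary:simulate_pp_via_splitting_fun} transparent. One minor remark: distinctness of the $k_d$ along your path follows more directly from the ``not already adjoined to an ancestor'' clause in \cref{def:bsp_structure_over_poisson_process} than from sibling disjointness, since every pair of nodes on a root-to-leaf path stands in an ancestor--descendant relation.
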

An equivalent statement was shown for Gumbel processes in \citet{maddison2014sampling}. 
My argument below is based on Lemma D.1 from \citet{flamich2023grc}.
\begin{proof}
I will prove the statement of the lemma by showing that for each positive integer $n$, the BSP structure contains the $n$-th arrival of $\PoissonProcess$ $P$-almost surely.
I now show this latter statement using induction.
For the base case, note that the first arrival of $\PoissonProcess$ is always associated with the root of the BSP structure, hence for $n = 1$ the statement holds.
\par
Now, I assume the statement holds for $n \leq k$ and show the inductive step for $n = k + 1$.
First, consider the first $k$ arrivals of $\PoissonProcess$: by the inductive hypothesis, the BSP structure $\Tree_\PoissonProcess$ has nodes associated with each arrival $P$-almost surely.
Thus, let $\Gamma_k$ be the subgraph of $\Tree_\PoissonProcess$ formed by the nodes associated with the first $k$ arrivals.
Next, I continue by showing the following sub-result.
\begin{lemma}[The first $k$ arrivals form a search tree]
\label{lemma:first_k_arrivals_form_a_search_tree}
Let $\Gamma_k$, $\PoissonProcess$ and $\Tree_\PoissonProcess$ be as above.
Then, $\Gamma_k$ is a subtree of $\Tree_\PoissonProcess$.
\end{lemma}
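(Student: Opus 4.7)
The plan is to show that $\Gamma_k$ is closed under taking parents and contains the root, which together imply it is a connected rooted subtree of $\Tree_\Pi$. Since the first arrival $(Y_1, T_1)$ is by construction (item 1 of \cref{def:bsp_structure_over_poisson_process}) associated with the root $\rho$, and $1 \leq k$, the root lies in $\Gamma_k$. The only real content is the closure-under-parents claim.

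So fix a non-root node $\nu \in \Gamma_k$ with parent $\mu$, and suppose the arrival adjoined to $\nu$ is the $m$-th arrival of $\Pi$, i.e.\ $(Y_\nu, T_\nu) = (Y_m, T_m)$ with $m \leq k$. I will show that the arrival $(Y_\mu, T_\mu)$ adjoined to $\mu$ satisfies $T_\mu \leq T_m$, whence its time index $N_\mu$ is at most $m \leq k$ (using that the arrival times of $\Pi$ are $P \otimes \lambda$-almost surely strictly ordered), and therefore $\mu \in \Gamma_k$.

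The key observation is that $(Y_m, T_m)$ was a valid candidate for $\mu$ at the moment $\mu$ was assigned its label. Indeed, since $\mu$ is the parent of $\nu$, we have $B_\nu \subseteq B_\mu$ by the splitting rule, so $Y_m \in B_\nu \subseteq B_\mu$. Moreover, by item 2 of \cref{def:bsp_structure_over_poisson_process}, $(Y_m, T_m)$ is not adjoined to any ancestor of $\nu$; since the ancestors of $\mu$ are precisely the strict ancestors of $\nu$ excluding $\mu$ itself, $(Y_m, T_m)$ is in particular not adjoined to any ancestor of $\mu$. Thus, when we label $\mu$ with the first arrival of $\Pi$ that falls in $B_\mu$ and is not already adjoined to one of $\mu$'s ancestors, $(Y_m, T_m)$ is one of the competitors, forcing $T_\mu \leq T_m$ and hence $N_\mu \leq m \leq k$.

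The one obstacle to iron out is that the recursive definition of $\Tree_\Pi$ needs to be well-posed in order to speak of ``the moment $\mu$ was assigned its label'' unambiguously. I would handle this by observing that the labelling can be constructed inductively by depth: the root's label is determined first, and each depth-$d$ node's label is determined by the labels of its ancestors (all at depth $<d$). Since there are $P \otimes \lambda$-almost surely only finitely many ancestors to exclude and $B_\mu$ contains infinitely many arrivals (as $(P \otimes \lambda)(B_\mu \times \nonnegReals) = P(B_\mu) \cdot \infty = \infty$ whenever $P(B_\mu) > 0$, which holds $P$-almost surely along any branch by the splitting function hypothesis), the first-arrival selection is almost surely well-defined, and the above argument goes through on the corresponding almost-sure event.
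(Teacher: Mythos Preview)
Your proof is correct and takes essentially the same approach as the paper: both arguments hinge on the monotonicity of arrival times along ancestor chains (if $\mu$ is an ancestor of $\nu$ then $T_\mu \leq T_\nu$), which forces $\Gamma_k$ to be downward-closed in $\Tree_\Pi$. The paper packages this as a connectedness-by-contradiction argument, while you phrase it directly as closure under taking parents and give a more explicit justification for the key inequality via the ``valid candidate'' observation; these are equivalent formulations of the same idea.
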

\begin{proof}
To prove the result, it is enough to show that $\Gamma_k$ is connected, since any connected subgraph of a tree is a subtree.
To show this, assume the contrary, namely that there are at least two separate components in $\Gamma_k$.
Let $R_k$ denote the component of $\Gamma_k$ that contains the root $\rho$ of $\Tree_\PoissonProcess$, and let $\nu \in \Gamma_k$ node such that $\nu \not\in R_k$.
Since $\rho, \nu \in \Tree_{\PoissonProcess}$, there is a unique path between them.
Furthermore, since $\rho$ and $\nu$ are in different components of $\Gamma_k$, there is a node $c \in \Tree_\PoissonProcess$ along the path that connects $\nu$ and $\rho$ such that $c \not\in \Gamma_k$.
This leads to a contradiction: since $c \not\in \Gamma_k$, its arrival time $T_c$ must exceed the $k$th arrival time of $\PoissonProcess$.
However, since $c$ is an ancestor of $\nu$, by the definition of $\Tree_\PoissonProcess$, we must have $T_c \leq T_\nu$, a contradiction.
\end{proof}
Thus, let $\Gamma_{k}$ be the search tree of $\Tree_\PoissonProcess$ containing the first $k$ arrivals of $\PoissonProcess$, and let $\Frontier_k$ denote its frontier.
By \cref{lemma:bsp_frontier_partitions_space}, the bounds in $\Frontier_k$ form a $P$-almost partition of $\YSpace$.
Combining this with the fact that $T_\nu \geq T_{k}$ for $\nu \in \Frontier$, we see that the first arrival across the nodes in the frontier,
\begin{align*}
T_k^* = \min_{\nu \in \Frontier_k} T_\nu
\end{align*}
is $P$-almost surely the $k + 1$th arrival of $\PoissonProcess$, as desired.
\end{proof}
An important corollary of \cref{lemma:bsp_structure_is_complete} is that it yields a second way of generating the points of a Poisson process:
\begin{corollary}[Simulating a Poisson process via a splitting function]
\label{corollary:simulate_pp_via_splitting_fun}
Let $P$ be a probability measure over $\YSpace$ with a splitting function $\splitFun$ and induced BSP tree $\Tree$ with root $\rho$.
Then, let $\rho = \SubTree_1 \lhd \SubTree_2 \lhd \SubTree_2 \lhd \hdots$ be an infinite chain of (possibly random) maximal, proper search trees of $\Tree$ such that the chain converges to $\Tree$ in the sense that for any heap index $H$ there exists a positive integer $n$ such that $\nu_H \in \SubTree_n$.
Then, consider the following procedure:
\begin{enumerate}
\item Set $T_\rho \sim \Exponential(1)$ and $Y_\rho \sim P$.
\item For each $i > 1$, let $\nu \in \SubTree_i \setminus \SubTree_{i - 1}$ be the ``missing node'' which we expand next, and let $\mu$ be the parent node of $\nu$.
Thus, set $T_{\nu} \sim T_\mu + \Exponential(P(B_\nu))$ and $Y_\nu \sim P\vert_{B_\nu}$.
\end{enumerate}
Then, the set of points $\{(Y_\nu, T_\nu)\}_{\nu \in \Tree}$ form a Poisson process with mean measure $P \otimes \lambda$.
\end{corollary}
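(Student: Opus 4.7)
The plan is to prove a converse of \cref{lemma:bsp_structure_is_complete}: the recursive construction in the corollary is a faithful generator of a Poisson process with mean measure $P \otimes \lambda$. I would proceed by induction on the chain $\SubTree_1 \lhd \SubTree_2 \lhd \cdots$, showing that at each stage the finite collection $\{(Y_\nu, T_\nu)\}_{\nu \in \SubTree_i}$ has exactly the joint distribution that would arise by taking an actual Poisson process $\PoissonProcess$ with mean measure $P \otimes \lambda$ and reading off the points associated with $\SubTree_i$ via the BSP structure of \cref{def:bsp_structure_over_poisson_process}. The full statement then follows by combining the inductive claim with \cref{lemma:bsp_structure_is_complete}, which guarantees that the BSP structure captures $P$-almost every point of the underlying process, together with the convergence hypothesis that every heap index eventually appears in some $\SubTree_i$.

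For the base case, I would verify that $T_\rho \sim \Exponential(1)$ with independent location $Y_\rho \sim P$ reproduces the first-arrival law of a Poisson process with mean measure $P \otimes \lambda$: by \cref{eq:pois_process_inter_arrival_identity}, the first arrival time is exponential with rate $P(\YSpace) = 1$, and conditionally on the arrival time, the location is distributed as the normalised spatial marginal $P$ by the product structure of the mean measure.

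For the inductive step, suppose the construction has produced points for $\SubTree_i$ with the correct joint distribution and we expand a frontier node $\mu$ into its two children, one of which, $\nu$, is added to form $\SubTree_{i+1}$. Two observations pin down the conditional law of $(Y_\nu, T_\nu)$ given everything previously generated. First, after conditioning on the configuration of $\SubTree_i$, the remaining points of $\PoissonProcess$ inside $B_\nu \times (T_\mu, \infty)$ still form a Poisson process with mean measure $P\vert_{B_\nu} \otimes \lambda$; this is essentially \cref{lemma:random_restriction} combined with the Mecke equation to handle the finitely many conditioned arrivals. Crucially, this uses $B_\nu \subseteq B_\mu$ together with the fact that, by the ancestor minimality property built into \cref{def:bsp_structure_over_poisson_process}, there are no unaccounted-for points of $\PoissonProcess$ in $B_\nu \times [0, T_\mu)$. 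Second, the first arrival of this conditional Poisson process then satisfies $T_\nu - T_\mu \sim \Exponential(P(B_\nu))$ with independent location $Y_\nu \sim P\vert_{B_\nu}$ by exactly the base-case argument, which matches the prescription of the corollary.

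The main obstacle is carefully justifying the conditioning in the inductive step, because one is conditioning on a finite configuration that simultaneously carries positive information (specified arrivals at specified locations and times) and negative information (no other arrivals in certain regions before certain times). The cleanest way to handle this is to exploit the frontier of $\SubTree_i$, which $P$-almost partitions $\YSpace$ by \cref{lemma:bsp_frontier_partitions_space}, so that independence of $\PoissonProcess$ across disjoint spatial regions reduces the joint conditioning to node-by-node statements controlled by the multivariate Mecke equation. A second, milder subtlety is that $\splitFun$ and hence the chain $\SubTree_i$ may depend on $\PoissonProcess$; this is harmless as long as the split producing $\SubTree_{i+1}$ from $\SubTree_i$ depends only on points already associated with $\SubTree_i$ (as is the case for, e.g., the on-sample splitting function of \cref{def:on_sample_splitting_function}), which preserves the adaptedness of the filtration underlying the induction.
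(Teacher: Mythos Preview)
Your proposal is correct and follows essentially the same approach as the paper's proof, which is a terse two-sentence sketch invoking exactly the ingredients you name: \cref{lemma:bsp_structure_is_complete}, the independence property of Poisson processes together with \cref{eq:pois_process_inter_arrival_identity} to verify that each $(Y_\nu, T_\nu)$ has the correct conditional law, and the convergence requirement to ensure every node is eventually simulated. Your version is considerably more careful about the conditioning subtleties (invoking \cref{lemma:random_restriction}, the Mecke equation, and \cref{lemma:bsp_frontier_partitions_space} to decompose the joint conditioning), and you explicitly flag the adaptedness issue when $\splitFun$ depends on $\PoissonProcess$; the paper glosses over all of this, so your elaboration is a genuine improvement in rigour while remaining the same argument in spirit.
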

\begin{proof}
The proof is immediate from \cref{lemma:bsp_structure_is_complete} and noting that by the independence property of Poisson processes and by \cref{eq:pois_process_inter_arrival_identity} the point $(Y_\nu, T_\nu)$ we construct for each expanded node $\nu$ is equal in distribution to the first arrival of a Poisson process with mean measure $P \otimes \lambda$ restricted to $B_\nu$ that has not been assigned to an ancestor of $\nu$.
Finally, by the convergence requirement, the procedure is guaranteed to simulate every node in the BSP structure eventually, which finishes the proof. 
\end{proof}
\par
\textbf{Simulating a single branch of the BSP structure.}
The algorithms I describe in the following sections will only simulate a single branch of the BSP structure induced by on-sample splitting, which is why they will be more efficient than their general-purpose variants.
Therefore, I introduce some notation for single-branch simulation that will simplify its presentation and discuss a useful perspective for studying the setting.
\par
Thus, let $\YSpace \subseteq \Reals$ be connected, and let $\PoissonProcess$ be a Poisson process over $\YSpace \times \nonnegReals$ with mean measure $P \otimes \lambda$, and let $\Tree_\PoissonProcess$ be the BSP structure over $\PoissonProcess$ induced by on-sample splitting.
Now, let $\Branch \subseteq \Tree_\PoissonProcess$ be an infinite branch of the tree, i.e.\ $\Branch$ is a connected subgraph of $\Tree_\PoissonProcess$, such that the depth of each node is unique.
Now, since the depths are unique natural numbers, this yields another natural way of indexing the arrivals of $\PoissonProcess$ that form part of $\Branch$: 
\begin{displayquote}
For $d \in \Nats$, I will write $\nu^d$ to denote the unique node in $\Branch$ whose depth is $d$.
Similarly, I will write $Y^d, T^d, B^d, H^d$ to denote the arrival location, arrival time, bounds and heap index associated with $\nu^d$.   
\end{displayquote}
\par
Next, for a branch $\Branch$, I show how we can ``reparametrise'' its arrivals $\{(Y^d, T^d)\}_{d = 0}^\infty$ to a sequence of i.i.d.\ sequences, which will be helpful for the analysis and the implementation.
Starting with the $Y^j$s, note that $Y^j \mid Y^{0:j - 1} \sim P_{B_j}$, where $B_j$ is fully determined by $Y^{0:j - 1}$ within $\Branch$.
I will now use the generalised inverse probability transform to reparametrise these variables.
To this end, let $U^{0:\infty}$ be a sequence of i.i.d.\ random variables with $U^j \sim \Unif(0, 1)$, and let $F$ be the CDF of $P$.
Then, set $Y^0 = F^{-1}(U^0)$.
Furthermore, for step $j > 0$, assume the bounds $B^j = (L^j, R^j)$, as determined by $Y^{0:j - 1}$.
Then, set 
\begin{align}
\label{eq:branch_and_bound_loc_reparam}
Y^j = F^{-1}(F(L^j) + (F(R^j) - F(L^j)) \cdot U^j).
\end{align}
To see that $Y^j$ has the right distribution, note that $\Prob[Y^j \leq y] = \Ind[y \geq L^j] \cdot (F(y) - F(L^j)) / P(B^j)$.
Inverting this CDF and noting that $P(B^j) = F(R^j) - F(L^j)$ yields \cref{eq:branch_and_bound_loc_reparam}. 
Moving on to the $T^j$s now, recall that $T^j \mid T^{j - 1}, B^j$ is the first arrival time of $\PoissonProcess$ restricted to $ B^j \times [T^{j - 1}, \infty)$.
Hence, using \cref{eq:pois_process_inter_arrival_identity}, we find that $(T^j - T^{j - 1} \mid T^{j - 1}, B^j) \sim \Exponential(P(B^j))$.
Therefore, let $E^{0:\infty}$ be a sequence of i.i.d.\ random variables with $E^j \sim \Exponential(1)$, and set
\begin{align}
\label{eq:branch_and_bound_time_reparam}
T^j = \sum_{i = 0}^j \frac{E^i}{P(B^i)}.
\end{align}
From the above reasoning, we can see that $T^j$ will follow the required distribution.
\section[Sampling as Search II: Branch-and-Bound Samplers]{Sampling as Search II: \\ Branch-and-Bound Samplers}
\label{sec:branch_and_bound_samplers}
Finally, I develop and analyse fast variants of A* sampling and greedy Poisson rejection sampling in this section.
In both cases, the important insight is that when the target-proposal density ratio is unimodal, then if we use on-sample splitting (\cref{def:on_sample_splitting_function}) to simulate the proposal Poisson process, we can \textit{always} eliminate one of the resultant regions from the search.
The intuition why this speeds up the search procedure is that the average depth of a search tree $\SubTree_N$ corresponding to the first $N$ arrivals of the proposal Poisson process $\PoissonProcess$ is $\Oh(\lb N)$.
Then, the ability to rule out a region at each step is equivalent to only ever simulating a single branch of the search tree, which will take $\Oh(\lb N)$ steps.
Based on this loose intuition, in this section, I prove that the branch-and-bound variant of A* sampling simulates $\Oh(\infD{Q}{P})$, while GPRS simulates $\Oh(\KLD{Q}{P})$ samples before terminating on average.
These results demonstrate the benefit of leveraging additional information to construct faster sampling algorithms and showcase GPRS's advantage over A* sampling.
As we will see, this discrepancy in the average runtime is due to the greedy nature of GPRS: it stops the moment it encounters the sample it selects.
On the other hand, A* coding needs to check more samples, even if it has already encountered the sample that it eventually returns.
\par
\textbf{Unimodality.}
Before I begin, I need to clarify what I mean by a ``unimodal'' density ratio: it turns out that the relatively weak notion of \textit{quasiconcavity} is enough.
\begin{definition}[Quasiconcave function]
For some $n \geq 1$ and some convex subset $S \subseteq\Reals^n$, a function $f: S \to \nonnegReals$ is quasiconcave if all its superlevel sets
\begin{align*}
\alpha \in \nonnegReals: \quad S_\alpha = \{x \in S \mid f(x) \geq \alpha\}
\end{align*}
are convex.
\end{definition}
In one dimension, quasiconcavity is essentially the weakest notion of unimodality since the only one-dimensional convex sets are intervals.
This means that one-dimensional non-quasiconcave functions have at least one level set that is not an interval (for example, it could be the union of two intervals), corresponding to the intuitive notion of having ``more than one peak.''
In dimensions higher than one, quasiconcavity is not the weakest notion of unimodality, as, for example, it excludes functions with a single, banana-shaped peak (for this, we could relax the requirement on the superlevel sets to be connected only).
However, the higher-dimensional versions of the branch-and-bound samplers I discuss in \cref{sec:general_branch_and_bound_variants} rely on the existence of supporting hyperplanes to the superlevel sets and thus require quasiconcavity.
\subsection{Branch-and-bound A* Sampling}
\label{sec:bnb_a_star}
\par
In this section, I derive and analyse the branch-and-bound variant of A* sampling when the density ratio is unimodal.
Specifically, assume now that our sampling setting is that we have two probability measures $Q \ll P$ over an interval $\YSpace \subseteq \Reals$ with quasiconcave $r = dQ/dP$ such that we have an upper bound $M$ on $r$.
\par
\textbf{Deriving the algorithm.}
Let $\PoissonProcess$ be a Poisson process with mean measure $P \otimes \lambda$, and recall from \cref{sec:global_a_star} that A* sampling searches for 
\begin{align*}
(Y^*, T^*) &= (Y_N, T_N) \\
N &= \argmin_{n \in \Nats} \{ T_n / r(Y_n) \mid (Y_n, T_n) \in \PoissonProcess\}.
\end{align*}
\begin{figure}[H]
\centering
\begin{minipage}[t]{0.49\textwidth}
 \begin{algorithm}[H]
\SetAlgoLined
\DontPrintSemicolon
\textbf{Input:}\;
Proposal distribution $P$, \;
Density ratio $r = dQ/dP$, \;
Upper bound $M$ on $r$, \;
Location $m$ of the mode of $r$.\;
\textbf{Output:}\;
Sample $Y \sim Q$ and its heap index $H\hspace{-2mm}$\;
\;
$U, T^0\!\!, H, B, Y^*\!\!, H^* \gets \infty, 0, 1, \Reals, \perp, 1\hspace{-5mm}$\;
\For{$d = 1, 2, \hdots$}{
$Y^d \sim P\vert_B$\;
$\Delta^d \sim \mathrm{Exp}\left(P(B)\right)$\;
$T^d \gets T^{d - 1} + \Delta^d$\;
\;
\If{$T^d / r(Y^d) < U$}{
$U \gets T^d / r(Y^d)$\;
$Y^*, H^* \gets Y^d, H$\;
}
\If{$U < T^d / M$}{
\Return{$Y^*, H^*$} \;
}
\eIf{$m \leq Y^d$}{
$B \gets B \cap (-\infty, Y^d)$\;
$H \gets 2H$\;
}{
$B \gets B \cap (Y^d, \infty)$\;
$H \gets 2H + 1$\;
}
}%
\parbox{0.89\linewidth}{\caption{Branch-and-bound A* sampling on $\Reals$ with unimodal $r$.}%
\label{alg:a_star_sac}%
}%
\end{algorithm}
\end{minipage}%
\hfill
\begin{minipage}[t]{0.49\textwidth}
 \begin{algorithm}[H]
\SetAlgoLined
\DontPrintSemicolon
\SetKwInOut{Input}{Input}\SetKwInOut{Output}{Output}
\SetKwFunction{GetMode}{GetMode}\SetKwFunction{CalculateProbs}{CalculateProbs}
\SetKwData{RatioMode}{ratio\_mode}
\textbf{Input:}\;
Proposal distribution $P$, \;
Density ratio $r = dQ/dP$, \;
Stretch function $\sigma$, \;
Location $m$ of the mode of $r$.\;
\textbf{Output:}\;
Sample $Y \sim Q$ and its heap index $H\hspace{-2mm}$\;
\;
$T^0, H, B \gets (0, 1, \Reals)$\;
\For{$d = 1, 2, \hdots$}{
$Y^d \sim P\vert_B$\;
$\Delta^d \sim \mathrm{Exp}\left(P(B)\right)$\;
$T^d \gets T^{d - 1} + \Delta^d$\;
\;
\;
\;
\;
\;
\If{$T^d < \sigma(r(Y^d))$}{
\Return{$Y^d, H$} \;
}
\eIf{$m \leq Y^d$}{
$B \gets B \cap (-\infty, Y^d)$\;
$H \gets 2H$\;
}{
$B \gets B \cap (Y^d, \infty)$\;
$H \gets 2H + 1$\;
}
}
\vspace{4.6mm}
\caption{Branch-and-bound GPRS on $\Reals$ with unimodal $r$.\vspace{0.8mm}}
\label{alg:gprs_sac}
\end{algorithm}
\end{minipage}%
\vspace{-0.5cm}
\end{figure}%
Consider what information we have after examining the first arrival $(Y_1, T_1)$ of $\PoissonProcess$.
The only way $N \neq 1$ is if there is some $n > 1$ such that $T_n / r(Y_n) < T_1 / r(Y_1)$.
However, this can only occur if $r(Y_n) > r(Y_1)$, since $T_n > T_1$ for $n > 1$.
We can reformulate the condition that $r(Y_n) > r(Y_1)$ as saying that $Y_n$ must be in some $\alpha$-superlevel set $S_\alpha$ of $r$ for some $\alpha > r(Y_1)$:
\begin{align*}
\exists \alpha > r(Y_1): \quad Y_n \in S_\alpha = \{y \in \YSpace \mid r(y) \geq \alpha \}.
\end{align*}
However, we can now exploit the quasiconcavity of $r$, as it is sufficient to keep searching for $N$ either to the left or right of $Y_1$!
To see this, note that 1) by its definition, $S_\alpha$ does not contain $Y_1$, and 2) $S_\alpha$ is convex and hence an interval by the quasiconcavity of $r$.
Taking these two facts together, it must be the case that $S_\alpha \subseteq (-\infty, Y_1)$ or $S_\alpha \subseteq(Y_1, \infty)$ for any $\alpha > r(Y_1)$. 
Note that these two subsets are precisely the outputs of $\splitFun(\YSpace)$, where $\splitFun$ is the on-sample splitting function!
\par
To summarise, I have shown that simulating the first arrival $(Y_1, T_1)$ of $\PoissonProcess$ and computing ${\splitFun(\YSpace) = L, R}$ with $Y_1$ as the ``pivot'', then $N \neq 1$ implies ${Y_N \in B \in \{L, R\}}$.
Therefore, if $B = L$, we can eliminate $R$ from the search and vice versa.
Determining whether $B = L$ or $B = R$ requires we decide whether the mode $m$ of $r$ is to the left or the right of the sample $Y_1$.
Formally, I will call a point $m \in \YSpace$ a \textit{mode-point} if for all $\alpha \leq \norm{r}_\infty$ it holds that $m \in S_\alpha$.
In this case, $B = L \Leftrightarrow m < Y_1$ and $B = R$ if $Y_1 < m$.
We do not need to know the value of $\norm{r}_\infty$, only the location $m$ to implement this decision rule.
\par
The above argument demonstrates that after simulating the first arrival, we only need to search through the points of $\PoissonProcess$ restricted to $B \times (T_1, \infty)$.
However, by \cref{thm:restriction_theorem}, the restricted process $\PoissonProcess\vert_{B \times (T_1, \infty) }$ is also a Poisson process. 
Repeating the above argument shows once again that either $(Y_N, T_N)$ is the first arrival of $\PoissonProcess\vert_{B \times (T_1, \infty)}$ or otherwise we must have $Y_N \in B' \in \splitFun(B)$.
Indeed, we can repeat the above argument for any number of steps $d$, which finally yields the branch-and-bound A* sampling, described in \cref{alg:a_star_sac}.
To describe the algorithm, first, let $\Branch$ be the branch of the BSP structure over $\PoissonProcess$ such that for every node $\nu \in \Branch$, the mode-point $m \in B_\nu$.
From this point onward, I thus use the single-branch indexing I introduced at the end of \cref{sec:simulating_bnb_poisson_processes} for $\Branch$.
Now, at each step $d \geq 0$, we always know that it is enough to check the first arrival $(Y^d, T^d)$ in some convex set (interval) $B^d$, starting with $B^0 = \YSpace$.
We also maintain a record of the best arrival so far $N^d = \argmin_{n \in [0:d]}\{T^n / r(Y^n) \mid (Y^n, T^n) \in \Branch\}$.
Then, unless the arrival satisfies the termination criterion $T^d / M > T_{N^d} / r(Y_{N^d})$, we compute $B^{d + 1} \in \splitFun(B^d)$ with $Y^d$ as the pivot and carry on the search in $\PoissonProcess\vert_{B^{d + 1} \times (T^d, \infty)}$.
Finally, by the above argument, we have that $(Y^{N^d}, T^{N^d})$ is the optimal arrival not just within the branch we simulated but in all of $\PoissonProcess$, hence $Y^{N^d} \sim Q$ and $T^{N^d} / r(Y^{N^d}) \sim \Exponential(1)$.
\par
\textbf{The runtime of branch-and-bound A* sampling.}
I now turn to analysing the runtime of the sampler I constructed above.
As we will see, we make the exponential savings we expected by simulating only a single branch of the BSP structure.
The proof I present below is new and improves on both the scaling and additive constants compared to the original result presented in \citet{flamich2022fast}. 
\begin{importantTheorem}[Runtime of branch-and-bound A* sampling]
\label{thm:bnb_a_star_runtime}
Let $Q \ll P$ be probability measures over an interval $\YSpace \subseteq \Reals$ with Radon-Nikodym derivative $r = dQ/dP$ and assume $r$ is bounded above by some constant $M \geq 1$.
Furthermore, assume that $r$ is quasiconcave.
Let $D$ denote the number of samples simulated by branch-and-bound A* sampling (\cref{alg:a_star_sac}) that do not violate its termination condition.
Then,
\begin{align*}
\Exp[D] \leq \frac{\lb M + 2}{\lb(e) - 1} < 2.26 \cdot \lb M + 4.52
\end{align*}
\end{importantTheorem}
\begin{proof}
Let $\PoissonProcess$ denote the Poisson process over $\YSpace \times \nonnegReals$ with mean measure $P \otimes \lambda$, and let $\Tree$ denote the BSP structure over it induced by the on-sample splitting function, whose branch $\Branch$ \cref{alg:a_star_sac} simulates.
Then, by the definition of \cref{alg:a_star_sac}, the number of arrivals in $\Branch$ that do not violate the termination condition is
\begin{align}
\label{eq:bnb_a_star_runtime_proof_stopping_time_def}
D = \min_{d \in \Nats}\left\{\frac{T^{d + 1}}{M} > \min_{\delta \in [0:d]}\left\{\frac{T^\delta}{r(Y^\delta)}\right\} \,\,\Big\vert\,\, (T^d, Y^d), (T^{\delta}, Y^{\delta}) \in \Branch \right\},
\end{align}
which only depends on $Y^{0:d}, T^{0:d+1}$. 
However, observe that the $T^{j}$s and $Y^{j}$ all depend on the previous elements in the sequence.
Thus, to simplify the analysis, I now switch to the reparametrise $Y^{0:\infty}, T^{0:\infty}$ to depend on only on the i.i.d.\ sequences $U^{0:\infty}, E^{0:\infty}$, where $U^j \sim \Unif(0, 1)$ and $E^j \sim \Exponential(1)$ and according to \cref{eq:branch_and_bound_loc_reparam,eq:branch_and_bound_time_reparam} I set
\begin{align*}
Y^j = F^{-1}(F(L^j) + (F(R^j) - F(L^j)) \cdot U^j)
\quad\text{and}\quad
T^j = \sum_{i = 0}^j \frac{E^i}{P(B^i)},
\end{align*}
where $F$ is the CDF of the proposal distribution $P$.
\par
Returning to examining $D$, since the event $\{D = d\}$ only depends on $T^{0:d + 1}, Y^{0:d}$, we can use the above reparameterisations to note that in fact the event $\{D = d\}$ depends only on $E^{0:d + 1}, U^{0:d}$.
Technically, this means that $D$ is a stopping time with respect to the filtration $\filtration$ defined by $\sigmaAlgebra_d = \sigma(E^{0:d + 1}, U^{0:d})$.
In particular, note that the event $\{D \leq d - 1\} \perp E^{d + 1:\infty}, U^{d:\infty}$, hence its complement $\{D \geq d\}$ is also independent of $E^{d + 1:\infty}, U^{d:\infty}$.
Thus, we now get
\begin{align}
\Exp\left[T^{D + 1}\right] 
&=\Exp\left[\sum_{d = 0}^\infty \Ind[D = d] T^{d + 1} \right] \nonumber\\
&=\Exp\left[T^1 + \sum_{d = 1}^\infty \Ind[D \geq d] (T^{d + 1} - T^{d}) \right] \tag{summation by parts} \\
&=\Exp\left[E^0 + \frac{E^1}{P(B^1)} + \sum_{d = 1}^\infty \Ind[D \geq d] \frac{E^{d+1}}{P(B^{d + 1})}\right] \tag{reparam.\ / \cref{eq:branch_and_bound_time_reparam}} \\
&= \Exp\left[1 + \frac{1}{P(B^1)}\right] + \Exp\left[\sum_{d = 1}^\infty \Ind[D \geq d] \frac{\Exp[E^{d+1}]}{P(B^{d+1})}\right] \tag{$\{D \geq d\} \perp E^{d + 1}$} \\
&= \Exp\left[\sum_{d = 0}^\infty \Ind[D = d] \sum_{j = 0}^{d + 1}\frac{1}{P(B^{j})} \right] \tag{$\Exp[E^{d + 1}]=1$ \& summation by parts}\\
&\geq \Exp\left[\frac{1}{P(B^D)} + \frac{1}{P(B^{D + 1})} \right] \nonumber. 
\end{align}
From this, we find that
\begin{align}
\label{eq:bnb_a_star_runtime_proof_time_identity}
\Exp\left[T^{D + 1} - \frac{1}{P(B^{D + 1})}\right] \geq \Exp\left[\frac{1}{P(B^D)}\right]
\end{align}
Now, it turns out that we can compute the left-hand side of \cref{eq:bnb_a_star_runtime_proof_time_identity}!
In particular, let $\tau = \min_{n \in \Nats} \{T^n / r(Y^n) \mid (Y^n, T^n) \in \Branch\}$ be the first arrival of the shifted process.
Then, by the definition of $D$ the variable $T^{D + 1}$ is the first arrival time of $\PoissonProcess$ restricted to $B^{D + 1} \times [\tau \cdot M, \infty)$!
Observe, that for any point $(t, y) \in B^{D + 1} \times [\tau \cdot M, \infty)$ we must have $t / r(y) > \tau \cdot M / r(y) > \tau$ by construction.
Thus, the points in $(t, y) \in [\tau \cdot M, \infty) \times B^{D + 1}$ are independent of $\tau$ since they are guaranteed not to be the shifted first arrival.
Hence, the mean measure of $\PoissonProcess\vert_{B^{D + 1} \times [\tau \cdot M, \infty)}$ is $P\vert_{B^{D + 1}} \otimes \lambda\vert_{[\tau \cdot M, \infty)}$.
Therefore, from \cref{eq:time_homogeneous_process_arrival_identity}, we have
\begin{align}
\label{eq:runtime_plus_one_arrival_time_identity}
(T^{D + 1} - \tau \cdot M \mid \tau) \sim \Exponential(P(B^{D + 1})),
\end{align}
from which we find
\begin{align*}
\Exp\left[T^{D + 1} - \frac{1}{P(B^{D + 1})} \mid \tau\right] = M \cdot \tau.
\end{align*}
Finally, noting that $\tau \sim \Exponential(1)$, we get that
\begin{align*}
\Exp\left[T^{D + 1} - \frac{1}{P(B^{D + 1})}\right] = M.
\end{align*}
Combining this with \cref{eq:bnb_a_star_runtime_proof_time_identity} and taking logs, we have so far that
\begin{align}
\label{eq:bnb_a_star_main_inequality}
\lb(M) \geq \lb\left(\Exp\left[\frac{1}{P(B^D)}\right]\right) \geq \Exp\left[\lb \frac{1}{P(B^D)}\right] = \Exp\left[\lb \frac{1}{P(B^{D+1})}\right] + \Exp\left[\lb \left(\frac{P(B^{D + 1})}{P(B^D)}\right)\right]
\end{align}
where the second inequality follows from Jensen's inequality.
Finally, I deal with the two terms on the right-hand side separately.
First, observe that
\begin{align}
\Exp[-\lb P(B^{D+1})] 
&= -\Exp\left[\sum_{d = 0}^\infty \Ind[D = d] \lb P(B^{d + 1})\right] \nonumber\\
&= -\Exp\left[\lb P(B^1) + \sum_{d = 1}^\infty \Ind[D \geq d] \lb \left(\frac{P(B^{d + 1})}{P(B^{d})}\right)\right] \label{eq:bnb_a_star_runtime_proof_minus_log_pb_identity}
\end{align}
Now, $\lb \left(\frac{P(B^{d + 1})}{P(B^d)}\right)$ depends on $U^{0:d}$.
However, it admits approximations that only depend on $U^d$, which I derive next.
First, expand $B^d = (L^d, R^d)$, and assume that the density ratio has mode-location $m$.
Then, note that by definition
\begin{align}
\frac{P(B^{d + 1})}{P(B^d)} 
&= \frac{P((L^d, Y^d))\Ind[m < Y^d] + P((Y^d, R^d))\Ind[Y^d < m]}{P(B^d)} \nonumber\\
&= \frac{(F(Y^d) - F(L^d))\Ind[m < Y^d] + (F(R^d) - F(Y^d))\Ind[Y^d < m]}{F(R^d) - F(L^d)} \nonumber\\
&= U^d\Ind[m < Y^d] + (1 - U^d)\Ind[Y^d < m]. \tag{\cref{eq:branch_and_bound_loc_reparam}}
\end{align}
From this, we obtain two inequalities:
\begin{align}
\frac{P(B^{d + 1})}{P(B^d)} &\leq \max\{U^d, 1 - U^d\} \label{eq:bnb_a_star_bound_ratio_upper_bound}\\
\frac{P(B^{d + 1})}{P(B^d)} &\geq \min\{U^d, 1 - U^d\} \label{eq:bnb_a_star_bound_ratio_lower_bound}
\end{align}
Plugging \cref{eq:bnb_a_star_bound_ratio_upper_bound} into \cref{eq:bnb_a_star_runtime_proof_minus_log_pb_identity}, we get
\begin{align}
\Exp[-\lb P(B^{D + 1})] 
&\geq -\Exp\left[\lb \left(\max\{U^{1}, 1 - U^{1}\}\right) + \sum_{d = 1}^\infty \Ind[D \geq d] \lb \left(\max\{U^{d}, 1 - U^{d}\}\right)\right] \nonumber\\
&= (\lb(e) - 1) -\sum_{d = 1}^\infty \Prob[D \geq d] \Exp\left[\lb \left(\max\{U^{d}, 1 - U^{d}\}\right)\right] \tag{$\{D \geq d\} \perp U^{d}$}\\
&\geq \left(\lb(e) - 1\right)\left(1 + \sum_{d = 1}^\infty \Prob[D \geq d]\right)  \nonumber \\
&= \left(\lb(e) - 1\right) (1 + \Exp[D]), \label{eq:bnb_a_star_first_term_lower_bound}
\end{align}
where the last equality holds by the Darth Vader rule.
For the second term, note that by \cref{eq:bnb_a_star_bound_ratio_lower_bound}
we have
\begin{align*}
\Exp\left[\lb \left(\frac{P(B^{D + 1})}{P(B^D)}\right)\right] \geq \Exp\left[\lb \left(\min\{U^D, 1 - U^D\}\right)\right].
\end{align*}
Now, define $M^d = \lb(\min\{U^d, 1 - U^d\})$ and note that it is a martingale with respect to $U^{0:d - 1}, E^{0:d}$ (technically, the filtration $\filtration$ given by $\sigmaAlgebra_{d} = \sigma(U^{0:d}, E^{0:d+1})$), since for all $d \geq 0$ we have
\begin{align*}
\Exp[M^d \mid U^{0:d - 1}, E^{0:d}] = \Exp[\lb(\min\{U^d, 1 - U^d\})] = -(\lb(e) + 1).
\end{align*}
Now, since $D$ is completely determined by $U^{0:d - 1}, E^{0:d}$ (i.e.\ it is a stopping time adapted to $\filtration$), by the optimal stopping theorem \citep[Theorem 12.5.1;][]{grimmett2020probability} we have
\begin{align}
\label{eq:bnb_a_star_proof_second_term_lower_bound}
\Exp\left[\lb \left(\frac{P(B^{D + 1})}{P(B^D)}\right)\right] \geq \Exp\left[M^D\right] = \Exp\left[M^0\right] = -(\lb(e) + 1).
\end{align}
Finally, putting \cref{eq:bnb_a_star_first_term_lower_bound,eq:bnb_a_star_proof_second_term_lower_bound} into \cref{eq:bnb_a_star_main_inequality}, we get
\begin{align*}
\lb(M) \geq \left(\lb(e) - 1\right)(1 + \Exp[D]) -(\lb(e) + 1).
\end{align*}
Rearranging the terms finishes the proof.
\end{proof}
\par
\textbf{Relative entropy coding with branch-and-bound A* sampling.}
For a pair of dependent random variables $\rvx, \rvy \sim P_{\rvx, \rvy}$, where $\rvy$ is $\Reals$-valued and $\frac{P_{\rvy \mid \rvx}}{P_\rvy}$ is quasiconcave $P_\rvx$-almost surely, we can use branch-and-bound A* sampling to construct a channel simulation protocol to encode a $P_{\rvy \mid \rvx}$-distributed sample using $P_\rvy$ as the proposal.
However, since \cref{alg:a_star_sac} no longer simulates the proposal Poisson process in time order, we can no longer rely on encoding the time index of the returned sample.
Instead, I propose to encode the ``search path'' of the node $\nu$ to which the sample belongs instead. 
This is equivalent to encoding the heap index $H_\nu$ (\cref{def:heap_index_and_depth}) of the accepted node $\nu$!
But how can we efficiently encode the accepted heap index $H_\nu$?
Observe that branch-and-bound A* sampling is structurally very similar to the first stage of arithmetic coding (\cref{sec:arithmetic_coding}, \cref{fig:ac_first_stage}): A* sampling ``establishes the bounds'' $B_\nu$ in which the accepted sample is located.
Therefore, we could try to use the second stage of arithmetic coding (\cref{fig:ac_second_stage}) to encode $B_\nu$?
From \cref{eq:arithmetic_coding_efficiency}, we already know that the second stage of (infinite-precision) AC can encode $B_\nu$ at the efficiency of at most $\Exp[-\lb P(B_\nu)] + 2$ bits.
Hence, if we can show that $\Exp[-\lb P(B_\nu)] \approx \MI{\rvx}{\rvy}$, then we are almost done!
I say almost, since we also need to encode the depth of the selected node $D_\nu$ so that the decoder knows when to stop.
\par
There is an alternative view of the above encoding approach: we first encode the depth $D_\nu$ of the accepted node. 
Then, we encode the path in the on-sample splitting BSP structure: at each depth $1 \leq d \leq D_\nu$, we encode whether we continued the search inside the left or the right bound.
Fortunately, there is a natural probability distribution we can use to encode our left/right decisions: at each depth $d$, we encode our choice with probability $P(B^{d + 1}) / P(B^d)$!
Using arithmetic coding \cref{sec:arithmetic_coding} to encode each choice results in a total of 
\begin{align*}
\approx -\sum_{d = 0}^{D_\nu - 1} \lb\left(\frac{P(B^{d + 1}}{P(B^d)}\right) = -\lb P(B^D) \text{ bits},
\end{align*}
same as above.
\par
It remains to bound $\Exp[-\lb P(B_\nu)]$ and the codelength of $D_\nu$.
I first tackle the negative log-probability term and show that it behaves well in expectation.
\begin{lemma}[Upper bound on the the self-information of bounds returned by branch-and-bound A* sampling]
\label{thm:bnb_a_star_neg_log_bound_size}
Let $Q \ll P$ be probability measures over an interval $\YSpace \subseteq \Reals$ with quasiconcave Radon-Nikodym derivative $r = dQ/dP$ and assume $r$ is bounded from above by some constant $M \geq 1$.
Let $B$ denote the bounds in which \cref{alg:a_star_sac} locates the accepted sample when applied to $Q$ and $P$.
Then:
\begin{align*}
\Exp[-\lb P(B)] \leq \KLD{Q}{P}.
\end{align*}    
\end{lemma}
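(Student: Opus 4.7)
The plan is to prove the inequality via a pointwise geometric bound followed by a conditional expectation argument that exploits the quasiconcavity of $r$. First I would establish the key geometric fact that the accepted bound $B = B^{N^*}$ contains the superlevel set $\{y : r(y) \geq r(Y^*)\}$. To see this, observe that by the greedy acceptance rule, every earlier pivot $Y^j$ with $j < N^*$ satisfies $T^j / r(Y^j) > T^{N^*} / r(Y^*)$. Since $T^j < T^{N^*}$, this forces $r(Y^j) < r(Y^*)$. The endpoints of $B$ lie among $\{Y^j : j < N^*\} \cup \{\pm\infty\}$, so each finite endpoint has $r$-value strictly below $r(Y^*)$. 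By quasiconcavity of $r$, the set $\{r \geq r(Y^*)\}$ is a convex subinterval of $\YSpace$ containing the mode $m$; since both endpoints of $B$ lie outside this superlevel set, the containment $B \supseteq \{r \geq r(Y^*)\}$ follows, and hence $P(B) \geq w_P(r(Y^*))$.

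Plugging $Y^* \sim Q$ into this pointwise bound and applying the width-function identities from \Cref{lemma:width_function_properties} gives $\Exp_{Y^* \sim Q}[-\lb w_P(r(Y^*))]$, but a direct calculation using $h\,w_P(h) \leq 1$ (Markov) shows that this quantity is in general \emph{larger} than $\KLD{Q}{P}$, so the pointwise estimate alone is too weak. To close the gap, I would combine it with a conditional analysis: given $Y^* = y^*$, the algorithm's remaining pivots that shape $B$ are, by the Poisson restriction and mapping theorems, described by a conditionally Poisson-distributed collection of arrivals in the sublevel region $\{r < r(y^*)\}$ up to shifted time $\tau = T^{N^*}/r(y^*)$, where $\tau \sim \Exponential(1)$ is independent of $y^*$ (by the product structure of the shifted mean measure $Q \otimes \lambda$). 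One can then track how these sublevel arrivals ``widen'' $B$ beyond the minimal set $\{r \geq r(y^*)\}$ and show that the expected excess log-ratio $\Exp[\lb(P(B)/w_P(r(Y^*))) \mid Y^*]$ precisely compensates for the gap between $\Exp[-\lb w_P(r(Y^*))]$ and $\KLD{Q}{P}$.

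The main technical obstacle will be carrying out the widening bookkeeping cleanly: one needs to account simultaneously for (i) the random number and locations of the sublevel arrivals with time below $\tau \cdot r(y^*)$, and (ii) the fact that only those arrivals lying \emph{on the branch} (i.e., on the mode-side of each successive pivot) end up becoming endpoints of $B$. I would handle this by first integrating out $\tau \sim \Exponential(1)$ and then using the independence of the Poisson process on disjoint regions to reduce the remaining expectation to a one-dimensional integral against $dw_P$. If this bookkeeping yields the inequality $\Exp[-\lb P(B) \mid Y^*] \leq \lb r(Y^*)$ after averaging out $\tau$ and the sublevel arrivals, the lemma follows immediately by taking expectation over $Y^* \sim Q$ and invoking the identity $\Exp_{Y^* \sim Q}[\lb r(Y^*)] = \KLD{Q}{P}$ from \cref{lemma:width_function_properties}/4.
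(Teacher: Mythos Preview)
Your proposal has a genuine gap precisely at the step you hedge with ``if''. The geometric containment $B^\delta \supseteq \{r \geq r(Y^\delta)\}$ is correct, and so is your observation that the resulting bound $-\lb P(B^\delta) \leq -\lb w_P(r(Y^\delta))$ is too weak (indeed $\Exp_{Y^\delta\sim Q}[-\lb w_P(r(Y^\delta))]\geq\KLD{Q}{P}$, in the wrong direction). But your proposed remedy---quantifying how sublevel arrivals ``widen'' $B^\delta$ and showing this exactly compensates---is not carried out, and the description of those arrivals as ``a conditionally Poisson-distributed collection in $\{r<r(y^*)\}$'' is not accurate for the branch. The pivots $Y^0,\dots,Y^{\delta-1}$ that form the endpoints of $B^\delta$ are \emph{not} a flat Poisson sample from the sublevel region: each $Y^j$ is drawn from $P\vert_{B^j}$ with $B^j$ depending on all earlier pivots, the branch keeps only the mode-side child at each split, and the accepted depth $\delta$ is not a stopping time in the natural filtration. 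Reducing the widening to a one-dimensional integral against $dw_P$ would require unwinding this nested, non-Poisson structure, which is at least as hard as the lemma itself.

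The paper's proof avoids geometry entirely via an algebraic shortcut you are missing. Since $(Y^\delta,\,T^\delta/r(Y^\delta))$ is the first arrival of the shifted process, one has $\tau = T^\delta/r(Y^\delta)\sim\Exponential(1)$ independent of $Y^\delta\sim Q$, giving $\Exp[\lb T^\delta]=\KLD{Q}{P}-\gamma\lb(e)$. On the other hand, the time reparameterisation $T^\delta=\sum_{j\leq\delta}E^j/P(B^j)\geq E^\delta/P(B^\delta)$ yields $\Exp[\lb T^\delta]\geq\Exp[\lb E^\delta]-\Exp[\lb P(B^\delta)]$. Combining these gives $\Exp[-\lb P(B^\delta)]\leq\KLD{Q}{P}-\gamma\lb(e)-\Exp[\lb E^\delta]$, and the proof finishes by showing $\Exp[\lb E^\delta]=-\gamma\lb(e)$ via optional stopping: conditioned on $Y^\delta=y$, the index $\delta$ becomes a stopping time for the conditional branch, and the centred log-exponentials form a martingale. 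The key missing idea is this single-term lower bound $T^\delta\geq E^\delta/P(B^\delta)$ from \cref{eq:branch_and_bound_time_reparam}, which trades the intractable geometry of $B^\delta$ for one exponential variable whose log-expectation is computable.
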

\begin{proof}
Let $\PoissonProcess$ denote the Poisson process over $\YSpace \times \nonnegReals$ with mean measure $P \otimes \lambda$, and let $\Tree$ denote the BSP structure over it induced by the on-sample splitting function, whose branch $\Branch$ \cref{alg:a_star_sac} simulates.
Let $\delta$ denote the depth of the node in $\Branch$ associated with the sample the sampler accepts.
Recall that by construction, $( Y^\delta, T^\delta / r(Y^\delta))$ is the first arrival of the shifted process. 
Hence, we have
\begin{align*}
\tau = \frac{T^\delta}{r(Y^\delta)} \sim \Exponential(1), \quad Y^\delta \sim Q.
\end{align*}
Therefore, by taking logs and rearranging, we find
\begin{align*}
\Exp[\lb T^\delta] = \Exp[\lb r(Y^\delta)] + \Exp[\lb \tau] = \KLD{Q}{P} - \gamma \cdot \lb(e)
\end{align*}
On the other hand, using the reparameterisations \cref{eq:branch_and_bound_loc_reparam,eq:branch_and_bound_time_reparam}to replace $Y^{0:\infty}, T^{0:\infty}$ with the i.i.d.\ sequences $U^{0:\infty}, E^{0:\infty}$, respectively, we also have by definition that
\begin{align*}
\Exp[\lb(T^\delta)] = \Exp\left[\lb\left(\sum_{k = 0}^K \frac{E^\delta}{P(B^\delta)}\right)\right] \geq \Exp[\lb(E^\delta)] - \Exp[\lb P(B^\delta)].
\end{align*}
Putting these two identities together and rearranging, we have thus far that
\begin{align}
\label{eq:bnb_a_star_codelength_penultimate_inequality}
\Exp[-\lb P(B^\delta)] \leq \KLD{Q}{P} - \gamma \cdot \lb(e) - \Exp[\lb(E^\delta)].
\end{align}
Thus, it remains to compute the last term on the right-hand side to finish the proof.
The difficulty here is that, unlike the runtime, the selected index $\delta$ is not a stopping time.
However, there is a trick, inspired by the proof of \cref{thm:global_a_star_conditional_index_distribution}: while $\delta$ is not a stopping time adapted to the branch $\Branch$, \textit{it is} a stopping time conditioned on the accepted arrival location $Y^\delta = y$ adapted to the ``conditional branch'' $\Branch \mid Y^\delta = y$.
Intuitively, we can consider a computational way of constructing $\Branch \mid Y^\delta = y$ up to step $\delta$: at each step $d \geq 0$, given the available information $\sigmaAlgebra_d$ (given by the $\sigma$-algebra generated by $Y^{0:d - 1}, E^{0:d - 1}$ and the event $\{Y^\delta = y\}$) we flip a coin with probability $\Prob[\delta = d \mid \sigmaAlgebra_d]$.
If it comes up heads, we set $Y^d = y$ and $E^\delta$ to follow the appropriate conditional distribution.
On the other hand, if it comes up tails, we simulate $E^d, Y^d$ according to the distribution $\Prob[(Y^d, E^d) \in \cdot \mid \delta \neq d, \sigmaAlgebra_d]$.
Now, observe furthermore that if we set $\Delta^d = \lb (E^d) - \Exp[\lb (E^d) \mid \sigmaAlgebra_d]$, then $\Delta^d$s form a martingale with respect to the filtration $\filtration$ given by the $\sigmaAlgebra_d$s, since for any $d \geq 0$ it holds that
\begin{align*}
\Exp[\Delta^k \mid \sigmaAlgebra_k] = \Exp[\lb (E^k) - \Exp[\lb (E^k) \mid \sigmaAlgebra_k] \mid \sigmaAlgebra_k] = 0.    
\end{align*}
Hence, by the optional stopping theorem \citep[Theorem 12.5.1;][]{grimmett2020probability}, we have that
\begin{align*}
\Exp_{Y^K}[\Exp[-\lb(E^K) \mid Y^K]] = \Exp_{Y^K}[\Exp[-\lb(E^0) \mid Y^K]] = \Exp[-\lb(E^0)] = \gamma \cdot \lb(e).
\end{align*}
Putting this in \cref{eq:bnb_a_star_codelength_penultimate_inequality} finishes the proof.
\end{proof}
Finally, I show the following result using branch-and-bound A* sampling for relative entropy coding.
\begin{importantTheorem}[Relative entropy coding with branch-and-bound A* sampling.]
\label{thm:bnb_a_star_codelength}
Let $\rvx, \rvy \sim P_{\rvx, \rvy}$ be dependent random variables over the space $\XSpace \times \YSpace$, and assume that $\YSpace$ is a convex subset of $\Reals$.
Furthermore, assume that $P_\rvx$-almost surely $\frac{P_{\rvy \mid \rvx}}{dP_\rvy}$ is quasiconcave and upper bounded by some constant $M_\rvx \geq 1$.
Now, consider the following channel simulation protocol for the channel $\rvx \to \rvy$: the sender and receiver set $\rvz \gets (U^{0:\infty}, E^{0:\infty})$ as their common randomness, where $U^{0:\infty}$ and $E^{0:\infty}$ are sequences of i.i.d.\ $\Unif(0, 1)$ and $\Exponential(1)$ random variables, respectively, used to reparametrise the arrival locations and times of the branch simulated by \cref{alg:a_star_sac}, as given by \cref{eq:branch_and_bound_loc_reparam,eq:branch_and_bound_time_reparam}.
\par
Then, upon receiving a source symbol $\rvx \sim P_\rvx$, the sender uses branch-and-bound A* sampling (\cref{alg:a_star_sac}) to simulate $\rvy \sim P_{\rvy \mid \rvx}$.
Now, let $\nu$ denote the node accepted by the sampler, $H_\nu$ its heap index, $\delta_\nu$ its depth and $B_\nu$ its associated bounds.
The sender encodes $\nu$ by encoding its depth $\delta_\nu$ using a zeta distribution $\zeta(d \mid \alpha) \propto (d + 1)^{-\alpha}$ with exponent 
\begin{align*}
\alpha = 1 + 1 \Big/ \left(\frac{\lb M_\rvx + 2}{\lb(e) - 1} + 2\right),   
\end{align*}
and encodes the heap index $H_\nu$ using arithmetic coding using $P(B_\nu)$ as the probability model.
Finally, let $\mu = \Exp_{\rvx \sim P_{\rvx}}[\lb M_{\rvx}]$.
Then,
\begin{align}
\label{eq:bnb_a_star_rec_codelength}
\Ent{\nu \mid \rvz} \leq \MI{\rvx}{\rvy} + \lb \left(\mu + 1\right) + \Oh(\lb(\lb(\mu + 1))).
\end{align}
\end{importantTheorem}
\begin{proof}
The code has two parts: the one for $\delta_\nu$ and the one for $H_\nu$.
To show the result, consider $\rvx \sim P_\rvx$ fixed for the moment.
Then, starting the codelength of $\delta_\nu$, since it is a nonnegative integer, we encode $\delta_\nu + 1$ instead (as reflected in the theorem statement).
Now, I make a relatively coarse approximation that is nonetheless sufficient to get an \textit{almost} optimal codelength.
Let $D_\nu$ be the \textit{runtime} of \cref{alg:a_star_sac} as in \cref{thm:bnb_a_star_runtime}; then we have $\delta_\nu \leq D_\nu$ always.
By \cref{thm:bnb_a_star_runtime}, we have in this case that
\begin{align}
\label{eq:bnb_a_star_codelength_proof_depth_expectation}
\Exp[\delta_\nu\mid \rvx] \leq \Exp[D_\nu \mid \rvx] \leq \frac{\lb M_\rvx + 2}{\lb(e) - 1}\end{align}
As I have shown in \cref{lemma:li_el_gamal_bound_on_pos_int_random_variable}, using the zeta coding approach will yield an expected description length of
\begin{align}
\Exp[\lb (D_\nu + 1) \mid \rvx] + &\lb(\Exp[\lb (D_\nu + 1) \mid \rvx] + 1) + 1\nonumber\\
&\leq \lb \Exp[D_\nu + 1 \mid \rvx] + \lb(\lb\Exp[D_\nu + 1 \mid \rvx] + 1) + 1 \tag{Jensen}\\
&\leq \lb \left(\frac{\lb M_\rvx + 2}{\lb(e) - 1} + 1\right) + \lb(\lb\left(\frac{\lb M_\rvx + 2}{\lb(e) - 1} + 1\right) + 1) + 1. \tag{\cref{eq:bnb_a_star_codelength_proof_depth_expectation}}
\end{align}
Averaging over $\rvx$ and using Jensen's inequality, we get that the average codelength is bounded above by
\begin{align}
\lb \left(\frac{\mu + 2}{\lb(e) - 1} + 1\right) + &\lb(\lb\left(\frac{\mu + 2}{\lb(e) - 1} + 1\right) + 1) + 1 \nonumber\\
&= \lb \left(\frac{\mu + 2}{\lb(e) - 1} + 1\right) + \Oh(\lb(\lb(\mu + 1))). \nonumber\\
&= \lb \left(\mu + 1\right) + \Oh(\lb(\lb(\mu + 1))),\label{eq:bnb_a_star_codelength_proof_depth_codelength}
\end{align}
where for the last equality, I used logarithm rules as well as the inequality $\lb(x + \alpha + 1) \leq \lb(x + 1) + \lb(\alpha + 1)$ for $x, \alpha \geq 0$ to simplify the first log term and absorb the constants in the $\Oh(\lb(\lb(\mu + 1)))$ term.
Moving onto encoding $H_\nu$ given $D_\nu$ we can use arithmetic coding to encode $H_\nu$ using $-\lb P(B_\nu) + 2$ bits, by \cref{eq:arithmetic_coding_efficiency}.
Taking averages, by \cref{thm:bnb_a_star_neg_log_bound_size}, the average codelength produced by arithmetic coding will thus be
\begin{align}
\label{eq:bnb_a_star_codelength_proof_heap_index_codelength}
\MI{\rvx}{\rvy} + 2.
\end{align}
Putting \cref{eq:bnb_a_star_codelength_proof_depth_codelength,eq:bnb_a_star_codelength_proof_heap_index_codelength} yields the desired result.
\end{proof}
While technically not conforming to my definition of relative entropy coding in \cref{def:rec}, as I show in \cref{sec:numerical_experiments}, \cref{eq:bnb_a_star_rec_codelength} tends to be within a constant of the efficiency stated in \cref{def:rec} in practice, especially since we can usually compute the best possible upper bound $M_\rvx^* = \norm*{\frac{dP_{\rvy \mid \rvx}}{dP_\rvy}}_\infty$.
\subsection{Branch-and-bound Greedy Poisson Rejection Sampling}
\label{sec:bnb_gprs}
\begin{wrapfigure}[25]{r}[0pt]{0.45\textwidth}%
\centering
\includegraphics{5-BranchAndBound/img/bnb_gprs_illustration.tikz}
\caption[Illustration of branch-and-bound greedy Poisson rejection sampling for unimodal density ratio]{Illustration of branch-and-bound greedy Poisson rejection sampling (\cref{alg:gprs_sac}) for a Gaussian target $Q = \Normal(1, 0.25^2)$ and Gaussian proposal distribution $P = \Normal(0, 1)$, with the time axis truncated to the first $17$ units.
The algorithm searches for the first arrival of a spatio-temporal Poisson process $\PoissonProcess$ with mean measure $P \otimes \lambda$ under the graph of $\varphi = \sigma \circ r$ indicated by the \textbf{thick dashed black line} in each plot.
Here, $r = dQ/dP$ is the target-proposal density ratio, and $\sigma$ is given by \Cref{eq:stretch_function_integral_identity}.
The {\color{red}shaded red} areas are never searched or simulated by the algorithm since, given the first two rejections, we know points in those regions cannot fall under $\varphi$. 
}
\label{fig:bnb_gprs_illustration}
\end{wrapfigure}
In this section, I use a similar design principle to \cref{sec:bnb_a_star} to construct an efficient branch-and-bound variant of GPRS (\cref{sec:global_gprs}).
In particular, I assume a similar setting for sampling, namely that we have a target distribution $Q$ and a proposal distribution $P$ with $Q \ll P$ over an interval $\YSpace \subseteq \Reals$ with quasiconcave $r = dQ/dP$ and stretch function $\sigma$.
The significance of branch-and-bound GPRS is that we can use it to construct an optimally fast relative entropy coding algorithm.
Concretely, for dependent random variables $\rvx, \rvy \sim P_{\rvx, \rvy}$ where $\rvy$ is real-valued and $\frac{dP_{\rvy \mid \rvx}}{dP_\rvy}$ is $P_\rvx$-almost surely quasiconcave, the average-case time complexity of the relative entropy coding algorithm induced by the sampler is $\Oh(\MI{\rvx}{\rvy})$.
Thus, branch-and-bound GPRS provides a general solution for this broad class of relative entropy coding problems.
\par
\textbf{Deriving the algorithm.}
To begin, let $\PoissonProcess$ be a Poisson process with mean measure $P \otimes \lambda$ as usual, and let $\varphi = \sigma \circ r$.
Recall from \cref{sec:global_gprs} that GPRS searches for
\begin{align*}
(T^*, Y^*) &= (T_N, Y_N) \\
N &= \min_{n \in \Nats} \{\varphi(Y_n) \geq T_n \mid (T_n, Y_n) \in \PoissonProcess\}.
\end{align*}
The argument for speeding up the algorithm is almost identical to the one I gave for branch-and-bound A* sampling in \cref{sec:bnb_a_star}.
Thus, consider the first arrival $(Y_1, T_1)$ of $\PoissonProcess$; if $\varphi(Y_1) \geq T_1$ we immediately accept it and the algorithm terminates.
However, if we do not accept the first arrival, thanks to the quasiconcavity of $r$, we only need to search through a subregion of the whole sample space.
To see this, note first that since $\sigma$ is increasing, $r(y) \geq \alpha \Leftrightarrow \varphi(y) \geq \sigma(\alpha)$.
Hence, the $\alpha$-superlevel sets of $r$ coincide with the $\sigma(\alpha)$-superlevel sets of $\varphi$; this can also be seen from the more general fact that composition with an increasing function preserves quasiconcavity.
Therefore, if $N > 1$, the only way we can have $\varphi(Y_N) \geq T_N \Leftrightarrow r(Y_N) \geq \sigma^{-1}(T_N) = \alpha$ is if $Y_N$ falls in the $\alpha$-superlevel set $S_\alpha$ of $r$.
Furthermore, if $N > 1$ we also know that $\varphi(Y_1) < T_1 \Leftrightarrow r(Y_1) < \sigma^{-1}(T_1)$.
Since we also have $T_1 < T_N$, we have the following chain of inequalities: $\alpha = \sigma^{-1}(T_N) > \sigma^{-1}(T_1) \geq r(Y_1)$. 
Therefore, $S_\alpha$ does not contain $Y_1$, and by the quasiconcavity of $r$, we have that $S_\alpha$ is convex, hence an interval.
Thus, it must be the case that $S_\alpha \subseteq (-\infty, Y_1)$ or $S_\alpha \subseteq(Y_1, \infty)$.
This argument generalises to later steps as well, analogously to how the argument in \cref{sec:bnb_a_star} generalises; see \cref{fig:bnb_gprs_illustration} for a visual illustration of the above argument.
\par
This motivates the following branch-and-bound variant of GPRS, which I describe in \cref{alg:gprs_sac}.
However, before I proceed, let $\Branch$ be the branch of the BSP structure over $\PoissonProcess$ such that for every node $\nu \in \Branch$, the mode-point $m \in B_\nu$.
From this point onward, I thus use the single-branch indexing I introduced at the end of \cref{sec:simulating_bnb_poisson_processes} for $\Branch$.
Then, \cref{alg:gprs_sac} proceeds as follows, also illustrated in \cref{fig:bnb_gprs_illustration}:
At each step $d \geq 0$, by the above argument, we can always be sure that the first arrival of $\PoissonProcess$ under $\varphi$ will occur in some convex set $B^d$ (an interval), starting with $B^0 = \YSpace$.
Thus, we simulate the next arrival $(Y^d, T^d)$ of $\PoissonProcess$ in $B^d$ and if $\varphi(Y^d) < T^d$, then we compute $\splitFun(B^d) = \{L, R\}$ with $Y^d$ as the pivot.
In the same way as in \cref{alg:a_star_sac}, for a mode-point $m$ of the density ratio $r$, we set $B^{d + 1} = L \Leftrightarrow m < Y^d$ and set $B^{d + 1} = R$ otherwise.
We repeat this procedure until we find the first arrival such that it falls under the graph of $\varphi$.
\par
\textbf{The runtime of branch-and-bound GPRS.}
I now analyse the runtime of the sampling algorithm.
As I show, the additional computational requirement that we can evaluate the width function of the density ratio finally pays off: the runtime of branch-and-bound GPRS improves super-exponentially compared to its general-purpose variant.
However, I first show the following useful intermediate result.%
\begin{lemma}
\label{lemma:bnb_gprs_log_bound_size_kl_inequality}
Let $Q \ll P$ be probability measures over an interval $\YSpace \subseteq \Reals$ with quasiconcave Radon-Nikodym derivative $r = dQ/dP$.
Let $m$ be a mode-point of $r$, i.e.\ $m \in S_{\alpha}$ for all $\alpha \leq \norm{r}_\infty$ and assume it is unique.
Let $\PoissonProcess$ over $\YSpace \times \nonnegReals$ with mean measure $P \otimes \lambda$.
Let $\Tree$ be the BSP structure that the on-sample splitting function induces over $\PoissonProcess$.
Let $\Branch$ be the branch of $\Tree$ such that for every node $\nu \in \Branch$, we have $m \in B_\nu$.
Finally, let $D$ be the runtime of \cref{alg:global_gprs}.
Then, we have
\begin{align*}
\Exp[-\lb P(B^D)] \leq \KLD{Q}{P} + (1 + \gamma) \lb(e) < \KLD{Q}{P} + 2.26
\end{align*}
\end{lemma}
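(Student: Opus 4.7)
The argument follows the structure of the proof of \cref{thm:bnb_a_star_neg_log_bound_size}, with modifications to handle the greedy nature of GPRS. Let $\Branch$ be the branch of the BSP structure simulated by \cref{alg:gprs_sac}, and use the reparameterisation $T^d = \sum_{k=0}^d E^k/P(B^k)$ from \cref{eq:branch_and_bound_time_reparam}, where the $E^k \sim \Exponential(1)$ are i.i.d. Dropping all but the final summand yields $T^D \ge E^D/P(B^D)$, so taking logs and expectations gives
\begin{align*}
\Exp[-\lb P(B^D)] \le \Exp[\lb T^D] - \Exp[\lb E^D].
\end{align*}

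The first term is controlled using the structural equivalence of branch-and-bound and global GPRS. Quasiconcavity of $r$ guarantees that the first arrival of $\PoissonProcess$ under $\varphi = \sigma \circ r$ lies in the branch $\Branch$ selected by \cref{alg:gprs_sac}, so $T^D$ has the same distribution as the first arrival time $\tilde T$ of the restricted process; by \cref{eq:global_gprs_arrival_time_disteq_sigma_of_h} (established in the proof of \cref{thm:global_gprs_codelength}), $\tilde T \disteq \sigma(H)$. Writing $S_H(h) = \Prob[H \ge h]$, the monotonicity of $1/S_H$ implies $\sigma(h) = \int_0^h du/S_H(u) \le h/S_H(h)$, so $\lb \sigma(H) \le \lb H - \lb S_H(H)$. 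By \cref{lemma:width_function_properties}(4) we have $\Exp[\lb H] = \KLD{Q}{P} - \lb(e)$, while the probability integral transform gives $S_H(H) \sim \Unif(0, 1)$, so $\Exp[-\lb S_H(H)] = \lb(e)$. Combining, $\Exp[\lb T^D] \le \KLD{Q}{P}$.

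The main obstacle is lower-bounding $\Exp[\lb E^D]$. Unlike in the A* analysis, where the accepted index $\delta$ is the best-so-far and the associated $E^\delta$ is essentially unbiased, here $D$ is the acceptance index itself, so $E^D$ is conditioned on the event $T^D \le \sigma(r(Y^D))$; this biases it toward small values and pushes $\Exp[\lb E^D]$ below its unconditional mean $-\gamma\lb(e)$. The plan is to adapt the conditioning-and-optional-stopping argument from the proof of \cref{thm:bnb_a_star_neg_log_bound_size}: conditional on $Y^D = y$, the variable $D$ becomes a stopping time with respect to a filtration to which the resulting truncated-exponential distribution of $E^D$ is adapted, and a Wald-type identity applied to the centred sequence $\lb E^d - \Exp[\lb E^d]$ controls the cumulative bias. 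Carrying this through should yield $\Exp[\lb E^D] \ge -(1+\gamma)\lb(e)$, so that substituting into the first display gives the claimed bound $\Exp[-\lb P(B^D)] \le \KLD{Q}{P} + (1+\gamma)\lb(e)$. The extra $\lb(e)$ over the A* bound reflects precisely the cost of greedy acceptance, and the final numerical inequality follows from $1+\gamma < 1.578$ and $\lb(e) < 1.443$, giving $(1+\gamma)\lb(e) < 2.26$.
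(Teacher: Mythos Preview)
Your decomposition differs from the paper's. You bound $-\lb P(B^D) \le \lb T^D - \lb E^D$ and control $\Exp[\lb T^D]$ via $T^D \disteq \sigma(H)$ together with $\sigma(h) \le h/S_H(h)$, obtaining $\Exp[\lb T^D] \le \KLD{Q}{P}$. The paper instead starts from the acceptance inequality $r(Y^D) \ge \sha(T^D)$, uses $\sha(t) \ge t\,\sha'(t)$ and $\sha'(t) = e^{-\tilde\mu(t)}$ to reach
\[
-\lb P(B^D) \;\le\; \lb r(Y^D) - \lb E^D + \tilde\mu(T^D)\,\lb e,
\]
and then evaluates the three expectations on the right as $\KLD{Q}{P}$ (since $Y^D \sim Q$), $\gamma\lb e$, and $\lb e$ (since $\tilde\mu(T^D) \sim \Exponential(1)$ by \cref{eq:gprs_expected_arrival_mean_time}). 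Your route through $\sigma(H)$ is correct and in fact gives a tighter intermediate bound on $\Exp[\lb T^D]$ by $\lb e$; combined with the paper's value for $\Exp[-\lb E^D]$ it would even sharpen the lemma's additive constant from $(1+\gamma)\lb e$ to $\gamma\lb e$.

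The genuine gap is your handling of $\Exp[-\lb E^D]$. The paper dispatches this term by a direct optional-stopping argument: it regards $G^d = -\ln E^d$ as a sequence with constant conditional mean $\gamma$ given the past, treats $D$ as an adapted stopping time, and concludes $\Exp[G^D] = \Exp[G^0] = \gamma$. You instead argue that $E^D$ is biased downward by the acceptance event and propose to port the conditioning device from \cref{thm:bnb_a_star_neg_log_bound_size}, asserting this ``should yield'' $\Exp[-\lb E^D] \le (1+\gamma)\lb e$---but you do not carry the computation through. Moreover, that device does not obviously transfer: in the A* proof, conditioning on $Y^\delta$ was needed because the \emph{selected} index $\delta$ was not a stopping time at all; here $D$ already is a stopping time, and the difficulty is rather that the stopping event $\{T^D < \sigma(r(Y^D))\}$ itself involves $E^D$, a dependence which conditioning on $Y^D$ alone does not remove. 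As it stands your plan for this crucial step is a sketch rather than an argument, whereas the paper closes the proof with the direct claim $\Exp[-\ln E^D] = \gamma$.
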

\begin{proof}
First, I use \cref{eq:branch_and_bound_loc_reparam,eq:branch_and_bound_time_reparam} to reparameterise the arrival locations and times in $\Branch$ to depend on the i.i.d.\ sequences $U^{0:\infty}$ and $E^{0:\infty}$, respectively, wher $U^j \sim \Unif(0, 1)$ and $E^j \sim \Exponential(1)$.
In particular, according to \cref{eq:branch_and_bound_time_reparam}, the $d$th arrival time in $\Branch$ can be written as
\begin{align*}
(T^d \mid Y^{0:d - 1}) \sim \sum_{j = 0}^d \frac{E^j}{P(B^j)}.
\end{align*}
Furthermore, by definition $D = d \Leftrightarrow r(Y^d) \geq \sha(T^d)$.
Then, since $\sha'(t) = \Prob[\tilde{T} \geq t] = \exp(-\tilde{\mu}(t))$ is decreasing, we have
\begin{align}
\lb r(Y^D) &\geq \lb \sha(T^D) \nonumber\\
&\geq \lb (T^D \cdot \sha'(T^D)) \tag{$\sha(t) \geq t \cdot \sha'(t)$}\\
&= \lb\left(\sum_{j = 0}^D \frac{E^j}{P(B^j)}\right) -\tilde{\mu}(T^D) \cdot \lb e \nonumber\\
&\geq \lb\left(\frac{E^D}{P(B^D)}\right) -\tilde{\mu}(T^D) \cdot \lb e \nonumber\\
&= \lb E^D - \lb P(B^D) -\tilde{\mu}(T^D) \cdot \lb e. \nonumber
\end{align}
Rearranging, we have $-\lb P(B^D) \leq \lb r(Y^D) - \lb E^D +\tilde{\mu}(T^D) \cdot \lb e$.
Hence, by the tower rule, we have
\begin{align*}
\Exp[-\lb P(B^D)] &= -\Exp_{D, Y^D, T^D, E^D} [\Exp[\lb P(B^D) \mid D, Y^D, T^D, E^D]] \\
&\leq \Exp_{D} [\Exp_{Y^D, T^D, E^D}[\lb r(Y^D) - \lb E^D +\tilde{\mu}(T^D) \cdot \lb e \mid D] \\
&= \Exp_{Y^D}[\lb r(Y^D)] + \Exp_{T^D}[\tilde{\mu}(T^D)] + \Exp_{E^D}[- \lb E^D] \\
&= \KLD{Q}{P} + (1 + \gamma) \lb(e),
\end{align*}
where $\gamma \approx 0.577$ is the Euler-Mascheroni constant.
For the last equality, I used the definition of the relative entropy for the first term and \cref{eq:gprs_expected_arrival_mean_time} for the second term.
To see that $\Exp_{E^D}[- \lb E^D] = \gamma \cdot \lb(e)$, define $G^d = -\ln E^d$, and note that $G^d$ is a martingale with respect to $E^{0:d-1}, U^{0:d-1}$ (technically, with respect to the filtration $\filtration$ generated by the $\sigma$-algebras $\sigmaAlgebra_d = \sigma(E^{0:d}, U^{0:d})$), since
\begin{align*}
\Exp[G^d \mid E^{0:d}, U^{0:d}] = \Exp[-\ln(E^d)] = \gamma.
\end{align*}
Furthermore, note that $D$ only depends on $E^{0:d-1}, U^{0:d-1}$ (it is adapted to the filtration $\filtration$ defined by the $\sigma$-algebras $\sigmaAlgebra_d$), hence by the optimal stopping theorem \citep[Theorem 12.5.1;][]{grimmett2020probability}, we have
\begin{align*}
\Exp[G^D] = \Exp[G^0] = \gamma.
\end{align*}
Multiplying by $1/\ln(2) = \lb(e)$ finishes the proof.
\end{proof}%
Indeed,\cref{lemma:bnb_gprs_log_bound_size_kl_inequality} will be the crucial ingredient in the analysis of both the runtime as well as the codelength proof of branch-and-bound GPRS.
I start by analysing the runtime first. 
The result I present and prove below is tighter than the one I originally derived in \citet{flamich2023gprs}, and the proof is also significantly simpler.%
\begin{importantTheorem}[Runtime of branch-and-bound GPRS]
\label{thm:bnb_gprs_runtime}
Let $Q \ll P$ be probability measures over an interval $\YSpace \subseteq \Reals$ with quasiconcave Radon-Nikodym derivative $r = dQ/dP$.
Let $D$ denote the number of samples simulated by branch-and-bound GPRS \cref{alg:gprs_sac}.
Then,
\begin{align*}
\Exp[D] \leq \frac{\KLD{Q}{P} + 2 + (1 + \gamma)\lb(e)}{\lb(e) - 1} < 2.26 \cdot \KLD{Q}{P} + 9.66.
\end{align*}    
\end{importantTheorem}
\begin{proof}
The proof proceeds similarly to the proof of \cref{thm:bnb_a_star_runtime} after I derive \cref{eq:bnb_a_star_main_inequality}.
Concretely, picking up from \cref{lemma:bnb_gprs_log_bound_size_kl_inequality}, we have
\begin{align}
\label{eq:bnb_gprs_main_inequality}
\KLD{Q}{P} + (1 + \gamma) \lb(e) \geq -\Exp[\lb P(B^D)] =
-\Exp[\lb P(B^{D + 1})] + \Exp\left[\lb \left(\frac{P(B^{D + 1})}{P(B^D)}\right)\right]
\end{align}
I will bound the two terms from below in an analogous way to the proof of \cref{thm:bnb_a_star_runtime}.
Again, I utilise the uniform-exponential reparameterisation of the arrival locations and times from \cref{eq:branch_and_bound_loc_reparam,eq:branch_and_bound_time_reparam}: write these as $U^{0:\infty}, E^{0:\infty}$ as before.
Now, for the first term on the left-hand side of \cref{eq:bnb_gprs_main_inequality}, we get
\begin{align}
-\Exp[\lb P(B^{D + 1})] 
&= -\Exp\left[\sum_{d = 0}^\infty \Ind[D = d] \lb P(B^{d + 1}) \right] \nonumber\\
&= -\Exp\left[\lb P(B^1) + \sum_{d = 1}^\infty \Ind[D \geq d] \lb \left(\frac{P(B^{d + 1})}{P(B^d)}\right) \right] \tag{summation by parts} \\
&\geq -\Exp\left[\lb \left(\max\{U^1, 1 - U^1\}\right)\right] \nonumber\\
&\quad\quad-\Exp\left[\sum_{d = 1}^\infty \Ind[D \geq d] \lb \left(\max\{U^d, 1 - U^d\}\right) \right] \tag{\cref{eq:bnb_a_star_bound_ratio_upper_bound}}\\
&\geq (\lb(e) - 1) - \sum_{d = 1}^\infty \Prob[D \geq d] \Exp\left[\lb \left(\max\{U^d, 1 - U^d\}\right) \right]
 \tag{$\{D\geq d\} \perp U^d$} \\
&= (\lb(e) - 1)(1 + \Exp[D]) \label{eq:bnb_gprs_first_term_lower_bound}.
\end{align}
For the second term, define $M^d = \min\{U^d, 1 - U^d\}$, and observe that it is a martingale with respect to $E^{0:d-1}, U^{0:d-1}$ (technically, with respect to the filtration $\filtration$ generated by the $\sigma$-algebras $\sigmaAlgebra_d = \sigma(E^{0:d}, U^{0:d})$):
\begin{align*}
\Exp[M^d \mid E^{0:d-1}, U^{0:d-1}] = \Exp[M^d] = -(\lb(e) + 1).
\end{align*}
Then, since $D \leq d$ only depends on $E^{0:d}, U^{0:d}$ (it is a stopping time adapted to $\filtration$), by the optimal stopping theorem \citep[Theorem 12.5.1;][]{grimmett2020probability}, we have%
\begin{align}
\label{eq:bnb_gprs_second_term_lower_bound}
\Exp\left[\lb \left(\frac{P(B^{D + 1})}{P(B^D)}\right)\right] \stackrel{\cref{eq:bnb_a_star_bound_ratio_lower_bound}}{\geq} \Exp[M^D] = \Exp[M^0] = -(\lb(e) + 1).
\end{align}
Putting \cref{eq:bnb_gprs_first_term_lower_bound,eq:bnb_gprs_second_term_lower_bound} into \cref{eq:bnb_gprs_main_inequality}, we get%
\begin{align*}
\KLD{Q}{P} + (1 + \gamma) \lb(e) \geq (\lb(e) - 1)(1 + \Exp[D]) - (\lb(e) + 1).
\end{align*}
Finally, rearranging the terms finishes the proof.
\end{proof}
\par
\textbf{Relative entropy coding with branch-and-bound GPRS.}
Relative entropy coding with branch-and-bound GPRS proceeds in much the same way as with its A* sampling counterpart in \cref{sec:bnb_a_star}: we encode the search path of the accepted node $\nu$ using a two-part code. 
Concretely, we encode its depth $D_\nu$ using a zeta distribution, followed by encoding its heap index $H_\nu$ using $P(B_\nu)$ as the probability model.
However, the difference compared to the result I derived for the average codelength of branch-and-bound A* sampling in \cref{thm:bnb_a_star_codelength}, this strategy will yield an optimal codelength.%
\begin{importantTheorem}[Relative entropy coding with branch-and-bound GPRS]
\label{thm:bnb_gprs_codelength}
Let $\rvx, \rvy \sim P_{\rvx, \rvy}$ be dependent random variables over the space $\XSpace \times \YSpace$, and assume that $\YSpace$ is a convex subset of $\Reals$.
Furthermore, assume that $P_\rvx$-almost surely $\frac{dP_{\rvy \mid \rvx}}{dP_{\rvy}}$ is quasiconcave.
Now, consider the following channel simulation protocol for the channel $\rvx \to \rvy$:
the sender and receiver set $\rvz \gets (U^{0:\infty}, E^{0:\infty})$ as their common randomness, where $U^{0:\infty}$ and $E^{0:\infty}$ are sequences of i.i.d.\ $\Unif(0, 1)$ and $\Exponential(1)$ random variables, respectively, used to reparametrise the arrival locations and times of the branch simulated by $\cref{alg:gprs_sac}$, as given by \cref{eq:branch_and_bound_loc_reparam,eq:branch_and_bound_time_reparam}.
\par
Then, upon receiving a source symbol $\rvx \sim P_\rvx$, the sender uses branch-and-bound GPRS (\cref{alg:gprs_sac}) to simulate $\rvy \sim P_{\rvy \mid \rvx}$.
Now, let $\nu$ denote the node accepted by the algorithm, $H_\nu$ its heap index, $D_\nu$ its depth and $B_\nu$ its associated bounds.
The sender encodes $\nu$ by encoding its depth $D_\nu$ using a zeta distribution $\zeta(d \mid \alpha) \propto (d + 1)^{-\alpha}$ with exponent%
\begin{align*}
\alpha = 1 + 1 \Big/ \left(\frac{\KLD{Q}{P} + 2 + (1+\gamma)\lb(e)}{\lb(e) - 1} + 2\right),   
\end{align*}
and encodes the heap index $H_\nu$ using arithmetic coding using $P(B_\nu)$ as the probability model.
Then,%
\begin{align}
\label{eq:bnb_gprs_codelength}
\Ent{\nu \mid \rvz} \leq \MI{\rvx}{\rvy} + \lb(\MI{\rvx}{\rvy} + 1) + \Oh(\lb(\lb(\MI{\rvx}{\rvy} + 1))).
\end{align}
\end{importantTheorem}
\begin{proof}
The proof proceeds almost identically as the proof of \cref{thm:bnb_a_star_codelength}: I bound the two parts of the code independently.
Thus, let $\rvx \sim P_\rvx$ be given for the moment.
Then, starting with the depth $D_\nu$, since it can be $0$, we encode $D_\nu + 1$ instead (as reflected in the theorem statement).
Now, by \cref{thm:bnb_gprs_runtime}, we have in this case that
\begin{align}
\label{eq:bnb_gprs_codelength_proof_depth_inequality}
\Exp[D_\nu + 1 \mid \rvx] \leq \frac{\KLD{P_{\rvy \mid \rvx}}{P_\rvy}}{\lb(e) - 1} + C,
\end{align}
where I set $C = 1 + (2 + (1 + \gamma)\lb(e))\, \big/\, (\lb(e) - 1)$
Hence, as I have shown in \cref{lemma:li_el_gamal_bound_on_pos_int_random_variable}, using the zeta distribution with the exponent given in the theorem statement will yield a scheme with expected description length bounded by
\begin{align}
\Exp[\lb(D_\nu + 1) &\mid \rvx] + \lb(\Exp[\lb(D_\nu + 1) \mid \rvx] + 1) + 1 \nonumber\\
&\leq \lb\Exp[D_\nu + 1\mid \rvx] + \lb(\lb \Exp[D_\nu + 1 \mid \rvx] + 1) + 1 \tag{Jensen}\\
&\leq\lb\left(\frac{\KLD{P_{\rvy \mid \rvx}}{P_\rvy}}{\lb(e) - 1} + C\right) + \lb(\lb \left(\frac{\KLD{P_{\rvy \mid \rvx}}{P_\rvy}}{\lb(e) - 1} + C\right) + 1) + 1 \nonumber 
\end{align}
Averaging over $\rvx$ and using Jensen's inequality, we get that the average codelength is bounded above by
\begin{align}
\lb\left(\frac{\MI{\rvx}{\rvy}}{\lb(e) - 1} + C\right) &+ \lb(\lb \left(\frac{\MI{\rvx}{\rvy}}{\lb(e) - 1} + C\right) + 1) + 1 \nonumber\\
&= \lb\left(\frac{\MI{\rvx}{\rvy}}{\lb(e) - 1} + C\right) + \Oh(\lb(\lb(\MI{\rvx}{\rvy})))\nonumber\\
&= \lb\left(\MI{\rvx}{\rvy} + 1\right) + \Oh(\lb(\lb(\MI{\rvx}{\rvy}))),\label{eq:bnb_gprs_codelength_proof_depth_code}
\end{align}
where for the last equality, I used logarithm rules as well as the inequality $\lb(x + \alpha + 1) \leq \lb(x + 1) + \lb(\alpha + 1)$ for $x, \alpha \geq 0$ to simplify the first log term and absorb the constants in the $\Oh(\lb(\lb(\mu + 1)))$ term.
Next, the sender encodes $H_\nu$ given $D_\nu$ using arithmetic coding, requiring a total of $-\lb P(B_\nu) + 2$ bits according to \cref{eq:arithmetic_coding_efficiency}.
Taking expectations and applying \cref{lemma:bnb_gprs_log_bound_size_kl_inequality}, we find that the average codelength will be upper bounded by
\begin{align}
\label{eq:bnb_gprs_codelength_proof_heap_index_code}
\Exp_{\rvx}[\KLD{P_{\rvy \mid \rvx}}{P_\rvy}] + 2 = \MI{\rvx}{\rvy} + 2.
\end{align}
Finally, putting \cref{eq:bnb_gprs_codelength_proof_depth_code,eq:bnb_gprs_codelength_proof_heap_index_code} yields the desired result.
\end{proof}
\subsection[Higher-Dimensional and More General Branch-and-bound Variants]{Higher-Dimensional and More General \texorpdfstring{\\}{} Branch-and-bound Variants}
\label{sec:general_branch_and_bound_variants}
\par
\textbf{Generalising the algorithms to $\Reals^n$ when the density ratio factorises.}
We can immediately ``generalise'' branch-and-bound A* sampling and GPRS to multivariate problems to encode a random vector $\rvy \in \Reals^n$ where the Radon-Nikodym derivative of the target distribution with respect to the coding distribution factorises
\begin{align*}
\frac{dQ_{\rvy}}{dP_\rvy} = \prod_{m = 1}^n \frac{dQ_{\rvy_m \mid \rvy_{1:m - 1}}}{dP_{\rvy_m \mid \rvy_{1:m - 1}}},
\end{align*}
and each of the $\frac{dQ_{\rvy_m \mid \rvy_{1:m - 1}}}{dP_{\rvy_m \mid \rvy_{1:m - 1}}}$ are almost surely quasiconcave.
In this case, we can apply the branch-and-bound samplers dimensionwise to get fast sampling algorithms.
\par
\textbf{Wait, does this solve multivariate relative entropy coding?}
Unfortunately, only partially.
This is because the channel simulation protocols I describe in \cref{thm:bnb_a_star_codelength,eq:bnb_gprs_codelength} have an $\Oh(1)$ cost per encoded dimension.
This means that if we encode the dimensions of $\rvy$ individually, the average codelength of the scheme has an additional $\Oh(n)$ scaling with dimensionality.
When $n$ is much smaller than the information content of the problem, this extra cost is negligible.
Unfortunately, in most practical machine learning-based compression problems where relative entropy coding has the most direct application and highest potential, the reverse situation is usually true: $\rvy$ will be high-dimensional, and most dimensions will contain very little information.
In \cref{sec:combiner_analysis_and_ablation}, I describe a concrete scenario where this occurs; see the rightmost panels in \cref{fig:combiner_visualisations} for an illustration.
\par
\textbf{Generalising the algorithms to $\Reals^n$ via supporting hyperplanes.}
We can also extend \cref{alg:a_star_sac,alg:gprs_sac} to probability distributions $Q \ll P$ over $\Reals^n$ with quasiconcave $r = dQ/dP$ by replacing on-sample splitting with splitting along \textit{supporting hyperplanes}.
A supporting hyperplane of a set $S \subseteq \Reals^{n}$ is an $(n - 1)$-dimensional hyperplane $H$, such that 1) $S$ is entirely contained in one of the closed half-spaces defined by $H$ and 2) $H$ contains at least one of the boundary points of $S$. 
Importantly, by the quasiconcavity of $r$, all its $\alpha$-superlevel sets $S_\alpha$ are convex, and by the \textit{supporting hyperplane theorem} \citep{boyd2004convex}, every boundary point of $S_\alpha$ has a supporting hyperplane. 
\par
Thus, we can generalise the splitting arguments from \cref{sec:bnb_a_star,sec:bnb_gprs} as follows. Assume we already know that it is sufficient to keep searching in some convex set $B \subseteq \Reals^n$ and we simulate the next arrival $(Y_B, T_B)$ of $\PoissonProcess$ restricted to $B$ that is consistent with our search procedure.
Then, if the search does not terminate, we only need to keep searching in the $\alpha$-superlevel set of $r$ for some appropriate $\alpha > r(Y_B)$.
Then, as a corollary of the supporting hyperplane theorem, we can always find a hyperplane $H(Y_B) \subseteq \Reals^n$ that contains $Y_B$ and is parallel to a supporting hyperplane of $S_\alpha$.
This means that, analogously to the one-dimensional algorithms, we can always discard the half-space defined by $H(Y_B)$ that does not contain $S_\alpha$, thereby speeding up the search.
As before, if $r$ is differentiable in $x$, we can easily find such a hyperplane since the gradient $\nabla_x r(x)$ is always normal to the boundary of $S_{r(x)}$!
\par
However, unfortunately, implementing these variants in practice is difficult because sampling from the set of possible solutions at step $d$, called the \textit{feasible region}, can quickly become difficult.
This is because the feasible region is equal to the convex set defined by intersections of the $d-1$ half-spaces the algorithm has searched through so far; equivalently, the feasible region is defined by $d - 1$ linear inequalities.
This reduces to the problem of sampling from a potentially unbounded convex polytope whose vertices we do not know, which in general is an NP-hard problem \citep{khachiyan2009generating}.
However, good polynomial-time approximate samplers exist, such as the Hit-and-run sampler \citep{belisle1998convergence}, which could be used to implement good approximate branch-and-bound samplers, but I leave this for future work.
\par
\textbf{Using different splitting functions.}
Thanks to \cref{corollary:simulate_pp_via_splitting_fun}, we may define far more general variants of both A* sampling and GPRS that work over an arbitrary Polish space $\YSpace$, so long as we can define a sensible splitting function $\splitFun$ over it.
However, these more general variants now lose the optimality guarantees of \cref{alg:a_star_sac,alg:gprs_sac}. They are also more complex to implement since any splitting mechanism other than on-sample/hyperplane splitting will break the guarantee that we only need to check a single branch of their induced BSP structure over the base process $\PoissonProcess$.
\par
In particular, the branch-and-bound variant A* sampling now needs to keep track of the entire search frontier.
Fortunately, as shown in \citet{maddison2014sampling,maddison2016poisson}, this can be efficiently achieved via a clever data structure called a \textit{priority queue}.
Furthermore, a generalised version of my splitting argument still holds for this variant of A* sampling: we cannot discard one of the half-spaces immediately at each step. 
It allows us to use a different upper bound $M_B$ for each search region $B$, improving its sample complexity compared to the fully general variant (\cref{alg:global_a_star}).
A particularly natural general branch-and-bound variant is the one using the dyadic splitting function (\cref{def:dyadic_splitting_function}), as the shrinkage of the bounds for the feasible region is worst-case optimal: their volume always decreases by a factor of $2$.
We can also use this variant for relative entropy coding by encoding the heap index of the returned sample as before.
With Stratis Markou, I benchmarked this variant on one-dimensional problems in our paper \citep{flamich2022fast} and showed that its performance closely matches the performance of \cref{alg:a_star_sac} on quasiconcave problems.
Furthermore, we also showed that its performance degrades gracefully as the number of modes increases: as a rule of thumb, the runtime increases by one step (compared to the optimal runtime on an ``equivalent'' unimodal problem) for each doubling in the number of modes; e.g.\ sampling from $Q$ where $r=dQ/dP$ has $16$ modes requires roughly four more steps than sampling from an ``equivalent'' $Q'$ with unimodal $r'=dQ'/dP$.
Here, by ``equivalence'' I mean that the width functions of the density ratios coincide: ${\Prob_{Z \sim P}[r(Z) \geq h] = \Prob_{Z \sim P}[r'(Z) \geq h]}$.
\par
While GPRS could be generalised similarly, I took a different approach in \citet{flamich2023gprs}.
Instead of maintaining the whole search frontier, I flip a biased coin at each split with a carefully chosen probability of landing heads to decide in which branch to continue the search.
Then, I realise the rest of the branch \textit{conditioned on the event that the accepted sample must come from it}.
In \citet{flamich2023gprs}, I show that when using this generalised variant of branch-and-bound GPRS using dyadic splitting (\cref{def:dyadic_splitting_function}), it performs comparably and sometimes even better than \cref{alg:gprs_sac}.
Unfortunately, the theoretical analysis of these generalised algorithms is significantly harder than their on-sample splitting counterparts, and I leave it for future research.
\section{Numerical Experiments}
\label{sec:numerical_experiments}
In this section, I investigate the tightness of the results I derived so far on a few synthetic examples.
I implemented all algorithms in the \texttt{Python} programming language using the \texttt{jax} numerical computing library \citep{jax2018github} and using the computational techniques I described in \cref{sec:implementation_considerations}.
I now turn to describe the experiments, whose results I showcase in \cref{fig:sampler_numerical_experiments}.
\par
\textbf{Gaussian relative entropy coding.}
In this first experiment, I experimentally verify the results I derived for the runtimes of the general and branch-and-bound variants of A* sampling and GPRS in \cref{thm:global_a_star_runtime,thm:global_gprs_runtime,thm:bnb_a_star_runtime,thm:bnb_gprs_runtime} and the average description lengths of the relative entropy coding protocols derived from them in \cref{thm:global_a_star_codelength,thm:global_gprs_codelength,thm:bnb_a_star_codelength,thm:bnb_gprs_codelength}.
Namely, in \cref{fig:sampler_numerical_experiments} (A) and (B), I consider simulating a one-dimensional additive white Gaussian noise (AWGN) channel assuming a Gaussian source distribution:
\begin{align}
\rvx &\sim \Normal(0, \sigma^2) \tag{source distribution}\\
(\rvy \mid \rvx) &\sim \Normal(\rvx, \rho^2). \tag{AWGN channel}
\end{align}
From this setup, it follows that marginally $\rvy \sim \Normal(0, \sigma^2 + \rho^2)$ and hence the density ratio is given by
\begin{align}
\frac{dP_{\rvy \mid \rvx}}{dP_\rvy}(y) = r_\rvx(y) &= \sqrt{\frac{\sigma^2 + \rho^2}{\rho^2}} \cdot \exp\left(\frac{\rvx^2}{2 \sigma^2} - \frac{(y - \nu)^2}{2 \kappa^2} \right) \label{eq:gauss_gauss_density_ratio}\\
\nu &= \rvx \cdot \frac{\sigma^2 + \rho^2}{\sigma^2} \nonumber\\
\kappa^2 &= \rho^2 \cdot \frac{\sigma^2 + \rho^2}{\sigma^2}\nonumber
\end{align}
\begin{figure}[H]
\ref*{legend:sampler_numerical_experiments}%
\\
\includegraphics{5-BranchAndBound/img/experiments}
\caption[Numerical comparison of the algorithms described in the thesis]{Numerical comparison of the algorithms I described in \cref{sec:sampling_as_search,sec:branch_and_bound_samplers}.
In each plot, \textit{dashed lines}  indicate the mean, \textit{solid lines} the median and the \textit{shaded areas} the 25 - 75 percentile region of the relevant performance metric.
I computed the statistics over 1000 runs for each setting.
\textbf{(A)} Runtime comparison on a 1D Gaussian channel simulation problem $P_{\rvx, \mu}$, plotted against increasing mutual information $I[\rvx; \mu]$.
The sender receives $\mu \sim \Normal(0, \sigma^2)$ and encodes a sample $\rvx \mid \mu \sim \Normal(\mu, 1)$ to Bob. 
\textbf{(B)}
Average codelength comparison of the algorithms on the same relative entropy coding problem as above.
\textbf{(C)}
One-shot runtime comparison of the branch-and-bound variants of A* coding and GPRS.
The sender encodes samples from a target $Q = \Normal(m, s^2)$ using $P = \Normal(0, 1)$ as the proposal.
I computed $m$ and $s^2$ such that $\KLD{Q}{P} = 2$ bits for each problem, but $\infD{Q}{P}$ increases.
GPRS' runtime stays fixed as it scales with $\KLD{Q}{P}$, while the runtime of A* keeps increasing.
}
\label{fig:sampler_numerical_experiments}
\end{figure}
For simplicity, I set the channel noise variance as $\rho^2 = 1$ in my experiments.
Then, I swept over a range of values for the source variance $\sigma^2$, such that the mutual information $\MI{\rvx}{\rvy}$ ranges between $0.1$ and $12$ bits; that there is a one-to-one relationship between $\sigma^2$ and $\MI{\rvx}{\rvy}$ given $\rho^2$ in this case can be seen from \cref{eq:gauss_gauss_mi}, which I derive later in this section.
For each setting of $\MI{\rvx}{\rvy}$/$\sigma^2$, I sampled $1000$ different locations $\rvx_i \sim \Normal(0, \sigma^2)$ and ran \cref{alg:global_a_star,alg:global_gprs,alg:a_star_sac,alg:gprs_sac} to encode a sample from each $P_{\rvy \mid \rvx_i}$ using $P_\rvy$ as the coding distribution.
I plotted the results in \cref{fig:sampler_numerical_experiments} (A) and (B), which show the average runtime and codelength of the algorithms, respectively.
\par
The first interesting observation that we can make in \cref{fig:sampler_numerical_experiments} (A) is that the runtime of the general variants of A* sampling and GPRS explodes very quickly: already when $\MI{\rvx}{\rvy}$ reaches 6 bits, the algorithms took over an hour to run on a high-end consumer-grade computer!
Interestingly, we can work out analytically why this happens, and it turns out that the situation is even more pathological than the computational results initially suggest. As I show next, we might need to wait for \textit{arbitrarily long} before the samplers terminate!
To begin, by inspection we see from \cref{eq:gauss_gauss_density_ratio} that
\begin{align*}
\norm{r_\rvx}_\infty &= \sqrt{\frac{\sigma^2 + \rho^2}{\rho^2}} \cdot \exp\left(\frac{\rvx^2}{2 \sigma^2}\right).
\end{align*}
Now let $K$ denote the runtime of an arbitrary selection sampler (\cref{def:exact_selection_sampler}).
Using \cref{thm:selection_sampler_runtime_lower_bound}, we can lower bound the expectation of $K$ as follows: 
\begin{align}
\Exp[K] &= \Exp_{\rvx \sim P_{\rvx}}[\Exp[K \mid \rvx]] \nonumber\\
&\geq \Exp_{\rvx \sim P_{\rvx}}[\norm{r_\rvx}_\infty] \tag{\cref{thm:selection_sampler_runtime_lower_bound}} \\
&= \Exp_{\rvx \sim P_{\rvx}}\left[\sqrt{\frac{\sigma^2 + \rho^2}{\rho^2}} \cdot \exp\left(\frac{\rvx^2}{2 \sigma^2}\right)\right] \nonumber\\
&= \infty. \nonumber
\end{align}
This shows that the runtime of any general selection sampler is extremely heavy-tailed: even if the sampler is guaranteed to terminate with probability $1$, we might wait for an arbitrarily long time before the algorithm terminates: this is an instance of the waiting-time paradox \citep[Chapter 4;][]{kingman1992poisson}.
Essentially, the issue is that the tail of $\norm{r_\rvx}_\infty$ grows too quickly, which means that rare source symbols far out in the tails of $P_\rvx$ significantly influence the average runtime.
\par
Perhaps the simplest solution is to cap the number of samples the sampling algorithm can examine, as I have described in \cref{sec:approximate_sampling}.
However, limiting the number of steps means the sampler's output will not follow the desired target distribution $P_{\rvy \mid \rvx}$ exactly.
Thus, in \citet{flamich2023adaptive}, I describe an alternative solution if we desire exact samples and the sample space is $\Reals^n$: 
instead of using $P_{\rvy}$ as the coding distribution, we convolve it with a Gaussian to obtain $Q_\rvy = P_\rvy * \Normal(0, \Delta^2)$ and use this \textit{overdispersed} distribution $Q_\rvy$ as the coding distribution instead.
For $\Delta^2 > 0$, this choice introduces a small codelength overhead but guarantees that the expected runtime of the sampler is finite.
While any $\Delta^2 > 0$ works, as I show in \citet{flamich2023adaptive}, the $\Delta^2$ that achieves the optimal runtime-codelength overhead trade-off is
\begin{align*}
\Delta_{opt}^2 = \sigma \sqrt{\rho^2 + \sigma^2}.
\end{align*}
The interested reader is referred to the paper for details.
\par
Returning to the analysis of the original problem, the second curious observation we might make in \cref{fig:sampler_numerical_experiments} (A) is despite the supposedly much worse $\Oh(\infD{Q}{P})$ scaling for the one-shot runtime of branch-and-bound A* sampling (\cref{thm:bnb_a_star_runtime}), it does not do much worse on average than branch-and-bound GPRS, whose one-shot runtime scales as $\Oh(\KLD{Q}{P})$ (\cref{thm:bnb_gprs_runtime}).
Does this imply that the bound on the codelength of A* sampling could also be significantly improved?
Investigating the matter analytically reveals that the small performance gap is thanks to the fact that the bounds in this particular case are not too far apart.
Concretely, let me compute all other relevant quantities:
\begin{align}
\KLD{P_{\rvy \mid \rvx}}{P_\rvy} &= \frac{\lb(e)}{2}\cdot \left(\frac{\rho^2 + \rvx^2}{\rho^2 + \sigma^2} + \ln\left(\frac{\rho^2 + \sigma^2}{\rho^2}\right) - 1\right) \nonumber\\
\MI{\rvx}{\rvy} &= \frac{1}{2}\cdot \lb\left(\frac{\rho^2 + \sigma^2}{\rho^2}\right) \label{eq:gauss_gauss_mi}\\
\infD{P_{\rvy \mid \rvx}}{P_\rvy} = \lb \norm{r}_\infty &= \frac{\lb(e)}{2}\left(\frac{\rvx^2}{\sigma^2} + \ln\left(\frac{\rho^2 + \sigma^2}{\rho^2}\right) \right) \nonumber\\
\Exp_{\rvx \sim P_\rvx}[\infD{P_{\rvy \mid \rvx}}{P_\rvy}] &= \frac{1}{2}\cdot \left(\lb\left(\frac{\rho^2 + \sigma^2}{\rho^2}\right) +\lb(e) \right). \label{eq:gauss_gaus_exp_inf_div}
\end{align}
Here we see another counter-intuitive feature of the problem: $\Exp_{\rvx \sim P_\rvx}[\infD{P_{\rvy \mid \rvx}}{P_\rvy}] = \MI{\rvx}{\rvy} + \frac{1}{2}\lb(e)$, i.e.\ the average-case runtimes are only a constant apart, which is precisely what we observe \cref{fig:sampler_numerical_experiments}!
This not only demonstrates that the $\Oh(\infD{Q}{P})$ runtime scaling of branch-and-bound A* sampling does not necessarily translate to an undesirable average-case bound but also that the overhead in the description length as given in \cref{thm:bnb_a_star_codelength} is usually negligible in practice.
\par
On the codelength side, we see that the upper bounds on the average description lengths of the general variants of A* sampling and GPRS (\cref{thm:global_a_star_codelength,thm:global_gprs_codelength}) are essentially tight: they are within one bit of the actual performance!
On the other hand, we also see that the results on the codelengths of the branch-and-bound samplers (\cref{thm:bnb_a_star_codelength,thm:bnb_gprs_codelength}) could still be tightened somewhat.
\par
Finally, note that in \cref{fig:sampler_numerical_experiments}, I do not just report the mean performances but also the median and the inter-quartile ranges.
Unlike the mean, these latter statistics are robust to outliers and, as such, provide a more complete picture of the ``usual'' variability of the algorithm's performance.
There are two important takeaways from these statistics: first, we see that the codelength concentrates heavily around its mean: the median codelength is very close to the mean, and the interquartile range is within just two bits of the mean.
Second, we can observe a similar phenomenon for the runtimes of the branch-and-bound samplers: the median is essentially equal to the mean, and the interquartile range is quite tight.
This demonstrates that while the general selection samplers can have pathological behaviour, we can significantly improve performance when additional structure is available.
\par
\textbf{Branch-and-bound sampler runtime comparison.}
As I have shown above, the average-case performance of branch-and-bound A* sampling and GPRS might not differ very much.
However, an important question remains: is the $\Oh(\infD{Q}{P})$ upper bound on the runtime of A* sampling in \cref{thm:bnb_a_star_runtime} tight?
To investigate this, I consider the following set of one-shot problems. 
Fix $P = \Normal(0, 1)$ as the coding distribution and a desired relative entropy $\kappa$; in my experiments, I set $\kappa = 2$ bits.
Then, for a few, regularly spaced $\delta \in [2, 25]$, I constructed a sequence of Gaussian target distributions $Q_\delta = \Normal(\mu_\delta, \sigma^2_\delta)$ such that $\KLD{Q_\delta}{P} = \kappa$ and $\infD{Q_\delta}{P} = \delta$.
As I show in Appendix H of \citet{flamich2023gprs}, this can be achieved by setting
\begin{align*}
b &= 2\ln(2) \cdot \delta - 1 \\
\sigma^2_\delta &= \exp\big(W\big((2\ln(2) \cdot \kappa + b) \cdot \exp(b)\big) - b\big) \\
\mu_\delta &= \sqrt{2(1 - \sigma^2_\delta)\left(\ln(2) \cdot \delta + \frac{1}{2}\ln \sigma^2_\delta\right)},
\end{align*}
where $W$ is the principal branch of the Lambert $W$-function \citep{corless1996lambert}, defined by the relation $W(x) \cdot \exp(W(x)) = x$.
For each setting of $\delta$, I used branch-and-bound A* sampling and GPRS to encode 1,000 samples from $Q_\delta$ using $P$.
\par
In \cref{fig:sampler_numerical_experiments} (C), I report the empirical mean, median and inter-quartile range of these runs.
As we can see, while the runtime of GPRS stays consistently close to $\kappa$ for each chosen infinity divergence $\delta$, the runtime of A* sampling increases essentially linearly with $\delta$.
This also demonstrates an interesting property of branch-and-bound GPRS: unlike all other sampling algorithms I discussed in this thesis, it does not require that the density ratio $r$ be bounded, instead relying on the weakest possible requirement that $\KLD{Q}{P} < \infty$!
Finally, similarly to the results of the first experiment, we see that the runtimes of these algorithms are relatively robust in that we can see that the median runtime is very close to the mean with a tight interquartile range.
\section{Conclusion and Open Questions}
\label{sec:bnb_conclusions}
\par
In this chapter, I constructed fast variants of sampling algorithms I developed in \cref{chapter:rec_with_pp}.
In particular, I showed how branch-and-bound techniques can be adapted to develop fast sampling algorithms, building on the work of \citet{maddison2014sampling}.
Furthermore, I analysed these algorithms and have shown in \cref{thm:bnb_a_star_runtime,thm:bnb_gprs_runtime} that when the target-proposal density ratio is unimodal/quasiconcave, the runtime of the algorithms improves exponentially compared to their general-purpose counterparts, while their relative entropy coding efficiency does not change very much (\cref{thm:bnb_a_star_codelength,thm:bnb_gprs_codelength}).
I presented new, significantly simplified proofs compared to the original ones in \citet{flamich2022fast,flamich2023gprs}, yielding much tighter results than the originals.
Finally, I verified through numerical experiments on synthetic problems that the theoretical results align closely with empirical performance.
I would like to especially thank Lennie Wells, Fanny Seizilles and Peter Wildemann for checking my proof techniques for the runtime and codelength results in \cref{thm:bnb_a_star_runtime,thm:bnb_a_star_neg_log_bound_size,thm:bnb_gprs_runtime,lemma:bnb_gprs_log_bound_size_kl_inequality}. 
\par
There are several exciting directions for future work.
Perhaps the most straightforward is to more precisely characterise the runtime of the algorithms I presented in this chapter.
For example, an interesting question is whether we could tightly characterise the variance of the runtime or use techniques similar to the one I used in \citet[Theorem 3.2;][]{flamich2023gprs} to characterise the $\alpha$-moments of the runtime of general-purpose GPRS for $\alpha \in (0, 1)$.
\par
Another interesting question is whether these branch-and-bound variants could be successfully and efficiently implemented for higher-dimensional problems.
\par
Furthermore, it is also an interesting question to find out if structures other than quasiconcavity admit efficient algorithms.
While there has been some effort in both the sampling \citet{chewi2022rejection} and channel simulation communities \citet{hegazy2022randomized,sriramu2024optimal}, these only work for very specific types of distribution and lack the generality of quasiconcavity.
\par
Finally, the problem of fast, multivariate Gaussian relative entropy coding remains the holy grail of practical relative entropy coding.

\chapter{Compression with Bayesian Implicit Neural Representations}
\label{chapter:combiner}
In this final technical chapter of the thesis, I present an application of relative entropy coding for practical data compression with machine learning.
In fact, most of the theory work I presented in the past three chapters stemmed from practical model- and data compression problems, mainly based on the work of Marton Havasi, who first introduced the idea of relative entropy coding to machine learning \citep{havasi2019minimal}.
Marton supervised my MPhil thesis at Cambridge, and I am forever thankful for his guidance and friendship.
Thus, I dedicate this chapter to him.
\par
Machine learning is revolutionising data compression and information theory: not only does it offer improvements to existing ways to compress data, but it has also opened up entirely new ways of doing so.
The method I present in this chapter, compression with Bayesian implicit neural representations (COMBINER), is one of them.
The high-level idea is to \textit{treat data as a function} and fit a flexible parametric model, such as an artificial neural network, to this function.
Then, encoding the data becomes equivalent to encoding the model parameters. Thus, model compression becomes equivalent to data compression.
The great promise of COMBINER is its energy efficiency: the state-of-the-art ML-based compression methods are extremely compute-intensive and power-hungry.
On the other hand, COMBINER is only a few orders of magnitude more expensive than classical methods, thus representing one of the first plausible candidates to serve as the basis for widespread ML-based compression algorithms.
Now, before presenting the method itself, I first describe nonlinear transform coding, the theoretical framework underpinning it.
\section{Nonlinear Transform Coding}
\label{sec:nonlinear_transform_coding}
\par
As I already discussed in \cref{sec:source_coding,sec:rec_through_examples}, lossy source coding is interested in encoding a source sample $\rvx \sim P_\rvx$ in as few bits as possible while also ensuring that its reconstruction $\hat{\rvx} \sim P_{\hat{\rvx} \mid \rvx}$ is not too bad on average.
Concretely, we wish to ensure for some distortion function $d$ that the average distortion $\Exp_{\rvx, \hat{\rvx}}[d(\rvx, \hat{\rvx})]$ is not too much.
Thus, there are two competing objectives: we simultaneously want to use as few bits for the encoding as possible while also keeping the distortion low.
In \cref{chapter:rec_with_pp,chapter:branch_and_bound}, we saw that for a given source distribution $P_\rvx$ and reconstruction distribution $P_{\hat{\rvx} \mid \rvx}$, we can use relative entropy coding to encode a single reconstruction in approximately $\MI{\rvx}{\hat{\rvx}}$ bits (ignoring the logarithmic overhead).
Hence, we can express the best possible rate $R$ we can achieve for a given distortion $D$ as an optimisation problem over the set of all probability distributions $\DistFamily$ over the range of $\hat{\rvx}$:
\begin{align}
\label{eq:rate_distortion_function}
R(D) = \inf_{P_{\hat{\rvx} \mid \rvx} \in \DistFamily} \MI{\rvx}{\hat{\rvx}} \quad \text{subject to } \Exp_{\rvx, \hat{\rvx}}[d(\rvx, \hat{\rvx})] \leq D.
\end{align}%
The function $R(D)$ is called the \textit{rate-distortion function}.
However, achieving the efficiency of $R(D)$ faces two key practical challenges: 
1) ``training'' is difficult: it is impossible in practice to minimise the objective in \cref{eq:rate_distortion_function} over all possible conditional distributions, and 2) compression is difficult: to perform relative entropy coding, we need to be able to sample from the marginal distribution $P_{\hat{\rvx}}$, which is impossible to compute in any practical scenario.
The issue is that in practice, $\rvx$ is usually high-dimensional: it could be a high-resolution image, and different dimensions of $\rvx$ usually have nontrivial dependencies between each other.
For example, in a portrait depicting someone's face, the pixels corresponding to the left eye correlate with those corresponding to the right eye.
\par
\textit{Transform coding} deals with these issues: we model $\hat{\rvx} = g_\theta(\rvy)$ as a parametric function $g_\theta$ of a new variable $\rvy$. 
Furthermore, we restrict $\DistFamily$ to ``simple'' distributions: for example, we could assume that $P_{\rvy \mid \rvx}$ factorises dimensionwise or that it is Gaussian.
If $\DistFamily$ is not too restricted and $g_\theta$ is expressive enough (e.g.\ it is a large neural network), this is a very mild approximation; though now we also need to optimise over $\theta$.
However, this does not help with the second issue: even when $\DistFamily$ consists of simple distributions, $P_{\rvy}$ is usually still impossible to compute since it would require us to average the conditionals over the unknown data distribution: $P_{\rvy}(A) = \int P_{\rvy \mid \rvx}(A \mid x) \, dP_\rvx(x)$.
The learned compression community has adopted a pragmatic approach to dealing with this: instead of worrying too much about what $P_{\rvy}$ might be, the usual approach is to pick a coding distribution $Q_\rvy$ that is convenient for compression and try to mitigate the issues with the approximation quality.
\par
Most classical transform coding methods choose the transform $g_\theta$ to be linear.
Linear transforms are undoubtedly powerful, are often interpretable, and can greatly simplify the problem.
In particular, while we can tune some of their parameters, these transforms are usually selected to correspond to some inductive bias for the problem.
Perhaps the most well-known is the JPEG image compression format, which uses the discrete cosine transform (DCT) for $g_\theta$.
The motivation for using the DCT is to encode how structures of different scales change in the image instead of encoding the colour intensity for each pixel.
For example, in a picture of a clear sky, most pixels will be blue, and the shade of the blue will only change slowly throughout the picture; thus, we can compress it a lot by focusing primarily on encoding the slowly varying features of the image.
Importantly, the DCT is both interpretable and does not have many parameters, which makes manual tuning feasible.
\par
However, when combined with a simple latent distribution $Q_\rvy$, such as a Gaussian, the reconstruction distribution $P_{\hat{\rvx}} = g_\theta \pushfwd Q_\rvy$ is quite unlikely to match the true data distribution $P_\rvx$ closely.
Thus, a natural idea is to replace the linear transforms using some expressive modern machine learning model, such as an artificial neural network: these methods are collectively known as \textit{nonlinear transform coding} \citep{balle2020nonlinear}.
However, unlike in the classical case, these nonlinear transforms usually have many parameters, and manual tuning is no longer feasible.
Thus, we use the standard machine learning approach: we optimise the model parameters using gradient descent using an appropriate objective function.
The immediate candidate for the objective function is the rate-distortion function in \cref{eq:rate_distortion_function}.
However, it is not suitable for two reasons: 1) it has a hard constraint that is difficult to enforce by construction, and 2) the hard constraint is difficult to interpret when optimised directly.
Concretely, what does the hard constraint $\Exp_{\rvx, \hat{\rvx}}[d(\rvx, \hat{\rvx})] \leq D$ mean, and how should we set the constraint $D$ for a given distortion $d$?
Perhaps even more importantly, distortion is not necessarily what we ultimately care about anyway.
Rather, we want the reconstruction to, for example, look or sound good depending on the data.
Hence, despite the rate-distortion function being widespread in the information theory literature, I will adopt the ML approach here, and optimise its inverse, the \textit{distortion-rate} function:
\begin{align}
\label{eq:distortion_rate_function}
D(R) = \inf_{P_{\hat{\rvx} \mid \rvx} \in \DistFamily}\Exp_{\rvx, \hat{\rvx}}[d(\rvx, \hat{\rvx})] \quad \text{subject to } \MI{\rvx}{\hat{\rvx}} \leq R,
\end{align}
which describes the best possible expected distortion achievable at a given bitrate $R$.
Usually, there are fairly easy-to-interpret notions of rate for each data modality (e.g.\ bits per pixel for images, (kilo) bits per second for audio) that allow us to set the hard constraint $\MI{\rvx}{\hat{\rvx}} \leq R$ much easier.
\par
With the interpretability issue out of the way, I rely on a fairly standard approach to overcome the hard constraint. 
Concretely, I propose to optimise the Lagrange dual of \cref{eq:distortion_rate_function} with a slack variable $\beta > 0$, specialised for the transform coding setting I describe above:
\begin{align}
\label{eq:distortion_rate_lagrange_dual}
\Loss(P_{\rvy \mid \rvx}, \theta, \beta) = \Exp_{\rvx, \rvy}[d(\rvx, g_\theta(\rvy))] + \beta \cdot \MI{\rvx}{g_\theta(\rvy)} + \mathrm{constant}.
\end{align}
Naturally, to use gradient descent, $P_{\rvy \mid \rvx}$ needs to be parametric as well; popular choices in practice are to use Gaussians with $\rvx$-dependent means and variances or uniform distributions with $\rvx$-dependent locations and scales \citep{balle2018variational,flamich2020compressing,balle2020nonlinear}.
\subsection{Implicit Neural Representations}
\par
In the mid-2010s, when machine learning techniques were first successfully introduced into data compression through the break-through papers of \citet{balle2017end} and \citet{theis2017lossy}, they focused on improving the classical methods by replacing the original linear transform, such as the DCT in the case of images, by more expressive nonlinear analogues, such as convolutional neural networks.
These methods were then significantly improved by several highly influential follow-up papers (e.g.\ see the references in the review article of \citet{balle2020nonlinear}), which proposed important architectural innovations but did not veer away from the original recipe of improving classical transform coder architectures.
However, this changed in 2021, with the work of \citet{dupont2021coin}, who proposed the first ``purely'' machine learning-based solution: compression with implicit neural representations (COIN).
\par
The idea of an implicit neural representation (INR) is to conceptualise data as a continuous signal.
For example, we can think of a colour image as a function that maps $(X, Y)$ pixel coordinates to $(R, G, B)$ colour triples or think of weather data as a function that maps from the latitude and longitude $(\phi, \lambda)$ on the Earth to some vector representing some measurements at that location, such as temperature, humidity, wind speed and precipitation.
Now, we may heuristically appeal to some form of the universal approximation theorem for neural networks \citep{kidger2020universal}, which states that any continuous function can be arbitrarily well-approximated by a sufficiently powerful neural net.
This result suggests that if we consider a single datum $\Dataset$, such as an image as a dataset consisting of coordinate-value pairs $(c, v) \in \Dataset$, representing the underlying continuous signal sampled at certain (possibly irregular) intervals, we could fit a neural network $f(c \mid \rvw)$ to $\Dataset$, to memorise it.
Then, once we have fitted the network parameters $\rvw$, we can lossily reconstruct $\Dataset$ by evaluating $f$ at every coordinate location $c$.
Specifically, let $\rmC$ be the set of all input locations appearing in $\Dataset$.
Then, we can obtain a reconstruction by computing $\hat{\Dataset} = f(\rmC \mid \rvw)$, see \cref{fig:recombiner_schematic} for an illustration.
While this idea goes back at least to \citet{stanley2007compositional}, it only recently rose in popularity due to some works in 3D computer graphics such as neural radiance fields \citet{mildenhall2021nerf}.
\par
However, observe now that there is little to no information in the input locations $\rmC$: the data $\Dataset$ is encoded mostly in the weights $w$ of the INR fitted to the data.
This allows us to view INRs as a form of nonlinear transform coding: we have a fixed transform $g(\cdot) = f(\rmC \mid \cdot)$ and we choose the representation to be the conditional distribution $P_{\rvw \mid \Dataset} = \delta_w$, i.e.\ the point-mass at the location $w$.
This suggests a new way to compress data: we fit an INR to it and encode the network weights instead.
This observation (in a somewhat different form) was first made by \citet{dupont2021coin}, with several follow-up works extending their approach by proposing more sophisticated INR architectures \citep{dupont2022coin++,schwarz2022meta,schwarz2023modality}.
However, observe the issue this approach faces: since the representation $\delta_w$ is a point-mass, the mutual information term in \cref{eq:distortion_rate_lagrange_dual} is technically infinite.
COIN-type methods overcame this issue by quantising $w$ to high precision, e.g.,\ $32$ or $16$-bit floating point precision.
This ensures that the mutual information term is now a large but finite constant, hence these methods only optimise the distortion term in \cref{eq:distortion_rate_lagrange_dual}.
However, this is an unsatisfactory solution since, needless to say, restricting the conditional distribution to $32$ or $16$-bit precision point masses significantly limits the power of the transform coding framework.
This observation motivated me to propose compression with Bayesian implicit neural representations instead, where we choose $P_{\rvw \mid \Dataset}$ to be something other than a point-mass. 
The rest of this chapter describes how we can implement this in practice.
\section{The COMBINER Framework}
\label{sec:combiner_method}
\begin{figure}[t]
\centering
\includegraphics[width=\textwidth]{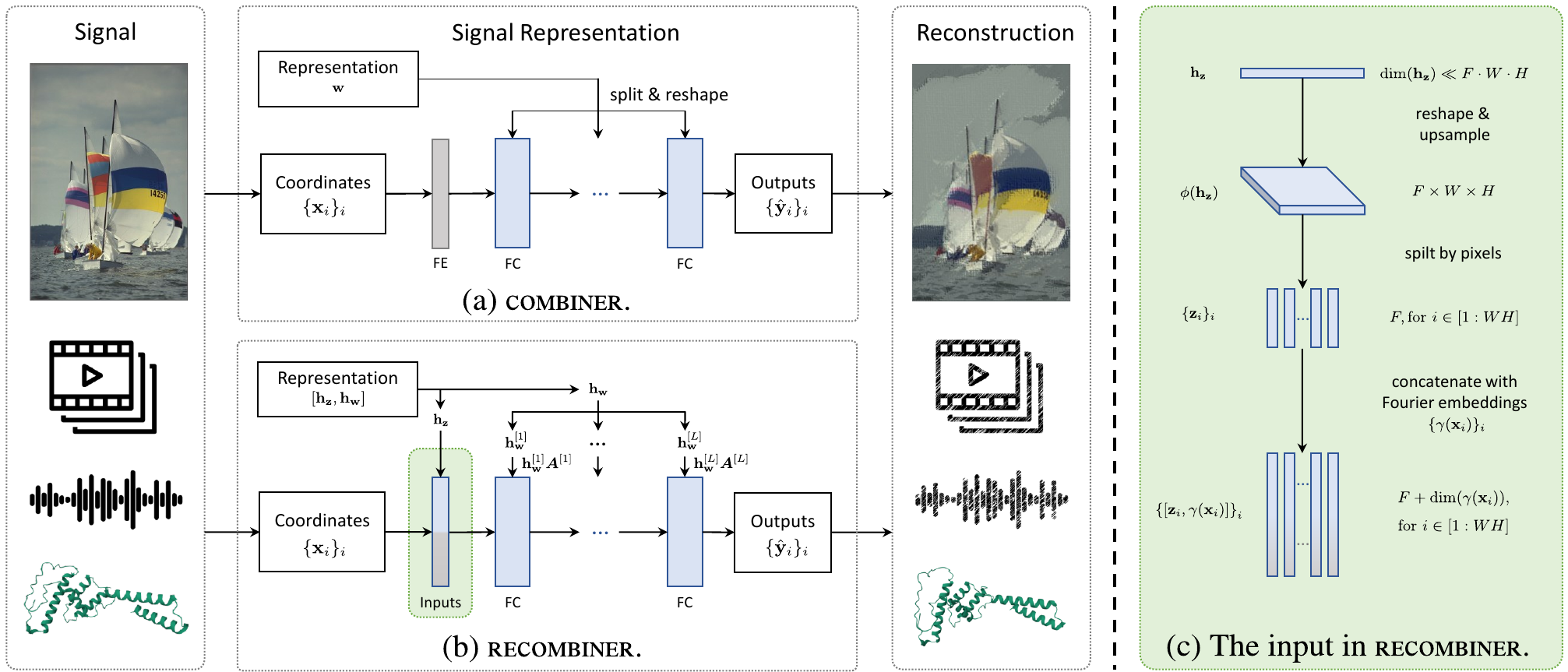}
\caption[Schematic of the COMBINER framework.]{Schematic of (a) the basic COMBINER framework and its extension using (b) the linear weight reparameterisation (\cref{sec:linear_reparam}) and (c) the learned positional embeddings (\cref{sec:recombiner_positional_embeddings}). 
As the INR's input, we upsample $\rvh_\rvz$ to pixel-wise positional encodings and concatenate them to the Fourier embeddings of the coordinates.
\textbf{RECOMBINER} stands for robust and enhanced COMBINER \citep{he2024recombiner}, using the improvements from \cref{sec:recombiner}.
\textbf{FE} stands for Fourier embeddings and \textbf{FC} for fully connected layer.}
\label{fig:recombiner_schematic}
\end{figure}
\par
In this section, I describe the \textit{compression with Bayesian implicit neural representations} (COMBINER) framework that my co-authors and I proposed in \citet{guo2023compression}.
Then, in \cref{sec:recombiner}, I  describe various improvements to the basic framework based on our follow-up work in \citet{he2024recombiner}.
\par
\textbf{The model.}
From a modelling perspective, COMBINER extends the COIN framework of \citet{dupont2021coin} by allowing practitioners to use a conditional distribution for the weights $P_{\rvw \mid \Dataset}$ other than a point-mass; in our experiments, we chose it to be a Gaussian distribution.
Formally, given some source distribution $P_\Dataset$, the COMBINER framework requires us to design an appropriate INR architecture $f$ that takes coordinates $\rmC$ and weights $\rvw$ that it uses to obtain a reconstruction of the data $\hat{\Dataset} = f(\rmC \mid \rvw)$.
Furthermore, we also need to choose an appropriate parametric conditional distribution (technically, a Markov kernel) over the weights $P_{\rvw \mid \Dataset} = P_{\rvw \mid \rvphi}$ for some parameters $\rvphi$.
This immediately defines a marginal distribution over the weights $P_\rvw(A) = \int P_{\rvw \mid \rvphi}(A \mid x) \, dP_\Dataset(x)$.
We also select a distortion function $d$ to measure the quality of the INR's reconstruction.
Then, we can specialise the distortion-rate loss \cref{eq:distortion_rate_lagrange_dual} for the case of an individual datum $\Dataset$ in COMBINER's case:
\begin{align}
\label{eq:combiner_rd_objective}
\Loss(\rvphi \mid \Dataset, \beta) &= \Exp_{\rvw \sim P_{\rvw \mid \rvphi}}[d(\Dataset, \hat{\Dataset})] + \beta \cdot \KLD{P_{\rvw \mid \rvphi}}{P_\rvw}
\end{align}
\par
\textbf{The compression method.}
Then, to encode some data $\Dataset$, we first compute $P_{\rvw \mid \rvphi}$ by minimizing \cref{eq:combiner_rd_objective} in $\rvphi$ using gradient descent.
Once we are satisfied with the $P_{\rvw \mid \rvphi}$ we obtained, we encode a single weight sample $\rvw \sim P_{\rvw \mid \rvphi}$ with a relative entropy coding algorithm, such as the ones I discussed in \cref{chapter:rec_with_pp,chapter:branch_and_bound}, using $P_\rvw$ as the coding distribution.
The appeal of this method is that since the efficiency of relative entropy coding algorithms for individual data points is approximately $\KLD{P_{\rvw \mid \rvphi}}{P_\rvw}$, and hence on average $\Exp_\Dataset[\KLD{P_{\rvw \mid \rvphi}}{P_\rvw}] = \Exp_\Dataset[\KLD{P_{\rvw \mid \Dataset}}{P_\rvw}] = \MI{\Dataset}{\rvw}$ (ignoring the log terms), minimising \cref{eq:combiner_rd_objective} directly optimises the method's distortion-rate performance.
\par
That's all there is to the COMBINER framework.
However, in practice, we still need to overcome several challenges to make it work and to work well.
For example, it is unlikely that we can compute $P_\rvw$ exactly, even if $P_{\rvw \mid \Dataset}$ factorises dimensionwise.
Furthermore, since we are computing $P_{\rvw \mid \Dataset}$ via gradient descent, we must ensure that the encoding process does not take too long.
Thus, in the following sections, I address these issues one by one.
\par
\textbf{A note on the name.}
In the machine learning literature, neural networks with stochastic weights are usually called \textit{Bayesian}.
Hence, following this terminology, in \citet{guo2023compression,he2024recombiner}, we called our networks \textit{Bayesian implicit neural representations}.
However, I should be very clear that it is difficult to argue that these networks are Bayesian in any meaningful sense, as in practice, we neither use the rules of Bayesian inference to compute the posterior over the weights $P_{\rvw \mid \Dataset}$ (in fact, we posit it) nor the posterior predictive $P_{\hat{\Dataset}}$.
Instead, these models are best interpreted from the lossy source coding perspective: they aim to minimise their distortion-rate function (\cref{eq:distortion_rate_function}) by optimising its Lagrange dual (\cref{eq:combiner_rd_objective}).
Thus, in hindsight, perhaps a better name for the method would have been COSINE: compression with \textit{stochastic} implicit neural representations.
\subsection{Training COMBINER}
\label{sec:training_combiner}
\par
As I mentioned above, an important limitation of COMBINER (and any other probabilistic compression method for that matter) is that for any weight posterior $P_{\rvw \mid \Dataset}$ we might choose, the marginal $P_{\rvw}(A) = \int P_{\rvw \mid \Dataset}(A \mid x)\,dP_\Dataset(x)$ is usually impossible to compute, since integrating over the source distribution is not tractable analytically.
To resolve this issue, we can utilise the fact that as we have seen in \cref{chapter:rec_with_pp,chapter:branch_and_bound}, for a given source sample $\Dataset \sim P_{\Dataset}$, we can use any coding distribution $Q_\rvw$ to encode a sample from $P_{\rvw \mid \Dataset}$ at the efficiency of approximately $\KLD{P_{\rvw \mid \Dataset}}{Q_\rvw}$, which will lead to an average coding efficiency of
\begin{align*}
\Exp_{\Dataset \sim P_\rvx}[\KLD{P_{\rvw \mid \Dataset}}{Q_\rvw}] 
&= \KLD{P_{\rvw, \Dataset}}{Q_\rvw \otimes P_\rvx} \\
&= \KLD{P_\rvw}{Q_\rvw} + \KLD{P_{\Dataset \mid \rvw}}{P_\rvx}.
\end{align*}
From the above, we see that the description length is minimised if $\KLD{P_\rvw}{Q_\rvw} = 0$, which occurs precisely when $Q_\rvw = P_\rvw$.
Thus, we can connect the mutual information $\MI{\rvx}{\rvw}$ to the coding efficiency of a relative entropy coding algorithm using an arbitrary coding distribution $Q_\rvw$ via the following variational minimisation problem:
\begin{align}
\label{eq:combiner_codelength_variational_formulation}
\MI{\rvx}{\rvw} = \min_{Q_\rvw \in \DistFamily_\rvw} \Exp_{\Dataset \sim P_\Dataset}[\KLD{P_{\rvw \mid \Dataset}}{Q_\rvw}],
\end{align}
where $\DistFamily_\rvw$ is the set of all valid probability measures over the range of $\rvw$.
Thus, we see that from the compression perspective, the intractability of calculating $P_\rvw$ is equivalent to computing the optimum of the minimisation problem in \cref{eq:combiner_codelength_variational_formulation}.
However, this suggests a principled way of resolving the intractability: instead of minimising the objective over the set of all valid probability measures, we pick a suitable smaller class $\Class$ to minimise over.
\par
The final hurdle is to note that $P_{\rvw \mid \Dataset}$ is not entirely trivial either: for a parametric posterior (which is always going to be the case in this thesis) with parameters $\rvphi$, recall that the posterior is defined as optimum of the distortion-rate objective \cref{eq:combiner_rd_objective} at an implicitly defined rate level given by the trade-off parameter $\beta$:
$P_{\rvw \mid \Dataset} = P_{\rvw \mid \rvphi^*}$, where $\rvphi^* = \argmin_{\rvphi}\Loss(\rvphi \mid \Dataset, \beta)$.
Thus, putting \cref{eq:combiner_rd_objective,eq:combiner_codelength_variational_formulation} together, a principled way to choose the coding distribution is to minimise the average distortion-rate objective
\begin{align*}
Q_\rvw^* = \argmin_{Q_\rvw \in \Class} \left(\min_{\rvphi}\left\{\Exp_{\Dataset, \rvw \sim P_{\Dataset} \otimes P_{\rvw \mid \rvphi}}[d(\Dataset, \hat{\Dataset})] + \beta \cdot \Exp_{\Dataset \sim P_\Dataset}[\KLD{P_{\rvw \mid \Dataset}}{Q_\rvw}]\right\}\right).
\end{align*}
In particular, for a dataset of $M$ points $\{\Dataset_1, \hdots, \Dataset_M\}$, where each $\Dataset_m$ has associated weight parameters $\rvphi_m$, this objective we wish to minimise admits the following Monte Carlo approximation:
\begin{align*}
 \sum_{i = m}^M \Exp_{\rvw \sim P_{\rvw \mid \rvphi_m}}[d(\Dataset_m, \hat{\Dataset}_m)] + \beta \cdot\KLD{P_{\rvw \mid \Dataset_m}}{Q_\rvw}] ,
\end{align*}
which we can now optimise using gradient descent or the expectation-minimisation (EM) algorithm \citep{mackay2003information}.
In the rest of this section, I describe the concrete setting we used in our papers \citet{guo2023compression,he2024recombiner}.
\par
\textbf{Training using fully-factorised Gaussian parameterisation.}
The humble Gaussian is perhaps the simplest sensible parameterisation for the weights: we choose $P_{\rvw \mid \Dataset}$ and $Q_\rvw$ to follow fully-factorised Gaussian distributions.
Concretely, considering the weights as a vector in some high-dimensional Euclidean space, we set 
\begin{align*}
P_{\rvw \mid \Dataset} &= \Normal(\rvmu(\Dataset), \rvsigma^2(\Dataset)) \\
Q_\rvw &= \Normal(\rvnu, \rvrho^2),
\end{align*}
where $\rvmu, \rvnu, \rvsigma^2$ and $\rvrho^2$ are vectors of equal dimension to $\rvw$, and the above notation for Gaussian distributions denotes fully-factorised distributions, i.e.\
\begin{align*}
\Normal(\rvnu, \rvrho^2) &= \bigotimes_{d} \Normal(\nu_d, \rho^2_d).
\end{align*}
Note that now the conditional mean and variance vectors constitute the parameters of the conditional distribution: $\rvphi = \{\rvmu, \rvrho^2\}$. 
The first advantage of this setting is that $\KLD{P_{\rvw \mid \Dataset_m}}{Q_\rvw}]$ admits a closed-form solution.
However, second, we can do even more: for a fixed set of posterior parameters $\{\rvphi_m\}_{m = 1}^M$, the optimum over the posterior parameters $\rvnu, \rvrho^2$ is also available in closed form:
\begin{align}
Q_\rvw^* = \Normal(\rvnu_*, \rvrho^2_*) &= \argmin_{Q_\rvw \in \{\Normal(\rvnu, \rvrho^2)\}} \sum_{m = 1}^M \KLD{P_{\rvw \mid \rvphi_m}}{Q_\rvw}] \nonumber\\
&\nonumber\\
\rvnu_* &= \frac{1}{M}\sum_{m = 1}^M \rvmu(\Dataset_m) \label{eq:combiner_em_mean_update}\\
\rvrho^2_* &= \frac{1}{M}\sum_{m = 1}^M \left[(\rvmu(\Dataset_m) - \rvnu^*)^2 + \rvsigma^2(\Dataset_m)\right]\label{eq:combiner_em_variance_update}
\end{align}
Therefore, in our papers, we used expectation-minimisation \citep{mackay2003information} to train the model hyperparameters: for a dataset $\{\Dataset_m\}_{m = 1}^M$, we alternated between optimising the posterior parameters $\rvphi_m$ and updating the prior parameters using the above identities.
\par
\textbf{Tuning the distortion-rate trade-off parameter $\beta$.}
Recall that $\beta$ arises in the distortion-rate objective as the slack variable associated with the soft constraint enforcing a particular rate/coding budget $R$.
As such, optimising \cref{eq:distortion_rate_lagrange_dual} with $\beta$ set to different values will yield models with different distortion-rate trade-offs.
Hence, in \citet{he2024recombiner}, we incorporated an adaptive tuning strategy in our training loop for $\beta$ that enables practitioners to set an explicit bitrate $R$, which I describe next.
\par
We start with a dataset $\{\Dataset_m\}_{m = 1}^M$, and initial guesses for the coding distribution $Q_\rvw$, the trade-off parameter $\beta$ and for each weight parameter $\rvphi_m$ associated to $\Dataset_m$.
Then, we iterate the following three steps until convergence:
\begin{enumerate}
\item \textbf{Optimise the variational parameters.} We fix the coding distribution $Q_\rvw$ and $\beta$, and use gradient descent to compute 
\begin{align}
\label{eq:combiner_variational weight_update}
\{\rvphi_m\}_{m = 1}^M \gets \argmin_{\{\phi_m\}_{m = 1}^M}\sum_{i = m}^M \Exp_{\rvw \sim P_{\rvw \mid \phi_m}}[d(\Dataset_m, \hat{\Dataset}_m)] + \beta \cdot\KLD{P_{\rvw \mid \phi_m}}{Q_\rvw}]   
\end{align}
\item \textbf{Update the coding distribution.} We fix $\{\rvphi_m\}_{m = 1}^M$ and $\beta$, and set the coding distribution $Q_\rvw \gets \Normal(\rvnu_*, \rvrho^2_*)$ according to \cref{eq:combiner_em_mean_update,eq:combiner_em_variance_update}.
\item \textbf{Adjust the trade-off parameter.} We fix $Q_\rvw$ and $\{\rvphi_m\}_{m = 1}^M$.
Now, we estimate the average codelength of the scheme by computing
\begin{align}
\label{eq:combiner_training_avg_kl_estimate}
\iota = \frac{1}{M}\sum_{m = 1}^M \KLD{P_{\rvw \mid \rvphi_m}}{Q_\rvw}.
\end{align}
Then, if $\iota > C$, we increase the rate penalty by setting $\beta \gets \beta \cdot (1 + \tau)$ and if $\iota < C - \epsilon$, we loosen the rate penalty by setting $\beta \gets \beta \big/ (1 + \tau)$, where $\epsilon$ is a small update threshold parameter and $\tau$ is the update step size.
In our experiments, we set $\tau = 0.5$ and set $\epsilon \in [0.05, 0.5]$, depending on the task.
In practice, we found $\beta$ to stabilise after $30-50$ iterations.
\end{enumerate}
\subsection{Compression with Posterior Refinement}
\label{sec:combiner_posterior_refinement}
\noindent
Once we have obtained the coding distribution using the training procedure I outlined at the end of \cref{sec:training_combiner}, we can share the coding distribution parameters between communicating parties, and we are ready to use COMBINER to compress data!
Hence, from now on, I assume that the coding distribution $Q_\rvw$ is fixed.
Indeed, as I already outlined earlier in this section, compressing some new datum $\Dataset$ consists of initialising the variational weight parameters $\rvphi$ and optimising \cref{eq:combiner_rd_objective}. 
However, to further improve the scheme's performance, we also adopted a progressive posterior refinement strategy, a concept proposed originally in \cite{havasi2019minimal} for Bayesian model compression.
\par
To motivate this strategy, recall that the reason why we restricted the form of the conditional weight distribution $P_{\rvw \mid \Dataset}$ to be Gaussian (or some equally simple distribution) is that otherwise optimising \cref{eq:combiner_rd_objective} is challenging.
However, the fact is that the optimal conditional distribution is quite far from our Gaussian approximation.
However, note that for compression, we only care about encoding a \textbf{single, good quality sample} using relative entropy coding.
Thus, to improve the quality of our encoded sample, \citet{havasi2019minimal} suggest partitioning the weight vector $\rvw$ into $K$ blocks $\rvw_{1:K} = \{\rvw_1, \hdots, \rvw_K\}$.
For example, we might partition the weights per neural network layer with $\rvw_k$ representing the weights on layer $k$ or into a preset number of random blocks; at the extremes, we could partition $\rvw$ per dimension, or we could just set $K = 1$ for the trivial partition.
Furthermore, let $\rvphi_{1:K}$ denote the variational parameters that correspond to the blocks $\rvw_{1:K}$, i.e.\ $\rvphi_k$ denotes the parameters of $P_{\rvw_k \mid \Dataset}$.
Then, to obtain a richer variational approximation given a partition $\rvw_{1:K}$, we start as before and optimise the variational parameters $\rvphi_{1:K}$ in \cref{eq:combiner_rd_objective} to obtain the factorised approximation 
\begin{align}
\rvphi_{1:K} = \argmin_{\phi_{1:K}}\Loss(\phi_{1:K} \mid \Data, \beta).
\end{align}
However, we now deviate from encoding a full, exact sample $\rvw_{1:K} \sim P_{\rvw_{1:K} \mid \Dataset}$ in one go.
Instead, \textbf{we only encode a sample from the first block} $\rvw_1 \sim P_{\rvw_{1} \mid \Dataset}$, and \textbf{refine} the remaining approximation:
\begin{align}
\rvphi_{2:K} = \argmin_{\phi_{2:K}}\Loss(\phi_{2:K} \mid \Data, \beta, \rvw_1),
\end{align}
where $\Loss(\cdot \mid \Dataset, \beta, \rvw_1)$ indicates that $\rvw_1$ is fixed during the optimization.
We now encode $\rvw_2 \sim P_{\rvw_2 \mid \Dataset, \rvw_1}$ to obtain the second chunk of our final sample.
We iterate the refinement procedure $K$ times, progressively conditioning on more blocks, until we obtain our final sample $\rvw = \rvw_{1:K}$.
Note that this way, the sample $\rvw$ has a significantly more complex distribution.
For example, even if we originally parameterised the conditional weight distribution $P_{\rvw_{1:K} \mid \Dataset}$ as a fully-factorised Gaussian distribution, already after the first refinement step, the distribution $P_{\rvw_{2:K} \mid \Dataset, \rvw_1}$ becomes \textit{conditionally factorised Gaussian}. 
This choice makes the model far more flexible, and thus it is guaranteed to achieve a lower distortion-rate score \citep{havasi2020refining}.
\par
\textbf{Combining the refinement procedure with compression:} 
Above, I assumed that after each refinement step $k$, we encode an exact sample $\rvw_k \sim P_{\rvw_{k} \mid \Dataset, \rvw_{1:k - 1}}$.
However, we can also extend the scheme to incorporate approximate relative entropy coding (\cref{sec:approximate_sampling}) by encoding an approximate sample $\tilde{\rvw}_k \sim \tilde{P}_{\rvw_{k} \mid \Dataset, \rvw_{1:k - 1}}$ using an approximate relative entropy coding algorithm; in our experiments, we used the step-limited variant of A* coding (\cref{alg:global_a_star}).
This way, we feed two birds with one scone:
the refinement process allows us to obtain a better overall approximate sample $\tilde{\rvw}_{1:K}$ by extending the variational family and by correcting for the occasional bad quality weight block $\tilde{\rvw}_k$ at the same time, thus making COMBINER more robust. 
\subsection{COMBINER in Practice}
\label{sec:combiner_in_practice}
\noindent
After outlining our posterior refinement strategy in \cref{sec:combiner_posterior_refinement}, I now deal with one more practical issue: the runtime of the compressor.
Concretely, given a partition $\rvw_{1:K}$ of the weight vector $\rvw$, we used step-limited A* coding to encode a sample $\tilde{\rvw}_k$ from each block.
Now, let $\delta_k = \KLD{P_{\rvw_k \mid \Dataset, \tilde{\rvw}_{1:k - 1}}}{Q_{\rvw_k}}$ represent the relative entropy in block $k$ after the completion of the first $k - 1$ refinement steps, where we have already simulated and encoded samples from the first $k - 1$ blocks. 
As I have shown in \cref{sec:approximate_sampling}, step-limited A* sampling needs to simulate on the order of $2^{\delta_k}$ samples from the coding distribution $Q_{\rvw_k}$ to ensure that it encodes a good quality approximate sample $\tilde{\rvw}_k$.
Therefore, for our method to be computationally tractable, it is important to ensure that there is no block with a large divergence $\delta_k$.
To guarantee that COMBINER's runtime is consistent, we would like the divergences across all blocks to be approximately equal, i.e., $\delta_i \approx \delta_j$ for $0 \leq i, j \leq K$.
To this end, we set a bit budget of $\kappa$ bits per block, and below, I describe the techniques we used to ensure $\delta_k \approx \kappa$ for all $k = 1,\hdots, K$.
Unless I state otherwise, we set $\kappa = 16$ bits in our experiments.
\par
Now, I describe how we partition the weight vector based on the training data to approximately enforce our budget on average.
As I explained in \cref{sec:training_combiner}, the practitioner can directly set the average coding budget $R$ during COMBINER's training phase.
Thus, given a block-wise coding budget of $\kappa$ bits, we need at least $\lceil R / \kappa \rceil$ blocks for the budget to be feasible.
However, we cannot divide the dimensions of the weight vector $\rvw$ into equal-sized blocks, as its dimensions do not equally contribute to the total coding cost/KL divergence.
We solve this in two steps:
\begin{enumerate}
\item 
First, we destroy as much correlation as possible between ``neighbouring dimensions,''  such as weights on the same hidden layer corresponding to the same activation.
To achieve this, the sender simulates a uniformly random permutation $\rvalpha$ on $\dim(\rvw)$ elements using their common randomness/shared PRNG seed.
Then, they use it to randomise the order in which the weights are transmitted by computing the permuted weight vector $\rvalpha(\rvw) = (\ervw_{\rvalpha(1)}, \ervw_{\rvalpha(2)}, \hdots \ervw_{\rvalpha(\dim(\rvw)})$, where $\ervw_i$ denotes the $i$th element of the weight vector.
Then, the decoder can sample the same permutation $\rvalpha$ and invert it to get the original ordering of the weight vector back.
\item
Second, to determine the blocks, after we finished training COMBINER, note that we can slightly modify \cref{eq:combiner_training_avg_kl_estimate} to estimate the average KL divergence $c_j$ in each dimension $\ervw_j$.
Then, starting with the first element of $\rvalpha(\rvw)$, we greedily assign the blocks using the next-fit bin packing algorithm \citep{johnson1973near} with item size $c_j$ and bin size $\kappa$.
In practice, we found that this procedure resulted in roughly even-sized bins, and the number of bins $K$ did not exceed the minimum necessary number of $\lceil R / \kappa \rceil$ bins by much.
\end{enumerate}
\par
\textbf{Relative entropy coding-aware fine-tuning:}
Assume we now wish to compress some data $\Data$, and we already selected the desired rate-distortion trade-off $\beta$, ran the training procedure to obtain the coding distribution $Q_\rvw$, fixed a bit budget $\kappa$ for each block and partitioned the weight vector using the procedure from the previous paragraph.
Despite our effort to set the blocks so that the average divergence $\bar{\delta}_k \approx \kappa$ in each block on the training data, if we optimised $P_{\rvw \mid \Dataset}$ using the distortion-rate loss in\cref{eq:combiner_rd_objective}, it is unlikely that the block-wise divergences $\delta_k$ would match $\kappa$.
Therefore, instead of optimise the Lagrange relaxation of a stricter, block-wise variant of the distortion rate function (\cref{eq:distortion_rate_function}): for each block $\rvw_k$ for $k \in [1:K]$ we introduce slack variables $\beta_k$ and obtain the posterior parameters $\rvphi$ by optimising
\begin{align}
\label{eq:combiner_blockwise_rd_objective}
\Loss(\rvphi \mid \Dataset, \beta_{1:k}) &= \Exp_{\rvw \sim P_{\rvw \mid \rvphi}}[d(\Dataset, \hat{\Dataset})] + \sum_{k = 1}^K \beta_k \cdot \KLD{P_{\rvw_k \mid \rvphi}}{P_{\rvw_k}},
\end{align}
where we adjusted the slack variables $\beta$ using the same adaptation procedure as I described at the end of \cref{sec:training_combiner} to ensure that the KL divergences all approximately match the block-wise coding budget $\kappa$.
Finally, we perform the block-wise posterior refinement procedure I described in \cref{sec:combiner_posterior_refinement}, during which we encode each block $\hat{\rvw}_k$ one-by-one using step-limited A* sampling.
\par
\textbf{The comprehensive COMBINER pipeline:}
Thus, a brief summary of the entire COMBINER compression pipeline is as follows:
\begin{enumerate}
\item \textbf{Design and training.} Given a dataset $\{\Data_1, \hdots, \Data_M\}$, we select an INR architecture and run the training procedure (\cref{sec:training_combiner}) with different settings for $\beta$ to obtain coding distributions for a range of rate-distortion trade-offs.
\item \textbf{Compression.} 
To compress a new datum $\Dataset$, we select a prior with the desired rate-distortion trade-off and pick a block-wise coding budget $\kappa$.
Next, we partition the weight vector $\rvw$ based on $\kappa$. 
Finally, we run the relative entropy coding-aware fine-tuning procedure from above, using step-limited A* coding to compress the weight blocks between the refinement steps to encode $\Data$.
\end{enumerate}%
\section[Improving the Basic Framework]{Improving the Basic Framework}
\label{sec:recombiner}
\begin{figure}[t]
\centering
\includegraphics[width=\textwidth, trim={0, 1.4cm, 0, 2.cm}]{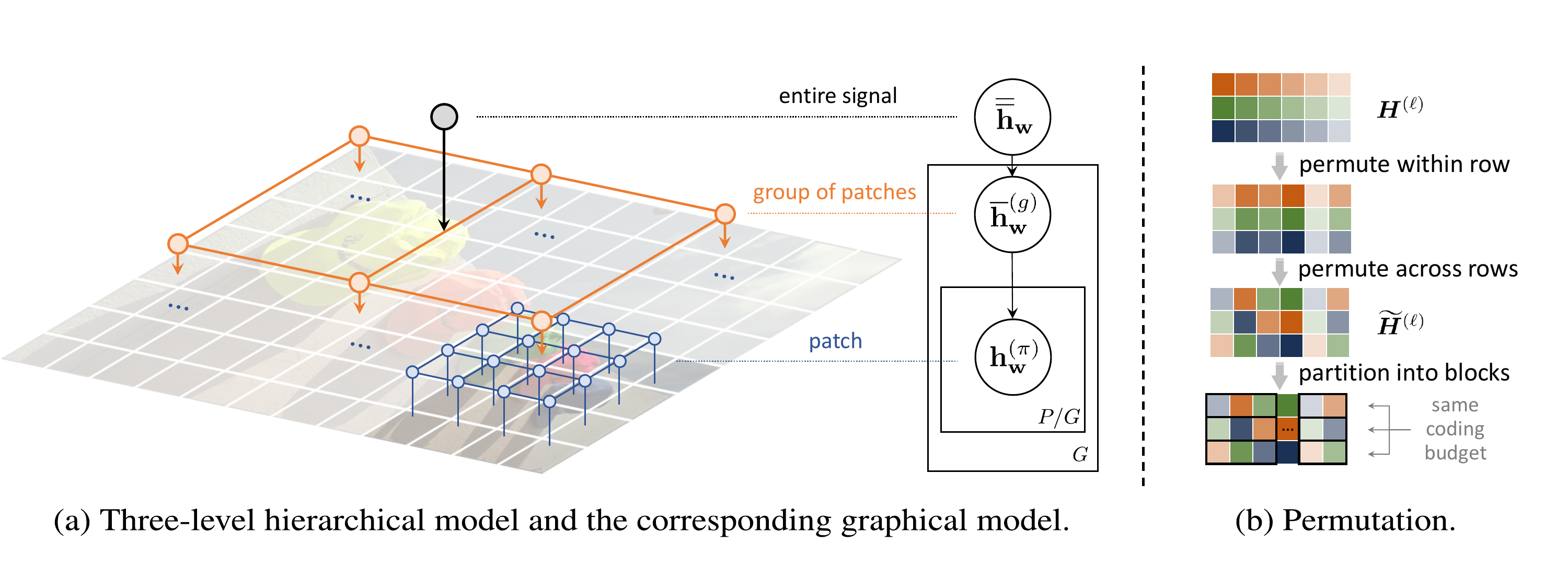}
\caption[Illustration of COMBINER's hierarchical model and advanced permutation strategy.]{Illustration of (a) the three-level hierarchical model and (b) our permutation strategy.}
\label{fig:illustration_hyper_prior}
\end{figure}
\noindent
This section proposes several extensions to the COMBINER framework (\cref{sec:combiner_method}) that significantly improve its robustness and performance. 
Concretely, I
1) introduce a linear reparameterisation for the INR's weights, which yields a richer variational posterior family; 
2) augment the INR's input with learned positional encodings to capture local features in the data and to assist overfitting and
3) scale the method to high-resolution image compression by dividing the images into patches and introducing an expressive hierarchical Bayesian model over the patch-INRs.
Contributions 1) and 2) are illustrated in \cref{fig:recombiner_schematic}, while 3) is shown in \cref{fig:illustration_hyper_prior}.
\subsection{Linear Reparameterization for the Network Parameters}
\label{sec:linear_reparam}
\noindent
We originally implemented COMBINER using a fully-factorised Gaussian weight posterior distribution:
\begin{align*}
P_{\rvw \mid \Dataset} = \Normal(\rvmu, \sigma^2) = \bigotimes_d \Normal(\mu_d, \sigma^2_d),  
\end{align*}
where I suppressed the dependence of $\rvmu, \rvsigma^2$ on the data $\Dataset$.
However, dimension-wise independent weights are known to be quite unrealistic \citep{izmailov2021bayesian} and to underfit the data \citep{dusenberry2020efficient}, which goes directly against our goal of fitting the INR to data as much as possible.
On the other hand, using a full-covariance Gaussian posterior approximation would significantly increase the training and coding time, even for small network architectures.
\par
Hence, I describe a solution we proposed in \citet{he2024recombiner} that lies in between: 
at a high level, we learn a linearly-transformed factorised Gaussian approximation that closely matches the full-covariance Gaussian posterior on average over the training data.
Formally, for each layer $l = 1,\hdots, L$, we model the weights as ${\rvw^{[l]} = \rvh_\rvw^{[l]} \mA^{[l]}}$, where the $\mA^{[l]}$ are square matrices, and we now use a fully-factorised Gaussian posterior and coding distribution for $\rvh_\rvw^{[l]}$ instead.
Then, we incorporate learning each weight transform $\mA^{[l]}$ into the training procedure (\cref{sec:training_combiner}), and we fix them for compression.
Thus, when compressing a new datum, we only infer the factorised posteriors $P_{\rvh_\rvw^{[l]} \mid \Dataset}$.
To simplify notation, I collect the $\mA^{[l]}$ in a block-diagonal matrix $\mA = \diag(\mA^{[1]}, \hdots, \mA^{[L]})$ and the $\rvh^{[l]}_\rvw$ in a single row-vector $\rvh_\rvw = [\rvh^{[1]}_\rvw, \hdots, \rvh^{[L]}_\rvw]$, so that now the weights are given by $\rvw = \rvh_\rvw \mA$. 
Our experiments found this layer-wise weight reparameterisation as efficient as using a joint one for the entire weight vector $\rvw$.
Hence, we used the layer-wise approach, as it is more parameter and compute-efficient.
\par
This simple yet expressive variational approximation has a couple of advantages. 
First, it provides an expressive full-covariance prior and posterior while requiring much less training and coding time.
Specifically, the KL divergence required by \cref{eq:combiner_rd_objective,eq:combiner_blockwise_rd_objective} is still between factorised Gaussians, and we do not need to optimise the full covariance matrices of the posteriors during coding.
Second, this parametrisation has scale redundancy: for any $c \in \Reals$, we have
$\rvh_\rvw \mA = (\nicefrac{1}{c}\cdot\rvh_\rvw)(c\cdot\mA)$.
Hence, if we initialise $\rvh_\rvw$ suboptimally during training, $\mA$ can still learn to compensate for it, making our method more robust.
Finally, note that this reparameterisation is tailored explicitly for INR-based compression and would usually not be feasible in other BNN use cases since we learn $\mA$ while inferring multiple variational posteriors simultaneously.
\subsection{Learned Positional Encodings}
\label{sec:recombiner_positional_embeddings}
\noindent
A challenge for overfitting INRs, especially at low bitrates, is their \emph{global} representation of the data, in the sense that each of their weights influences the reconstruction at every coordinate. 
To mitigate this issue, we extended INRs in \citet{he2024recombiner} to take a learned positional input $z_i$ at each coordinate $c_i$: $f(c_i, z_i \mid \rvw)$.
\par
However, it is usually wasteful to introduce a vector for each coordinate in practice.
Instead, we used a lower-dimensional row-vector representation $\rvh_\rvz$, which we reshape and upsample with a learnable function $\phi$.
For example, in the case of a $W \times H$ image with $F$-dimensional positional encodings,  we could pick $\rvh_\rvz$ such that $\dim(\rvh_\rvz) \ll F \cdot W \cdot H$, then reshape and upsample it to be $F \times W \times H$ by picking $\phi$ to be some small convolutional network.
Then, we set $\rvz_i = \phi(\rvh_\rvz)_{\rvx_i}$ to be the positional encoding at location $\rvx_i$.
We placed a factorised Gaussian prior and variational posterior on $\rvh_\rvz$.
Hereafter, we refer to $\rvh_\rvz$ as the \emph{latent} positional encodings, $\phi(\rvh_\rvz)$ and $\rvz_i$ as the \emph{upsampled} positional encodings.
\subsection{Scaling to High-Resolution Images with Patches}
\label{sec:scaling_with_patches}
With considerable effort, we can scale the basic implementation of COMBINER to high-resolution images by significantly increasing the number of INR parameters.
However, the training procedure becomes very sensitive to hyperparameters, including the initialisation of variational parameters and model size selection. 
Unfortunately, improving the robustness of large INRs using the weight reparameterisation I describe in \cref{sec:linear_reparam} is also impractical because the size of the transformation matrix $\mA$ grows quadratically in the number of weights. 
Therefore, in this section, I describe an alternative approach to scaling the method to high-resolution images using patches, in line with other INR-based works \citep{dupont2022coin++, schwarz2022meta,schwarz2023modality}.
\par
The high-level strategy is to first divide a high-resolution image into a grid of lower-resolution patches, then apply COMBINER separately to each patch to encode them.
However, the patches' INRs are independent by default, hence we re-introduce information sharing between the patch-INRs' weights via a hierarchical model for $\rvh_\rvw$.
Finally, at the end of this section, I discuss how we can take advantage of the patch structure to parallelise data compression and reduce the encoding time.
\par
\textbf{The hierarchical Bayesian model:} 
We posit a global latent representation for the weights $\rvhbar_\rvw$, from which each patch-INR can deviate.
Thus, assuming that we split the data $\Data$ into $P$ patches, for each patch $\pi \in 1,\hdots, P$, we need to define the conditional distributions of patch representations $\rvh_\rvw^{(\pi)} \mid \rvhbar_\rvw$.
However, since we wish to model deviations from the global representation, it is natural to
decompose the patch representation as $\rvh_\rvw^{(\pi)} = \Delta\rvh^{(\pi)}_\rvw + \rvhbar_\rvw$, and specify the conditional distribution of the differences $\Delta\rvh_\rvw^{(\pi)} \mid \rvhbar_\rvw$ instead, without any loss of generality.
Thus, we used a fully-factorised Gaussian coding distribution and variational posterior on the joint distribution of the global representation and the deviations, given by the following product of $P + 1$ Gaussian measures:
\begin{align}
Q_{\,\rvhbar_\rvw, \Delta\rvh_\rvw^{(1:P)}} 
&= \Normal(\rvmubar_{\rvw}, \diag(\rvsigmabar_{\rvw})) \bigotimes_{\pi = 1}^P \Normal(\rvmu_{\Delta}^{(\pi)}, \diag(\rvsigma_{\Delta}^{(\pi)})) \label{eq:patch_prior}%
\\
P_{\,\rvhbar_\rvw, \Delta\rvh_\rvw^{(1:P)} \mid \Dataset} 
&= \Normal(\rvnubar_{\rvw}, \diag(\rvrhobar_{\rvw})) \bigotimes_{\pi = 1}^P \Normal(\rvnu_{\Delta}^{(\pi)}, \diag(\rvrho_{\Delta}^{(\pi)})), \label{eq:patch_posterior}
\end{align}
where $1:P$ is the slice notation, i.e.\ $\Delta\rvh_\rvw^{(1:P)} = \Delta\rvh_\rvw^{(1)}, \hdots, \Delta\rvh_\rvw^{(P)}$.
Importantly, while the posterior in \cref{eq:patch_posterior} assumes that the global representation and the differences are independent, $\rvhbar_\rvw$ and $\rvh_\rvw^{(\pi)}$ remain correlated.
Note that optimising \cref{eq:combiner_rd_objective,eq:combiner_blockwise_rd_objective} requires us to compute $\KLD{P_{\rvh_\rvw^{(1:P)} \mid \Dataset}}{Q_{\rvh_\rvw^{(1:P)}}}$.
Unfortunately, this calculation is infeasible due to the complex dependence between the $\rvh_\rvw^{(\pi)}$s.
Instead, we can minimise an \textit{upper bound} to it by observing that
\begin{align}
\KLD{P_{\rvh_\rvw^{(1:P)} \mid \Dataset}}{Q_{\rvh_\rvw^{(1:P)}}} &\leq \KLD{P_{\rvh_\rvw^{(1:P)} \mid \Dataset}}{Q_{\rvh_\rvw^{(1:P)}}} + 
  \KLD{P_{\,\rvhbar_\rvw \mid \rvh_\rvw^{(1:P)} \mid \Dataset}}{Q_{\,\rvhbar_\rvw \mid \rvh_\rvw^{(1:P)}}} \nonumber\\  
  &=\KLD{P_{\,\rvhbar_\rvw, \rvh_\rvw^{(1:P)} \mid \Dataset}}{Q_{\,\rvhbar_\rvw, \rvh_\rvw^{(1:P)}}} \nonumber\\
  &= \KLD{P_{\,\rvhbar_\rvw, \Delta\rvh_\rvw^{(1:P)} \mid \Dataset}}{Q_{\,\rvhbar_\rvw, \Delta\rvh_\rvw^{(1:P)}}}.\label{eq:augmented_hier_kl}
\end{align}
Hence, when training the patch-INRs, we replace the KL term in \cref{eq:combiner_rd_objective,eq:combiner_blockwise_rd_objective} with the divergence in \cref{eq:augmented_hier_kl}, which is between factorised Gaussian distributions and cheap to compute. 
Finally, we remark that we can view $\rvhbar_\rvw$ as side information, also prevalent in other neural compression codecs \citep{balle2018variational}, or auxiliary latent variables enabling factorisation \citep{koller2009probabilistic}.
\par
While \cref{eq:patch_prior,eq:patch_posterior} describe a two-level hierarchical model, nothing stops us from extending the hierarchical structure by breaking up patches further into sub-patches and adding extra levels to the probabilistic model.
For our experiments on high-resolution audio, images, and video, we found that a three-level hierarchical model worked best, with global weight representation ${\scriptstyle\rvhbbar_\rvw}$, second/group-level representations ${\scriptstyle\rvhbar_\rvw^{(1:G)}}$ and third/patch-level representations ${\scriptstyle\rvh_\rvw^{(1:P)}}$, illustrated in \cref{fig:illustration_hyper_prior}a.
Empirically, a hierarchical model for the learned positional encodings $\rvh_\rvz$ did not yield significant gains. Thus, we only use it for the INR weight representations $\rvh_\rvw$.
\begin{figure}[t]
\centering
\begin{subfigure}{\textwidth}
\centering
\ref*{legend:recombiner_img_rd_comparison}
\includegraphics[width=\textwidth]{6-COMBINER/img/tikz/img_rd_comparison.tikz}

\caption{RD curve on CIFAR-10 (left) and Kodak (right).}
\label{fig:rd_images}
\vspace{10pt}
\end{subfigure}
\includegraphics[width=\textwidth]{6-COMBINER/img/tikz/audio_video_protein_rd_comparison.tikz}
\caption[Rate-distortion curves on image, audio, video and protein data compression.]{Quantitative evaluation of COMBINER with and without the improvements I presented in \cref{sec:recombiner} on image, audio, video, and 3D protein structure. Kbps stands for kilobits per second, RMSD stands for Root Mean Square Deviation, and bpa stands for bits per atom. I also present comparisons with contemporary classical and neural data compression algorithms. 
Solid lines denote INR-based codecs, dotted lines denote VAE-based codecs, and dashed lines denote classical codecs. }
\end{figure}
\begin{figure}
\begin{subfigure}[t]{0.35\textwidth}
    \centering
\includegraphics[height=150pt]{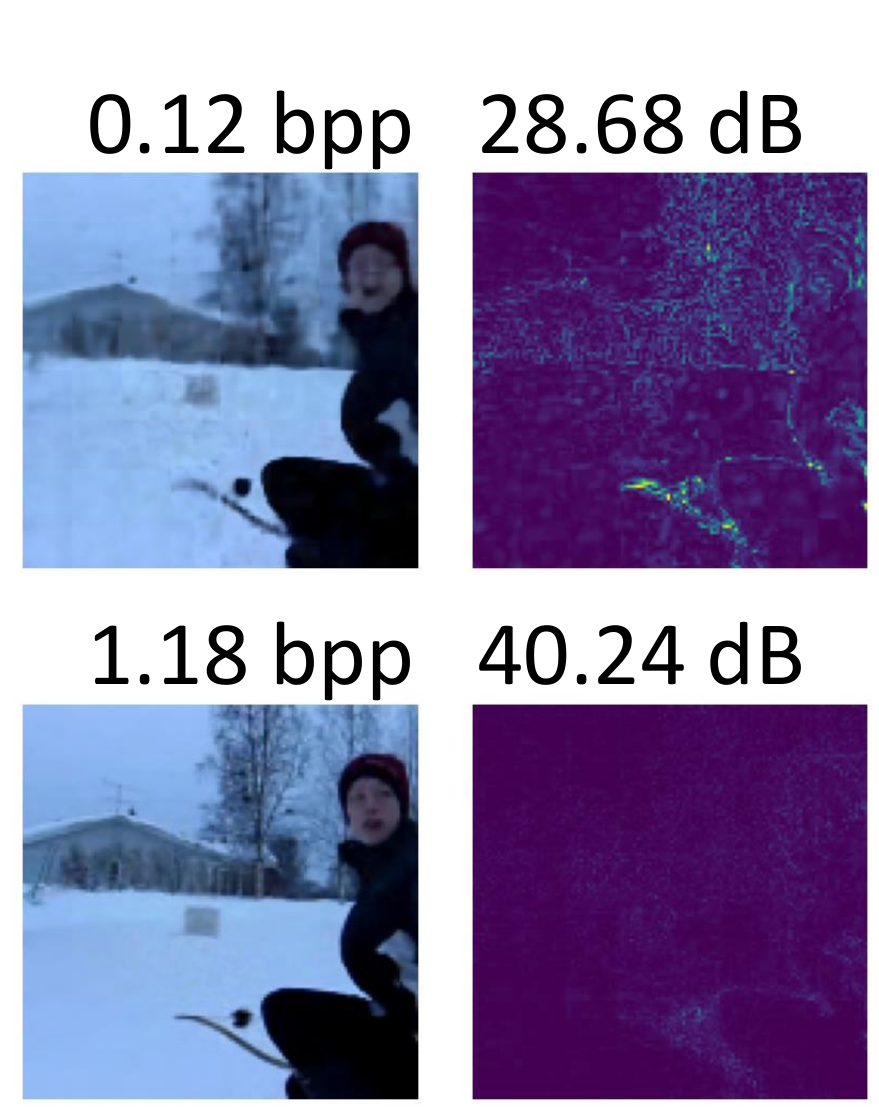}
\caption{Decoded video frames and residuals.}
\label{fig:video_examples}
\end{subfigure}\hspace{-10pt}
\begin{subfigure}[t]{0.65\textwidth}
\centering
\includegraphics[height=110pt]{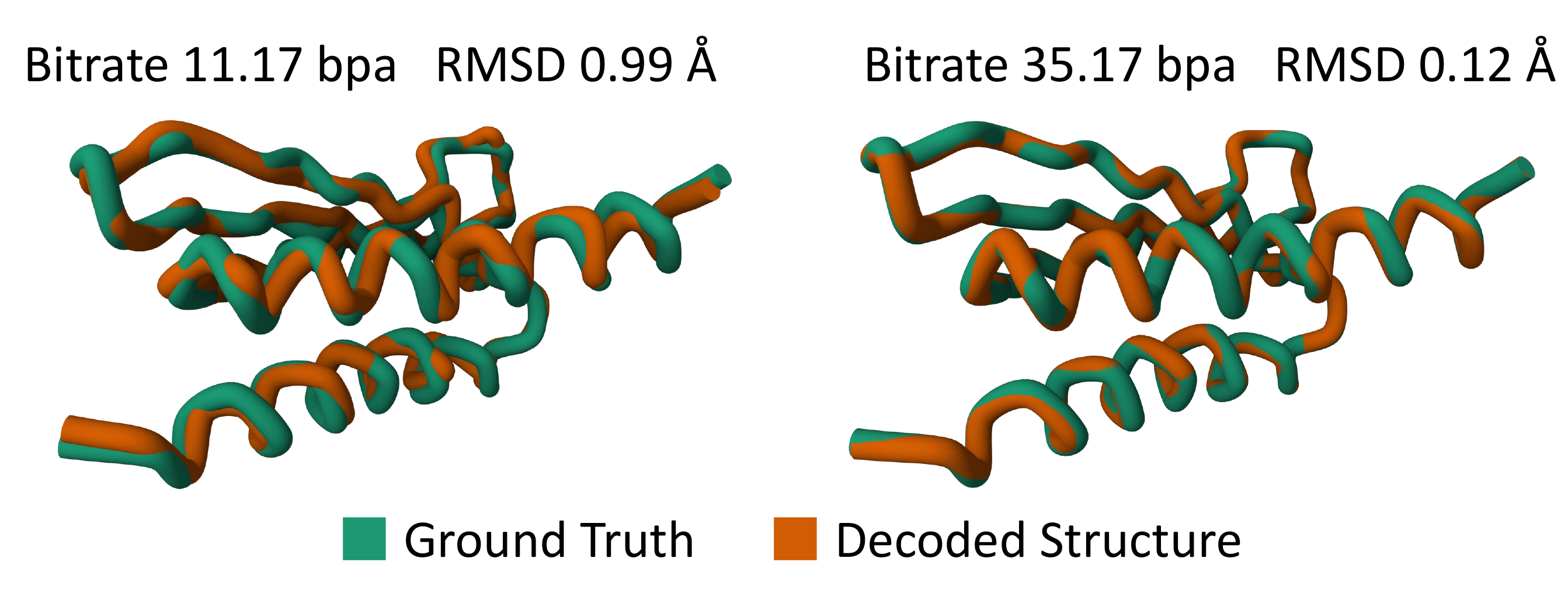}

\caption{Decoded protein structure examples.}
\label{fig:protein_examples}
\end{subfigure}
\caption{Non-cherrypicked examples of video frames and protein structures encoded using RECOMBINER.}
\label{fig:combiner_decoded_examples}
\end{figure}
\par
\textbf{Compressing high-resolution data with RECOMBINER:}
An advantage of patching is that we can compress and fine-tune INRs and latent positional encodings of all patches in parallel.
Unfortunately, compressing  $P$ patches in parallel using COMBINER's procedure is suboptimal since the information content between patches might vary significantly.
However, by carefully permuting the weights \emph{across} the patches' representations, we can 1) adaptively allocate bits to each patch to compensate for the differences in their information content and 2) enforce the same coding budget across each parallel thread to ensure consistent coding times.
Concretely, we stack representations of each patch in a matrix at each level of the hierarchical model.
For example, in our three-level model, we set
\begin{align}
\label{eq:stacked_unpermuted_representations}
\mH^{(0)}_{\pi, :} = [\rvh_\rvw^{(\pi)}, \rvh_\rvz^{(\pi)}],
\quad \mH^{(1)}_{g, :} = \rvhbar_\rvw^{(g)},
\quad \mH^{(2)} = \rvhbbar_\rvw, 
\end{align}
where we use slice notation to denote the $i$th row as $\mH_{i, :}$ and the $j$th column as $\mH_{:, j}$.
Furthermore, let $S_n$ denote the set of permutations on $n$ elements.
Now, at each level $\ell$, assume $\mH^{(\ell)}$ has $\mathcal{C}_\ell$ columns and $\mathcal{R}_\ell$ rows. We sample a single within-row permutation $\kappa$ uniformly from $S_{\mathcal{C}_\ell}$ and for each column of $\mH^{(\ell)}$ we sample an across-rows permutation $\alpha_j$ uniformly from $S_{\mathcal{R}_\ell}$ elements.
Then, we permute $\mH^{(\ell)}$ as $\widetilde{\mH}^{(\ell)}_{i, j} = \mH^{(\ell)}_{\alpha_j(i), \kappa(j)}$.
Finally, we split the $\mH^{(\ell)}$s into blocks row-wise and encode and fine-tune each row in parallel.
We illustrate the above procedure in \Cref{fig:illustration_hyper_prior}b.
\section{Experimental Results}
\label{sec:experiments}
This section presents an empirical study of COMBINER's compression performance on image, audio, video, and 3D protein structure data.
I also compare its performance to contemporary classical and learned lossy data compression methods and demonstrate that COMBINER achieves strong performance across all modalities.
Then, I present extensive ablation studies illuminating how each implementation detail I presented in \cref{sec:combiner_method,sec:recombiner} improves COMBINER's performance.
For convenience, throughout this section:
\begin{enumerate}
\item 
I refer to a model following the basic COMBINER framework (\cref{sec:combiner_method}) only, implemented using a fully-factorised Gaussian posterior and coding distribution directly over the weights as ``\textbf{COMBINER}.''
\item I refer to a model that also implements the improvements I describe in \cref{sec:recombiner} as ``\textbf{RECOMBINER}'', which stands for ``robust and enhanced COMBINER'' \citep{he2024recombiner}.
\end{enumerate}
\subsection{Data Compression Across Modalities}
Unless I state otherwise, we use a 4-layer, 32-hidden unit SIREN network \citep{sitzmann2020implicit} using Fourier feature embeddings for the input coordinates \citep{tancik2020fourier} as the INR architecture. For image compression with RECOMBINER, we used a small 3-layer convolution network as the upsampling network $\phi$.
\par
\textbf{Images:} We evaluated COMBINER and RECOMBINER on the CIFAR-10 \citep{krizhevsky2009cifar} and Kodak \citep{kodak1993dataset} image datasets; I present their rate-distortion (RD) performance in \cref{fig:rd_images}.
We also compared them against contemporary INR and variational autoencoder (VAE)-based methods, as well as VTM \citep{VTM}\footnote{\url{https://vcgit.hhi.fraunhofer.de/jvet/VVCSoftware_VTM/-/tree/VTM-12.0?ref_type=tags}}, BPG \citep{bpg} and JPEG2000.
\par
We observe that COMBINER exhibits competitive performance on the CIFAR-10 dataset, on par with COIN++ and marginally lower than MSCN, despite being conceptually much simpler. 
Furthermore, it achieves impressive performance on the Kodak dataset, surpassing JPEG2000 and other INR-based codecs.
This is likely in part due to our method not requiring an expensive meta-learning loop \citep{dupont2022coin++, strumpler2022implicit,schwarz2023modality}, which would involve computing second-order gradients during training. 
Since COMBINER avoids this cost, we can compress the whole high-resolution image using a single MLP network, thus the model can capture global patterns in the image. 
\par
Taking this further, RECOMBINER displays remarkable performance on CIFAR-10, especially at low bitrates, outperforming even VAE-based codecs. 
On Kodak, it outperforms most INR-based codecs and is competitive with the more complex VC-INR \citep{schwarz2023modality}.
Finally, while RECOMBINER still falls behind VAE-based codecs, it significantly reduces the performance gap.
As we can clearly see from the image compression experiments, the improvements of \cref{sec:recombiner} significantly improve COMBINER's performance.
Hence, for the experiments on other data modalities, I only present RECOMBINER's performance.
\par
\textbf{Audio:} We evaluated RECOMBINER on the LibriSpeech~\citep{LibriSpeech} dataset. 
In \cref{fig:rd_audio}, we depict its RD curve on the full test set alongside the curves of VC-INR, COIN++, and MP3.
We can see RECOMBINER outperforms both COIN++ and MP3 and matches with VC-INR.
\par
\textbf{Video:} We evaluated RECOMBINER on UCF-101 action recognition dataset \citep{soomro2012ucf101}, following \citet{schwarz2023modality}'s experimental setup. 
However, as they do not report their train-test split and due to the time-consuming encoding process of our approach, we only benchmark our method against H.264 and H.265 on 16 randomly selected video clips.  
\Cref{fig:rd_video} shows RECOMBINER achieves comparable performance to the classic domain-specific codecs H.264 and H.265, especially at lower bitrates. 
However, a gap exists between our approach and H.264 and H.265 when configured to prioritise quality. 
\Cref{fig:video_examples} shows non-cherrypicked frames from a video compressed with RECOMBINER at two different bitrates and its reconstruction errors.
\par
\textbf{3D Protein Structure:}
To further illustrate the applicability of RECOMBINER, we used it to compress the 3D coordinates of C$\alpha$ atoms in protein fragments.
We take domain-specific lossy codecs as baselines, including Foldcomp \citep{kim2023foldcomp}, PDC \citep{zhang2023pdc} and PIC \citep{staniscia2022pic}.
Surprisingly, as shown in \cref{fig:rd_protein}, RECOMBINER's performance is competitive with highly domain-specific codecs.
Furthermore, it allows us to tune its rate-distortion performance, whereas the baselines only support a certain compression rate.  
Since the experimental resolution of 3D structures is  
typically between 1-3 \AA{} \citep{PDBstatistics}, RECOMBINER could help reduce the increasing storage demand for protein structures without losing key information. \Cref{fig:protein_examples} shows non-cherry-picked examples compressed with our method.
\subsection{Analysis and Ablation Studies for COMBINER} \label{sec:combiner_analysis_and_ablation}
\noindent
\textbf{Model Visualizations:}  
To provide more insight into COMBINER's behaviour, I visualise the parameters and information content on the second hidden layer of two small 4-layer variational INRs trained on two CIFAR-10 images with $\beta = 10^{-5}$. 
I use the KL divergence as an estimate of their coding cost and do not encode the weights with A* coding or perform fine-tuning.
\begin{figure*}[t]
\captionsetup[subfigure]{justification=centering}
 \centering
 \begin{subfigure}[t]{0.24\linewidth}
 \includegraphics[scale=0.14, clip, trim=3.5cm 2cm 0.2cm 2.5cm]{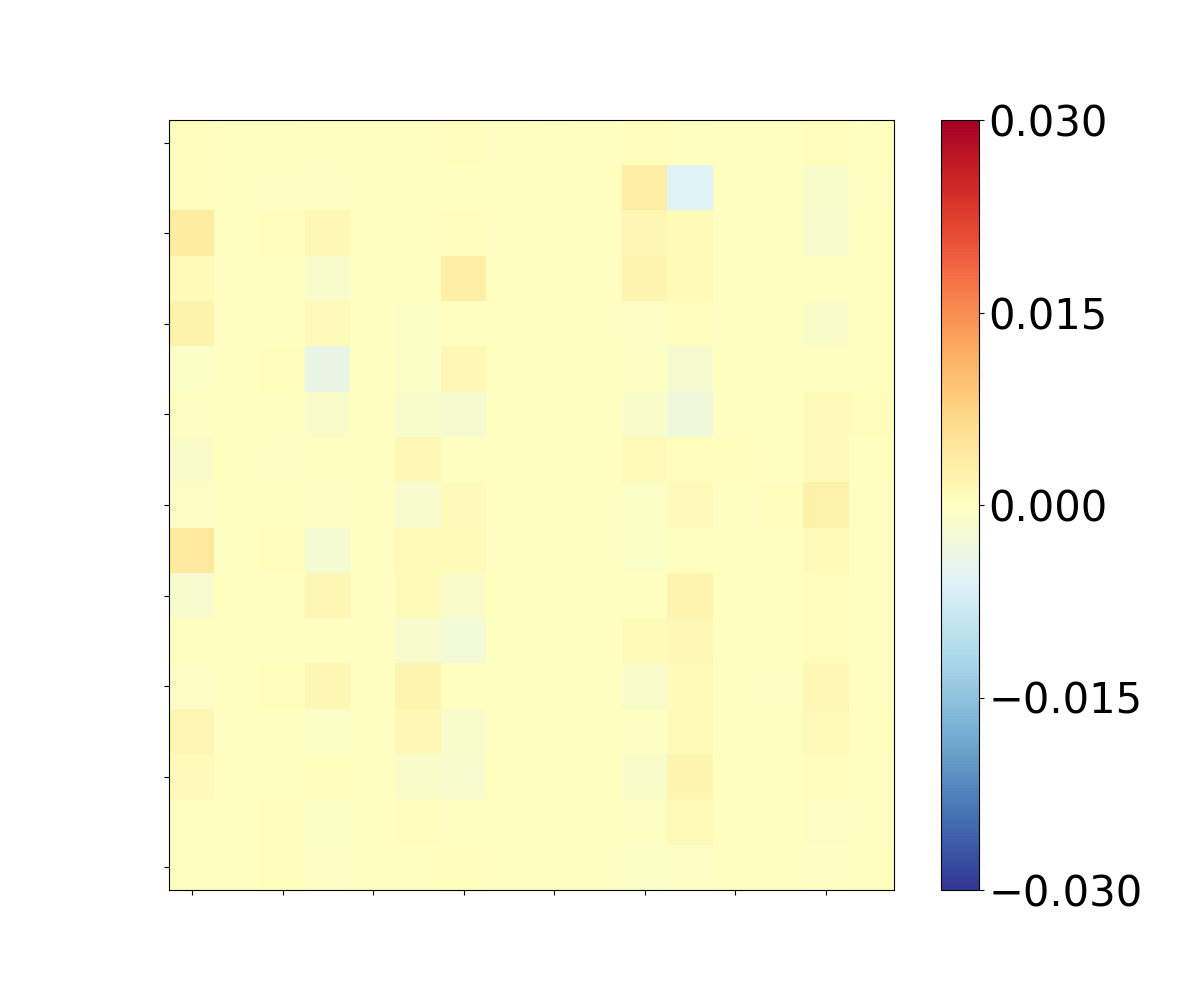}
 \caption*{Prior mean}
 \end{subfigure}
 \hfill
 \begin{subfigure}[t]{0.24\linewidth}
 \includegraphics[scale=0.14, clip, trim=3.5cm 2cm 0.2cm 2.5cm]{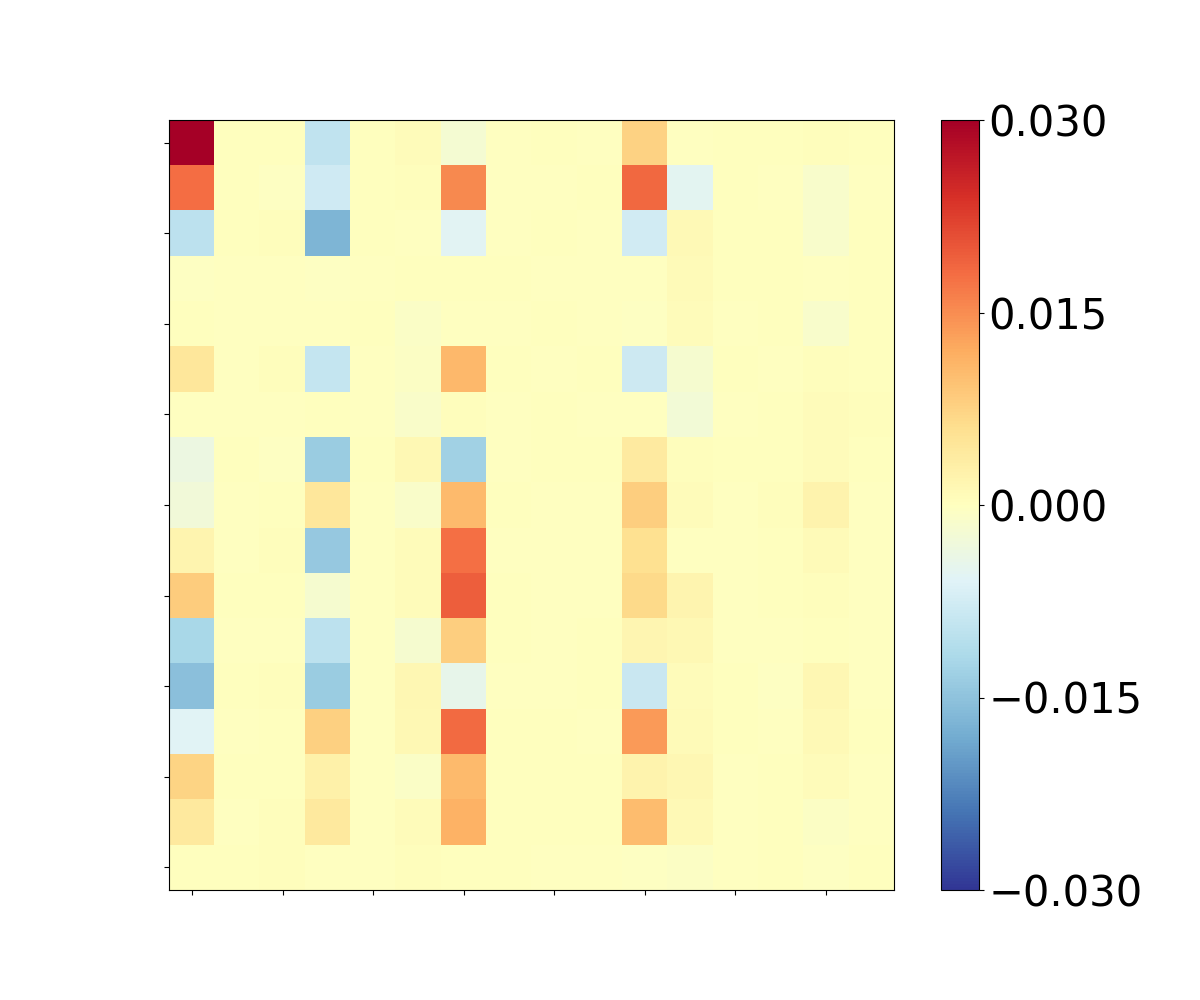}
\caption*{Posterior mean, \\ image 1}
 \end{subfigure}
 \hfill
 \begin{subfigure}[t]{0.24\linewidth}
 \includegraphics[scale=0.14, clip, trim=3.5cm 2cm 0.2cm 2.5cm]{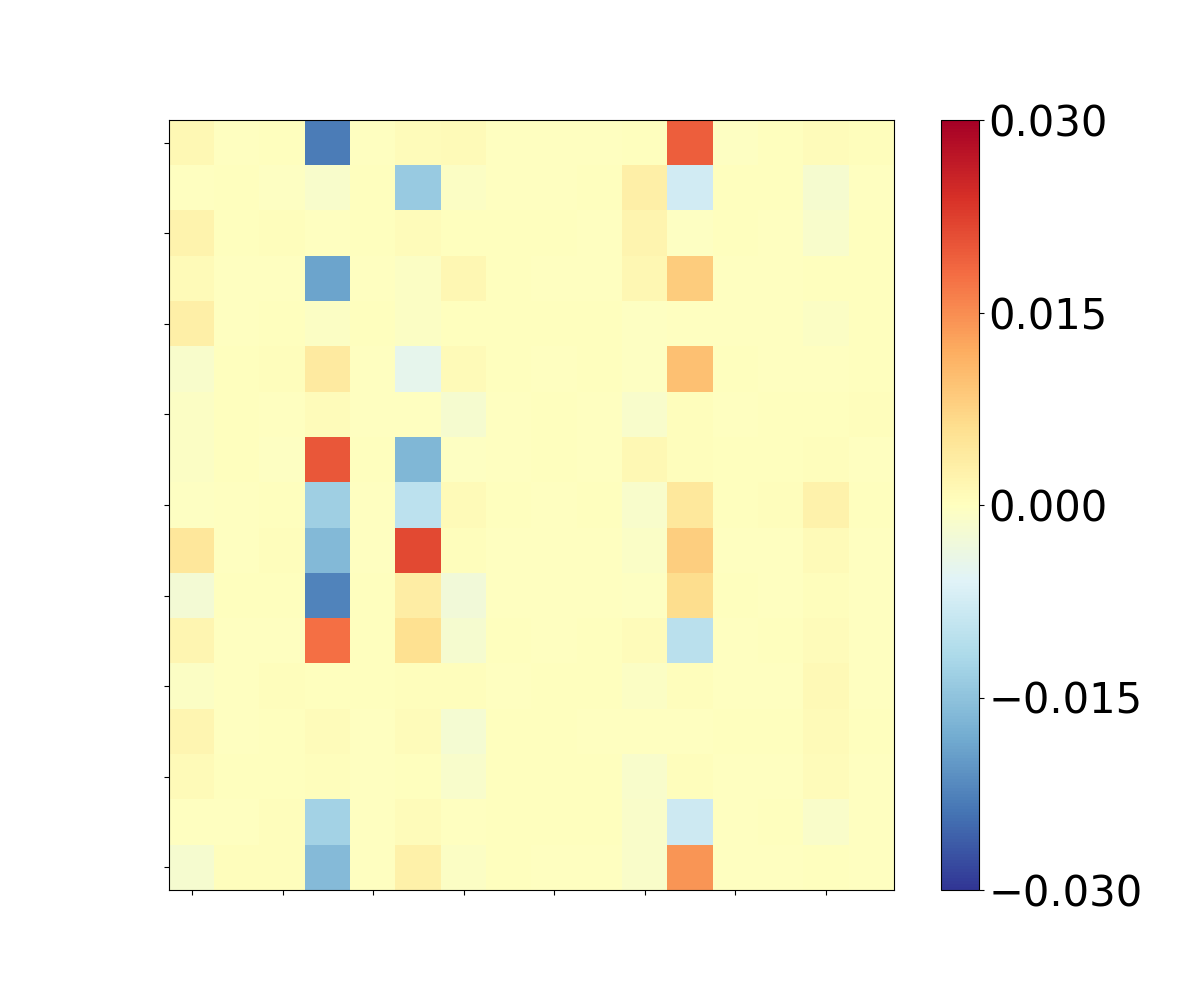}
\caption*{Posterior mean, \\ image 2}
 \end{subfigure}
 \hfill
 \begin{subfigure}[t]{0.24\linewidth}
 \includegraphics[scale=0.14, clip, trim=3.5cm 2cm 0.2cm 2.5cm]{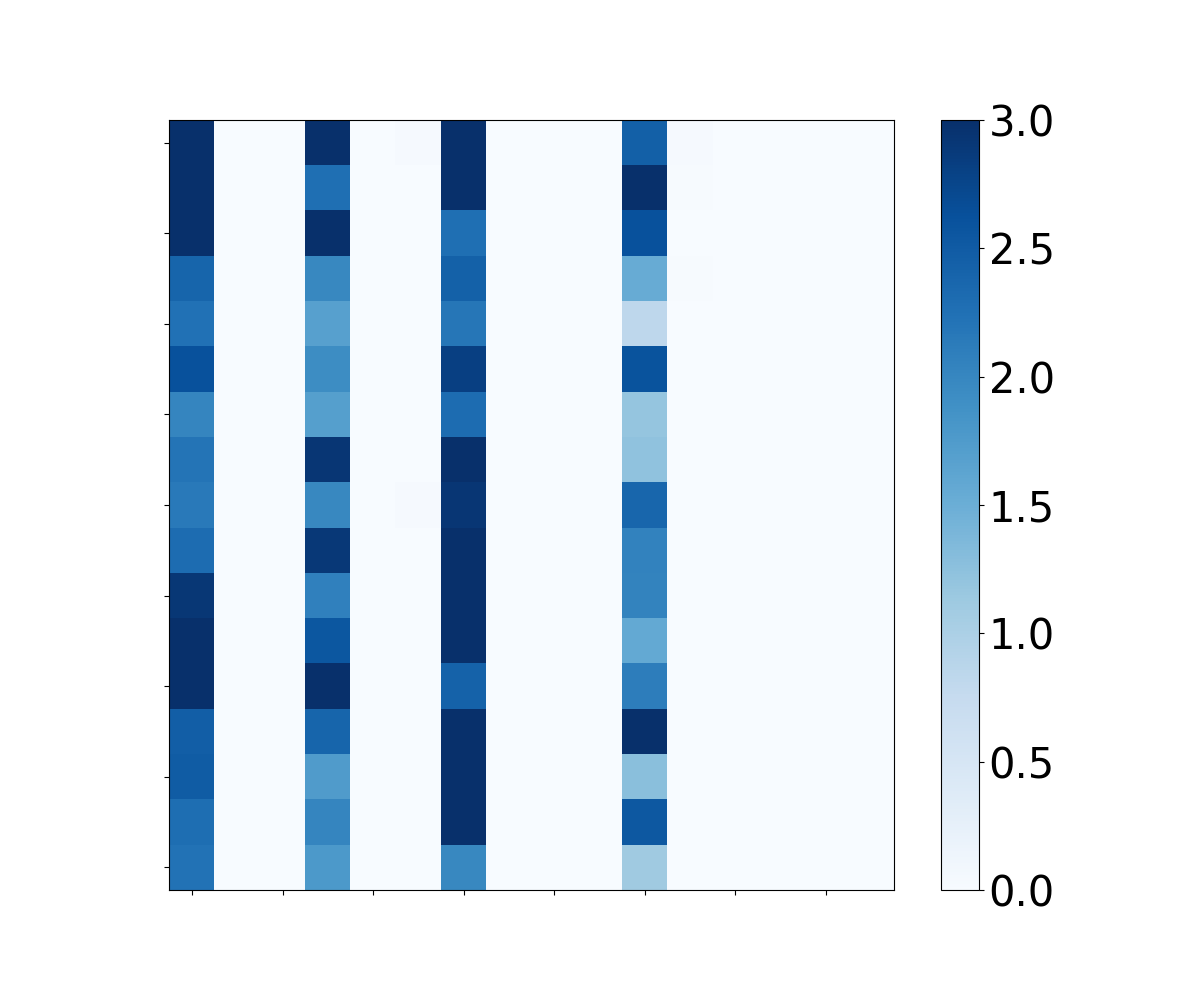}
\caption*{$\rm{KL}$ / bits, image 1 \ }
 \end{subfigure}%
 \vspace{0.4cm}
 \begin{subfigure}[t]{0.24\linewidth}
 \includegraphics[scale=0.14, clip, trim=3.5cm 2cm 0.2cm 2.5cm]{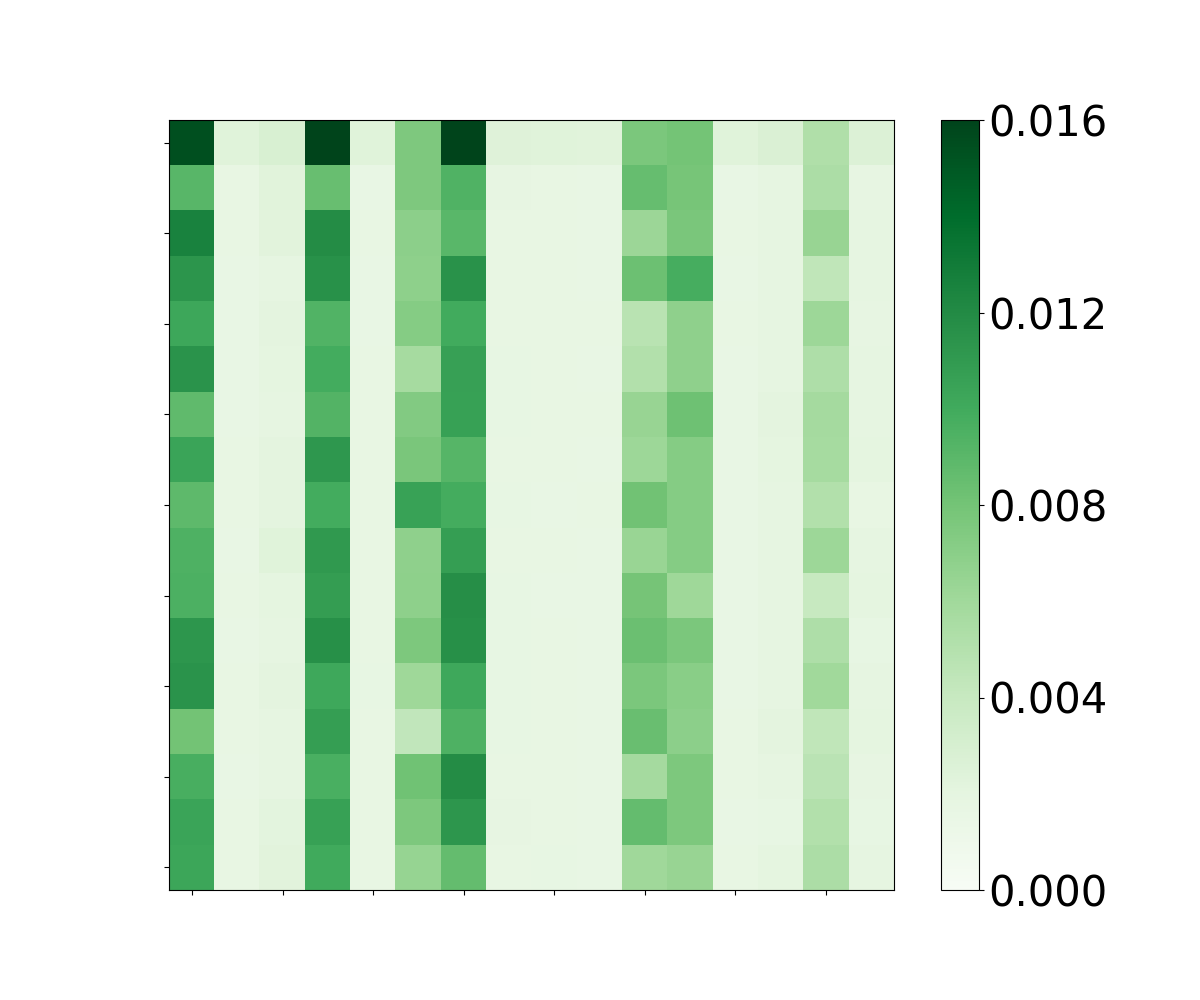}
\caption*{Prior s.d.\ }
 \end{subfigure}
\hfill
 \begin{subfigure}[t]{0.24\linewidth}
 \includegraphics[scale=0.14, clip, trim=3.5cm 2cm 0.2cm 2.5cm]{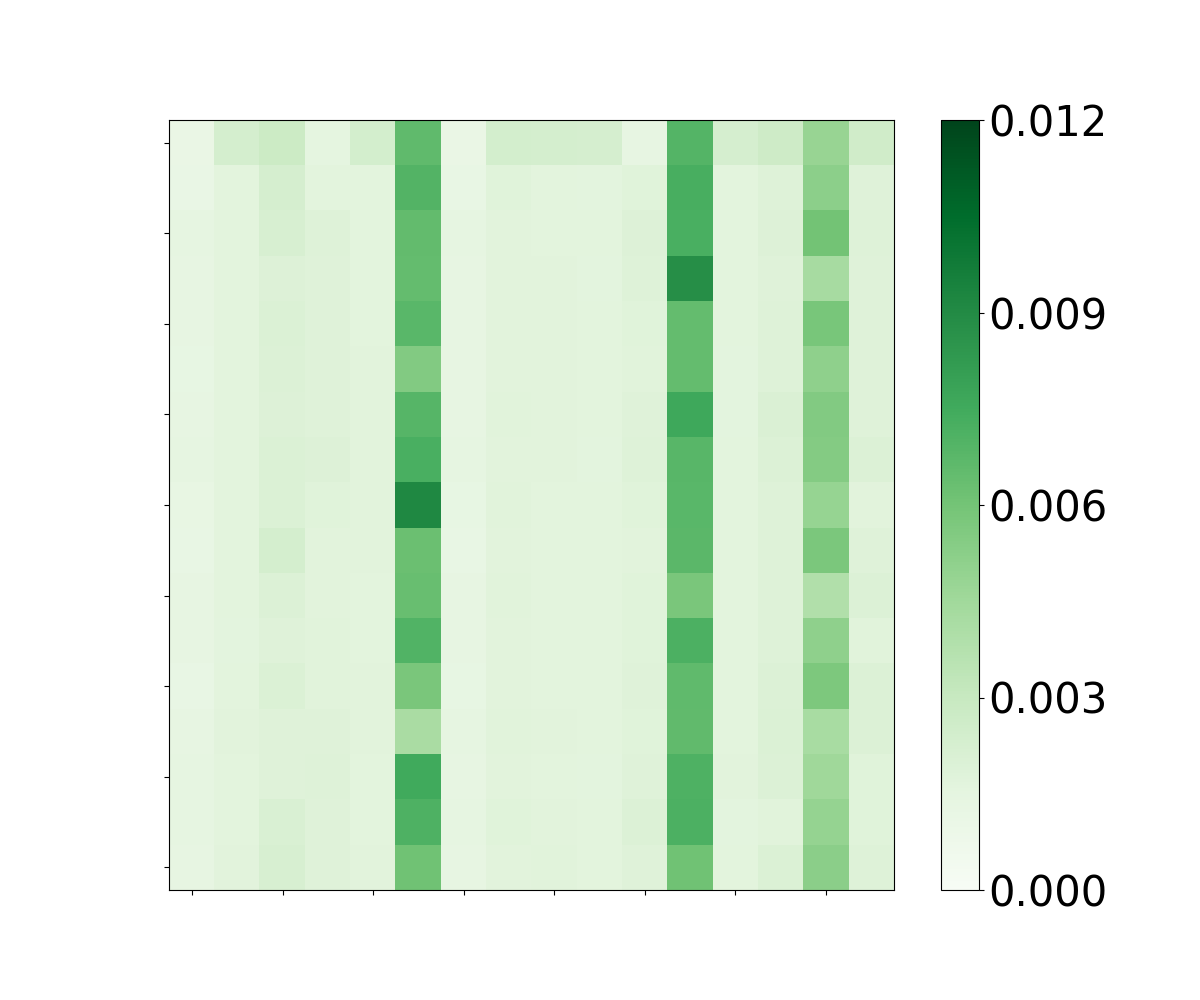}
 \caption*{Posterior s.d., \\ image 1}
 \end{subfigure}
\hfill
 \begin{subfigure}[t]{0.24\linewidth}
 \includegraphics[scale=0.14, clip, trim=3.5cm 2cm 0.2cm 2.5cm]{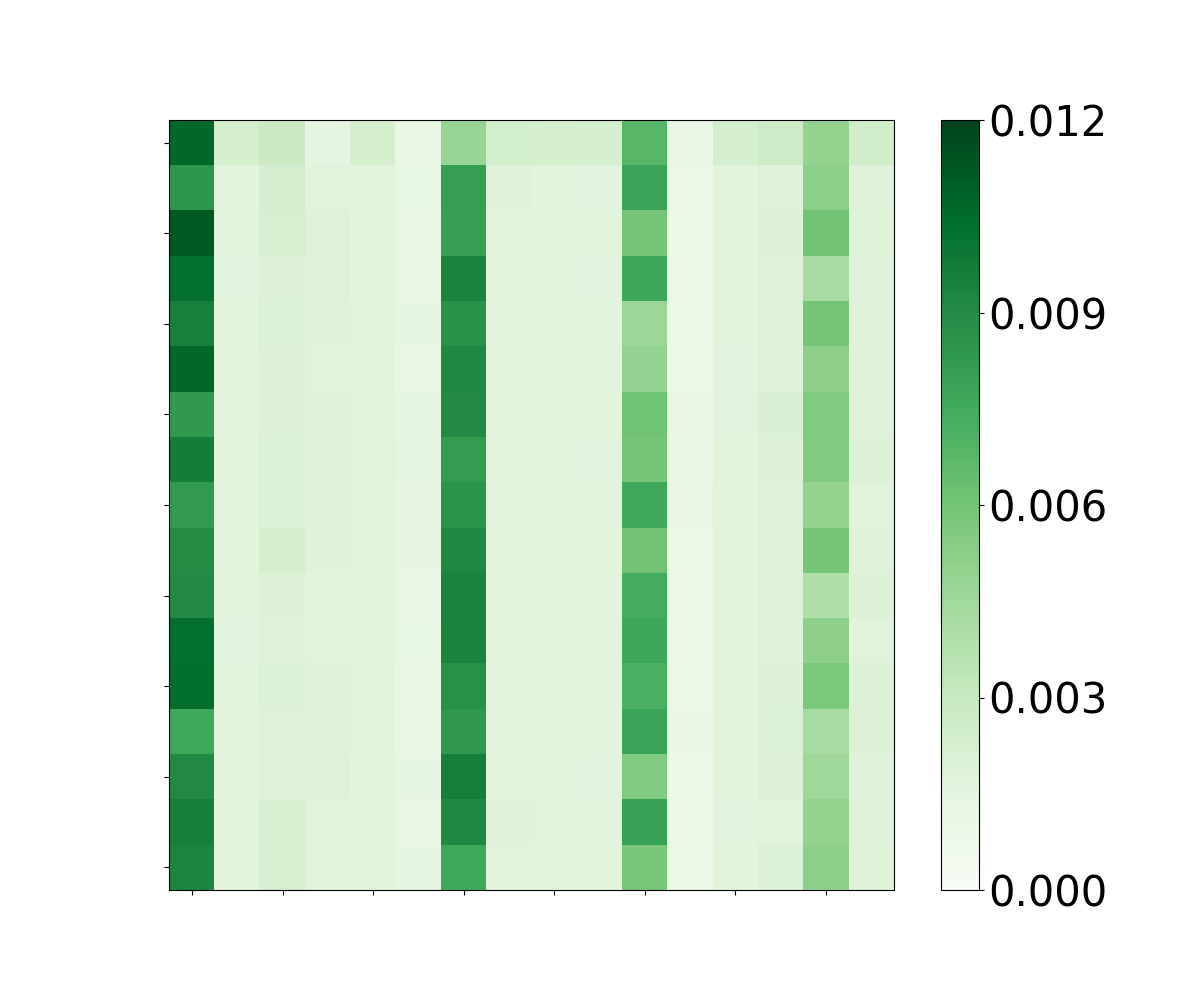}
 \caption*{Posterior s.d., \\ image 2}
 \end{subfigure}
\hfill
 \begin{subfigure}[t]{0.24\linewidth}
 \includegraphics[scale=0.14, clip, trim=3.5cm 2cm 0.2cm 2.5cm]{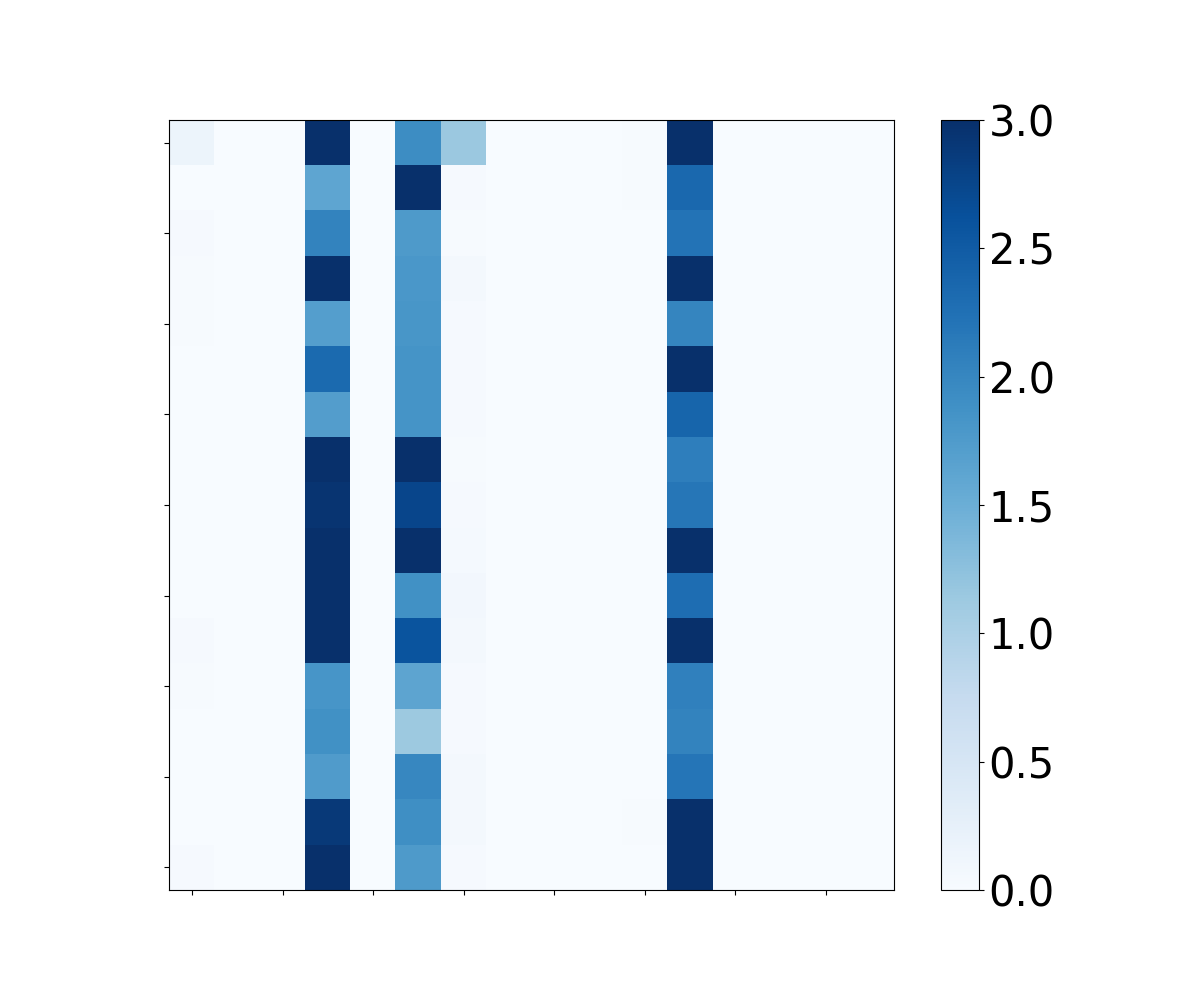}
\caption*{$\rm{KL}$ / bits, image 2 \ }
\end{subfigure}
\vspace{5mm}
 \caption[Visualisations of variational INR parameters and their information content.]{
 Visualisations of the weight prior, posterior and information content of a variational INR trained on two CIFAR-10 images.
 We focus on the INR's weights connecting the first and second hidden layers.
 Each heatmap is $17 \times 16$ because both layers have 16 hidden units, and we concatenated the weights and the biases (last row). 
 The abbreviation \textbf{s.d.}\ stands for standard deviation.
 }
\label{fig:combiner_visualisations}
\vspace{-0.35cm}
\end{figure*}
\par
\Cref{fig:combiner_visualisations}, visualises the parameters $\rvmu_p$ and $\rvsigma_p$ of the coding distribution $Q_\rvw = \Normal(\rvmu_p, \rvsigma_p^2)$ we learned using the training procedure I describe in \cref{sec:training_combiner}.
In the left column, the variational parameters of two distinct images in the second and third columns and the KL divergence $\KLD{P_{\rvw \mid \Dataset}}{Q_{\rvw}}$ in the rightmost column.
Since this layer has 16 hidden units, the weight matrix of parameters has a $17\times16$ shape, where weights and bias are concatenated; the last row represents the bias.
Interestingly, there are seven ``active'' columns within $\bm{\sigma}_p$, indicating that only seven hidden units of this layer would be activated for signal representation at this rate point. 
For instance, when representing image 1, which is randomly selected from the CIFAR-10 test set, four columns are activated for representation. 
This activation is evident in the four blue columns within the KL map, which require a few bits to transmit the sample of the posterior distribution.
Similarly, three hidden units are engaged in the representation of image 2. 
As their variational Gaussian distributions have nearly zero variance, the posterior distributions at these activated columns approach a Dirac delta distribution. 
Therefore, this study demonstrates COMBINER's advantage over previous INR-based methods: by optimising the rate-distortion objective, it can adaptively activate or prune network parameters.
\begin{figure}[t]
\centering
\begin{subfigure}[t]{0.47\textwidth}
\includegraphics[width=0.9\linewidth]{6-COMBINER/img/tikz/combiner_ablation.tikz}
\caption{Ablation study on CIFAR-10 verifying the effectiveness of the fine-tuning and prior learning procedures.
}  
\label{fig:combiner_ablation_study}
\end{subfigure}\hspace{5mm}%
\begin{subfigure}[t]{0.47\textwidth}%
\includegraphics[width=0.95\linewidth]{6-COMBINER/img/tikz/combiner_iternum.tikz}
\caption{COMBINER's performance improvement as a function of the number of fine-tuning steps.}
\label{fig:combiner_iternum}
\end{subfigure}%
\caption{Ablation study of COMBINER and a study of posterior refinement.}
\end{figure}
\par
\textbf{Ablation Studies:}  
We conducted ablation studies on the CIFAR-10 dataset to verify the effectiveness of learning the coding distribution (\cref{sec:training_combiner}) and posterior refinement (\cref{sec:combiner_posterior_refinement}).
In the first ablation study, instead of learning the coding distribution parameters, we followed the methodology of \citep[p.\ 73;]{havasi2021advances} and use a layer-wise zero-mean isotropic Gaussian prior $Q_{\rvw^{[\ell]}} = \Normal(0, \rvsigma_\ell^2 I)$, where $Q_{\rvw^{[\ell]}}$ is the coding distribution for the weights of the $\ell$th hidden layer.
We learned the $\rvsigma_\ell$'s jointly with the posterior parameters by optimising \cref{eq:combiner_rd_objective} using gradient descent and encoding them at 32-bit precision alongside the posterior weight samples we encoded with step-limited A* sampling. 
In the second ablation study, we omitted the fine-tuning steps between encoding blocks with step-limited A* sampling, i.e.\ we never corrected for bad-quality approximate samples.
In both experiments, we compressed each block using 16 bits.
Finally, as a reference, we also compare with the theoretically optimal scenario:
we draw an exact sample from each block's posterior between refinement steps instead of encoding an approximate sample and estimating the sample's codelength with the block's KL divergence.
\par
We compared the results of these experiments with the whole pipeline (\cref{sec:combiner_in_practice}) using the techniques mentioned above in \cref{fig:combiner_ablation_study}.
We found that both learning the coding distribution and posterior refinement contribute significantly to COMBINER's performance.
In particular, fine-tuning the posteriors is more effective at higher bitrates, while learning the coding distribution yields a consistent 4dB gain in PSNR across all bitrates. 
Finally, fine-tuning cannot completely compensate for the occasional bad approximate samples that step-limited A* sampling yields, as there is a consistent 0.8 -- 1.3dB discrepancy between COMBINER's and the theoretically optimal performance.
\par
\textbf{Time Complexity:} 
COMBINER's encoding procedure is slow, as it requires several thousand gradient descent steps to infer the parameters of the INR's weight posterior and thousands more for progressive fine-tuning. 
To get a better understanding of COMBINER's practical time complexity, we evaluated its coding time on both the CIFAR-10 and Kodak datasets at different rates; I report our findings in \cref{tab:combiner_encoding_time_cifar,tab:combiner_encoding_time_kodak}. 
Observe that it can take between 13 minutes (0.91 bpp) and 34 minutes (4.45 bpp) to encode 500 CIFAR-10 images in parallel with a single A100 GPU, including posterior inference (7 minutes) and progressive fine-tuning.
Note that the fine-tuning takes longer for higher bit rates as the weights are partitioned into more groups, and each weight has higher individual information content. 
To compress high-resolution images from the Kodak dataset, the encoding time varies between 21.5 minutes (0.070 bpp) and 79 minutes (0.293 bpp). 
\begin{table}[t]
\centering
\scalebox{0.88}{
\begin{tabular}{@{}ccccc@{}}
\toprule
\multirow{2}{*}{\textbf{bit-rate}} & \multicolumn{3}{c}{\textbf{Encoding (500 images, GPU A100 80G)}} & \multirow{2}{*}{\makecell{\textbf{Decoding} \\ \textbf{(1 image, CPU)}}} \\ \cmidrule(lr){2-4}
& \textbf{Learning Posterior} & \textbf{REC + Fine-tuning} & \textbf{Total} & \\ \midrule
0.91 bpp & \multirow{5}{*}{$\sim$7 min} & $\sim$6 min & $\sim$13 min & 2.06 ms \\
1.39 bpp & & $\sim$9 min & $\sim$16 min & 2.09 ms \\
2.28 bpp & & $\sim$14 min 30 s & $\sim$21 min 30 s & 2.86 ms \\
3.50 bpp & & $\sim$21 min 30 s & $\sim$28 min 30 s & 3.82 ms \\
4.45 bpp & & $\sim$27 min & $\sim$34 min & 3.88 ms \\ \bottomrule
\end{tabular}}
\vspace{0.2cm}
\caption{The encoding time and decoding time of COMBINER on CIFAR-10.}
\label{tab:combiner_encoding_time_cifar}
\end{table}
\begin{table}[t]
\centering
\scalebox{0.88}{
\begin{tabular}{@{}ccccc@{}}
\toprule
\multirow{2}{*}{\textbf{bit-rate}} & \multicolumn{3}{c}{\textbf{Encoding (1 image, GPU A100 80G)}} & \multirow{2}{*}{\makecell{\textbf{Decoding} \\ \textbf{(1 image, CPU)}}} \\ \cmidrule(lr){2-4}
& \textbf{Learning Posterior} & \textbf{REC + Fine-tuning} & \textbf{Total} & \\ \midrule
0.07 bpp & \multirow{3}{*}{$\sim$9 min} & $\sim$12 min 30 s & $\sim$21 min 30 s & 348.42 ms \\
0.11 bpp & & $\sim$18 mins & $\sim$27 min & 381.53 ms \\
0.13 bpp & & $\sim$22 min & $\sim$31 min & 405.38 ms \\
0.22 bpp & \multirow{2}{*}{$\sim$11 min} & $\sim$50 min & $\sim$61 min      & 597.39 ms \\
0.29 bpp & & $\sim$68 min & $\sim$79 min & 602.32 ms \\ \bottomrule
\end{tabular}
}
\vspace{0.2cm}
\caption{The encoding time and decoding time of COMBINER on Kodak.}\label{tab:combiner_encoding_time_kodak}
\end{table}
To assess the effect of the fine-tuning procedure's length, we randomly selected a CIFAR-10 image and encoded it using the whole COMBINER pipeline but varied the number of fine-tuning steps between 2148 and 30260; I report the results of the experiment in \Cref{fig:combiner_iternum}.
We see that running the fine-tuning process beyond a certain point has diminishing returns.
In particular, while we used around 30k iterations in our other experiments, just using 3k iterations would sacrifice a mere 0.3 dB in the reconstruction quality while saving 90\% of the tuning time.
\par
On the other hand, COMBINER has a fast decoding speed since once we decode the compressed weight sample, we can reconstruct the data with a single forward pass through the network at each coordinate, which can be easily parallelised.
Specifically, the decoding time of a single CIFAR-10 image is between 2 ms and 4 ms using an A100 GPU and less than 1 second for a Kodak image. 
\subsection{Analysis and Ablation Studies for RECOMBINER}
This section showcases RECOMBINER's robustness to model size and the effectiveness brought about by each of the techniques I introduced in \cref{sec:recombiner}. 
\begin{figure}[t]
\centering
\begin{subfigure}{0.31\textwidth}
\centering
\includegraphics[width=\textwidth, trim={0, 0.2cm, 0, 1.6cm}, clip]{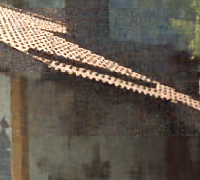}
\caption{w/o positional encodings; \\bitrate 0.287 bpp; PSNR 25.62 dB. }
\end{subfigure}
\begin{subfigure}{0.31\textwidth}
\centering
\includegraphics[width=\textwidth, trim={0, 0.2cm, 0, 1.6cm}, clip]{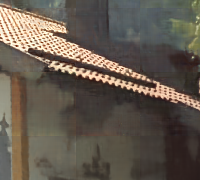}
\caption{with positional encodings; \\bitrate 0.316 bpp; PSNR 26.85 dB.  }
\end{subfigure}
\begin{subfigure}{0.31\textwidth}
\centering
\includegraphics[width=\textwidth, trim={0, 0.2cm, 0, 1.6cm}, clip]{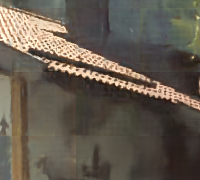}
\caption{with positional encodings;\\bitrate 0.178 bpp; PSNR 25.05 dB.  }
\end{subfigure}
\caption[Ablation study for learned positional encodings.]{Comparison between \texttt{kodim24} details compressed with and without learnable positional encodings. (a)(b) have similar bitrates, and (a) and (c) have similar PSNRs.
}\label{fig:compare_details_w_wo_pe}
\end{figure}
\begin{figure}[t]
\centering
\includegraphics[width=\textwidth]{6-COMBINER/img/tikz/robustness_and_ablation_study.tikz}
\caption[Ablation study for RECOMBINER's improvements.]{(a) RD performances of COMBINER and RECOMBINER with different numbers of hidden units. (b)(c) Ablation studies on CIFAR-10 and Kodak. \textbf{LR:} linear reparameterization; \textbf{PE:} positional encodings;  \textbf{HM:} hierarchical model; \textbf{RP:} random permutation across patches.}
\end{figure}
\par
\textbf{Positional encodings facilitate local deviations:} \Cref{fig:compare_details_w_wo_pe} compares images obtained by RECOMBINER with and without positional encodings (\cref{sec:recombiner_positional_embeddings}) at matching bitrates and PSNRs. 
As we can see, positional encodings preserve intricate details in fine-textured regions while preventing noisy artifacts in other regions of the patches, making RECOMBINER's reconstructions more visually pleasing. 
\par
\textbf{RECOMBINER is more robust to model size:} Using the same INR architecture, \cref{fig:robustness} shows COMBINER and RECOMBINER's RD curves as we vary the number of hidden units. 
RECOMBINER displays minimal performance variation and also consistently outperforms COMBINER. 
Based on experimental evidence \citep[Appendix D;]{he2024recombiner}, this phenomenon is likely due to the linear weight reparametrisation allowing it to prune its weight representations more flexibly.
\par
\textbf{Ablation study:} In \cref{fig:ablations_cifar,fig:ablations_kodak}, I report the results of ablation studies for the linear reparametrisation, positional encodings, hierarchical model, and permutation strategy on
CIFAR-10 and Kodak, with five key takeaways:
\begin{enumerate}[leftmargin=*]
\item Linear weight reparameterisation consistently improves performance on both datasets, yielding up to 4dB gain on CIFAR-10 at high bitrates and over 0.5 dB gain on Kodak in PSNR.
\item Learnable positional encodings provide more substantial advantages at lower bitrates. 
On CIFAR-10, the encodings contribute up to 0.5 dB gain when the bitrate falls below 2 bpp. 
On Kodak, the encodings provide noteworthy gains of 2 dB at low bitrates and 1 dB at high bitrates.
\item Surprisingly, the hierarchical model \emph{without positional encodings} can degrade performance.  
This is possibly because directly applying the hierarchical model poses challenges in optimising \Cref{eq:combiner_rd_objective}.
A potential solution is to warm up the rate penalty $\beta$ akin to the usual strategy for hierarchical VAEs \citep{sonderby2016ladder}.
\item However, positional encodings appear to consistently alleviate this optimisation difficulty, yielding 0.5 dB gain when used with hierarchical models. %
\item The permutation strategy provides significant gains of 0.5 dB at low bitrates and more than  1.5 dB at higher bitrates.
\end{enumerate}
\section{Conclusions, Limitations and Future Directions}
\label{sec:combiner_conclusions}
\noindent
This chapter explored the use of implicit neural representations for data compression.
It proposed using \textit{compression with Bayesian implicit neural representations} (COMBINER), the first INR-based compression framework that allows us to optimise the rate-distortion function directly.
Then, the networks are encoded using a relative entropy coding algorithm; for the experiments in this chapter, we used step-limited A* sampling (\cref{alg:global_a_star}, \cref{sec:approximate_sampling}).
\par
Then, the chapter also proposed several nontrivial improvements to the framework, encompassing the linear reparametrisation for the network weights, learnable positional encodings, and expressive hierarchical Bayesian models for high-resolution signals. 
Finally, with extensive experimental evidence, I demonstrated that COMBINER sets a new state-of-the-art on low-resolution images at low bitrates and consistently delivers strong results across other data modalities. 
\par
The most significant limitation of the work is the complexity of encoding time. 
A possible avenue for solving this issue is to reduce the number of parameters to optimise over and switch from inference over weights to modulations using, e.g.\ FiLM layers \citep{perez2018film}, as is done in other INR-based works.
A second limitation is that while compressing with patches enables parallelisation and higher robustness, it is suboptimal as it leads to block artifacts, as can be seen in \cref{fig:compare_details_w_wo_pe}.
Third, the approximate samples given by step-limited A* sampling significantly impact the method's performance, e.g.\, by requiring more fine-tuning.

\chapter{Conclusions and Future Work}
\label{chapter:conclusion}
This thesis contributes to the theoretical, algorithmic, and applied development of relative entropy coding and the broader data compression, sampling theory, and information theory literature.
It is also the first work that comprehensively discusses how relative entropy coding fits into the greater picture of contemporary data compression, what its tight fundamental limits are, how one constructs the algorithms, how one can exploit structure to speed up the algorithms and how one applies relative entropy coding to modern learned data compression problems.
This chapter summarises these contributions and concludes the thesis by discussing the most pressing open questions and exciting future directions.
\section{Summary of Contributions}
The thesis contributes to the information theory and data compression literature and related areas in four major ways, which I summarise below.
\subsection{Tighter Fundamental Limits on the Communication and Computational Complexity of Relative Entropy Coding}
First, \cref{chapter:fundamental_limits} established new, tighter fundamental limits on the communication complexity of relative entropy coding both in the one-shot and (\cref{thm:csd_rec_lower_bound}) asymptotic cases  (\cref{thm:second_order_behaviour_of_csd}).
I achieved these results by introducing and developing several new analytical tools. I introduced a new statistical distance, which I call the channel simulation divergence and exposed its fundamental role in the theory.
Furthermore, I proved \cref{thm:second_order_behaviour_of_csd} by observing \added[id={PL}, comment={PL/email/5}]{that} an appropriately normalised version of the log-Radon-Nikodym derivative obeys a central limit theorem.
Perhaps the most exciting aspect of my results is that they do not follow the ``standard'' information theory approach: for example, at no point in the thesis do I rely on arguments using \replaced[id={PL}, comment={PL/email/6}]{typicality}{typically}.
I hope these ideas can penetrate the wider information theory literature and help establish new, simplified analyses of related problems.
\par
I also introduced the abstract notion of selection sampling and exhibited a fundamental lower bound on their computational efficiency (\cref{thm:selection_sampler_runtime_lower_bound}). 
The lower bound is somewhat surprising since causal rejection samplers (algorithms that do not return to samplers they previously examined) already achieve this lower bound.
Thus, as this thesis examines, their advantages must come from the specific nature or structure of the problem we wish to solve with them, such as using them for relative entropy coding or approximate sampling.
\subsection{The Poisson Process Model of Relative Entropy Coding}
\par
Second, \cref{chapter:rec_with_pp} contributed to the understanding of the Poisson processes' central role in constructing sampling algorithms; \citet{maddison2016poisson} called this the ``Poisson process model of Monte Carlo.''
My two most significant results in this chapter are 1) the derivation of a new sampling algorithm that I call the greedy Poisson rejection sampler (GPRS) and its analysis, which shows that it induces a relative entropy coding algorithm, and 2) illuminating the serendipitous alignment of desiderata in the construction of relative entropy coding and approximate selection samplers.
\par
In particular, GPRS demonstrates that Poisson process restriction also gives rise to a sampling algorithm besides thinning and mapping, which give rise to rejection sampling and A* sampling, respectively.
Furthermore, I also explained how these sampling algorithms can be parallelised based on the superposition theorem.
Finally, I also provided new proofs for the runtimes of rejection sampling and A* sampling using the theory of Poisson processes.
\subsection{Fast Branch-and-bound Samplers for Unimodal Density Ratios}
Third, \cref{chapter:branch_and_bound} is where I reap the benefits of the Poisson process view of sampling algorithms in two ways.
First, they provide the high-level observation that we can recast random variate simulation as a search problem over a random collection of points, which motivated the use of ideas from branch-and-bound search.
Second, I rely on the mature theory of Poisson processes that ensured the correctness of my constructions. 
\par
Results-wise, the highlights of these chapters are \cref{thm:bnb_a_star_runtime,thm:bnb_gprs_runtime}, which show that the runtimes of the branch-and-bound variants of A* sampling and greedy Poisson rejection sampling improve exponentially and superexponentially compared to their general-purpose variants, respectively; as well as \cref{thm:bnb_a_star_codelength,thm:bnb_gprs_codelength}, which show that these algorithms also induce relative entropy coding algorithms.
\par
I also show how we can leverage the theory of stopping times and martingales to analyse sampling algorithms.
Finally, I also briefly discussed how these branch-and-bound samplers generalise to higher-dimensional spaces via splitting up space along supporting hyperplanes.
\subsection[Compression with Bayesian Implicit Neural Representations]{Compression with Bayesian Implicit Neural \texorpdfstring{\\}{} Representations}
Finally, \Cref{chapter:combiner} rounds out the thesis by exposing an exciting application for relative entropy coding to implicit neural representations.
This chapter introduced variational / Bayesian implicit neural representations and discussed how they fit into the nonlinear transform coding framework.
I derived from first principles both the loss function that individual INRs optimise, as well as how their coding distribution can be optimised using lossy source coding theory (\cref{sec:training_combiner}).
Furthermore, I discussed how they can be implemented in practice, such as the posterior refinement procedure (\cref{sec:combiner_posterior_refinement})
and also how this process helps alleviate the bias caused by approximate weights encoded using step-limited A* sampling.
\par
I also proposed several improvements to the framework: 1) linear weight reparametrisations, 2) learned positional encodings and 3) hierarchical coding distributions.
I show via extensive experimental and ablation studies that COMBINER performs as well as, or better than, comparable, state-of-the-art methods and that each proposed improvement is vital to ensure gains.
\section{Where next?}
It is a very exciting time to work on data compression, as machine learning is currently revolutionising the field. It has revitalised dormant parts of the existing theory and inspired new theoretical advances and applications.
Thus, there are a vast number of potential future directions, some of which I already highlighted in \cref{sec:fundamental_limits_conclusion,sec:point_processes_conclusion,sec:bnb_conclusions,sec:combiner_conclusions}.
Therefore, here I take the opportunity to take a step back and discuss higher-level future directions and alternative approaches for which there was no space in this thesis.
\subsection{Dithered Quantisation}
\par
In this thesis, I took the ``general-to-specific'' approach towards solving practical relative entropy coding: I first developed the most general theory possible, then progressively refined it by adding more structure that the sampling algorithms can exploit to perform well.
However, we could consider an alternative, ``specific-to-general'' approach, based on \textit{dithered quantisation} (DQ).
What I mean by this is that DQ starts from immediately assuming a lot of structure: it exploits the vector space structure over $\Reals^d$.
In particular, it works by randomly shifting lattice structures; in its most basic form, it relies on the identity $\lfloor c + U \rceil - U \sim c + U'$ for $c \in \Reals$ and $U, U' \sim \Unif(-1/2, 1/2)$.
\par
This can be extended to layered dithered quantisation \citep{agustsson2020universally,hegazy2022randomized}, which can be used to encode symmetric, unimodal distributions by also randomly scaling the grid.
As the canonical non-trivial example, if we let $S^2 \sim \Gamma(3/2, 1/2)$, then it holds that $S \cdot (\lfloor c / S + U \rceil - U) \sim \Normal(c, 1)$.
Unfortunately, the more general, randomly scaled variants struggle with the same coding inefficiency as applying the branch-and-bound sampler dimensionwise of the samplers I develop (\cref{sec:general_branch_and_bound_variants}): they have a constant $\Oh(1)$ codelength overhead per dimension.
\par
However, among all contemporary approaches, I believe DQ has the highest potential to serve as the basis for a fast, efficient, multivariate relative entropy coding algorithm. 
Recently, there has been some exciting progress towards this goal, if only approximate \citep{kobus2024gaussian}; this work could serve as a starting point for an efficient exact algorithm.
\subsection{The Next Frontier: Markov Chain Monte Carlo}
Markov chain Monte Carlo (MCMC) is the cornerstone of modern statistical inference across the sciences. Yet, it is notably missing not just from this thesis but from the wider relative entropy coding literature at the moment.
Thus, two interesting questions arise: 1) Could we construct relative entropy coding algorithms using Markov chains? 2) Could we use ideas from relative entropy coding to improve and analyse existing MCMC algorithms and construct new ones?
\par
For the first direction, a useful starting point could be perfect sampling, such as coupling from the past \citep{wilson1999layered} and Fill's algorithm \citep{fill1997interruptible}, as these provide a bridge between Markov chain Monte Carlo and exact sampling algorithms.
A more concrete smoking gun is that a central idea of perfect samplers is to consider Markov chains from a coupling perspective.
For a Markov transition kernel $P_{\rvy \mid \rvx}$, perfect samplers require that we specify a ``reparameterisable coupling'' $\rvy = \phi(\rvx, \epsilon)$, where $\rvx$ is the current state of the Markov chain, $\rvy$ is its next state, $\epsilon$ is some random variable independent of $\rvx$ and $\phi$ is a deterministic function.
A simple example is a Gaussian random walk, where $\rvx \sim \Reals$ and $\epsilon \sim \Normal(0, 1)$ with $\phi(\rvx, \epsilon) = \rvx + \epsilon$.
The interesting observation to make then, is that we can ``reparameterise'' a Markov chain $(X_1, X_2, \hdots)$ that admits a coupling $\phi$ using a sequence of i.i.d.\ ``reparameterisation variables'' $(\epsilon_2, \epsilon_3, \hdots)$:
\begin{align}
\label{eq:mcmc_reparam}
 X_1,\, X_2,\, X_3 \hdots \quad\sim \quad X_1,\, \phi(X_1, \epsilon_2),\, \phi(\phi(X_1, \epsilon_3), \epsilon_2), \hdots
\end{align}
Here, the sequence of $\epsilon_i$s seems a good candidate to be set as the common randomness in some relative entropy coding algorithm. 
As an example, we could turn the random walk Metropolis-Hastings algorithm into a channel simulation protocol using the reparametrisation in \cref{eq:mcmc_reparam}: the sender could encode how many state transitions they observed, and how many samples they proposed before each transition occurred.
Such an algorithm is essentially a generalisation of the rejection sampling-based channel simulation protocol I described in \cref{sec:rec_through_examples}.
However, there are major issues that need investigation in this domain: 
When should such a Markov chain-based algorithm stop, and what is the quality of the returned sample? Could we create an algorithm that returns an exact sample? The theory of perfect sampling could help answer these questions.
\par
For the reverse direction, I believe two potential high-level ideas could be fruitful.
First, constructing more efficient Markov chain Monte Carlo algorithms by modifying the Metropolis-Hastings (MH) acceptance rule.
Concretely, an interesting observation is that constructing Markov chains using MH can be viewed as a form of rejection sampling \citep{robert2021rao}.
Hence, it would be interesting to investigate if and when this rejection sampler could be replaced by either A* sampling or GPRS to yield a more efficient scheme.
Second, it would also be interesting to see if some of the ideas I develop for selection sampling, especially for the approximation quality of step-limited selection samplers in \cref{sec:approximate_sampling} could be ported over and be used to analyse the convergence properties of Markov chain methods.
\subsection{Industry-grade Relative Entropy Coding}
\par
While this thesis presented the important first steps towards creating practically applicable relative entropy coding algorithms, there is still a significant gap before any of these algorithms can be applied in practice.
Perhaps the most pressing issue is an engineering solution that ensures the synchronisation of common randomness between the communicating parties: this is an element of the theory that is not needed in traditional entropy coding methods.
Another issue is the security of relative entropy coding.
In practice, we usually implement relative entropy coding algorithms using a shared PRNG seed.
This raises an interesting question: how much should we concern ourselves with this initial piece of information?
For example, it would be interesting to consider whether we could use secure ways of transmitting the seed, e.g.\ by using a secure key exchange protocol, such as the Diffie-Hellman-Merkle protocol \citep{diffie1976new,merkle1978secure}.
\par
While these are considerable challenges, I trust the list of demonstrable benefits of using relative entropy coding for neural data compression and beyond will only accumulate.
Thus, I hope that in the coming decades, relative entropy coding will become the basis of many practical compression algorithms and, as such, a core part of our future global communication ecosystem.
%



\begin{spacing}{0.9}


\bibliographystyle{apalike}
\cleardoublepage
\bibliography{References/references} 



\end{spacing}


\begin{appendix} 


\chapter{Some Technical Details} 
\label{chapter:technical_details}
\par
\section{Bound on the Entropy of a Positive Integer-valued Random Variable}
\label{sec:li_el_gamal_bound_on_pos_int_random_variable}
In this subsection, I replicate the argument from Appendix B, Proposition 4 from \citet{li2018strong} to bound the entropy of any positive integer-valued random variable.
\begin{lemma}
\label{lemma:li_el_gamal_bound_on_pos_int_random_variable}
Let $K$ be a positive integer-valued random variable.
Then,
\begin{align*}
\Ent{K} \leq \Exp[\lb K] + \lb(\Exp[\lb K] + 1) + 1.
\end{align*}
\end{lemma}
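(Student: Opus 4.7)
The plan is to use the Gibbs inequality (non-negativity of relative entropy), which states that for any probability distribution $Q$ on $\posNats$ we have $\Ent{K} \leq \Exp[-\lb Q(K)]$, and then select $Q$ cleverly to obtain the desired bound. The natural candidate — given that the exponent $1 + 1/\Exp[\lb N]$ appears repeatedly in the paper's constructions (e.g.\ in \cref{thm:global_a_star_codelength,thm:global_gprs_codelength}) — is a Zeta distribution $Q(k) = k^{-\alpha}/\zeta(\alpha)$ with an exponent $\alpha > 1$ to be tuned as a function of $\mu \defeq \Exp[\lb K]$.

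With that choice, $-\lb Q(k) = \alpha \lb k + \lb \zeta(\alpha)$, so taking expectations gives
\begin{align*}
\Ent{K} \leq \Exp[-\lb Q(K)] = \alpha \mu + \lb \zeta(\alpha).
\end{align*}
The next step is to control $\zeta(\alpha)$ by a standard integral-comparison argument: since $x \mapsto x^{-\alpha}$ is decreasing for $\alpha > 1$,
\begin{align*}
\zeta(\alpha) = 1 + \sum_{k = 2}^\infty k^{-\alpha} \leq 1 + \int_1^\infty x^{-\alpha}\, dx = 1 + \frac{1}{\alpha - 1}.
\end{align*}

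The final step is to pick $\alpha$ to balance the two terms. Setting $\alpha = 1 + 1/\mu$ (assuming $\mu > 0$; the case $\mu = 0$ forces $K = 1$ a.s.\ and the bound is trivial) one gets $\alpha \mu = \mu + 1$ and $1 + 1/(\alpha - 1) = \mu + 1$, which slots in perfectly to yield
\begin{align*}
\Ent{K} \leq \mu + 1 + \lb(\mu + 1) = \Exp[\lb K] + \lb(\Exp[\lb K] + 1) + 1,
\end{align*}
as desired.

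I do not expect any real obstacle here: the whole argument is three lines of algebra once the Zeta-distribution ansatz is written down. The only mild subtlety is justifying the optimal choice of $\alpha$, but since $\mu$ is fixed and the right-hand side $\alpha \mu + \lb(1 + 1/(\alpha - 1))$ is a smooth, convex function of $\alpha$ on $(1, \infty)$, the choice $\alpha = 1 + 1/\mu$ can be motivated either by direct differentiation or simply by inspection (aiming to match the two summands at the target value $\mu + 1$). Handling the degenerate case $\mu = 0$ separately is a one-line remark.
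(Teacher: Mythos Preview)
Your proposal is correct and is essentially identical to the paper's proof: both use the Gibbs inequality with the Zeta distribution $q(k)\propto k^{-\lambda}$, bound the normalising constant by $1+\int_1^\infty x^{-\lambda}\,dx=1+1/(\lambda-1)$, and set $\lambda=1+1/\Exp[\lb K]$. The only differences are notational (the paper writes $\lambda$ and $Z$ where you write $\alpha$ and $\zeta(\alpha)$) and that you explicitly dispose of the degenerate case $\mu=0$, which the paper leaves implicit.
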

\begin{proof}
Let $q(k) = k^{-\lambda} \big/ Z$, with $\lambda = 1 + 1 / \Exp[\lb K]$ and $Z = \sum_{k = 1}^\infty k^{-\lambda}$, i.e., $q$ is a probability mass function.
Then,
\begin{align}
\Ent{K} &\leq -\sum_{k = 1}^\infty \Prob[K = k]\lb q(k) \nonumber\\
&= \lb Z + \lambda \sum_{k = 1}^\infty \Prob[K = k]\lb k \nonumber\\
&\leq \Exp[\lb K] + \lb\left(\Exp[\lb K] + 1\right) + 1, \label{eq:pos_int_rv_ent_bound_ineq}
\end{align}
where \cref{eq:pos_int_rv_ent_bound_ineq} follows from the fact that
\begin{align*}
Z = \sum_{k = 1}^\infty k^{-\lambda} \leq 1 + \int_1^\infty x^{-\lambda} \, dx = 1 + \frac{1}{\lambda - 1}.
\end{align*}
\end{proof}
\section{Upper Bound on the Expectation of the Logarithm of a Geometric Random Variable}
\label{sec:geom_log_expectation_bound}
\begin{lemma}
Let $K \sim \Geom(p)$ be a geometric random variable with success probability $p$.
Then, we have
\begin{align*}
\Exp[\lb K] \leq e^p \lb\left(1 + \frac{1}{p}\right).
\end{align*}
\end{lemma}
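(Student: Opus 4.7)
The plan is to combine three standard ingredients: (i) the elementary inequality $\ln(1-p)\leq -p$ (valid on $(0,1]$), which converts the geometric tail $(1-p)^{k-1}$ into an exponential upper bound and is precisely what produces the $e^p$ factor; (ii) a Riemann-sum comparison that converts a discrete sum over $k\geq 1$ into a continuous expectation against an exponential density; and (iii) Jensen's inequality applied to the concave map $x\mapsto \lb(1+x)$.

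Concretely, I would first expand
\[
\Exp[\lb K] \;=\; p\sum_{k=1}^{\infty}(1-p)^{k-1}\lb k
\;\leq\; p\,e^{p}\sum_{k=1}^{\infty}e^{-pk}\lb k,
\]
using $(1-p)^{k-1}\leq e^{-p(k-1)}=e^{p}\cdot e^{-pk}$ and the nonnegativity of $\lb k$ for $k\geq 1$. The next step is a term-by-term comparison of each summand against an integral: since $e^{-pt}\geq e^{-pk}$ and $\lb(1+t)\geq \lb k$ on $t\in[k-1,k]$, one has
\[
p\,e^{-pk}\lb k \;\leq\; \int_{k-1}^{k} p\,e^{-pt}\,\lb(1+t)\,dt,
\]
and summing over $k$ telescopes the intervals to $[0,\infty)$. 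This yields $\Exp[\lb K]\leq e^{p}\,\Exp[\lb(1+T)]$ where $T\sim\Exponential(p)$. Finally, Jensen's inequality applied to the concave function $\lb(1+\cdot)$ gives $\Exp[\lb(1+T)]\leq \lb(1+\Exp[T])=\lb(1+1/p)$, which closes the chain.

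No step is technically delicate. The only real choice is stylistic: one could instead apply Jensen directly to $\Exp[\lb(1+K)]\leq \lb(1+\Exp K)=\lb(1+1/p)$ and absorb the extra $e^{p}\geq 1$ factor for free, which is a shorter argument but sidesteps the mechanism that motivates the exponential factor. I would favour the discrete-to-continuous route above because the $e^{p}$ factor there arises organically from the tail bound on $(1-p)^{k-1}$, mirroring how this lemma is used in \Cref{eq:rs_expected_index_log_upper_bound}, where the $e^{p}$ prefactor is exactly what lets one later collapse constants after taking an outer expectation over $\rvx$.
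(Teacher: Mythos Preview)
Your proof is correct and follows the paper's argument for the first two moves: both convert the geometric sum to a continuous integral via the tail bound $(1-p)^{k-1}\leq e^{-p(k-1)}$ together with a Riemann comparison, arriving at the same quantity $e^{p}\,\Exp[\lb(1+T)]$ with $T\sim\Exponential(p)$.

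The difference lies in the final step. The paper integrates by parts, recognises $\int_0^\infty \tfrac{e^{-px}}{1+x}\,dx$ as $e^{p}E_1(p)$, and then invokes the classical bound $e^{x}E_1(x)\leq \ln(1+1/x)$ from \citet[eq.~5.1.20]{abramowitz1968handbook}. You instead apply Jensen's inequality directly to the concave map $t\mapsto\lb(1+t)$. These two routes are in fact the same inequality in disguise: integrating $\Exp[\ln(1+T)]$ by parts yields exactly $e^{p}E_1(p)$, so your Jensen step \emph{is} a proof of the Abramowitz--Stegun bound. Your version is therefore more self-contained and avoids the appeal to special-function tables, at no cost in sharpness. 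The paper's route, on the other hand, makes the connection to the exponential integral explicit, which may be useful if one later wants the matching lower bound $\tfrac{1}{2}\ln(1+2/x)\leq e^{x}E_1(x)$ from the same reference.
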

\begin{proof}
\begin{align}
\Exp[\ln K] &= \sum_{k = 1}^{\infty} (1 - p)^{k - 1} p \ln k \nonumber\\
&\leq \int_0^\infty p e^{-p(x - 1)} \ln(x + 1) \, dx \label{eq:geom_bound_int_upper_bound} \\
&= e^p \int_0^\infty \frac{e^{-p x}}{1 + x}  \, dx  \label{eq:geom_bound_int_by_parts}\\
&= e^{2p} \cdot E_1(p) \label{eq:geom_bound_substitution} \\
&\leq e^p \ln \left(1 + \frac{1}{p}\right) \label{eq:geom_bound_exp_int_upper_bound},
\end{align}
where \cref{eq:geom_bound_int_upper_bound} follows since $(1 - p)^k \leq e^{-px}$ and $\ln k \leq \ln(x + 1)$ for $x \in (k - 1, k]$ for every $k \geq 0$; \cref{eq:geom_bound_int_upper_bound} follows from integration by parts; \cref{eq:geom_bound_substitution} follows from substituting the variable of intergration and recognising that the integral is the exponential integral $E_1$ \citep[eq. 5.1.1;][]{abramowitz1968handbook} and \cref{eq:geom_bound_exp_int_upper_bound} follows from the inequality $e^x E_1(x) \leq \ln(1 + 1/x)$ for $x > 0$ \citep[eq. 5.1.20;][]{abramowitz1968handbook}.
Dividing by $\ln(2)$/multiplying by $\lb(e)$ yields the desired result.
\end{proof}
\section{Bound on the Differential Entropy of a Positive Random Variable with Density Bounded by 1.}
\label{sec:bound_on_pos_bounded_rv}
In this section, I perform a similar analysis to the one in \cref{sec:li_el_gamal_bound_on_pos_int_random_variable}, except I bound the differential entropy of a certain positive random variable.
Specifically:
\boundOnPosBoundedRV*
\begin{proof}
Let $q(\rh) = \min\{1, \rh^{-\lambda}\} \big/ Z$, with $\lambda = 1 + 1 / (\Exp[\lb \rh] + \lb(e))$ and 
\begin{align}
Z &= \int_0^\infty\min\{1, \rh^{-\lambda}\}  \,d\rh  \nonumber\\
&= 1 + \int_1^\infty \rh^{-\lambda} \, d\rh \nonumber\\
&= 1 + \frac{1}{\lambda - 1} \label{eq:bound_on_pos_rv_ent_int_cond}\\
&= 1 + \lb(e) + \Exp[\lb (\rh)], \nonumber
\end{align}
where in \cref{eq:bound_on_pos_rv_ent_int_cond} we used that $-\lb(e) < \Exp[\lb (\rh)] \Rightarrow 1 < \lambda$, since otherwise the integral does not converge.
Then,
\begin{align}
\DiffEnt{\rh} &\leq -\int_{0}^\infty p_\rh(\eta)\lb q(\eta) \, d\eta \nonumber\\
&= \lb Z - \int_0^\infty p_\rh(\eta)\lb (\min\{1, \eta^{-\lambda}\}) \, d\eta \nonumber\\
&= \lb (\Exp[\lb (\rh)] + 2) + \lambda \int_1^\infty p_\rh(\eta)\lb (\eta) \, d\eta \nonumber\\
&\leq \lb (\Exp[\lb (\rh)] + 1 + \lb(e)) + \lambda (\Exp[\lb (\rh)] + \lb(e)) \label{eq:bound_on_pos_bounded_rv_density_ineq} \\
&= \Exp[\lb (\rh)] + \lb (\Exp[\lb (\rh)] + 1 + \lb(e)) + 1 + \lb(e) \nonumber
\end{align}
where \cref{eq:bound_on_pos_bounded_rv_density_ineq} follows from the fact that
\begin{align*}
\int_1^\infty p_\rh(\eta)\lb (\eta) \, d\eta &= \Exp[\lb( \rh)] + \int_0^1 p_\rh(\eta)\lb\left(\frac{1}{\eta}\right)\, d\eta \\
&\!\!\!\!\!\!\!\!\!\stackrel{\norm{p_\rh}_\infty \leq 1}{\leq} \Exp[\lb (\rh)] + \int_0^1 \lb\left(\frac{1}{\eta}\right)\, d\eta \\
&= \Exp[\lb (\rh)] + \lb(e).
\end{align*}
\end{proof}

\section{Rejection sampling with sorted uniforms}
\label{sec:rej_samp_with_sorted_unifs}
\par
This section describes how we can turn rejection sampling (\Cref{alg:global_rs}) into an \textit{almost} maximally efficient relative entropy coding algorithm.
To begin, let us recap the basic ingredients of rejection sampling.
Thus, let $Q$ be a target distribution from which we would like to simulate a sample and $P$ a proposal distribution with $Q \ll P$ and with density ratio $r = dQ/dP$ and upper bound $M \geq \norm{r}_\infty$.
Given a sequence of i.i.d.\ pairs $\{(Y_n, U_n)\}_{n = 1}^\infty$ where $Y_n \sim P$ and $U_n \sim \Unif(0, 1)$, rejection sampling returns the index
\begin{align*}
N = \min\{n \in \posNats \mid U_n < r(Y_n) / M\}
\end{align*}
I have shown in \Cref{sec:rec_through_examples} that $N \sim \Geom(1/M)$, from which $\Exp[\lb N] \leq \Exp[\lb M] + 2$, which means that using the Zeta coding method of \citet{li2018strong} will not be maximally efficient here.
\par
However, we can use a simple trick to significantly improve this scheme: we can \textit{sort the uniforms}!
Concretely, for the improved scheme, we set the common randomness to consist of the pairs $\rvz = \{(Y_n, U_n)\}_{n = 1}^\infty$ we used for rejection sampling.
Now, once the sender computed $N$, they also compute $L = \lceil \lb N \rceil$ and $N' = \exp_2(L) \geq N$.
Then, they sort the first $N'$ uniforms in the common randomness to obtain the sorted list $(U_{\sort(1)}, U_{\sort_{N'}(2)}, \dots, U_{\sort_{N'}(N')})$.
Here, $\sort_{N'}(n) \in [1:N']$ and it denotes the bijection (permutation) that maps the indices of the uniforms in the sorted list to the indices in the original, unsorted list.
For example, $\sort_{N'}(1) = \argmin_{n \in [1:N']} U_n$ and $\sort_{N'}(N') = \argmax_{n \in [1:N']} U_n$.
\par
With this notation in place, the improved coding scheme is extremely simple: we encode $\sort_{N'}^{-1}(N)$, that is, the position of $U_N$ in the sorted list instead of $N$ directly!
Concretely, when applying the above procecure for the pair of random variables $\rvx, \rvy \sim P_{\rvx, \rvy}$, where $P_\rvx$ is the source and we are interested in encoding samples $\rvy \sim P_{\rvy \mid \rvx}$, we use \textit{almost} same Zeta distribution that we would use for A* sampling in \Cref{thm:global_a_star_codelength}.
Namely, we set the power parameter as $\alpha = 1 + 1 / (\MI{\rvx}{\rvy} + \lb(5/2))$.
\par
What is the rate of this scheme?
Well, there is a good reason for why I chose $\alpha$ the similar to A* sampling: it turns out that there is a beautiful connection between the arrival times of A* sampling and the sorted uniforms above.
The connection is most elegant when we consider only sorting the first $N$ uniforms: in this case, Lemma 13 of \citet{maddison2016poisson} shows that for all $n \in [1:N]$, we have
\begin{align*}
U_{\sort_N(n)} \sim \frac{T_{n}}{T_{K_{A^*} + 1}}
\end{align*}
where $T_n$ is the $n$th arrival time of the spatio-temporal Poisson process $\PoissonProcess$ with mean measure $P_\rvy \otimes \lambda$ that A* sampling would simulate when applied to sampling the same target distribution as our rejection sampler, and $K_{A^*}$ denotes the A* sampling's runtime.
This demonstartes a beutiful connection between rejection and A* sampling: A* sampling can be thought of as realising the order statistics of rejection sampling directly, so another name for it could as well be ``ordered rejection sampling''!
The upshot of this is that we therefore have
\begin{align}
\label{eq:rs_a_star_index_dist_eq}
\sort_N^{-1}(N) \sim N_{A^*}    
\end{align}
where $N_{A^*}$ is the index A* sampling would select, and hence we can directly apply \Cref{thm:global_a_star_codelength} for encoding $\sort_N^{-1}(N)$!
\par
However, remember that we only wish to encode $L = \lceil \lb N\rceil$ (or equivalently, $N' = \expb(L)$), since we already saw that specifying $N$ directly is inefficient.
However, note that in general we will have $J = \sort_{N}^{-1}(N) \leq \sort_{N'}^{-1}(N) = J'$, so we need to ensure that encoding $J'$ does not cost much more than encoding $J$; fortunately this is indeed the case.
To see this, note that the reason why $J'$ is at least as large as $J$ is because it might easily happen that of the $N' - N$ uniforms in the common randomness that come \textit{after} $U_N$, some of them will be less than $U_N$, and hence the index of $U_N$ in the sorted sequence increases.
Now, note that given $N$ and $U_N$, we have $U_i \perp U_N$ for $i \in [N + 1:N']$, since $N$ is a stopping time adapted to the sequence. 
Thus, the probability that any of the $U_i$s is less than $U_N$ is $\Prob[U_i < U_N \mid N, U_N] = U_N$.
Since there are $N' - N$ uniforms in the sequence after $U_N$, on average we will have $\Exp[J' - J\mid N, U_N, Y_N] = (N' - N) U_N \leq N \cdot U_N$ of them be less than $U_N$.
Now, note that $U_N \mid Y_N \sim \Unif(0, r(Y_N) / M)$, hence $\Exp[J' - J\mid N, Y_N] \leq N \cdot \Exp[U_N \mid Y_N, N] = N \cdot r(Y_N) \big/ 2M$.
Finally, we get
\begin{align}
\Exp[\lb J'] 
&= \Exp[\lb(J + (J' - J))] \nonumber\\
&\leq \Exp_{Y_N}[\lb(\Exp[J] + \Exp_{N \mid Y_N}[J' - J])] \tag{Jensen}\\
&\leq \Exp_{Y_N}[\lb(\Exp[J] + r(Y_N) \cdot \Exp_{N \mid Y_N}[N] \big/ 2M)] \nonumber\\
&\leq \Exp_{Y_N}[\lb(r(Y_N) + 1 + r(Y_N) \cdot \Exp_{N \mid Y_N}[N] \big/ 2M)] \tag{via \cref{eq:rs_a_star_index_dist_eq} \& proof of \cref{thm:global_a_star_codelength}} \\
&\leq \Exp_{Y_N}[\lb r(Y_N)] + \Exp_{Y_N}[\lb(1 + 1 / r(Y_N) + \Exp_{N \mid Y_N}[N] \big/ 2M)] \nonumber\\
&\leq \KLD{Q}{P} + \lb(1 + \Exp_{Y_N}[1 / r(Y_N) + \Exp_{N \mid Y_N}[N] \big/ 2M]) \tag{Jensen} \\
&=\KLD{Q}{P} + \lb(5/2) \nonumber
\end{align}
To encode $L$, we may use Elias gamma coding, which results in a codelength of $2\lfloor \lb L \rfloor + 1 \leq \lb L + 3$ bits.
\par
Therefore, for a pair of dependent random variables $\rvx, \rvy \sim P_{\rvx, \rvy}$, if we encode a sample returned by rejection sampling using $L$ and the sorted index $J' = \sort_{N'}^{-1}(N)$, then for $L$ we achieve a rate of at most
\begin{align*}
2\Exp[\lb L] + 3 = 2\Exp[\lb \lceil \lb N\rceil] + 3 \leq 2\Exp[\lb (\lb N + 1)] + 3 \leq 2\Exp[\lb (\lb \norm{r}_\infty + 1)] + 3 \text{ bits}
\end{align*}
and using the Zeta distribution with $\alpha = 1 + 1 / (\MI{\rvx}{\rvy} + \lb(5/2))$ to encode $J' \mid L$, we achieve a rate of at most
\begin{align*}
\MI{\rvx}{\rvy} + \lb(\MI{\rvx}{\rvy} + 1) + \lb(7/2) + 7/2 \text{ bits}
\end{align*}
Thus, the total coding rate is at most
\begin{align*}
 \MI{\rvx}{\rvy} + &\lb(\MI{\rvx}{\rvy} + 1) + 2\Exp[\lb (\lb \norm{r}_\infty + 1)] + \lb(7/2) + 7/2 + 3 \\
 &\approx \MI{\rvx}{\rvy} + \lb(\MI{\rvx}{\rvy} + 1) + 2\Exp[\lb (\lb \norm{r}_\infty + 1)] + 8.31 \text{ bits}.
\end{align*}

\end{appendix}

\printthesisindex 

\end{document}